\renewcommand\footnotetextcopyrightpermission[1]{} %
\newif\ifdraft
\newenvironment{myintro}%
  {\list{}{\leftmargin=0.1in\rightmargin=0.1in}\item[]}%
  {\endlist}
\title[Complexity Analysis of Generalized and Fractional Hypertree Decompositions]{Complexity Analysis of Generalized and Fractional Hypertree Decompositions}
\author{Georg Gottlob}
\affiliation{%
  \institution{University of Oxford}
}
\email{georg.gottlob@cs.ox.ac.uk}
\author{Matthias Lanzinger}
\affiliation{%
  \institution{TU Wien}
}
\email{mlanzing@dbai.tuwien.ac.at}
\author{Reinhard Pichler}
\affiliation{%
  \institution{TU Wien}
}
\email{reinhard.pichler@tuwien.ac.at}
\author{Igor Razgon}
\affiliation{%
  \institution{Birkbeck University of London}
}
\email{igor@dcs.bbk.ac.uk}
\begin{document}

\begin{abstract}
\normalsize 
Hypertree decompositions (HDs), as well as the more powerful generalized hypertree 
decompositions (GHDs), and the yet more general fractional hypertree 
decompositions (FHDs) are hypergraph decomposition methods successfully used for 
answering conjunctive queries
and
for solving %
constraint satisfaction problems.
Every hypergraph $H$ has a width relative to each of these %
methods:   
its hypertree width $\hw(H)$, its generalized hypertree width $\ghw(H)$, and 
its  
fractional hypertree width $\fhw(H)$, respectively.
It is known that  $\hw(H)\leq k$ can be checked in polynomial time for fixed $k$, while 
checking $\ghw(H)\leq k$ is NP-complete for $k \geq 3$. 
The complexity of checking  $\fhw(H)\leq k$  for a fixed $k$ has been open for over a decade.

We settle this open problem by showing that  
checking  $\fhw(H)\leq k$ is NP-complete, even for $k=2$.
The same construction allows us to prove also the NP-completeness of checking 
 $\ghw(H)\leq k$ for $k=2$. 
After that, we identify meaningful restrictions
which make checking for bounded $\ghw$ or $\fhw$ tractable or
allow for an efficient approximation of the $\fhw$.
\end{abstract}

\maketitle

\section{Introduction and Background}
\label{sect:introduction}

\noindent{\bf Research Challenges Tackled.}\  
In this work we tackle computational problems on hypergraph decompositions, 
which
play a prominent role for %
answering Conjunctive Queries (CQs) and 
solving Constraint Satisfaction Problems (CSPs), which we discuss %
below.

\nop{The treewidth~\cite{} $tw(G)$ of a graph $G$ is a measure of  its degree 
of cyclicity. The treewidth $tw(G)$ is defined as the smallest width over 
all tree decompositions of a graph (definitions will be given below).} 
Many \np-hard graph-based problems become tractable for instances whose 
corresponding graphs have bounded treewidth.
There are, however, many problems for which the structure of an instance is 
better described by a hypergraph than by a graph, for example,  the 
above mentioned CQs and CSPs. 
Treewidth does not generalize hypergraph acyclicity (sometimes also referred to as 
$\alpha$-acyclicity, see e.g.,~\cite{fagin1983degrees,DBLP:conf/vldb/Yannakakis81}).
Hence, proper  hypergraph 
decomposition methods have been developed, in particular, {\it hypertree 
decompositions (HDs)\/}  
\cite{2002gottlob}, the more general  {\it generalized 
hypertree decompositions (GHDs)}~\cite{2002gottlob}, and the yet 
more general   
{\it fractional hypertree decompositions (FHDs)}~\cite{2014grohemarx}, and 
corresponding notions of width of a hypergraph $H$ have been defined: the  {\it 
hypertree width} $\hw(H)$,  the {\em generalized hypertree width} 
$\ghw(H)$, and the {\em fractional hypertree width} $\fhw(H)$, where for every 
hypergraph $H$, $\fhw(H)\leq \ghw(H)\leq \hw(H)$ holds. 
Definitions are given in 
Section~\ref{sect:prelim}. A number of highly relevant hypergraph-based 
problems such as  
CQ\hyp{}evaluation and CSP\hyp{}solving become tractable for classes of instances of 
bounded 
$\hw$, $\ghw$, or, $\fhw$. For each of the mentioned types of 
decompositions it would thus be %
useful to be able to recognize for each 
constant $k$ whether a given hypergraph $H$ has corresponding  
width at most  
$k$, and if so, to compute such a %
decomposition. More formally, for 
{\it decomposition\/} $\in \{$HD, GHD, FHD$\}$ and $k > 0$, we consider the 
following family of problems:

\medskip
\noindent
\rec{{\it decomposition\/},\,$k$}\\
\begin{tabular}{ll}
 \bf input: & hypergraph $H = (V,E)$;\\
 \bf output: & {\it decomposition\/}  of $H$ of width $\leq k$ if it 
exists and  \\ & answer `no' otherwise.
\end{tabular}

\smallskip

As shown in~\cite{2002gottlob}, \rec{HD,\,$k$}\  is in \ptime for every fixed $k$. However, 
little
is known
about  \rec{FHD,\,$k$}. 
In fact, this has been 
an open problem since the 2006 paper \cite{DBLP:conf/soda/GroheM06},
where Grohe and Marx state: \lq\lq{}It remains an important open question whether 
there
is a polynomial-time algorithm that determines (or approximates)
the fractional hypertree width and constructs a corresponding
decomposition.\rq\rq{}
Since then, the approximation problem has been resolved~\cite{DBLP:journals/talg/Marx10}.
Regarding the problem of determining the exact width, the 2014 journal version \cite{2014grohemarx}
still 
mentions this as 
open 
and it is conjectured that the problem might be \np-hard. 
The open problem is restated in \cite{bevern2015}, where further evidence for 
the hardness of the problem is 
given by  showing that ``it is not expressible in monadic second-order logic 
whether a 
hypergraph has bounded (fractional, generalized) hypertree width''.
We will tackle this open problem here:

\begin{myintro}
\noindent{\bf Research Challenge 1:} \ Is  \rec{FHD,\,$k$} tractable?
\end{myintro}

Let us now  turn to generalized hypertree decompositions. In~\cite{2002gottlob} 
the complexity of  \rec{GHD,\,$k$} was stated as an open problem. 
In \cite{2009gottlob}, it was shown that 
\rec{GHD,\,$k$} is \np-complete for $k\geq 3$. 
For 
$k=1$ the problem is trivially tractable because
$\ghw(H)=1$ just means 
$H$ is acyclic. However the case $k=2$ has been left open.  This case is quite 
interesting, because it was observed that the majority of practical queries 
from 
various  benchmarks that are not acyclic have 
$\ghw = 2$ \cite{DBLP:journals/pvldb/BonifatiMT17,pods/FischlGLP19}, and that a decomposition in such cases can be very 
helpful.
Our second 
research goal is to finally  settle the complexity of \rec{GHD,\,$k$} 
completely.

\begin{myintro}
\noindent{\bf Research Challenge 2:} \ Is \rec{GHD,\,$2$} tractable?
\end{myintro}

For those problems which are  known to be intractable, for example, 
\rec{GHD,\,$k$} for $k\geq 3$, and for those others that will turn out to be 
intractable, we would like to find large islands of tractability 
that correspond to meaningful restrictions of the input 
hypergraph instances. Ideally, such restrictions should  fulfill two main 
criteria: (i) they need to be {\it realistic} in the sense that they apply to a 
large number of CQs and/or CSPs in real-life applications, and 
(ii) 
they need to be {\it non-trivial}  in the sense that the restriction itself 
does 
not already imply bounded $\hw$, $\ghw$, or $\fhw$. 
Trivial restrictions would be, for example, 
acyclicity or bounded 
treewidth.
Hence, our third research problem is as follows:

\begin{myintro}
\noindent{\bf Research Challenge 3:} \ Find realistic, non-trivial restrictions 
on hypergraphs which entail the tractability of  the 
\rec{{\it decomp\/},\,$k$} problem for {\it decomp\/} $\in 
\{$GHD, 
FHD$\}$.
\end{myintro}

Where we do not achieve {\sc Ptime} algorithms 
for the precise computation of a decomposition of optimal width, we would like 
to find tractable methods for achieving good approximations. 
Note that 
for GHDs, the problem of approximations is solved, since $\ghw(H) \leq 3 \cdot \hw(H) +1$ holds for every 
hypergraph $H$ \cite{DBLP:journals/ejc/AdlerGG07}.
In contrast, for FHDs, the best known polynomial-time approximation is cubic. More precisely, 
in \cite{DBLP:journals/talg/Marx10}, a polynomial-time algorithm is presented which, 
given a hypergraph $H$ with $\fhw(H) = k$,
computes an FHD of width $\calO(k^3)$. 
We would like to find 
meaningful restrictions that guarantee significantly tighter approximations
in polynomial time. This leads to the fourth research problem:

\begin{myintro}
\noindent{\bf Research Challenge 4:} \ Find realistic, non-trivial restrictions 
on hypergraphs which allow us to compute in {\sc Ptime} good 
approximations of $\fhw(k)$.
\end{myintro}

\smallskip

\noindent{\bf Background and Applications.}\ Hypergraph decompositions have 
meanwhile found their way into commercial database systems such as LogicBlox 
\cite{DBLP:conf/sigmod/ArefCGKOPVW15,
olteanu2015size,BKOZ13,KhamisNRR15,KhamisNR16} and advanced research prototypes 
such as 
EmptyHeaded~\cite{DBLP:conf/sigmod/AbergerTOR16,tu2015duncecap,aberger2016old}. 
Moreover, 
since CQs and CSPs of bounded hypertree width fall into the highly 
parallelizable complexity class LogCFL~\cite{2002gottlob}, hypergraph decompositions have also 
been 
discovered as a useful tool for parallel query processing with MapReduce 
\cite{DBLP:journals/corr/AfratiJRSU14}. Hypergraph decompositions, in 
particular, HDs and GHDs have been used in many other contexts, e.g., 
in combinatorial auctions~\cite{gottlob2013decomposing} and automated selection 
of Web services based on recommendations from social 
networks~\cite{hashmi2016snrneg}.
Exact algorithms for computing the generalized and fractional hypertree width were published, for example, in~\cite{moll2012}; these algorithms require exponential time, which, in the light of~\cite{2009gottlob} and our present results cannot be improved in the general case.

CQs are the most basic and arguably the most important class
of queries
in the database world.
Likewise, CSPs constitute one of the most fundamental classes of problems
in Artificial Intelligence. 
Formally, CQs and CSPs are the same problem and correspond to first-order 
formulae using $\{\exists,\wedge\}$ but disallowing $\{\forall, \vee, \neg\}$ 
as 
connectives, that need to be evaluated over a set of finite relations:  the 
{\em 
database relations} for CQs, and the {\em constraint relations} for CSPs.
In practice, CQs have often fewer conjuncts (query atoms) and larger relations, 
while CSPs have more conjuncts but smaller relations.
These problems are well-known to be \np-complete
\cite{DBLP:conf/stoc/ChandraM77}. 
Consequently, there has been an intensive search for tractable fragments of CQs 
and/or CSPs
over the past decades. 
For our work, the approaches based on decomposing the 
structure of 
a given CQ or CSP are most relevant,~see~e.g.\
\cite{DBLP:conf/adbt/GyssensP82,%
DBLP:journals/ai/DechterP89,%
DBLP:conf/aaai/Freuder90,%
DBLP:journals/ai/GyssensJC94,%
DBLP:journals/jcss/KolaitisV00,%
DBLP:conf/stoc/GroheSS01,%
DBLP:conf/cp/DalmauKV02,%
DBLP:journals/tcs/ChekuriR00,%
2002gottlob,%
DBLP:conf/cp/ChenD05,%
DBLP:journals/jacm/Grohe07,%
DBLP:journals/jcss/CohenJG08,%
DBLP:journals/mst/Marx11,%
DBLP:journals/jacm/Marx13,%
DBLP:journals/siamcomp/AtseriasGM13,%
2014grohemarx}.
The underlying structure of both %
is nicely captured by 
hypergraphs. 
The hypergraph $H = (V(H), E(H))$ underlying a CQ (or a CSP)  $Q$ has as vertex 
set $V(H)$ the set of variables occurring in $Q$; moreover, for every atom in 
$Q$, $E(H)$ contains a hyper\-edge consisting 
of all variables occurring in this atom.  
From now on, we shall mainly talk about hypergraphs with the understanding that 
all our results are equally applicable to CQs and~CSPs.

\medskip

\noindent
{\bf Main Results.}    
First of all,  we have investigated  the above mentioned open problem 
concerning 
the recognizability of $\fhw \leq k$ for fixed $k$. Our initial hope was to 
find 
a simple adaptation of the \np-hardness proof in \cite{2009gottlob}
for recognizing $\ghw(H) \leq k$, for $k\geq 3$. Unfortunately, this proof 
dramatically fails for the fractional case. In fact,  the hypergraph-gadgets in 
that proof are such that both  \lq\lq{}yes\rq\rq{} and \lq\lq{}no\rq\rq{} 
instances may yield the same $\fhw$. However, via crucial modifications, 
including 
the introduction of novel gadgets, we succeed to construct a reduction from 
3SAT that allows us to control the $\fhw$ of the resulting  hypergraphs
such that those hypergraphs arising from  \lq\lq{}yes\rq\rq{} 3SAT instances 
have  $\fhw(H)=2$ and those 
arising from  \lq\lq{}no\rq\rq{} instances have $\fhw(H)>2$. Surprisingly, 
thanks to our new gadgets, the resulting proof is 
actually significantly simpler than the \np-hardness proof for recognizing 
$\ghw(H) \leq k$ in \cite{2009gottlob}. We thus obtain the following result: 

\begin{myintro}
\noindent{\bf Main Result 1:} Deciding $\fhw(H) \leq 2$ for hypergraphs $H$ is 
\np-complete and, therefore, 
\rec{FHD,\,$k$} is  intractable
even for $k = 2$.
\end{myintro} 

\noindent
This  result can be extended to the \np-hardness of recognizing
$\fhw(H) \leq k$ for arbitrarily large $k$. Moreover, the same 
construction can be used to prove that recognizing ghw $\leq 2$ is 
also \np-hard, thus killing two birds with one stone.

\begin{myintro}
\noindent{\bf Main Result 2:} Deciding $\ghw(H) \leq 2$ for hypergraphs $H$ is 
\np-complete and, therefore, 
\rec{GHD,\,$2$} is  intractable
even for $k = 2$.
\end{myintro}

The Main Results 1 and 2 are presented in Section \ref{sect:hardness}. 
These results close some smouldering open problems with bad news. We thus 
further 
concentrate on Research Challenges 3 and 4 in order to obtain some positive results 
for 
restricted hypergraph classes. 

We first study GHDs, where we succeed to identify very 
general, realistic, 
and 
non-trivial restrictions that make the \rec{GHD,\,$k$} problem tractable. 
These results are based on new insights about
the differences of GHDs and HDs,
  and conceptually splitting the problem into two tasks:
  The first problem, for a given list of possible bags of vertices and a hypergraph $H$, consists in 
  checking whether there exists a tree decomposition
  of $H$ using only bags from the input. In Section~\ref{sect:framework}, we show that this problem is \np-complete in general but becomes tractable when we introduce a mild restriction on the tree decompositions.
  The second problem consists in finding restrictions under which we only need to consider a polynomial number of possible bags.

In particular, we concentrate on the {\em bounded intersection property 
(BIP)}, which, for a class $\classC$ of hypergraphs  requires that for some 
constant $i$, for each pair of distinct edges $e_1$ and $e_2$ of each 
hypergraph 
$H\in {\classC}$,
$|e_1\cap e_2|\leq i$, and its generalization, the {\em bounded 
multi-intersection property (BMIP)}, which
requires that for some 
constant $c$ any intersection of $c$ distinct hyperedges of $H$ has at most $i$ 
elements for some constant $i$. 
A recent empirical study~\cite{pods/FischlGLP19} 
of a large 
number of known CQ and CSP benchmarks showed that a high portion 
of instances coming from real-life applications indeed enjoys the BIP for low constant $i$ 
and a yet higher portion enjoys the BMIP for very low constants $c$ and $i$. We 
obtain the following favorable results, which are presented in Section \ref{sect:ghd}. 

\begin{myintro}
\noindent{\bf Main Result 3:} For classes of hypergraphs fulfilling the BIP or 
BMIP, for every constant $k$,  the problem \rec{GHD,\,$k$} is tractable. 
Tractability holds even 
for classes~$\classC$ 
of 
hypergraphs where for some constant $c$ all intersections of $c$ distinct edges 
of every $H\in{\classC}$ of size $n$ have $\calO(\log n)$ elements.
Our complexity analysis reveals that for fixed $k$ and $c$
the problem 
 \rec{GHD,\,$k$} 
is %
fixed-parameter tractable
parameterized by
$i$ 
of the BMIP.
\end{myintro} 

  The tractability proofs for GHDs do not directly carry over to FHDs.
  Still, under slightly less general conditions and with some 
  additional combinatorial insights it is possible to reduce the \rec{FHD,\,k}
  problem to the \rec{GHD,\,k} scenario of the previous result. In particular, we obtain results for the BIP and a further special case of the BMIP.
  We then consider 
 the {\em degree\/} $d$ of  a hypergraph $H = (V(H), E(H))$, 
which is defined as the 
maximum number of hyperedges in which a vertex occurs, i.e., 
$d = \max_{v \in V(H)} |\{ e \in E(H) \mid v \in E(H)\}|$.
We say that a class $\classC$ of hypergraphs  has the {\em bounded degree property (BDP)\/}, 
if there exists $d \geq 1$, 
such that every hypergraph $H\in {\classC}$ has degree $\leq d$.
We 
obtain the following results, which are presented in Section~\ref{sect:fhd-exact}. 

\begin{myintro}
  \noindent{\bf Main Result 4:}
For classes of hypergraphs fulfilling either the BDP or the BIP and
for every constant $k$,  the problem \rec{FHD,\,$k$} is tractable. 
\end{myintro} 

To get yet bigger tractable classes, we also consider approximations of an optimal FHD. 
Towards this goal, we study the $\fhw$ in case of the BMIP and we establish an 
interesting connection between the BMIP and the 
Vapnik--Chervonenkis dimension (VC-dimension) of hypergraphs. 
Our research, presented in Section~\ref{sect:fhd} is summarized as 
follows.

\begin{myintro}
\noindent{\bf Main Result 5:} For rather general, realistic, and non-trivial 
hypergraph restrictions, there exist {\sc Ptime} algorithms that, for 
hypergraphs $H$ with $\fhw(H)=k$, where $k$ is a constant,  produce FHDs whose 
widths are significantly smaller than the best previously known 
approximation. 
In particular, the BMIP allows us to compute in polynomial time 
an FHD whose width  is $\leq k + \epsilon$
for arbitrarily chosen constant $\epsilon > 0$. 
Bounded VC-dimension 
allows us to compute in polynomial time an FHD whose width  is $\calO(k \log k)$. 
\end{myintro} 

We finally turn our attention also to the optimization problem of fractional hypertree width, i.e., given a hypergraph $H$, 
determine $\fhw(H)$ and find an FHD of width $\fhw(H)$. All our algorithms for the \rec{FHD,\,$k$} problem 
have a runtime exponential in the desired width $k$. Hence, even with the restrictions 
to the BIP or BMIP we cannot expect 
an efficient approximation of $\fhw$ if $\fhw$ can become arbitrarily large. We will therefore study the following 
\boundedopt\ problem
for constant $K \geq 1$: 

\medskip
\noindent
\boundedopt\\
\begin{tabular}{lrl}
 \bf input: &  \multicolumn{2}{l}{hypergraph $H = (V,E)$;} \\
 \bf output: & if $\fhw(H) \leq K$: & find an FHD $\mcF$ of $H$ with minimum width; \\
 & otherwise: & answer ``$\fhw(H) > K$''.
\end{tabular}

\medskip
\noindent
For this bounded version of the optimization problem, we  will prove the following result: 

\begin{myintro}
  \noindent{\bf Main Result 6:} There exists a polynomial time
  approximation scheme (PTAS; for details see Section \ref{sect:fhd})
  for the \boundedopt\ problem in case of the BMIP for any fixed
  $K\geq 1$.
\end{myintro}

\section{Preliminaries}
\label{sect:prelim}

For integers $n \geq 1$ we write $[n]$ to denote the set $\{1, \dots, n\}$.

\subsection{Hypergraphs}

A {\em hypergraph} is a pair $H = (V(H), E(H))$, consisting of a set $V(H)$ of 
{\em vertices} and a set $E(H)$ of {\em hyperedges} (or, simply {\em edges\/}), 
which are non-empty 
subsets of $V(H)$. We assume that hypergraphs do not have isolated 
vertices, i.e.\ for each $v \in V(H)$, there is at least one edge $e \in E(H)$, 
s.t.\ $v \in e$. For a set $C \subseteq V(H)$, we define $\edges(C) = \{ e \in 
E(H) \mid e \cap C \neq \emptyset \}$ and for a set $S \subseteq E(H)$, we 
define $\V(S) = \{ v \in e \mid e \in S \}$.
The {\em rank} of a hypergraph $\HH$ (denoted $\rarity{\HH}$)
is the 
maximum cardinality of any edge  $e$ of $\HH$.
We refer to the number of edges and vertices as $|E(H)|$ and $|V(H)|$, respectively. 
The \emph{size} of (some reasonable representation of) $H$ will be denoted as $||H||$, 
i.e., $n \leq |V(\HH)| + |E(\HH)| \cdot |V(\HH)|$.

We sometimes identify sets of edges with hypergraphs. If a set of edges $E$ is used, where instead a hypergraph is expected, then we mean the hypergraph $(V,E)$, where $V$ is simply the union of all edges in $E$. 
For a set $S$ of edges, it is convenient to write $\bigcup S$ (and $\bigcap S$, respectively)  to denote the set of vertices obtained by taking the union
(or the intersection, respectively) of the edges in $S$. Hence, we can write $\V(S)$ simply as $\bigcup S$.

For a hypergraph $H$ and a set $C \subseteq V(H)$, we say that
a \path{$C$} $\pi$ from $v$ to $v'$ consists of a sequence
$v = v_0,\dots,v_h = v'$ of vertices and a sequence of edges
$e_0, \dots, e_{h-1}$ ($h \geq 0$) such that
$\{ v_i, v_{i+1} \} \subseteq ( e_i \setminus C)$, for each
$i \in \{0,\ldots, h-1\}$.  We denote by $\vertices(\pi)$ the set of
vertices occurring in the sequence $v_0,\ldots, v_h$.  Likewise, we
denote by $\edges(\pi)$ the set of edges occurring in the sequence
$e_0,\ldots,e_{h-1}$.
A set $W \subseteq
V(H)$ of vertices is \connected{$C$} if $\forall v,v' \in W$ there is a 
\path{$C$} from $v$ to $v'$. A \comp{$C$} is a maximal \connected{$C$}, 
non-empty
set of vertices $W \subseteq V(H)\setminus C$.

The \emph{primal graph} $G$ of a hypergraph $H$ is the graph with the same vertices as $H$ and an edge between vertices $v$ and $u$
iff there exists an edge $e \in E(H)$ such that $\{v,u\}\subseteq e$.

Given a hypergraph $H = \{V,E)$, the \emph{dual hypergraph}
$H^d  = (W,F)$ 
is defined as $W = E$ and $F = \{ \{e \in E \mid v \in e\} \mid v \in V\}$.

\subsection{(Fractional) Edge Covers}
Let $H = (V(H),E(H))$ be a hypergraph 
and consider (edge-weight) functions $\lambda \colon E(H) \ra \{0,1\}$ and
$\gamma \colon E(H) \ra [0,1]$. 
For $\theta \in \{\lambda, \gamma\}$, 
we denote by $B(\theta)$ the set of all 
vertices {\em covered\/} by $\theta$:
\[ B(\theta) = \left\{ v\in V(H) \mid \sum_{e\in E(H), v\in e} \theta(e) \geq 1 
\right\}\]
The weight of 
such a 
function 
$\theta$ is defined as
\[ \weight(\theta) = \sum_{e \in E(H)} \theta(e). \]
Following \cite{2002gottlob}, we will sometimes consider
$\lambda$ as a set with $\lambda \subseteq E(H)$ 
(i.e., the set of edges $e$ with $\lambda(e) = 1$)
and the weight of $\lambda$ as the cardinality of this set. %
However, for the  sake of a 
uniform treatment with function $\gamma$,  we shall %
prefer 
to treat $\lambda$ as a 
function.

\begin{definition}
An {\em edge cover} of a hypergraph $H$ %
is a function 
$\lambda : E(H) \ra \{0,1\}$ 
such that $V(H) = B(\lambda)$. The {\em edge cover number} $\rho(H)$
is the  minimum weight of all edge covers of $H$.
\end{definition}

Note that the edge cover number
can be calculated by the following integer linear 
program (ILP).
 \[
 \begin{aligned}
  \text{minimize: } & \sum_{e\in E(H)} \lambda(e) & \\
  \text{subject to: } & \quad\sum_{\mathclap{e \in E(H), v \in e}} \;\lambda(e) 
                        \geq 1, & & \text{for all } v \in V(H)\\
                      & \lambda(e) \in \{0,1\} & & \text{for all } e \in E(H)
 \end{aligned}
 \]
By substituting all $\lambda(e)$ by $\gamma(e)$ and by relaxing the last condition of the ILP  above
to $\gamma(e) \geq 0$, 
we arrive
at the linear program (LP) for computing the fractional edge cover number
to be defined next.
Note that even though our weight function is defined
to take values between 0 and 1, we do not need to add $\gamma(e) \leq 1$ as a 
constraint,
because implicitly by the minimization itself the weight on an edge for
an edge cover is never greater than 1.
Also note that now the program above
is an LP, which (in contrast to an ILP) can be solved in \ptime even if  $k$ is not fixed.

\begin{definition}
A {\em fractional edge cover} of a hypergraph $H=(V(H),E(H))$ is a 
function 
$\gamma : E(H) \ra [0,1]$  
such that $V(H) = B(\gamma)$. The {\em fractional edge cover number} $\rho^*(H)$ of 
$H$ %
is the 
 minimum weight of all fractional edge covers of $H$.
We write $\cov(\gamma)$ to denote the {\em support\/} of 
$\gamma$, i.e., 
$\cov(\gamma) := \{ e\in E(H) \mid \gamma(e) > 0\}$.
\end{definition}

  We also extend the above definitions to subsets $S \subseteq V(H)$, i.e., an edge cover of $S$ in $H$
  is a function $\lambda: E(H) \ra \{0,1\}$ such that $S \subseteq B(\lambda)$. 
  If $H$ is clear from the context, we shall simply speak of ``an edge cover of $S$'' 
  without explicitly mentioning $H$.
  The edge cover number $\rho(S)$ is then the minimum weight of all edge covers of $S$. The definitions for the fractional case are extended analogously.

Clearly, we have $\rho^*(H) \leq \rho(H)$ for every hypergraph $H$, and 
$\rho^*(H)$ can  
be much smaller than $\rho(H)$. However, below we give 
an example,  which is important for 
our proof of Theorem \ref{thm:npcomp} and where $\rho^*(H)$ and $\rho(H)$ 
coincide.

\begin{lemma}
\label{lem:cliquewidth}
    Let $K_{2n}$ be a clique of size $2n$. Then the equalities $\rho(K_{2n}) = \rho^*(K_{2n}) 
    = n$ hold.
\end{lemma}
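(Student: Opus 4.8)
The plan is to show both equalities by a short combinatorial argument, noting that the edges of $K_{2n}$ are exactly the $2$-element subsets of the $2n$-vertex set $V$, so an edge cover is a set of edges whose union is $V$, and a fractional edge cover assigns nonnegative weights to edges so that every vertex has total incident weight at least $1$.

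\textbf{Integral edge cover.} First I would argue $\rho(K_{2n}) = n$. The upper bound is immediate: pick a perfect matching of $K_{2n}$, i.e., partition the $2n$ vertices into $n$ disjoint pairs; the corresponding $n$ edges cover all of $V$, so $\rho(K_{2n}) \leq n$. For the lower bound, observe that each edge of $K_{2n}$ contains exactly $2$ vertices, so any set of fewer than $n$ edges has a union of size at most $2(n-1) = 2n-2 < 2n$, hence cannot cover $V$. Therefore $\rho(K_{2n}) = n$.

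\textbf{Fractional edge cover.} Since $\rho^*(H) \leq \rho(H)$ always holds, it remains to prove $\rho^*(K_{2n}) \geq n$. The cleanest route is LP duality (or a direct counting/averaging argument): summing the covering constraints $\sum_{e \ni v} \gamma(e) \geq 1$ over all $2n$ vertices $v$ gives $\sum_{v \in V} \sum_{e \ni v} \gamma(e) \geq 2n$; since each edge is counted exactly twice on the left (once per endpoint), this equals $2 \sum_{e} \gamma(e) = 2\,\weight(\gamma)$, so $\weight(\gamma) \geq n$ for every fractional edge cover $\gamma$. Hence $\rho^*(K_{2n}) \geq n$, and combined with $\rho^*(K_{2n}) \leq \rho(K_{2n}) = n$ we conclude $\rho^*(K_{2n}) = \rho(K_{2n}) = n$.

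\textbf{Main obstacle.} There is essentially no real obstacle here; the only point requiring a moment's care is the lower bound on $\rho^*$, where one must make sure the double-counting of edge weights is set up correctly (every edge has exactly two endpoints in a clique, with no self-loops, so the factor of $2$ is exact and not merely an upper bound). The uniform edge size $2$ is what makes both bounds tight simultaneously, which is precisely why this example is useful later for Theorem~\ref{thm:npcomp}.
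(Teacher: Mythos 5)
Your proof is correct and follows essentially the same route as the paper: a perfect matching gives the upper bound $\rho(K_{2n}) \leq n$, and the double-counting (each edge's weight contributes to exactly two vertices, while the vertices need total weight $\geq 2n$) gives $\rho^*(K_{2n}) \geq n$, closing the chain $n \leq \rho^* \leq \rho \leq n$. The only cosmetic difference is that you spell out the integral lower bound separately, which the paper subsumes into the fractional argument.
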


\begin{proof}
Since we have to cover each vertex with weight $\geq 1$, the total 
weight on the 
vertices of the graph is $\geq 2n$. As the weight of each edge adds to the 
weight of 
at most 2 vertices, we need at least weight $n$ on the edges to achieve $\geq 
2n$ 
weight on the vertices.
On the other hand,  we can use $n$ edges each with weight 1 to cover $2n$ 
vertices. Hence, in total, we get
$n \leq \rho^*(K_{2n}) \leq \rho(K_{2n}) \leq  n$.
  \end{proof}

\subsection{HDs, GHDs, and FHDs}

We now define tree decompositions and three types of hypergraph decompositions:

\begin{definition}
  \label{def:TD}
  A {\em tree decomposition\/} (TD) of a hypergraph 
$H=(V(H),E(H))$ 
is a tuple 
$\left< T, (B_u)_{u\in N(T)} \right>$, such that 
$T = \left< N(T),E(T)\right>$ is a rooted tree and 
the 
following conditions hold:

\begin{enumerate}[label=\emph{(\arabic*})]
 \item[(1)] for each $e \in E(H)$, there is a node $u \in N(T)$ with $e \subseteq 
B_u$;
 \item[(2)] for each $v \in V(H)$, the set $\{u \in N(T) \mid v \in B_u\}$ is 
connected 
\end{enumerate}

\end{definition}

\begin{definition}
  \label{def:GHD}
 A 
{\em generalized hypertree decomposition\/} (GHD) 
of a hypergraph 
$H=(V(H),E(H))$ is a tuple 
$\left< T, (B_u)_{u\in N(T)}, (\lambda)_{u\in N(T)} \right>$, where
$\left< T, (B_u)_{u\in N(T)} \right>$ is a TD of $H$ and, additionally, the following 
condition (3) holds:
\begin{enumerate}[label=\emph{(\arabic*})]
\item[(3)]
for each $u\in N(T)$, $\lambda_u$ is a function $\lambda_u \colon E(H)  \ra  \{0,1\}$
 with 
 $B_u \subseteq  B(\lambda_u)$.
\end{enumerate}

\end{definition}

Let us clarify some notational conventions used throughout this paper.
To avoid confusion, we will consequently refer to the 
elements in 
$V(H)$ as {\em vertices\/} (of the hypergraph) and to the elements in $N(T)$ as 
the {\em nodes\/}
of $T$ (of the decomposition). 
Now consider a decomposition $\mcG$ with tree structure~$T$. 
For a node $u$ in $T$, 
we write $T_u$ to denote the subtree of $T$ rooted at $u$.
By slight abuse of notation, we will often write $u' \in T_u$ to denote
that $u'$ is a node in the subtree $T_u$ of $T$.
Moreover, we define $\VTu 
:= \bigcup_{u' \in T_u} B_{u'}$
and, for a set $V' 
\subseteq V(H)$, we define $\nodes(V') = 
\{ u \in T \mid B_u \cap V' \neq \emptyset \}$.
If we want to make explicit the decomposition $\mcG$, 
we also write $\nodes(V', \mcG)$ synonymously with $\nodes(V')$.
By further overloading the $\nodes$ operator, we also write 
$\nodes(T_u)$ or $\nodes(T_u, \mcG)$ to denote the 
nodes in a subtree $T_u$ of $T$, i.e.,  $\nodes(T_u) = \nodes(T_u,\mcG) =\{ v \mid v \in T_u \}$.

\begin{definition}
 \label{def:HD}
 A {\em hypertree decomposition\/} (HD) of a hypergraph 
$H=(V(H),E(H))$  is a GHD, which in addition also 
satisfies the following condition (4):
\begin{enumerate}[label=\emph{(\arabic*})]
 \item[(4)] for each $u\in N(T)$, $ V(T_u) \cap B(\lambda_u) \subseteq B_u$ 
\end{enumerate}
\end{definition}

\begin{definition}
 \label{def:FHD}
 A 
{\em fractional hypertree decomposition\/} (FHD) 
\cite{2014grohemarx}
of a hypergraph 
$H=(V(H),E(H))$ is a tuple 
$\left< T, (B_u)_{u\in N(T)}, (\gamma)_{u\in N(T)} \right>$, where
$\left< T, (B_u)_{u\in N(T)} \right>$ is a TD of $H$ and, additionally, the following 
condition 
(3') holds:
\begin{enumerate}[label=\emph{(\arabic*})]
 \item[(3')] for each $u\in N(T)$, $\gamma_u$ is a function $\gamma_u : E(H) \ra 
[0,1]$
with $B_u \subseteq  B(\gamma_u)$.
\end{enumerate}
\end{definition}

The width of a GHD, HD, or FHD is the maximum weight of the functions 
$\lambda_u$ or $\gamma_u$, 
respectively, over all nodes $u$ in $T$. Moreover, the generalized hypertree 
width,
hypertree width, and fractional hypertree width of $H$ (denoted $\ghw(H)$, 
$\hw(H)$, $\fhw(H)$) is the minimum width over all GHDs, HDs, and FHDs of $H$, 
respectively.
  Alternatively, we could define the $\ghw$ of a TD 
  $\left< T, (B_u)_{u\in N(T)} \right>$
  as  $\max_{u \in N(T)}\rho(B_u)$. It is clear that the definitions lead to equivalent notions of $\ghw$ for a hypergraph.
The same is true for the $\fhw$ of a TD which corresponds to $\max_{u\in N(T)} \rho^*(B_u)$.
Condition~(2) is usually called the ``connectedness condition'', and condition~(4) is 
referred to as ``special condition'' \cite{2002gottlob}. The set $B_u$ is often 
referred to as the 
``bag'' at node $u$. 
Note 
that, 
strictly speaking, only HDs require that the underlying tree $T$ be rooted. 
For the sake of a uniform treatment we assume that also the tree underlying a 
GHD or an FHD is rooted (with the understanding that the root is arbitrarily 
chosen).

We now recall two fundamental properties
of the various notions of decompositions and width.

\begin{lemma}
  \label{lem:subhypergraph}
  Let $H$ be a hypergraph and let $H'$ be a vertex induced subhypergraph of $H$, then 
$\hw(H') \leq \hw(H)$,   
$\ghw(H') \leq \ghw(H)$,
and $\fhw(H') \leq \fhw(H)$ hold. 
\end{lemma}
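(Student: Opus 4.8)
The plan is to prove the statement for each of the three width notions simultaneously by a single construction: given an arbitrary decomposition of $H$ of some type, I will massage it into a decomposition of $H'$ of the same type whose width does not increase. Let $H' = H[V']$ be the subhypergraph induced by a vertex set $V' \subseteq V(H)$, so $E(H') = \{ e \cap V' \mid e \in E(H), e \cap V' \neq \emptyset\}$. Fix a GHD (resp.\ HD, FHD) $\mcG = \left< T, (B_u)_{u \in N(T)}, (\lambda_u)_{u \in N(T)} \right>$ of $H$ of width $k$. The natural candidate for a decomposition of $H'$ is obtained by restricting every bag to $V'$, i.e.\ setting $B'_u := B_u \cap V'$, and keeping the same tree $T$.

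First I would check that $\left< T, (B'_u)_{u \in N(T)} \right>$ is a TD of $H'$. Condition~(2) (connectedness) is immediate: for $v \in V'$, the set $\{ u \mid v \in B'_u \} = \{ u \mid v \in B_u\}$ is connected in $T$ because $v \in V(H)$ and $\mcG$ is a TD of $H$. For condition~(1), take an edge $e' \in E(H')$, say $e' = e \cap V'$ for some $e \in E(H)$; since $\mcG$ covers $e$ there is a node $u$ with $e \subseteq B_u$, hence $e' = e \cap V' \subseteq B_u \cap V' = B'_u$. Next, for the edge-cover functions I would keep $\lambda_u$ (resp.\ $\gamma_u$) unchanged but reinterpret them as functions on $E(H')$: each edge $e$ of $H$ with $e \cap V' \neq \emptyset$ corresponds to the edge $e \cap V'$ of $H'$, and an edge of $H'$ may arise from several edges of $H$, so I take the weight of an edge $e'$ of $H'$ to be, say, the maximum (or sum, capped appropriately) of the weights $\lambda_u(e)$ over all $e \in E(H)$ with $e \cap V' = e'$. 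The weight only needs to not increase, so taking the maximum per preimage class is safe and clearly keeps total weight $\leq k$; and coverage is preserved because if $v \in B'_u \subseteq B_u \subseteq B(\lambda_u)$ then the edges of $H$ through $v$ that carried weight still carry at least as much weight as edges of $H'$ through $v$. This establishes the GHD and FHD cases.

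For the HD case I additionally have to verify the special condition~(4): for each $u$, $V(T_u) \cap B(\lambda_u) \subseteq B_u$. After restriction, $V(T'_u) = \bigcup_{u' \in T_u} B'_{u'} = V' \cap \bigcup_{u' \in T_u} B_{u'} = V' \cap V(T_u)$, and the new cover at $u$ covers a set contained in $B(\lambda_u)$ (possibly smaller, if I was forced to drop weight on edges that became empty after intersecting with $V'$). Hence $V(T'_u) \cap B(\lambda'_u) \subseteq V' \cap V(T_u) \cap B(\lambda_u) \subseteq V' \cap B_u = B'_u$, as required. The main subtlety — and the only place that needs care — is the bookkeeping for the cover functions when the map $e \mapsto e \cap V'$ is not injective: I must make sure the reinterpreted function is still a legitimate function $E(H') \to \{0,1\}$ (resp.\ $[0,1]$), that it still covers each retained vertex, and that its weight does not go up. Choosing the per-class maximum handles all three at once, so this obstacle is mild. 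I would present the proof compactly: define the restricted decomposition, verify (1) and (2), verify the cover condition via the reinterpretation, and note that (4) is inherited; conclude that $\hw(H') \leq \hw(H)$, $\ghw(H') \leq \ghw(H)$, and $\fhw(H') \leq \fhw(H)$.
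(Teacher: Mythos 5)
The paper itself states Lemma~\ref{lem:subhypergraph} without proof, as a recalled ``fundamental property'', so there is no in-paper argument to compare against; your construction (restrict every bag to $V'$, keep the tree, push the cover functions through the map $e \mapsto e \cap V'$) is the standard one, and the TD conditions, the width bound, and the inheritance of the special condition are all argued correctly.

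The one genuine flaw is your choice of the \emph{maximum} over each preimage class for the fractional case: it keeps the total weight down, but it does \emph{not} preserve coverage, and the sentence you offer in its defence is backwards. Concretely, take $V(H)=\{a,b,c\}$, $E(H)=\{\{a,b\},\{a,c\}\}$, $V'=\{a\}$, and a node whose bag contains $a$ with $\gamma_u(\{a,b\})=\gamma_u(\{a,c\})=\tfrac12$; both edges collapse to the single edge $\{a\}$ of $H'$, the per-class maximum assigns it weight $\tfrac12$, and $a$ is no longer covered. What you need is exactly the alternative you mention in passing: set $\gamma'_u(e') = \min\bigl(1, \sum_{e \cap V' = e'} \gamma_u(e)\bigr)$. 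Then for $v \in B_u \cap V'$ every edge of $H$ through $v$ intersects $V'$ and lands in some class through $v$, so either some class is capped (and that single edge already covers $v$) or the total weight on $v$ is unchanged; and the capped sums still total at most $\sum_e \gamma_u(e) \leq k$. For the integral case the maximum is fine (it is just ``$1$ iff some preimage has weight $1$''), so only the FHD branch needs this repair; with it, the proof goes through.
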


\begin{lemma}
  \label{lem:clique}
  Let $H$ be a hypergraph. If there exists a vertex set
  $S \subseteq V(H)$ such that $S$ is a clique in the primal graph,
  then every HD, GHD, or FHD of $H$ has a node $u$ such that
  $S \subseteq B_u$.
\end{lemma}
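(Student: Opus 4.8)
The plan is to reduce the statement for all three kinds of hypergraph decompositions to a single statement about their common underlying tree decomposition. Indeed, by Definitions~\ref{def:GHD}, \ref{def:HD}, and~\ref{def:FHD}, every HD, GHD, or FHD of $H$ contains, as its first two components, a tree decomposition $\left<T,(B_u)_{u\in N(T)}\right>$ of $H$. Hence it suffices to prove the claim for an arbitrary TD of $H$, namely: if $S\subseteq V(H)$ is a clique in the primal graph of $H$, then some node $u\in N(T)$ satisfies $S\subseteq B_u$.

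First I would associate with every vertex $v\in S$ the set of nodes $N_v:=\{u\in N(T)\mid v\in B_u\}$. By condition~(2) of Definition~\ref{def:TD}, each $N_v$ induces a connected subtree of $T$, and since $H$ has no isolated vertices, condition~(1) guarantees $N_v\neq\emptyset$. Next I would verify that these subtrees pairwise intersect: for any two distinct $v,v'\in S$, the fact that $S$ is a clique in the primal graph yields an edge $e\in E(H)$ with $\{v,v'\}\subseteq e$; by condition~(1) there is a node $u$ with $e\subseteq B_u$, hence $u\in N_v\cap N_{v'}$. (For $v=v'$ the intersection is trivially nonempty.)

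Finally I would invoke the Helly property of subtrees of a tree: a finite family of pairwise-intersecting subtrees of a tree has a common node. Applying this to $\{N_v\}_{v\in S}$ produces a node $u\in\bigcap_{v\in S}N_v$, which by definition of the sets $N_v$ means exactly $S\subseteq B_u$. This settles the claim for TDs, and therefore for HDs, GHDs, and FHDs.

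The only non-elementary ingredient is the Helly property for subtrees of a tree, which I would either cite as folklore or dispatch with a short argument: root $T$ arbitrarily, let $r_v$ be the node of $N_v$ closest to the root, pick $v^{*}\in S$ for which $r_{v^{*}}$ has maximum depth, and use pairwise intersection together with connectedness of each $N_v$ to show $r_{v^{*}}\in N_v$ for every $v\in S$. This is the step that needs the most care, but it remains routine.
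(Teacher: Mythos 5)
Your proof is correct and is exactly the standard argument the paper relies on: the paper states this lemma without proof, remarking only that it is a well-known property of tree decompositions, and the well-known proof is precisely the Helly-property-of-subtrees argument you give (connected occurrence sets by condition (2), pairwise intersection via the hyperedge witnessing each primal-graph edge and condition (1), then a common node). Your sketch of the Helly step via the deepest subtree-root is also the standard one and goes through without issue.
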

\noindent
Strictly speaking, Lemma~\ref{lem:clique} is a well-known property of tree 
decompositions -- independently of the $\lambda$- or $\gamma$-label.

\section{NP-Hardness}
\label{sect:hardness}

The main result in this section is the \np-hardness
of \rec{{\it decomp\/},\,$k$} with  
{\it decomp\/} $\in \{$GHD, FHD$\}$ and $k = 2$.
The proof is rather technical, we therefore begin with an informal overview of the overall strategy.

The reduction is from 3SAT. We introduce the sets $Y, Y'$ for the literals, sets $A = \{a_1, \dots, a_m\}$, $A'=\{a_1',\dots,a'_m\}$, set $S$, and two special vertices $\{z_1, z_2\}$. The hypergraph $H$ to be constructed consists of 3 main parts: two versions of the gadget in Figure~\ref{fig:gadgetH0} (which we will refer to as $H_0$ and $H'_0$, respectively) and a  subhypergraph encoding the clauses of the 3SAT instance. The subhypergraph $H_0$  contains the vertices 
$A,Y,S,z_1,z_2$, while the subhypergraph $H'_0$ contains the vertices 
$A',Y',S,z_1,z_2$. The intended GHD/FHD consists of a long path which connects a decomposition of $H_0$ at its left end and a decomposition of $H'_0$ at its right end. A sketch of this  decomposition is shown in 
Figure \ref{fig:overviewnp}.

We assume that each of the bags $B_1,\dots,B_m$ on the long path contains a subset of $A\cup A'$, a subset $Y_{sol}$ of $Y \cup Y'$ corresponding to the solution, $S$, and $z_1, z_2$. Each such bag should be covered by a pair of edges, 
such that one covers $z_1$ and there is a unique edge covering $z_2$ that it can be paired with and vice versa: 
every edge covering $z_1$ covers all but a
small subset of vertices of $S$, and there is a unique edge covering $z_2$ that fills this hole. Furthermore, every pair of edges covering $S \cup \{z_1, z_2\}$ covers exactly a subset $A^*_j = \{a'_1,\dots, a'_j, a_j,\dots, a_m\}$ of $A \cup A'$ for some $j$. By making $a_j$ and $a'_j$ adjacent for every $j$, we can make sure that the only way the decomposition can move from $A$ on the left to $A'$ on the right is if bag $B_j$ contains exactly such a subset $A^*_j$ of $A \cup A'$. Then we make the following connection to the original 3SAT instance: there are exactly 3 pairs of edges covering $\{a'_1, \dots, a'_j, a_j,\dots, a_m\}$ corresponding to the three literals of the $j$-th clause $(\ell_1 \lor \ell_2 \lor \ell_3)$ of the formula. Each such pair fully covers $Y \cup Y'$ except for the \emph{negation} of one of the literals (i.e., $\overline{\ell_1}$, $\overline{\ell_2}$, or $\overline{\ell_3}$). If $Y_{sol}$ satisfies the clause, then it \emph{does not } contain one of $\overline{\ell_1}$, $\overline{\ell_2}$, or $\overline{\ell_3}$, so one of the three pairs can fully cover $Y_{sol}$ and hence all of bag $B_j$. In the end, the bags $B_1, \dots, B_m$ verify that each of the $m$ clauses of the formula is satisfied by $Y_{sol}$.

Our gadget construction makes sure that the decompositions of $H_0$ and $H'_0$ indeed contain the specified sets of vertices
(see Lemma~\ref{lem:gadgetH0} for the details). Moreover, we have to ensure that the same subset $Y_{sol}$ of $Y \cup Y'$ appears in every $B_1,\dots,B_j$ if we want to read out a satisfying assignment from the decomposition. In principle it is possible that a $y'_i$ appears or a $y_i$ disappears as we move from $B_1$ to $B_m$. But there is an easy standard solution for this problem: let us repeat the path $2n+1$ times, doing the full check of the $m$ clauses $2n+1$ times. As there are only $2n$ possible changes of the $y_i$ and $y'_i$ vertices, one of the $2n+1$ copies is free of changes, and hence can be used to deduce a satisfying assignment.
\pgfdeclaredecoration{discontinuity}{start}{
  \state{start}[width=0.5\pgfdecoratedinputsegmentremainingdistance-0.5\pgfdecorationsegmentlength,next state=first wave]
  {}
  \state{first wave}[width=\pgfdecorationsegmentlength, next state=second wave]
  {
    \pgfpathlineto{\pgfpointorigin}
    \pgfpathmoveto{\pgfqpoint{0pt}{\pgfdecorationsegmentamplitude}}
    \pgfpathcurveto
        {\pgfpoint{-0.25*\pgfmetadecorationsegmentlength}{0.75\pgfdecorationsegmentamplitude}}
        {\pgfpoint{-0.25*\pgfmetadecorationsegmentlength}{0.25\pgfdecorationsegmentamplitude}}
        {\pgfpoint{0pt}{0pt}}
    \pgfpathcurveto
        {\pgfpoint{0.25*\pgfmetadecorationsegmentlength}{-0.25\pgfdecorationsegmentamplitude}}
        {\pgfpoint{0.25*\pgfmetadecorationsegmentlength}{-0.75\pgfdecorationsegmentamplitude}}
        {\pgfpoint{0pt}{-\pgfdecorationsegmentamplitude}}
}
\state{second wave}[width=0pt, next state=do nothing]
  {
    \pgfpathmoveto{\pgfqpoint{0pt}{\pgfdecorationsegmentamplitude}}
    \pgfpathcurveto
        {\pgfpoint{-0.25*\pgfmetadecorationsegmentlength}{0.75\pgfdecorationsegmentamplitude}}
        {\pgfpoint{-0.25*\pgfmetadecorationsegmentlength}{0.25\pgfdecorationsegmentamplitude}}
        {\pgfpoint{0pt}{0pt}}
    \pgfpathcurveto
        {\pgfpoint{0.25*\pgfmetadecorationsegmentlength}{-0.25\pgfdecorationsegmentamplitude}}
        {\pgfpoint{0.25*\pgfmetadecorationsegmentlength}{-0.75\pgfdecorationsegmentamplitude}}
        {\pgfpoint{0pt}{-\pgfdecorationsegmentamplitude}}
    \pgfpathmoveto{\pgfpointorigin}
}
  \state{do nothing}[width=\pgfdecorationsegmentlength,next state=do nothing]{
    \pgfpathlineto{\pgfpointdecoratedinputsegmentlast}
  }
  \state{final}
  {
    \pgfpathlineto{\pgfpointdecoratedpathlast}
  }
}

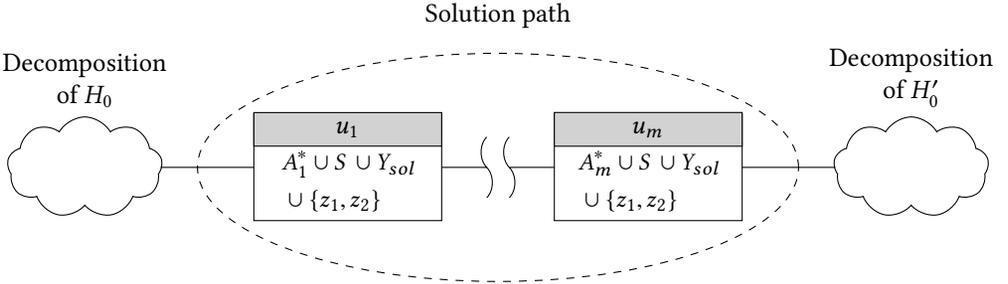
\begin{figure}[h]
  \centering
  \begin{tikzpicture}

    \node [cloud, draw,cloud puffs=9,cloud puff arc=120, aspect=2, inner ysep=1em,
    label={[align=center]Decomposition\\of $H_0$}] at (-0.5,0)
    {};

    \node [cloud, draw,cloud puffs=9,cloud puff arc=120, aspect=2, inner ysep=1em,
     label={[align=center]Decomposition\\of $H'_0$}] at (10.5,0)
     {};

     \draw[dashed] (5,0) ellipse (4cm and 1.5cm);
     \node at (5, 2) {Solution path};

     \node (rect) at (3,0) [draw,minimum width=2.5cm,minimum height=1.4cm] {};
     \node at (3, -0.22) {\small
       $\begin{aligned}
         &A^*_1 \cup S\, \cup  Y_{sol} \\ &\cup \{z_1, z_2\}
     \end{aligned}$};
     \node (rect) at (3,0.5) [draw,minimum width=2.5cm,minimum height=0.4cm, fill=lightgray!70] {$u_1$};

     \node (rect) at (7,0) [draw,minimum width=2.5cm,minimum height=1.4cm] {};
     \node at (7, -0.22) {\small
       $\begin{aligned}
         & A^*_m \cup S\, \cup  Y_{sol}\\ & \cup \{z_1, z_2\}
     \end{aligned}$};
     \node (rect) at (7,0.5) [draw,minimum width=2.5cm,minimum height=0.4cm, fill=lightgray!70] {$u_m$};

     \draw (0.51,0) -- (1.75,0);
     \draw (9.49,0) -- (8.25,0);
     \draw [decoration={%
       discontinuity,
       meta-segment length=0.4cm,
       segment length=0.3cm,
       amplitude=0.4cm}, decorate]
     (4.25,0) -- (5.75,0);

  \end{tikzpicture}
  \caption{Sketch of the intended GHD/FHD.}
  \label{fig:overviewnp}
\end{figure}

We begin by defining the aforementioned gadget (Figure~\ref{fig:gadgetH0}), that will play 
an integral part of this construction. 
Its crucial properties are stated below.

\newcommand{\lemGadgetH}{%
Let $M_1$, $M_2$ be disjoint sets and $M=M_1\cup M_2$. Let $H = (V(H),E(H))$ 
be a hypergraph and $H_0 =$ $(V_0, E_A \cup E_B \cup E_C)$ a subhypergraph of 
$H$ with $V_0=\{a_1,a_2,b_1,b_2,c_1,c_2,d_1,d_2\} \cup M$ and
  \begin{align*}
   E_A = \{ &\{a_1,b_1\} \cup M_1, \{ a_2, b_2 \} \cup M_2, 
         \{a_1,b_2\},\{a_2,b_1\}, \{a_1, a_2\} \} \\ 
   E_B = \{ &\{b_1,c_1\} \cup M_1, \{ b_2, c_2 \} \cup M_2, \{b_1,c_2\},
                   \{b_2,c_1\}, \{ b_1,b_2\}, \{c_1,c_2\} \} \\ 
   E_C = \{ & \{c_1,d_1\} \cup M_1, \{ c_2, d_2 \} \cup M_2, 
           \{c_1,d_2\},
                   \{c_2,d_1\}, \{ d_1,d_2\} \}  
  \end{align*}
\noindent  
where no element from the set $R = \{ a_2, b_1, b_2, c_1, c_2, d_1, d_2 \}$ 
occurs in any edge of $E(H) \setminus (E_A\cup E_B\cup E_C)$.
\noindent  
Then, every FHD $\mcF =\left<T,(B_u)_{u\in T},(\gamma_u)_{u\in T}\right>$ of 
width $\leq 2$ of H has nodes $u_A, u_B, u_C$ s.t.:
\begin{itemize} 
 \item $\{a_1,a_2,b_1,b_2\} \subseteq B_{u_A} \subseteq M \cup \{a_1,a_2,b_1,b_2\}$
 \item $B_{u_B} = \{b_1,b_2,c_1,c_2\} \cup M$,
 \item $\{c_1,c_2,d_1,d_2\} \subseteq B_{u_C} \subseteq M \cup \{c_1,c_2,d_1,d_2\}$, and
 \item $u_B$ is on the path from $u_A$ to $u_C$.%
\end{itemize}
}

\begin{figure}[t]
    \centering

\usetikzlibrary{positioning,shapes,calc}

\tikzset{
    between/.style args={#1 and #2}{
         at = ($(#1)!0.5!(#2)$)
    }
}

\begin{tikzpicture}[
   vert/.style={fill, circle, inner sep = 1pt},
   ell/.style={ellipse,draw,minimum width=2.5cm, inner sep=0cm}]
   \node[vert,label=above:$a_1$] (a1) {};
   \node[vert,label=below:$a_2$, below=0.8 of a1] (a2) {};
   \node[vert,label=above:$b_1$, right=2 of a1] (b1) {};
   \node[vert,label=below:$b_2$, right=2 of a2] (b2) {};
   \node[vert,label=above:$c_1$, right=2 of b1] (c1) {};
   \node[vert,label=below:$c_2$, right=2 of b2] (c2) {};
   \node[vert,label=above:$d_1$, right=2 of c1] (d1) {};
   \node[vert,label=below:$d_2$, right=2 of c2] (d2) {};

   \draw (a1) -- (a2);
   
   \node[ell, between=a1 and b1] {$M_1$};
   \draw (a1) -- (b2);
   \draw (a2) -- (b1);

   \node[ell, between=a2 and b2] {$M_2$};
   \draw (b1) -- (b2);
   
    \draw (c1) -- (c2);
    \node[ell, between=b1 and c1] {$M_1$};
   \draw (c1) -- (b2);
   \draw (c2) -- (b1);
   \node[ell, between=b2 and c2] {$M_2$};

   \draw (d1) -- (d2);
   \node[ell, between=c1 and d1] {$M_1$};
   \draw (c1) -- (d2);
   \draw (c2) -- (d1);
   \node[ell, between=c2 and d2] {$M_2$};
\end{tikzpicture}
    \caption{Basic structure of $H_0$ in Lemma~\ref{lem:gadgetH0}} 
    \label{fig:gadgetH0}
\end{figure}

\begin{lemma}
\label{lem:gadgetH0} 
\lemGadgetH
\end{lemma}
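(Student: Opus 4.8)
The plan is to analyze the structure of an arbitrary width-$\leq 2$ FHD $\mcF$ of $H$ by exploiting two facts: first, the gadget $H_0$ contains many cliques in its primal graph (each edge of $H_0$ of the form $\{x_1,x_2\}$ and each ``column'' $\{x_1,x_2\}\cup M_i$ together with adjacent vertices forces large bags), so by Lemma~\ref{lem:clique} every such clique sits inside some bag; second, the vertices in $R=\{a_2,b_1,b_2,c_1,c_2,d_1,d_2\}$ are \emph{private} to $H_0$ (they occur in no edge outside $E_A\cup E_B\cup E_C$), so any edge that a $\gamma_u$-label may use to cover them must come from $E_A\cup E_B\cup E_C$. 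Combined with the fractional-cover bound, this will pin down the bags $B_{u_A},B_{u_B},B_{u_C}$ almost exactly.

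First I would establish the weight lower bounds. The key computation is the fractional edge cover number of the relevant vertex sets \emph{relative to the admissible edges}. For instance, to cover $\{b_1,b_2,c_1,c_2\}\cup M$: the only edges touching $c_1,c_2$ are the six edges of $E_B$ together with the edges of $E_C$ incident to $c_1,c_2$; a short LP-duality / counting argument (in the spirit of Lemma~\ref{lem:cliquewidth}) shows that any fractional cover of $\{b_1,b_2,c_1,c_2\}\cup M_1\cup M_2$ using these edges has weight $\geq 2$, and that weight exactly $2$ is attained \emph{only} by the integral cover $\{\,\{b_1,c_1\}\cup M_1,\ \{b_2,c_2\}\cup M_2\,\}$, whose covered set is precisely $\{b_1,b_2,c_1,c_2\}\cup M$. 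Since $\{b_1,b_2,c_1,c_2\}$ is a clique in the primal graph (edges $\{b_1,b_2\},\{c_1,c_2\},\{b_1,c_2\},\{b_2,c_1\},\{b_1,c_1\}\cup M_1,\{b_2,c_2\}\cup M_2$ cover all pairs), Lemma~\ref{lem:clique} gives a node $u_B$ with $\{b_1,b_2,c_1,c_2\}\subseteq B_{u_B}$; connectedness of $M_1,M_2$ in the TD (they appear with $b_i,c_i$ in common edges, hence in common bags, hence $M\subseteq B_{u_B}$ follows by a standard ``clique'' argument applied to $\{b_i,c_i\}\cup M_i$); and then the width-$2$ constraint forces $\gamma_{u_B}$ to be the unique optimal cover above, so $B_{u_B}=\{b_1,b_2,c_1,c_2\}\cup M$ exactly — here the privacy of $c_1,c_2$ is what rules out ``cheating'' with outside edges. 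The analogous (one-sided) arguments for $\{a_1,a_2,b_1,b_2\}$ and $\{c_1,c_2,d_1,d_2\}$ give nodes $u_A,u_C$; the bounds are only $\subseteq M\cup\{a_1,a_2,b_1,b_2\}$ resp.\ $\subseteq M\cup\{c_1,c_2,d_1,d_2\}$ because $a_1$ (and $d_1,d_2$ via $E_C$, and $a_1$ via outside edges) is not fully private, so I only get that the cover is one of the two analogous optimal covers and the bag cannot contain vertices from the ``far'' side.

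Finally, for the path condition: I would argue that $u_B$ lies between $u_A$ and $u_C$ by a connectedness argument on a suitable vertex. The vertex $b_1$ (or $b_2$) appears in $B_{u_A}$ and in $B_{u_B}$, and $c_1$ appears in $B_{u_B}$ and in $B_{u_C}$; if $u_B$ were not on the $u_A$–$u_C$ path, then removing $u_B$ disconnects neither pair, and one would produce a bag on the $u_A$–$u_C$ path containing some vertex of $R$ that by the exact-bag characterization cannot occur there — more cleanly, I would use that $\{b_1,b_2\}$ is connected in the TD from $u_A$ to $u_B$ and $\{c_1,c_2\}$ from $u_B$ to $u_C$, and that the edge $\{b_1,c_1\}\cup M_1$ (and $\{b_2,c_2\}\cup M_2$) must lie in a single bag which, by the weight-$2$ analysis, can only be $u_B$; then a routine case distinction on the position of $u_B$ in $T$ relative to the $u_A$–$u_C$ path, using conditions (1) and (2) of Definition~\ref{def:TD}, forces $u_B$ onto that path.

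The main obstacle I anticipate is the exact (not just lower-bound) determination of $B_{u_B}$: proving that weight exactly $2$ forces the \emph{unique} integral cover $\{\{b_1,c_1\}\cup M_1,\{b_2,c_2\}\cup M_2\}$ and hence $B_{u_B}=\{b_1,b_2,c_1,c_2\}\cup M$ with nothing extra. This needs a careful LP argument: one must show that mixing in any of the ``diagonal'' edges $\{b_1,c_2\},\{b_2,c_1\}$ or the ``short'' edges $\{b_1,b_2\},\{c_1,c_2\}$, or any $E_A$/$E_C$ edge, strictly increases the weight needed to also cover all of $M_1\cup M_2$ — essentially because those diagonal/short edges carry no $M$-vertices, so every vertex of $M_i$ still needs weight $1$ from the two long edges $\{b_i,c_i\}\cup M_i$ and $\{a_i,b_i\}\cup M_i$/$\{c_i,d_i\}\cup M_i$, and one checks the resulting LP has value $>2$ unless only the two canonical long $E_B$-edges are used at weight $1$. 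Getting this rigorously (and symmetrically handling why extra vertices like $a_1$ or $d_1$ cannot slip into $B_{u_B}$, again via privacy of $c_1,c_2$ / $b_1,b_2$) is the technical heart of the proof.
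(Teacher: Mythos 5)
Your skeleton matches the paper's for most of the lemma: Lemma~\ref{lem:clique} applied to the three $4$-cliques gives the nodes $u_A,u_B,u_C$; the counting argument (each of the four clique vertices needs weight $\geq 1$, only the edges of $E_B$ contain two of them, so with total weight $\leq 2$ only $E_B$-edges may carry weight) gives the upper bounds $B_{u_X}\subseteq M\cup\{\ldots\}$; and the case distinction on the relative position of $u_A,u_B,u_C$ (including the branch-node case, which you should not gloss over) gives the path condition. All of that is essentially the paper's proof.

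The genuine gap is the inclusion $M\subseteq B_{u_B}$, which is the part of the lemma that actually drives the later NP-hardness argument. Your justification --- ``$M\subseteq B_{u_B}$ follows by a standard clique argument applied to $\{b_i,c_i\}\cup M_i$'' --- does not work: that clique argument only yields \emph{some} node whose bag contains the hyperedge $\{b_i,c_i\}\cup M_i$, and nothing forces that node to be $u_B$. A node whose bag is exactly $\{b_1,c_1\}\cup M_1$, covered by that single edge with weight $1$, is perfectly admissible and is not $u_B$. Your fallback --- that weight exactly $2$ at $u_B$ forces the unique cover $\{\{b_1,c_1\}\cup M_1,\{b_2,c_2\}\cup M_2\}$ and hence $M\subseteq B_{u_B}$ --- is circular: the cover at $u_B$ is only forced to use the two long edges \emph{if you already know} $M\subseteq B_{u_B}$; a priori $B_{u_B}$ could be just $\{b_1,b_2,c_1,c_2\}$, covered with weight $2$ by $\{b_1,b_2\}$ and $\{c_1,c_2\}$ (or by a fractional mix of the diagonal edges), and no local LP argument rules this out. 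What is needed is a \emph{global} connectedness argument, which is what the paper does: the sets $\nodes(\{a_1,a_2\})$ and $\nodes(\{d_1,d_2\})$ induce two connected subtrees $T'_a$ and $T'_d$ of $T$; they are disjoint because $u_B$ lies on the unique path between them and $B_{u_B}$ contains none of $a_1,a_2,d_1,d_2$ (by the upper bound already proved); each of $T'_a$ and $T'_d$ covers all of $M$, since the edges $\{a_i,b_i\}\cup M_i$ must be covered inside $T'_a$ and the edges $\{c_i,d_i\}\cup M_i$ inside $T'_d$; hence every vertex of $M$ occurs in a bag on each side of $u_B$, and connectedness forces $M\subseteq B_{u_B}$. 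Without this step your proof establishes only $\{b_1,b_2,c_1,c_2\}\subseteq B_{u_B}\subseteq M\cup\{b_1,b_2,c_1,c_2\}$, which is strictly weaker than the statement.
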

\begin{proof}
Consider an arbitrary FHD 
$\mcF = \left<T,(B_u)_{u\in T},(\gamma_u)_{u\in T}\right>$ of 
width $\leq 2$ of H. 
      Observe that $a_1, a_2, b_1$, and $b_2$ form a clique of size 4 in the primal graph. Hence, by
      Lemma~\ref{lem:clique}, there is a node $u_A$ in $\mcF$, such that
      $\{a_1,a_2,b_1,b_2\} \subseteq B_{u_A}$. 
      It remains to show that also $B_{u_A} \subseteq M \cup \{a_1,a_2,b_1,b_2\}$ holds.
      To this end, we use a similar reasoning as in the proof of Lemma~\ref{lem:cliquewidth}: 
      to cover each vertex in $\{a_1,a_2,b_1,b_2\}$, we have to put weight $\geq 1$ on each of these 4 vertices. 
      By assumption, the only edges containing 2 out of these 4 vertices are the edges in $E_A \cup \{ \{b_1,b_2\} \}$. 
      All other edges in $E(H)$ contain at most 1 out of these 4 vertices. Hence, in order to cover  
      $\{a_1,a_2,b_1,b_2\}$ with weight $\leq 2$, we are only allowed to put non-zero weight on the edges in 
      $E_A \cup \{ \{b_1,b_2\} \}$. It follows, that $B_{u_A} \subseteq M \cup \{a_1,a_2,b_1,b_2\}$ indeed holds.

      Analogously, for the cliques $b_1, b_2, c_1, c_2$ and $c_1, c_2, d_1, d_2$, 
      there must exist nodes $u_B$ and $u_C$ in $\mcF$ with 
    $\{b_1,b_2,c_1,c_2\}  \subseteq  B_{u_B} \subseteq M \cup \{b_1,b_2,c_1,c_2\}$ and
    $\{c_1,c_2,d_1,d_2\} \subseteq B_{u_C} \subseteq M \cup \{c_1,c_2,d_1,d_2\}$. 
    
    It remains to show that $u_B$ is on the path from $u_A$ to $u_C$ and 
    $B_{u_B} = \{b_1,b_2,c_1,c_2\} \cup M$ holds.
We first show that $u_B$ is on the path between $u_A$ and $u_C$. 
Suppose to the contrary that it is not. We distinguish three cases. 
First, assume that $u_A$ is on the path between $u_B$ and $u_C$. 
Then, by 
connectedness, $\{ c_1, c_2 \} \subseteq B_{u_A}$, which contradicts the property 
$B_{u_A} \subseteq M \cup \{a_1,a_2,b_1,b_2\}$ shown above.
Second, assume $u_C$ is on the path between $u_A$ and $u_B$. In this case, 
we have $\{ b_1, b_2 \} \subseteq B_{u_C}$, which contradicts the property 
$B_{u_C} \subseteq M \cup \{c_1,c_2,d_1,d_2\}$ shown above.
Third, suppose there is no path containing $u_A$, $u_B$, and $u_C$. Then there exists a node $u$
such that removing $u$ from $T$ would put $u_A$, $u_B$, $u_C$ in three different components.
Node $u$ is therefore on the path between any pair of the three nodes. Because $u$ is on the path from $u_A$ to $u_B$,
$\{b_1, b_s\} \subseteq B_u$ by connectedness. Analogously, also $\{c_1, c_2\}\subseteq B_u$ because of the path from $u_C$ to $u_B$. Then, by the same argument as in the beginning we have $\{b_1, b_2, c_1, c_2\} \subseteq B_u \subseteq M \cup \{b_1, b_2, c_1, c_2\}$. Hence, $u$ satisfies all the properties we have established for $u_B$ and we can just consider the node $u$, which is on the path from $u_A$ to $u_C$, to be our $u_B$.

We now show that also $B_{u_B} = \{b_1,b_2,c_1,c_2\} \cup M$ holds.
Since we have already established    
    $\{b_1,b_2,c_1,c_2\}  \subseteq  B_{u_B} \subseteq M \cup \{b_1,b_2,c_1,c_2\}$,
it suffices to show $M \subseteq B_{u_B}$.    
First, let $T'_a$ be the subgraph of $T$ induced by 
$\nodes(\{a_1,a_2\},\mcF)$
     and let $T'_d$ be the subgraph of $T$ induced by 
$\nodes(\{d_1,d_2\},\mcF)$.
     We show that each of the subgraphs $T'_a$ and $T'_d$ is connected
     (i.e., a subtree of $T$) and that the two subtrees are disjoint.
     The connectedness is immediate: by the connectedness condition, each of 
     $\nodes(\{a_1\},\mcF)$, $\nodes(\{a_2\},\mcF)$, 
     $\nodes(\{d_1\},\mcF)$, and $\nodes(\{d_2\},\mcF)$ is connected. 
     Moreover, since $H$ contains an edge $\{a_1,a_2\}$ (resp.\ $\{d_1,d_2\}$), 
     the two subtrees induced by
     $\nodes(\{a_1\},\mcF)$, $\nodes(\{a_2\},\mcF)$ (resp.\ 
     $\nodes(\{d_1\},\mcF)$, $\nodes(\{d_2\},\mcF)$) must be connected, hence
     $T'_a$ and $T'_d$ are subtrees of $T$.
     It remains to show that $T'_a$ and 
     $T'_d$ are disjoint. 

     Clearly, $u_A \in T'_a$ and $u_C \in T'_d$. We have established
     above that $u_B$ is on a path from $u_A$ to $u_C$. Also,
     $B_{u_B} \subseteq M \cup \{b_1, b_2, c_1, c_2\}$ and therefore
     $u_B$ is in neither $T'_a$ nor $T'_d$. Because both are subtrees
     of the tree $T$, there is only one path connecting them. But $u_B$
     is on that path and in neither subtree, i.e., $T'_a$ and $T'_d$ are disjoint.

As every edge
     must be covered, there are nodes in 
     $T'_a$ that cover $\{a_1,b_1\} \cup M_1$ and $\{a_2,b_2\} \cup 
     M_2$, respectively. 
Hence, the subtree $T'_a$ covers $M = M_1 \cup M_2$, 
i.e., 
     $M \subseteq \bigcup_{u\in T'_a} B_u$. Likewise, 
     $T'_d$ covers $M$. Since both subtrees are
     disjoint and $u_B$ is on the path between them, by the connectedness 
     condition, we have $M \subseteq B_{u_B}$.
\end{proof}

\newcommand{\thmNpcomp}{%
The \rec{{\it decomp\/},\,$k$} problem is \np-complete for 
{\it decomp\/} $\in \{$GHD, FHD$\}$ and $k = 2$.%
}

\begin{theorem}
 \label{thm:npcomp}
\thmNpcomp 
\end{theorem}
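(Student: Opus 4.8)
Membership in \np{} is routine: guess a tree decomposition with at most $|V(H)|$ nodes together with a bag $B_u\subseteq V(H)$ for each node, and verify in polynomial time the connectedness and edge-coverage conditions and, for every bag, that $\rho(B_u)\le 2$ (in the GHD case, by enumerating all pairs of edges) resp.\ $\rho^*(B_u)\le 2$ (in the FHD case, by a linear program). The substance of the proof is \np-hardness, obtained by a reduction from 3SAT. Given a 3CNF formula $\varphi$ over $x_1,\dots,x_n$ with clauses $C_1,\dots,C_m$, $C_j=(\ell_{j,1}\vee\ell_{j,2}\vee\ell_{j,3})$, we build in polynomial time the hypergraph $H=H_\varphi$ sketched in Figure~\ref{fig:overviewnp}: it has literal vertices $Y=\{y_1,\dots,y_n\}$, $Y'=\{y_1',\dots,y_n'\}$ (for the positive and negative literals on the $x_i$), address vertices $A=\{a_1,\dots,a_m\}$, $A'=\{a_1',\dots,a_m'\}$, a set $S$, special vertices $z_1,z_2$, and the vertices of two copies $H_0,H_0'$ of the gadget of Figure~\ref{fig:gadgetH0}, instantiated so that the parameter set $M$ of $H_0$ contains $A\cup Y\cup S\cup\{z_1,z_2\}$ and that of $H_0'$ contains $A'\cup Y'\cup S\cup\{z_1,z_2\}$. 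Besides the gadget edges, $H_\varphi$ has the adjacency edges $\{a_j,a_j'\}$ and, for every clause $C_j$ and every $i\in\{1,2,3\}$, a \emph{matching pair} of edges $e^{z_1}_{j,i}\ni z_1$, $e^{z_2}_{j,i}\ni z_2$ whose union equals $A^*_j\cup S\cup\{z_1,z_2\}\cup\big((Y\cup Y')\setminus\{\overline{\ell_{j,i}}\}\big)$, where $A^*_j=\{a_1',\dots,a_j',a_j,\dots,a_m\}$; the set $S$ carries a small clique (so that the clique reasoning of Lemma~\ref{lem:cliquewidth} applies to it) and the $z_1$-edges are shaped so that $e^{z_1}_{j,i}$ omits exactly a designated little piece of $S$ that only $e^{z_2}_{j,i}$ refills. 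Finally, the clause-checking part of the construction is replicated $2n+1$ times in series. Since the vertices of the set $R$ are private to their gadget, Lemma~\ref{lem:gadgetH0} is applicable to both copies.

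\emph{Soundness.} If $\varphi$ is satisfiable, let $Y_{sol}\subseteq Y\cup Y'$ be the (total, consistent) set of literals that are true under a satisfying assignment and assemble the GHD of Figure~\ref{fig:overviewnp}: along the backbone one uses, for the $j$-th clause (in each of the $2n+1$ replicas), a node with bag $A^*_j\cup S\cup\{z_1,z_2\}\cup Y_{sol}$, covered with weight $2$ by the matching pair $\{e^{z_1}_{j,i},e^{z_2}_{j,i}\}$ for a literal $\ell_{j,i}$ of $C_j$ satisfied by the assignment -- then $\overline{\ell_{j,i}}\notin Y_{sol}$, so the pair really covers $Y_{sol}$. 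At the two ends one plugs in width-$2$ GHDs of $H_0$ and $H_0'$, which exist because the bag $\{b_1,b_2,c_1,c_2\}\cup M$ is covered with weight $2$ by $\{b_1,c_1\}\cup M_1$ together with $\{b_2,c_2\}\cup M_2$ and the remaining gadget bags are small; the connectedness condition holds because $Y_{sol}$ is the same in every backbone bag. Hence $\ghw(H_\varphi)\le 2$, and since $\fhw(H)\le\ghw(H)$ always, also $\fhw(H_\varphi)\le 2$.

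\emph{Completeness.} Conversely, let $\mcF$ be an FHD of $H_\varphi$ of width $\le 2$. Applying Lemma~\ref{lem:gadgetH0} to $H_0$ yields a node $u_B$ with $B_{u_B}\supseteq A\cup Y\cup S\cup\{z_1,z_2\}$, and applying it to $H_0'$ yields a node $u_B'$ with $B_{u_B'}\supseteq A'\cup Y'\cup S\cup\{z_1,z_2\}$. Let $P$ be the path between $u_B$ and $u_B'$ in $T$. By the connectedness condition, $S\cup\{z_1,z_2\}\subseteq B_v$ for every node $v$ of $P$; and since the $R$-vertices never occur outside their gadget, the entire clause-checking structure lies along $P$. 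By Lemma~\ref{lem:cliquewidth} (the clique hidden in $S$) together with the shape of the $z_1$-/$z_2$-edges, every weight-$\le 2$ cover of a bag $B\supseteq S\cup\{z_1,z_2\}$ on an interior node of $P$ must put weight $1$ on a single $z_1$-edge and weight $1$ on its unique matching $z_2$-edge; hence $B\subseteq A^*_j\cup S\cup\{z_1,z_2\}\cup\big((Y\cup Y')\setminus\{\overline{\ell_{j,i}}\}\big)$ for some $j,i$, so the trace of $B$ on $A\cup A'$ lies in $A^*_j$ and its trace on $Y\cup Y'$ omits $\overline{\ell_{j,i}}$. Walking $P$ from $u_B$ (where all of $A$, but none of $A'$, is present) towards $u_B'$ (where all of $A'$, but none of $A$, is present), the adjacency edge $\{a_j,a_j'\}$ and connectedness force, for every $j$, a node $w_j$ on $P$ -- necessarily an interior one -- that contains both $a_j$ and $a_j'$; since $A^*_j$ is the only set $A^*_{j'}$ containing both $a_j$ and $a_j'$, the (matching-pair) cover of $B_{w_j}$ is one belonging to the $j$-th clause, so the $(Y\cup Y')$-trace of $B_{w_j}$ omits $\overline{\ell_{j,i_j}}$ for one literal $\ell_{j,i_j}$ of $C_j$. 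Finally, along $P$ every literal vertex occupies a subpath, so the $(Y\cup Y')$-trace changes at most $2n$ times; since $P$ runs through the $2n+1$ replicas of the clause-checking part, one replica $Q$ has a constant such trace $Y_{sol}$, and within $Q$ all $m$ clauses are checked against this one $Y_{sol}$, i.e.\ $\overline{\ell_{j,i_j}}\notin Y_{sol}$ for every $j$ -- which by the design of the clause-edges means $Y_{sol}$ induces a satisfying assignment of $\varphi$. Combining the two directions, $\varphi$ is satisfiable iff $\ghw(H_\varphi)\le 2$ iff $\fhw(H_\varphi)\le 2$; as $H_\varphi$ is built in polynomial time, \rec{GHD,\,$2$} and \rec{FHD,\,$2$} are \np-hard, hence \np-complete.

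\emph{Main obstacle.} The work is concentrated entirely in the completeness direction: one must show that \emph{every} FHD of width $\le 2$ -- not merely the intended one -- is forced into the shape of Figure~\ref{fig:overviewnp} with bags of the prescribed form. The fractional relaxation is dangerous in two places, which is exactly where the key ideas sit: ruling out that a weight-$\le 2$ cover of $S\cup\{z_1,z_2\}$ cheats by spreading fractional weight over many edges -- this is what the clique inside $S$ and Lemma~\ref{lem:cliquewidth} prevent -- and arguing, via the adjacency edges and connectedness, that the $(A\cup A')$-trace of the transition bags is pinned down tightly enough; the $2n+1$-fold replication is the standard device absorbing the last loophole, namely a literal vertex drifting in or out along the path.
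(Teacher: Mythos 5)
Your overall architecture (two gadget copies anchoring a long path of clause-checking bags, matching $z_1$/$z_2$ edge pairs whose union covers $A^*_j\cup S\cup\{z_1,z_2\}$ and all of $Y\cup Y'$ except one negated literal, plus $(2n+1)$-fold replication) coincides with the paper's, and the \np-membership and soundness directions are fine at this level of detail. The genuine gap is in the mechanism that is supposed to tame \emph{fractional} covers in the completeness direction. You attribute the rigidity of weight-$\le 2$ covers of bags containing $S\cup\{z_1,z_2\}$ to ``a clique hidden in $S$'' and Lemma~\ref{lem:cliquewidth}. This cannot work: the lower bound in Lemma~\ref{lem:cliquewidth} needs every edge to meet the clique in at most two vertices, whereas every clause edge must contain almost all of $S$ (that is the entire purpose of $S$), so a single clause edge swallows the whole clique and the bound evaporates; conversely, if the clique were realized only by small private edges it would cost weight $\ge 2$ on its own and exhaust the budget. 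The paper's actual mechanism is different and is the heart of the proof (Lemma~\ref{lem:compl_edge}): $z_1$ lies only in the ``$0$-type'' edges and $z_2$ only in the ``$1$-type'' edges, forcing total weight exactly $1$ on each family; each $0$-type edge misses a private piece of $S$ that only its complementary $1$-type edge refills, and a weight-accounting argument then forces $\gamma(e)=\gamma(e')$ on every complementary pair. Note also that even this does \emph{not} yield your stronger claim that a \emph{single} pair receives weight $1$ each; Lemma~\ref{lem:covering} only forces the weight to be spread over the three pairs of one clause position with each half summing to $1$, and the satisfiability extraction (Claim~I) must work with ``some pair has positive weight, hence some literal vertex receives weight $<1$'' rather than with a unique full-weight pair.

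A second gap concerns the replication step. You realize the adjacency of $a_j$ and $a'_j$ by plain pair edges $\{a_j,a'_j\}$ and then argue that one of the $2n+1$ replicas sees a constant literal trace $Y_{sol}$. For that you need the $m$ clause-checking nodes of a single replica to occupy a contiguous stretch of the path from $u_B$ to $u'_B$; otherwise the replicas can interleave with the up to $2n$ trace-change points so that every replica straddles one (e.g.\ all $2n+1$ copies of clause $1$ occur before a change point and all copies of clause $2$ after it). Mere adjacency pairs do not force this ordering. The paper instead uses the nested edges $e_p=A'_p\cup\overline{A_p}$ (a prefix of $A'$ together with a suffix of $A$) precisely so that Claim~C can show the covering nodes appear along the path in the lexicographic order of the positions $p$, which is what makes ``some replica is change-free'' sound. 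Both ingredients need to be supplied before the completeness direction is a proof.
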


\newcommand{\lemComplEdge}{%
   Let $\mcF = \left< T, (B_u)_{u\in T}, (\gamma_u)_{u\in T} \right>$ be 
   an  FHD of width $\leq 2$ of the hypergraph
   $H$ constructed above. For every 
   node $u$ with $S \cup \{z_1, z_2\} \subseteq B_u$ and every pair $e, e'$ of
   complementary edges, it holds that $\gamma_u(e) = \gamma_u(e')$.
}

\newcommand{\lemCovering}{%
   Let $\mcF = \left< T, (B_u)_{u\in T}, (\gamma_u)_{u\in T} \right>$ be 
   an  FHD of width $\leq 2$ of the hypergraph
   $H$ constructed above and let $p \in [2n+3;m]^-$. For every 
   node $u$ with $S \cup A'_p \cup
   \overbar{A_p} \cup \{ z_1, z_2 \} \subseteq B_u$, 
the only way to cover $S 
   \cup A'_p \cup \overbar{A_p} \cup \{ z_1, z_2 \}$ 
by a fractional edge cover $\gamma$ of weight $\leq 2$ is 
by putting non-zero weight 
exclusively on edges 
   $e^{k,0}_p$ and $e^{k,1}_p$ with $k \in \{1,2,3\}$. 
Moreover, 
   $\sum_{k=1}^3 \gamma(e^{k,0}_p) = 1$ and $\sum_{k=1}^3 \gamma(e^{k,1}_p) = 1$  must hold.
}

\newcommand{\clmA}{
The nodes $u'_A,u'_B,u'_C$ (resp. $u_A,u_B,u_C$) are not 
on the path from $u_A$ to $u_C$ (resp. $u'_A$ to $u'_C$).
}

\newcommand{\clmB}{
The following equality holds:
$\nodes(A\cup A',\mcF) \cap \{u_A,u_B,u_C,u'_A,u'_B,u'_C\} = 
\emptyset$.

}

\newcommand{\clmC}{
The FHD $\mcF$ has a path containing nodes %
$\hat{u}_1, \dots, $ $\hat{u}_N$ for some $N$, such that the edges 
$e_{\min \ominus 1}, e_{\min}$, $e_{\min \oplus 1}$, \dots, $e_{\max \ominus 
1}$, $e_{\max}$
are covered in this order. More formally, there is a mapping $f: \{ 
\min\ominus 1,$ $\ldots, \max \} \ra \{1, \ldots, N\}$, s.t.
\begin{itemize}
 \item $\hat{u}_{f(p)}$ covers $e_p$ and
 \item if $p < p'$ then $f(p) \leq f(p')$.
\end{itemize}
By a {\em path containing nodes\/} $\hat{u}_1, \dots, \hat{u}_N$ we mean that 
$\hat{u}_1$ and $\hat{u}_N$ are nodes in $\mcF$, such that the nodes 
$\hat{u}_2, 
\dots, \hat{u}_{N-1}$ lie (in this order) on the path from $\hat{u}_1$ to  
$\hat{u}_N$. 
Of course, the path from $\hat{u}_1$ to  $\hat{u}_N$ may also contain further 
nodes, but we are not interested in whether they 
cover  any of the edges $e_p$.
}

\newcommand{\clmD}{
In the FHD $\mcF$ of $H$ of 
width $\leq 2$, the path from $u_A$ to $u'_A$ has non-empty intersection with
$\ppi$.
}

\newcommand{\clmE}{In the FHD $\mcF$ of $H$ of 
width $\leq 2$ there are two distinct nodes $\hat{u}$ and $\hat{u}'$ in the intersection 
of the path from $u_A$ to $u'_A$ with $\ppi$, s.t.\ $\hat{u}$ is the node in $\ppi$ closest to $u_A$
and $\hat{u}'$ is the node in $\ppi$ closest to $u'_A$. 
Then, on the path $\ppi$, $\hat{u}$ comes before $\hat{u}'$.
See Figure~\ref{fig:u-and-uprime} (a) for a graphical illustration of the arrangement of the 
nodes $\hat{u}_1$, $\hat{u}$, $\hat{u}'$, and 
$\hat{u}_N$
on the path $\ppi$.}

\newcommand{\clmF}{
In the FHD $\mcF$ of $H$ of 
width $\leq 2$ the path %
$\ppi$ has at least 3 nodes $\hat{u}_i$, i.e., $N \geq 
3$. 
}

\newcommand{\clmG}{In the FHD $\mcF$ of $H$ of 
width $\leq 2$ all the nodes $\hat{u}_2, \ldots, \hat{u}_{N-1}$ are on 
the path from $u_A$ to $u'_A$.
For the nodes $\hat{u}$ and $\hat{u}'$ from Claim E, this means that the nodes 
$\hat{u}_1, \hat{u},$ $\hat{u}_2,\hat{u}_{N-1}$, 
$ \hat{u}'$, $\hat{u}_N$ are arranged in precisely this order on the path $\ppi$
from $\hat{u}_1$ to $\hat{u}_N$, cf.\ Figure~\ref{fig:u-and-uprime} (b).
The node $\hat{u}$ may possibly coincide with $\hat{u}_1$    
and $\hat{u}'$ may possibly coincide with $\hat{u}_{N}$. 
}

\newcommand{\clmH}{
 Each of the nodes 
$\hat{u}_1, \dots, \hat{u}_N$ covers 
exactly one of the edges
$e_{\min \ominus 1}$, $e_{\min}$, $e_{\min \oplus 1}$, \dots, $e_{\max \ominus 
1}$, $e_{\max}$.
}

\newcommand{\clmI}{
The constructed truth assignment $\sigma$ %
is %
a model of $\varphi$.
}

\begin{proof}
The problem is clearly in \np: guess a tree decomposition and check in 
polynomial
 time for each node $u$ whether $\rho(B_u) \leq 2$ 
or $\rho^*(B_u) \leq 2$, respectively, holds.
The \np-hardness is proved by a reduction from 3SAT. 
Before presenting this reduction, we first 
introduce some 
useful notation.

\medskip
\noindent
{\bf Notation.}
  For $i,j \geq 1$, we denote $\{1,\ldots,i\} \times \{1,\ldots,j\}$ by $[i;j]$.
  For each $p \in [i;j]$, we denote by $p \oplus 1$ ($p \ominus 1$) 
  the 
  successor (predecessor) of $p$ in the 
  usual lexicographic order on pairs, that is, the order $(1,1),\ldots,(1,j),$ 
   $(2,1),\ldots,(i,1)$, $\ldots, (i,j)$. We 
  refer to the first element $(1,1)$ 
as $\min$ and 
  to the last element $(i,j)$ as $\max$. 
  We denote by $[i;j]^-$ the set $[i;j]\setminus\{\max\}$, i.e.\ $[i;j]$ without
  the last element.
  
Now let $\varphi = \bigwedge_{j=1}^m 
  (L_j^1 
  \vee L_j^2 \vee L_j^3)$ be an arbitrary instance of  3SAT with $m$ clauses 
and 
variables 
  $x_1,\ldots,x_n$. 
From this we  will
construct 
a hypergraph $H = (V(H),E(H))$,
which consists of two copies $H_0, H'_0$ of the 
  (sub-)hypergraph $H_0$ of Lemma~\ref{lem:gadgetH0} plus additional edges
  connecting $H_0$ and $H'_0$. We use the sets $Y = \{ y_1, \ldots, y_n\}$ 
  and $Y' = \{ y'_1, \ldots, y'_n\}$ to encode the truth values of the 
variables 
of $\varphi$.
  We  denote by $Y_{\ell}$ ($Y'_{\ell}$) the set $Y 
  \setminus \{y_{\ell}\}$ ($Y' \setminus \{ y'_{\ell} \}$). Furthermore, we use the sets 
  $A = \{ a_p \mid p \in [2n+3; m] \}$ and $A' = \{ a'_p \mid p \in 
  [2n+3;m]\}$, and
we define the following subsets of  $A$ and $A'$, respectively:
  \begin{align*}
     A_p &= \{ a_{\min},\ldots,a_p \} & 
  \overbar{A_p} &=  \{ a_{p},\ldots,a_{\max} \} \\
     A'_p &= \{ a'_{\min},\ldots,a'_p \} & 
  \overbar{A'_p} &= \{ a'_{p},\ldots,a'_{\max} \} 
  \end{align*}
  
  In addition,
  we will use another set $S$ of elements, that controls and restricts the ways
  in which edges are combined in a possible FHD or GHD. 
  Such a decomposition will have, implied by Lemma~\ref{lem:gadgetH0}, 
  two nodes $u_B$ and $u'_B$ 
  such that $S \subseteq 
  B_{u_B}$ and $S \subseteq B_{u'_B}$. From this, we will reason on the path 
  connecting $u_B$ and $u'_B$.
  
The concrete set $S$ used in our construction of  $H$ is obtained as follows.
Let $Q = [2n+3;m] \cup 
  \{(0,1),(0,0),(1,0)\}$, hence $Q$ is an extension of the set $[2n+3;m]$ with 
  special elements $(0,1),(0,0),(1,0)$. Then we define the set $S$ as 
  $ S = Q \times \{1,2,3\}.$ 

  The elements in $S$
  are pairs, which we denote as $(q \mid k)$. The values $q \in Q$ are themselves pairs of 
integers
  $(i,j)$. Intuitively, $q$ indicates the position of a node on the 
  ``long'' path $\pi$ in the desired FHD or GHD. The integer $k$ 
  refers to a literal in the $j$-th clause.
  We will write the wildcard $*$ to indicate that a component in some element 
of 
$S$
  can take an arbitrary value.  For example,
  $(\min \mid *)$ denotes the set of tuples 
  $(q \mid k)$
  where $q = \min = (1,1)$ and
  $k$ can take an arbitrary value in $\{1,2,3\}$.
  We will denote by $S_p$ the set $(p \mid *)$. For instance, 
  $(\min \mid *)$  will be denoted as $S_{\min}$. Further, for 
  $p \in [2n+3; m]$ and
  $k \in \{1,2,3\}$, we define singletons
  $S^{k}_p = \{ (p \mid k)\}$.    
  
\medskip
\noindent
{\bf Problem reduction.}
Let $\varphi = \bigwedge_{j=1}^m 
  (L_j^1 
  \vee L_j^2 \vee L_j^3)$ be an arbitrary instance of  3SAT with $m$ clauses 
and 
variables 
  $x_1,\ldots,x_n$. 
From this we 
construct a hypergraph $H = (V(H),E(H))$, 
that is,
an instance of \rec{{\it decomp\/},\,$k$} with 
{\it decomp\/} $\in  \{$GHD, FHD$\}$ and $k = 2$.

We start by defining the vertex set $V(H)$: 
  \begin{align*}
    V(H) = &\; S \;\cup\; A \;\cup\; A' \;\cup\;  Y \;\cup\; Y' 
\;\cup\; \{ 
z_1, 
z_2 \} \;\cup \\
    &\; \{ a_1, a_2, b_1, b_2, c_1, c_2, d_1, d_2, %
    a'_1, a'_2, b'_1, b'_2, c'_1, c'_2, d'_1, d'_2 \}.
  \end{align*}

The edges of $H$ are defined in 3 steps. First, we take two 
    copies of the subhypergraph $H_0$ used in Lemma~\ref{lem:gadgetH0}:
    
  \begin{itemize}
   \item Let $H_0 = (V_0, E_0)$ be the hypergraph of Lemma~\ref{lem:gadgetH0} 
   with $V_0=\{a_1,a_2,b_1,b_2$, $c_1,c_2,d_1,d_2\} \cup M_1 \cup M_2$ and 
   $E_0 = E_A \cup E_B \cup E_C$, where we set $M_1 = S \setminus 
   S_{(0,1)} \cup \{z_1\}$ and $M_2 = Y \cup 
   S_{(0,1)} \cup \{z_2\}$.
   \item Let $H'_0 = (V'_0, E'_0)$ be the corresponding hypergraph, 
         with $V'_0=\{a'_1, a'_2, b'_1,$ $b'_2, 
         c'_1, c'_2, d'_1, d'_2\}\cup M'_1 \cup M'_2$ and $E'_A, E'_B, E'_C$ 
are the primed versions of the 
         egde sets $M'_1 = S 
         \setminus S_{(1,0)} \cup \{z_1\}$ and $M'_2 = Y' \cup S_{(1,0)} \cup  
         \{z_2\}$.
   \end{itemize}
  
In the second step, we define the edges which (as we will see) enforce the 
existence of a ``long'' path~$\pi$
between the nodes covering $H_0$ and the nodes covering  $H'_0$ in any 
FHD of width $\leq 2$. 
  \begin{itemize}
   \item $e_{p} = A'_p \cup \overbar{A_p}$,
         for $p \in [2n+3;m]^-$,
   \item $e_{y_i} = \{ y_i, y'_i \}$, for $1 \leq i \leq n$,
   \item For $p = (i,j)  \in [2n+3;m]^-$ and $k \in \{1,2,3\}$:
     \begin{align*}
      e^{k,0}_p = & \begin{cases}
                    \overbar{A_p} \cup (S\setminus S^{k}_p) 
                       \cup Y  \cup \{z_1\} & \mbox{if } L^k_j = x_{\ell} \\
                    \overbar{A_p} \cup (S\setminus S^{k}_p)
                       \cup Y_{\ell} \cup \{z_1\} & \mbox{if } 
                       L^k_j = \neg x_{\ell}, 
                  \end{cases} \\
          e^{k,1}_p = & \begin{cases}
                    A'_p \cup S^{k}_p \cup 
                    Y'_{\ell} \cup \{ z_2\} & \mbox{if } L^k_j = x_{\ell} \\
                    A'_p \cup S^{k}_p \cup 
                    Y' \cup \{z_2\} & \mbox{if } L^k_j = \neg x_{\ell}. 
                  \end{cases}        
     \end{align*}
  \end{itemize}

Finally, we need edges that connect $H_0$ and $H'_0$ with the above edges covered 
by the nodes of the 
``long'' path $\pi$ in a GHD or FHD:
  
  \begin{itemize}
   \item $e^0_{(0,0)}= %
              \{ a_1 \} \cup A \cup S \setminus S_{(0,0)} \cup Y \cup \{ z_1\}
         $
   \item $e^1_{(0,0)} = S_{(0,0)} \cup Y' \cup \{ z_2\}$
   \item $e^0_{\max} = %
                S \setminus S_{\max} \cup Y
          \cup \{ z_1\}$
   \item $e^1_{\max} = \{a'_1 \} \cup A' \cup  S_{\max} \cup 
                Y' \cup \{z_2\}$
  \end{itemize}
  
This concludes the construction of the hypergraph $H$. Before we prove the correctness of the problem reduction, we give an example that will help to illustrate the intuition underlying this construction.

\begin{example}
\label{bsp:NPhardnessProof}
Suppose that an instance of 3SAT is given by the propositional 
formula $\varphi = (x_1 \vee \neg x_2 \vee x_3) \wedge 
(\neg x_1 \vee x_2 \vee \neg x_3)$, i.e.: we have $n= 3$ variables 
and $m = 2$ clauses. From this we construct a hypergraph 
$H = (V(H),E(H))$. 
First, we instantiate the sets $Q,A,A',S,Y$, and $Y'$ from our problem 
reduction.
  \begin{eqnarray*}
A & =  & \{ a_{(1,1)}, a_{(1,2)}, a_{(2,1)}, a_{(2,2)}, \dots, a_{(9,1)}, 
a_{(9,2)} \},
\\
A' & =  & \{ a'_{(1,1)}, a'_{(1,2)}, a'_{(2,1)}, a'_{(2,2)}, \dots, a'_{(9,1)}, 
a'_{(9,2)} \},
\\
Q &  = & \{(1,1), (1,2), (2,1), (2,2), \dots, (9,1), (9,2)\} \cup  \{(0,1),(0,0),(1,0)\}, 
\\
S & = &  Q \times \{1,2,3\}, %
\\
Y & =  & \{y_1, y_2, y_3\}, \\
Y' & = & \{y'_1, y'_2, y'_3\}.
  \end{eqnarray*}
According to our problem reduction, the set $V(H)$ of vertices of $H$ is 
  \begin{align*}
    V(H) = &\; S \;\cup\; A \;\cup\; A' \;\cup\; Y \;\cup\; Y' 
           \;\cup\; \{ z_1, z_2 \} \;\cup \\
    &\; \{ a_1, a_2, b_1, b_2, c_1, c_2, d_1, d_2 \} \cup \{ a'_1, a'_2, 
    b'_1, b'_2, c'_1, c'_2, d'_1, d'_2 \}.
  \end{align*}
The  set $E(H)$ of edges of $H$ is defined in several steps. First, the edges 
in 
$H_0$ and $H'_0$ are defined: We thus have the subsets 
$E_A,E_B,E_C,E'_A,$ $E'_B, E'_C \subseteq E(H)$, whose definition is based on the 
sets 
$M_1 = S \setminus S_{(0,1)} \cup \{z_1\}$,
$M_2 = Y \cup  S_{(0,1)} \cup \{z_2\}$,
$M'_1  =  S \setminus S_{(1,0)} \cup \{z_1\}$, 
and
$M'_2   =  Y' \cup S_{(1,0)} \cup \{z_2\}$.
The definition of the edges 
\begin{eqnarray*}
e_{p} & = & A'_p \cup \overbar{A_p} \hskip52pt \mbox{for } p \in \{(1,1), (1,2), \dots (8,1),(8,2),(9,1)\}, 
\\ 
e_{y_i} & = & \{ y_i, y'_i \} \hskip55pt \mbox{ for } 1 \leq i \leq 3,
\\
e^0_{(0,0)} & = & \{ a_1 \} \cup A \cup S \setminus S_{(0,0)} \cup Y \cup \{ 
z_1\} ,
\\
e^1_{(0,0)} & = & S_{(0,0)} \cup Y' \cup \{ z_2\}, 
\\
e^0_{(9,2)} & = & S \setminus S_{(9,2)} \cup Y \cup \{ z_1\}, \hskip2pt \mbox{ and}
\\
e^1_{(9,2)} & = & \{a'_1 \} \cup A' \cup  S_{(9,2)} \cup Y' \cup \{z_2\}
\end{eqnarray*}
is straightforward. We concentrate on the edges
$e^{k,0}_p$ and $e^{k,1}_p$ 
for $p \in \{(1,1), (1,2), \dots (8,1),(8,2)$, $(9,1)\}$ and $k \in \{1,2,3\}$.
These edges play the key role for covering the bags of the nodes 
along the ``long'' path $\pi$
in any FHD or GHD of $H$. This path can be thought of as being 
structured in 
9 blocks. Consider an arbitrary $i \in \{1, \dots, 9\}$.
Then $e^{k,0}_{(i,1)}$ and $e^{k,1}_{(i,1)}$ encode the $k$-th literal of the 
first clause
and $e^{k,0}_{(i,2)}$ and $e^{k,1}_{(i,2)}$ encode the $k$-th literal of the 
second clause 
(the latter is only defined for $i \leq 8$).
These edges are defined as follows:
the edges $e^{1,0}_{(i,1)}$ and $e^{1,1}_{(i,1)}$ encode the first literal of 
the first clause, i.e., 
the positive literal $x_1$. We thus have
  \begin{eqnarray*}
e^{1,0}_{(i,1)} &  = &  
\overbar{A_{(i,1)}} \cup (S\setminus S^{1}_{(i,1)}) \cup \{y_1,y_2,y_3\}  
\cup 
\{z_1\} \mbox{ and}
\\
e^{1,1}_{(i,1)}  &  = &  A'_{(i,1)} \cup S^{1}_{(i,1)} \cup 
\{y'_2,y'_3\} 
\cup 
\{ z_2\} 
  \end{eqnarray*}
The edges $e^{2,0}_{(i,1)}$ and $e^{2,1}_{(i,1)}$ encode the second literal of 
the first clause, i.e., 
the negative literal $\neg x_2$. Likewise, $e^{3,0}_{(i,1)}$ and 
$e^{3,1}_{(i,1)}$ encode the third literal of the first clause, i.e., 
the positive literal $x_3$. Hence,
  \begin{eqnarray*}
e^{2,0}_{(i,1)} &  = &  
\overbar{A_{(i,1)}} \cup (S\setminus S^{2}_{(i,1)}) \cup \{y_1,y_3\}  \cup 
\{z_1\}, 
\\
e^{2,1}_{(i,1)}  &  = &  A'_{(i,1)} \cup S^{2}_{(i,1)} \cup 
 \{y'_1,y'_2,y'_3\} 
\cup \{ z_2\} 
\\
e^{3,0}_{(i,1)} &  = &  
\overbar{A_{(i,1)}} \cup (S\setminus S^{3}_{(i,1)}) \cup \{y_1,y_2,y_3\}  
\cup 
\{z_1\}, 
\mbox{ and}
\\
e^{3,1}_{(i,1)}  &  = &  A'_{(i,1)} \cup S^{3}_{(i,1)} 
\cup \{y'_1,y'_2\} 
\cup   
\{ z_2\} 
  \end{eqnarray*}
Analogously, the 
edges $e^{1,0}_{(i,2)}$ and $e^{1,1}_{(i,2)}$ (encoding the 
first literal of the second clause, i.e., $\neg x_1$), 
the edges $e^{2,0}_{(i,2)}$ and $e^{2,1}_{(i,2)}$ (encoding the 
second literal of the second clause, i.e., $x_2$), and 
the edges $e^{3,0}_{(i,2)}$ and $e^{3,1}_{(i,2)}$ (encoding the 
third literal of the second clause, i.e., $\neg x_3$) are defined as follows:
  \begin{eqnarray*}
e^{1,0}_{(i,2)} &  = &  
\overbar{A_{(i,2)}} \cup (S\setminus S^{1}_{(i,2)}) \cup \{y_2,y_3\}  \cup 
\{z_1\}, \\
e^{1,1}_{(i,2)}  &  = &  A'_{(i,2)} \cup S^{1}_{(i,2)} \cup 
 \{y'_1,y'_2,y'_3\} 
\cup \{ z_2\}, 
\\ 
e^{2,0}_{(i,2)} &  = &  
\overbar{A_{(i,2)}} \cup (S\setminus S^{2}_{(i,2)}) \cup \{y_1,y_2,y_3\}  
\cup 
\{z_1\}, 
\\
e^{2,1}_{(i,2)}  &  = &  A'_{(i,2)} \cup S^{2}_{(i,2)} 
\cup \{y'_1,y'_3\} 
\cup 
\{ z_2\} 
\\
e^{3,0}_{(i,2)} &  = &  
\overbar{A_{(i,2)}} \cup (S\setminus S^{3}_{(i,2)}) \cup \{y_1,y_2\}  \cup 
\{z_1\}, 
\mbox{ and}
\\
e^{3,1}_{(i,2)}  &  = &  A'_{(i,2)} \cup S^{3}_{(i,2)} \cup 
\{y'_1,y'_2,y'_3\} 
\cup \{ z_2\}.
  \end{eqnarray*}
The crucial property of these pairs of edges 
$e^{k,0}_{(i,j)}$ and $e^{k,1}_{(i,j)}$ is that they together encode
the $k$-th literal of the $j$-th clause in the following way: 
if the literal is of the form $x_{\ell}$
(resp.\ of the form $\neg x_{\ell}$), 
then 
$e^{k,0}_{(i,j)} \cup e^{k,1}_{(i,j)}$ covers all of $Y \cup Y'$ except for 
$y'_{\ell}$ (resp.\ except for $y_{\ell}$).

Formula $\varphi$ in this example is clearly satisfiable, e.g., by the truth assignment $\sigma$ with 
$\sigma(x_1)=$ true and $\sigma(x_2) =\sigma(x_3)=$ false. Hence, for the 
problem
reduction to be correct, there must exist a GHD (and thus also an FHD) of width 
2 
of $H$. In Figure~\ref{fig:DecompPath}, the tree structure $T$ plus the 
bags $(B_t)_{t\in T}$ of such a GHD is displayed. Moreover, in   
Table~\ref{tab:np_decomp},
the precise definition of $B_u$ and $\lambda_u$ of every node $u\in T$ is given:
in the column labelled $B_u$, the set of vertices contained in $B_u$ for each node $u \in T$ is shown. 
In the column labelled $\lambda_u$, the two edges with weight 1 are shown. 
For the row with label $u_{p \in [2n+3;m]^-}$, the entry in the last column is 
$e^{k_p,0}_p, e^{k_p,1}_p$. By this we mean that, for every $p$, an appropriate value
$k_p \in \{1,2,3\}$ has to be determined. It will be explained below how to find an appropriate
value $k_p$ for each $p$.
The set $Z$ in the bags of this GHD is defined as 
$Z = \{y_i \mid \sigma(x_i) = $ true\,$\} \cup \{y'_i \mid \sigma(x_i) = $ 
false\,$\}$.
In this example, for the chosen truth assignment $\sigma$, 
we thus have $Z= \{y_1,y'_2,y'_3\}$.
The bags $B_t$ and the edge covers $\lambda_t$ for each $t\in T$ are explained 
below.

The nodes $u_C,u_B,u_A$ to cover the edges of the subhypergraph $H_0$ and the 
nodes
$u'_A,u'_B,u'_C$ to cover the edges of the subhypergraph $H'_0$ are clear by 
Lemma \ref{lem:gadgetH0}. The purpose of the nodes $u_{\min \ominus 1}$ and $u_{\max}$ is mainly to 
make sure that 
each edge $\{y_i,y'_i\}$ is covered by some bag. 
Recall that the set $Z$ contains exactly one of $y_i$ and $y'_i$ for every $i$. 
Hence, the node $u_{\min \ominus 1}$ (resp.\ $u_{\max}$) covers each edge $\{y_i,y'_i\}$, such that
$y'_i \in Z$ (resp.\ $y_i \in Z$). 

We now have a closer look at the nodes $u_{(1,1)}$ to $u_{(9,1)}$
on the ``long'' path $\pi$. More precisely, let us look at the nodes 
$u_{(i,1)}$ and $u_{(i,2)}$ for some $i \in \{1, \dots, 8\}$, i.e., the 
``$i$-th 
block''.
It will turn out that 
the bags at these nodes can be covered by edges from $H$ because $\varphi$ is 
satisfiable.
Indeed, our choice of $\lambda_{u_{(i,1)}}$ and $\lambda_{u_{(i,2)}}$ is guided 
by the literals 
satisfied by the truth assignment $\sigma$, namely: for $\lambda_{u_{(i,j)}}$, 
we have to choose
some $k_j$, such that the $k_j$-th literal in the $j$-th clause is true in 
$\sigma$. 
For instance, we may define 
$\lambda_{u_{(i,1)}}$ and $\lambda_{u_{(i,2)}}$ as follows: 
  \begin{align*}
\lambda_{u_{(i,1)}} &  =  \{ e^{1,0}_{(i,1)}, e^{1,1}_{(i,1)} \} &\lambda_{u_{(i,2)}} &  =  \{ e^{3,0}_{(i,2)}, e^{3,1}_{(i,2)} \} 
  \end{align*}
The covers $\lambda_{u_{(i,1)}}$ and $\lambda_{u_{(i,2)}}$ were chosen because 
the first literal of the first clause and the third literal of the second 
clause 
are true in $\sigma$. 
Now let us verify that
$\lambda_{u_{(i,1)}}$ and $\lambda_{u_{(i,2)}}$ are indeed covers
of $B_{u_{(i,1)}}$ and $B_{u_{(i,2)}}$, respectively.
By the definition of the 
edges $e^{k,0}_{(i,j)}, e^{k,1}_{(i,j)}$ for $j \in \{1,2\}$ and $k \in 
\{1,2,3\}$, it is 
immediate that 
$e^{k,0}_{(i,j)} \cup e^{k,1}_{(i,j)}$
covers 
$\overbar{A_{(i,j)}} \cup A'_{(i,j)} \cup S \cup \{z_1,z_2\}$.
The only non-trivial question is if $\lambda_{u_{(i,j)}}$ also covers $Z$.
Recall that by definition, 
$(e^{1,0}_{(i,1)} \cup e^{1,1}_{(i,1)}) \supseteq (Y \cup Y') \setminus 
\{y'_1\}$.
Our truth assignment $\sigma$ sets $\sigma(x_1) = $ true. Hence, by our 
definition of $Z$,
we have $y_1 \in Z$ and $y'_1 \not\in Z$. This means that 
$e^{1,0}_{(i,1)} \cup e^{1,1}_{(i,1)}$ indeed covers $Z$ and, hence, all of 
$B_{u_{(i,1)}}$. 
Note that we could have also chosen 
$\lambda_{u_{(i,1)}}   =   \{ e^{2,0}_{(i,1)}, e^{2,1}_{(i,1)} \}$, since 
also the second literal of the first clause (i.e., $\neg x_2$) is true in 
$\sigma$. In this case, 
we would have 
$(e^{2,0}_{(i,1)} \cup e^{2,1}_{(i,1)})
\supseteq (Y \cup Y') \setminus \{y_2\}$ and $Z$ indeed does not contain $y_2$.
Conversely, setting  
$\lambda_{u_{(i,1)}}   =   \{ e^{3,0}_{(i,1)}, e^{3,1}_{(i,1)} \}$ would fail, 
because
in this case, $y'_3 \not\in 
(e^{3,0}_{(i,1)} \cup e^{3,1}_{(i,1)})$ since $x_3$ occurs positively in the 
first clause. 
On the other hand, we have $y'_3 \in Z$  by definition of $Z$, 
because $\sigma(x_3) =$ false holds.

Checking that $\lambda_{u_{(i,2)}}$ as defined above covers $Z$ is done 
analogously. Note that in the second
clause, only the third literal is satisfied by $\sigma$. Hence, 
setting $\lambda_{u_{(i,2)}}   =   \{ e^{3,0}_{(i,2)}, $ $e^{3,1}_{(i,2)} \}$ 
is 
the 
only option to cover $B_{u_{(i,2)}}$ (in particular, to cover $Z$). 
Finally, note that $\sigma$ as defined above is not the only satisfying truth 
assignment of $\varphi$. For instance, we could have chosen 
$\sigma(x_1) = \sigma(x_2) = \sigma(x_3) =$ true. In this case, we would 
define $Z = \{ y_1,y_2,y_3\}$ and the covers $\lambda_{u_{(i,j)}}$ would have 
to 
be 
chosen according to an arbitrary choice of one literal per clause that is 
satisfied by 
this assignment $\sigma$.
\hfill$\Diamond$
\end{example}

  \begin{table*}
   \centering
   \begin{tabular}{|c|c|c|}
     \hline
     $u \in T$ & $B_u$ & $\lambda_u$ \\ 
     \hline 
     $u_C$ & $\{d_1, d_2, c_1, c_2\} \cup Y \cup S \cup \{z_1,z_2\}$ & 
$\{c_1,d_1\}\cup M_1$, 
                                                      $\{c_2,d_2\}\cup M_2$   \\
     $u_B$ & $\{c_1, c_2, b_1, b_2\} \cup Y \cup S \cup \{z_1,z_2\}$ & 
$\{b_1,c_1\}\cup M_1$, 
                                                      $\{b_2,c_2\}\cup M_2$  \\
     $u_A$ & $\{b_1, b_2, a_1, a_2\} \cup Y \cup S \cup \{z_1,z_2\}$ & 
$\{a_1,b_1\}\cup M_1$, 
                                                      $\{a_2,b_2\}\cup M_2$  \\
     $u_{\min\ominus 1}$ & $\{a_1 \} \cup A \cup Y \cup S \cup Z 
\cup \{z_1,z_2\}$ &
           $e^0_{(0,0)}, e^1_{(0,0)}$  \\
     $u_{p \in [2n+3;m]^-}$ & $A'_p \cup \overbar{A_p} \cup 
           S \cup Z \cup \{z_1,z_2\}$ & $e^{k_p,0}_p, e^{k_p,1}_p$  \\
     $u_{\max}$ & $\{ a'_1\} \cup A' \cup Y' \cup S \cup Z \cup \{z_1,z_2\}$ & 
           $e^0_{\max}, e^1_{\max}$  \\
     $u'_A$ & $\{a'_1, a'_2, b'_1, b'_2\} \cup Y' \cup S \cup \{z_1,z_2\}$ & 
$\{a'_1,b'_1\} \cup 
                                              M_1'$, $\{a'_2,b'_2\}\cup M'_2$  
\\
     $u'_B$ & $\{b'_1, b'_2, c'_1, c'_2\} \cup Y' \cup S \cup \{z_1,z_2\}$ & 
$\{b'_1,c'_1\} \cup 
                                              M_1'$, $\{b'_2,c'_2\}\cup M'_2$  
\\
     $u'_C$ & $\{c'_1, c'_2, d'_1, d'_2\} \cup Y' \cup S \cup \{z_1,z_2\}$ & 
$\{c'_1,d'_1\} \cup 
                                              M_1'$, $\{c'_2,d'_2\}\cup M'_2$  
\\
     \hline
   \end{tabular}
   \vspace{3mm}
   \caption{Definition of $B_u$ and $\lambda_u$ for GHD of $H$.}
   \label{tab:np_decomp} 
  \end{table*}
  \begin{figure*}
    \centering
    \includegraphics[width=\textwidth]{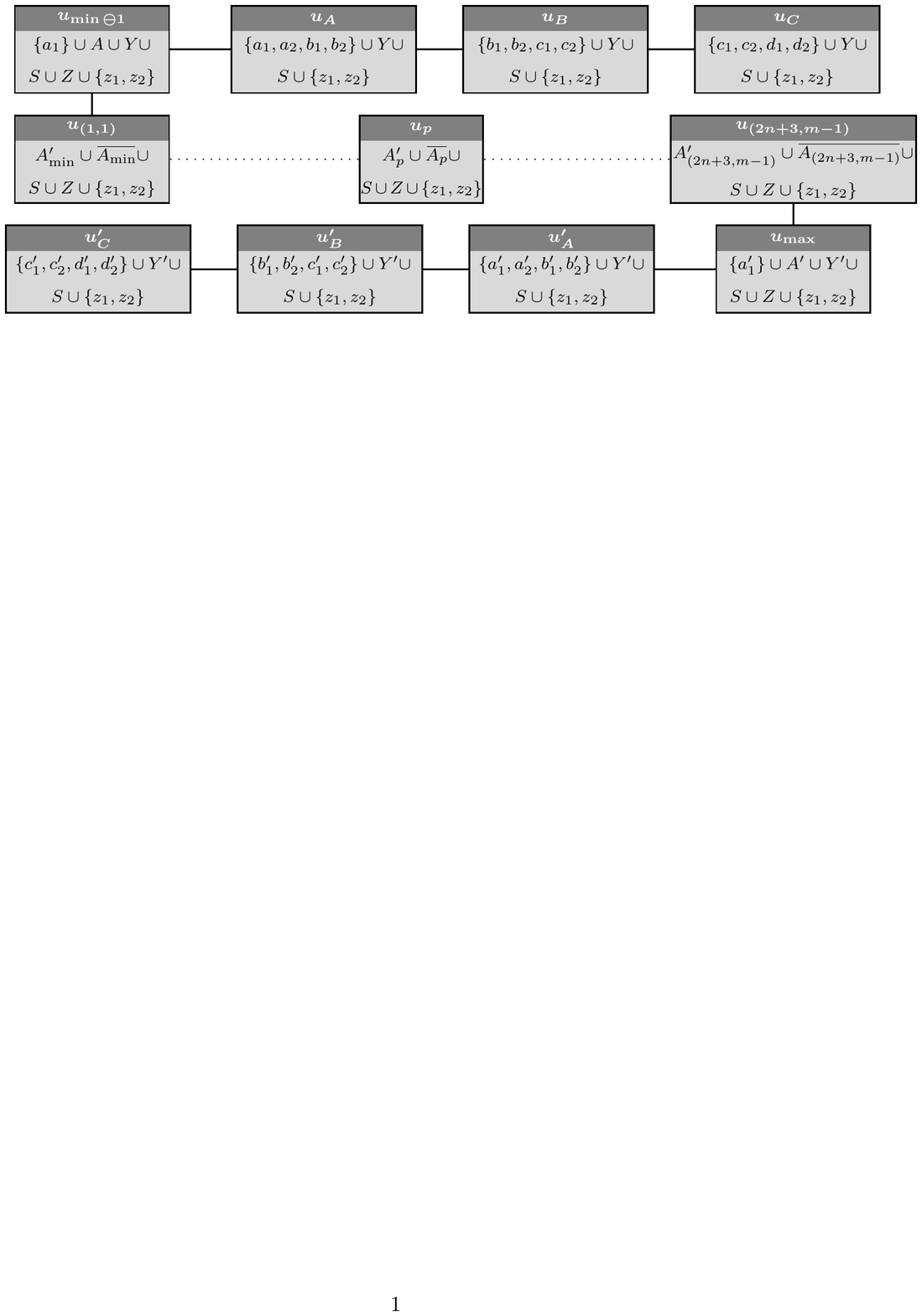}
    \caption{Intended path of the GHD of hypergraph $H$ in the proof of 
             Theorem~\ref{thm:npcomp}} 
    \label{fig:DecompPath}
  \end{figure*}

To prove the correctness of our problem reduction, we have to show the two equivalences: first, that 
$\ghw(H) \leq 2$ if and only if 
$\varphi$ is satisfiable and second, that
$\fhw(H) \leq 2$ 
if 
and only if $\varphi$ is satisfiable. We prove the two directions of these equivalences 
separately.

  \medskip
 \noindent
 {\bf Proof of the ``if''-direction.}
  First assume that $\varphi$ is satisfiable. It suffices to show that 
then 
  $H$ has a GHD of width $\leq 2$, because $\fhw(H) \leq \ghw(H)$ holds. 
  Let $\sigma$ be a 
  satisfying truth assignment. Let us fix for each $j \leq m$, some $k_j \in
  \{ 1,2,3 \}$ such that $\sigma(L^{k_j}_j) = 1$. By ${\ell}_j$, we denote the index 
of the variable in the literal $L^{k_j}_j$, that is, $L^{k_j}_j = x_{{\ell}_j}$ or 
$L^{k_j}_j = 
\neg x_{{\ell}_j}$. For $p=(i,j)$, let $k_p$ refer to $k_j$ and let $L^{k_p}_p$ 
refer to $L^{k_j}_j$. Finally, we define $Z$ as
  $Z = \{ y_i \mid \sigma(x_i) = 1 \} \cup \{ y'_i \mid \sigma(x_i) = 0 \}$.
  
  A GHD  $\mcG = \left< T, (B_u)_{u\in T}, (\lambda_u)_{u\in T} \right>$ of 
width 2 
for $H$ is 
  constructed as follows. $T$ is a path $u_C$, $u_B$, $u_A$, 
$u_{\min\ominus1}$, 
  $u_{\min}$,\dots, $u_{\max}$, $u'_A$, $u'_B$, $u'_C$. The 
construction is 
  illustrated in 
  Figure~\ref{fig:DecompPath}.
  The precise definition of $B_u$ and $\lambda_u$ is given in
  Table~\ref{tab:np_decomp}. Clearly, the GHD  has width $\leq 2$. We 
  now show that $\mcG$ is 
indeed a GHD of $H$:
  \begin{enumerate}
   \item[(1)] For each edge $e \in E$, there is a node $u \in T$, such that 
         $e \subseteq B_u$:
         \begin{itemize}
          \item $\forall e \in E_X : e \subseteq B_{u_X}$ for all $X \in 
\{A,B,C\}$, 
          \item $\forall e' \in E'_X : e' \subseteq B_{u'_X}$ for all $X \in 
\{A,B,C\}$, 
          \item $e_p \subseteq B_{u_p}$ for $p \in [2n+3;m]^-$,
          \item $e_{y_i} \subseteq B_{u_{\min\ominus 1}}$ 
                (if $y'_i \in Z$) or 
                $e_{y_i} \subseteq B_{u_{\max}}$ 
                (if $y_i \in Z$), respectively,
          \item $e^{k,0}_p \subseteq B_{u_{\min\ominus 1}}$ 
                for $p \in [2n+3;m]^-$,
          \item $e^{k,1}_p \subseteq B_{u_{\max}}$ 
                for $p \in [2n+3;m]^-$,
          \item $e^0_{(0,0)} \subseteq B_{u_{\min\ominus 1}}$, 
                $e^1_{(0,0)} \subseteq B_{u_{\max}}$,
          \item $e^0_{\max} \subseteq B_{u_{\min\ominus 1}}$ and 
                $e^1_{\max} \subseteq B_{u_{\max}}$.
         \end{itemize}
         All of the above inclusions can be 
         verified in 
         Table~\ref{tab:np_decomp}. 
        
   \item[(2)] For each vertex $v \in V$, the set $\{ u\in T \mid v \in 
         B_u \}$ induces a connected subtree of $T$, which %
         is easy to 
verify in Table~\ref{tab:np_decomp}.
         
   \item[(3)] For each $u \in T$, $B_u \subseteq B(\lambda_u)$:
         the only inclusion which cannot be easily verified in 
         Table~\ref{tab:np_decomp} is $B_{u_p} \subseteq 
         B(\lambda_{u_p})$. In fact, this is the 
         only place in the proof where we
         make use of the assumption that $\varphi$ is satisfiable.
         First, notice that the set $A'_p \cup \overbar{A_p} \cup
         S \cup \{z_1,z_2\}$ is clearly a subset of $B(\lambda_{u_p})$. It 
remains to
         show that $Z \subseteq B(\lambda_{u_p})$ holds for 
         arbitrary $p \in [2n+3;m]^-$. We show this property by a case distinction
         on the form of $L^{k_p}_p$.
         
         \smallskip
         
         Case (1): First, assume that $L^{k_p}_p = x_{{\ell}_j}$ holds. Then
         $\sigma(x_{{\ell}_j}) = 1$ and, therefore, $y'_{{\ell}_j} \not\in Z$. But, by
         definition of $e^{k_p,0}_p$ and $e^{k_p,1}_p$, vertex $y'_{{\ell}_j}$ is the 
only 
         element of $Y \cup Y'$ not contained in $B(\lambda_{u_p})$. 
         Since $Z \subseteq (Y \cup Y')$ and $y'_{{\ell}_j} \not\in Z$, we have that
         $Z \subseteq B(\lambda_{u_p})$. 
         
         \smallskip

         Case (2): Now assume that $L^{k_p}_p = \neg x_{{\ell}_j}$ holds. Then
         $\sigma(x_{{\ell}_j}) = 0$ and, therefore, $y_{{\ell}_j} \not\in Z$. But, by
         definition of $e^{k_p,0}_p$ and $e^{k_p,1}_p$, vertex $y_{{\ell}_j}$ is the 
only 
         element of $Y \cup Y'$ not contained in $B(\lambda_{u_p})$. 
         Since $Z \subseteq (Y \cup Y')$ and $y_{{\ell}_j} \not\in Z$, we have that
         $Z \subseteq B(\lambda_{u_p})$.
  \end{enumerate}

 \smallskip
 \noindent
 {\bf  Two crucial lemmas.}
  Before we prove the ``only if'-direction, 
  we define the notion of complementary 
edges and state two important lemmas related to this notion.

  \begin{definition}
  \label{def:complementary}
   Let $e$ and $e'$ be two edges from the hypergraph $H$ as defined 
before. We say $e'$ is the {\em complementary} edge of $e$ (or, simply, 
$e,e'$ are complementary edges) whenever
   \begin{itemize}
    \item $e \cap S  =  S \setminus S'$ for some $S' \subseteq S$ and
    \item $e' \cap S =  S'$.
   \end{itemize}
  \end{definition}
  
  Observe that for every edge in our construction that covers 
  $S\setminus S'$ for some $S' \subseteq S$ there is a complementary 
edge that covers $S'$, for example 
$e^{k,0}_p$ and $e^{k,1}_p$, $e^0_{(0,0)}$ and $e^1_{(0,0)}$, and so on. In 
particular %
there is no edge that covers $S$ completely. 
Moreover, consider arbitrary subsets $S_1,S_2$ of $S$, 
s.t.\ (syntactically)
$S \setminus S_i$ is part of the definition of $e_i$  for some $e_i \in E(H)$ 
with $i \in \{1,2\}$.
Then $S_1$ and $S_2$ are~disjoint. 

We now present two lemmas needed for the ``only if''-direction.

\begin{lemma}
\label{lem:compl_edge} 
\lemComplEdge
\end{lemma}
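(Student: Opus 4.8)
For an FHD $\mcF$ of width $\leq 2$ of the constructed hypergraph $H$, for every node $u$ with $S \cup \{z_1, z_2\} \subseteq B_u$ and every pair $e, e'$ of complementary edges, $\gamma_u(e) = \gamma_u(e')$.

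The plan is to pin down the shape of $\gamma_u$ so tightly that the equality becomes a one-line averaging argument. \emph{Step 1 (only $z_1$- and $z_2$-edges carry weight).} First I would check, by inspecting the construction, that no edge of $H$ contains both $z_1$ and $z_2$: the vertex $z_1$ occurs only inside $M_1$, $M'_1$ and the edges $e^{k,0}_p$, $e^0_{(0,0)}$, $e^0_{\max}$, whereas $z_2$ occurs only inside $M_2$, $M'_2$ and the edges $e^{k,1}_p$, $e^1_{(0,0)}$, $e^1_{\max}$, and these two families are disjoint. Call the edges through $z_1$ (resp.\ $z_2$) the \emph{$z_1$-edges} (resp.\ \emph{$z_2$-edges}). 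Since $z_1, z_2 \in B_u$ each have to be covered with weight $\geq 1$ by $\gamma_u$, the two families are disjoint, and $\weight(\gamma_u) \leq 2$, it follows that $\gamma_u$ puts total weight \emph{exactly} $1$ on the $z_1$-edges, total weight \emph{exactly} $1$ on the $z_2$-edges, and weight $0$ on every remaining edge (in particular on all $e_p$, all $e_{y_i}$, and the ``small'' gadget edges, none of which meets $\{z_1, z_2\}$).

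\emph{Step 2 (the partition of $S$).} Next I would invoke the disjointness remark following Definition~\ref{def:complementary}: every edge of $H$ that meets $S$ is either a $z_1$-edge $e$ with $e \cap S = S \setminus P_e$, or a $z_2$-edge $e$ with $e \cap S = P_e$, for a subset $P_e \subseteq S$, and the distinct sets $P_e$ arising this way partition $S$. For a piece $P$, let $x_P$ (resp.\ $y_P$) denote the total $\gamma_u$-weight on the $z_1$-edges (resp.\ $z_2$-edges) $e$ with $P_e = P$; by Step 1, $\sum_P x_P = \sum_P y_P = 1$. Now fix a piece $P$ and any vertex $v \in P$. Since $P \subseteq S \subseteq B_u$, the vertex $v$ must be covered by $\gamma_u$, and among the positive-weight edges it is covered exactly by the $z_1$-edges whose piece differs from $P$ (contributing total weight $1 - x_P$) together with the $z_2$-edges whose piece equals $P$ (contributing total weight $y_P$); every other positive-weight edge misses $v$. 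Hence $(1 - x_P) + y_P \geq 1$, i.e.\ $y_P \geq x_P$.

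\emph{Step 3 (conclusion).} Summing $y_P \geq x_P$ over all pieces and comparing with $\sum_P x_P = \sum_P y_P = 1$ forces $x_P = y_P$ for every piece $P$. Finally, take a complementary pair $e, e'$ with common piece $P$ (so $e \cap S = S \setminus P$ and $e' \cap S = P$). Since in our construction the hole $P$ is realized by a \emph{unique} $z_1$-edge and a \emph{unique} $z_2$-edge for each of the path edges $e^{k,0}_p$ / $e^{k,1}_p$ with $p \in [2n+3;m]^-$ and the connector edges $e^0_{(0,0)}$ / $e^1_{(0,0)}$ and $e^0_{\max}$ / $e^1_{\max}$ --- which are exactly the complementary pairs that Lemma~\ref{lem:covering} and the rest of the ``only if''-direction work with --- these two edges are precisely $e$ and $e'$, so $\gamma_u(e) = x_P = y_P = \gamma_u(e')$, which is the claim.

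The main obstacle --- and the only part that really needs care --- is the bookkeeping in Steps 1 and 2: one must walk through the definition of $E(H)$ and confirm that $z_1$ and $z_2$ never co-occur, that consequently $\gamma_u$ cannot ``leak'' weight onto any edge that is neither a $z_1$-edge nor a $z_2$-edge (this is what rules out contributions of the $e_p$, the $e_{y_i}$, and the small gadget edges), and that the ``holes'' $S \setminus S^k_p$, $S \setminus S_{(0,0)}$, $S \setminus S_{\max}$ (together with the gadget holes $S\setminus S_{(0,1)}$, $S\setminus S_{(1,0)}$) occurring in the edge definitions induce a genuine partition of $S$ --- which is precisely the disjointness statement already recorded after Definition~\ref{def:complementary}, so it can be cited rather than re-proved. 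Once this structural picture is in place, the remainder is just the short averaging argument above.
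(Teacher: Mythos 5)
Your proof is correct and follows essentially the same route as the paper's: covering $z_1$ and $z_2$ forces total weight exactly $1$ on the $z_1$-edges and exactly $1$ on the $z_2$-edges and weight $0$ elsewhere; the ``hole'' of each $z_1$-edge can only be filled by the $z_2$-edges carrying the matching piece of $S$, giving a family of one-sided inequalities; and summing these against the two totals of $1$ forces equality. The one real difference is that you aggregate weights by piece $P$ of the partition of $S$ (your $x_P$ versus $y_P$) rather than arguing edge-by-edge as the paper does, and this is in fact slightly more careful: the paper's step ``the only edge in $E^1$ that intersects $S_i$ is the complementary edge $e^1_i$'' is not literally true for the gadget pieces $S_{(0,1)}$ and $S_{(1,0)}$, each of which is carried by three distinct $M_2$- (resp.\ $M'_2$-) edges, so under Definition~\ref{def:complementary} the per-edge equality can actually fail for those pairs (e.g.\ split weight $0.25/0.25$ over two $M_2$-edges against $0.5$ on one $M_1$-edge). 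Your piece-level bookkeeping, together with your explicit restriction of the final per-edge conclusion to holes realized by a unique $z_1$/$z_2$ pair --- which are exactly the pairs the ``only if''-direction and Lemma~\ref{lem:covering} actually use --- sidesteps this cleanly. No gap.
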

\begin{proof}
     First, we try to cover $z_1$ and $z_2$. For $z_1$ we have to put total 
     weight~$1$ on the edges in $E^0$, and to cover 
     $z_2$ we have to put total  weight~$1$ on the edges in $E^1$, where
     \begin{align*}
      E^0 =&   \{ e^{k,0}_p \mid p\in [2n+3;m]^- \mbox{ and } 1 \leq k 
                \leq 3\} \; \cup \\ 
             & \{ e^0_{(0,0)},e^0_{\max} \} \;\cup \\
             & \{\{ a_1,b_1 \} \cup M_1, \{ b_1,c_1\}\cup M_1, \{c_1,d_1\}  
                \cup M_1 \} \;\cup \\
            & \{\{ a'_1,b'_1 \} \cup M'_1, \{ b'_1,c'_1\}\cup M'_1,
                  \{c'_1,d'_1\} \cup M'_1 \} \\
     E^1 = &  \{ e^{k,1}_p \mid p\in [2n+3;m]^- \mbox{ and } 1 \leq k 
               \leq 3\} \; \cup  \\
            & \{ e^1_{(0,0)}, e^1_{\max} \} \; \cup \\
            & \{\{ a_2,b_2 \} \cup M_2, \{ b_2,c_2\}\cup M_2, \{c_2,d_2\}
            \cup M_2 \} \;\cup \\
            & \{ \{ a'_2,b'_2 \} \cup M'_2, \{ b'_2,c'_2\}\cup M'_2,
                  \{c'_2,d'_2\}\cup M'_2 \}
     \end{align*}
     In order to also cover $S$ with weight 1, we are only allowed to 
     assign weights to the above edges.
     Let $e_i^0$ be an arbitrary edge in $E^0$ with $\gamma_u(e^0_i) = w_i > 0$.
     Then there exists a subset $S_i$ of $S$ with $S_i \neq \emptyset$, s.t.\ $e^0_i \cap S = S \setminus S_i$.
     Still, we need to put weight $1$ on the vertices
     in $S_i$. In order to do so, we can put at most weight $1-w_i$ on the 
     edges in $E^0 \setminus \{ e^0_i\}$, which covers $S_i$ with weight at most 
     $1-w_i$. Hence, the edges in $E^1$ have to put weight $\geq w_i$ on $S_i$.
     The only edge in $E^1$ that intersects $S_i$ is the complementary edge 
     $e^1_i$ of $e^0_i$. Hence, we have to set $\gamma_u(e^1_i) \geq w_i$. In other words, 
     we have to set $\gamma_u(e^1_i) = w_i + \epsilon_i$ for some $\epsilon_i \geq 0$. 
     Note that $e_i^0$ was arbitrarily chosen. Hence, we can conclude for every 
     $e_i^0 \in E^0$ with $\gamma_u(e^0_i) = w_i > 0$ and with complementary edge $e_i^1 \in E^1$ 
     that 
     $\gamma_u(e^1_i) = w_i + \epsilon_i$ for some $\epsilon_i \geq 0$. 
     Hence, $\sum_{e^1\in E^1} \gamma_u(e^1) = \sum_{e^0\in E^0} \gamma_u(e^0) + \sum_i \epsilon_i$.
     Now recall that both $\sum_{e^0\in E^0} \gamma_u(e^0) = 1$  and    
     $\sum_{e^1\in E^1} \gamma_u(e^1) = 1$ hold. This is only possible if $\epsilon_i = 0$ for every $i$. 
     In other words, $\gamma_u(e_i^0) = 
     \gamma_u(e_i^1) = w_i$ for every $e_i^0 \in E^0$ and its complementary 
     edge $e_i^1 \in E^1$. 
  \end{proof}
\begin{lemma}
\label{lem:covering} 
\lemCovering
\end{lemma}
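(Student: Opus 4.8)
The plan is to combine Lemma~\ref{lem:compl_edge} with a careful count of which edges contain the two ``extremal'' vertices $a_p$ and $a'_p$. First I would observe that $S\cup\{z_1,z_2\}\subseteq B_u$, so Lemma~\ref{lem:compl_edge} applies to $\mcF$ and $\gamma_u(e)=\gamma_u(e')$ holds for every complementary pair $e,e'$. I would then re-use the dichotomy already established inside the proof of that lemma: covering $z_1$ and $z_2$ with a fractional cover $\gamma$ of weight $\le 2$ forces $\gamma$ to place total weight exactly $1$ on the family $E^0$ of all edges containing $z_1$, total weight exactly $1$ on the family $E^1$ of all edges containing $z_2$, and weight $0$ on every edge outside $E^0\cup E^1$ (in particular on every $e_{p'}$ and every $e_{y_i}$). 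Recall that $E^0$ is $\{e^{k,0}_{p'}\mid p'\in[2n+3;m]^-,\,k\in\{1,2,3\}\}\cup\{e^0_{(0,0)},e^0_{\max}\}$ together with the six ``gadget'' edges built from $M_1$ and from $M'_1$, and that $E^1$ consists of their complementary partners.

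The core of the argument is two symmetric ``pivot-vertex'' observations. On the $E^0$ side I would use $a_p$, the element of $\overbar{A_p}$ with the smallest index: inspecting the construction, the only edges of $E^0$ that contain $a_p$ are $e^0_{(0,0)}$ (which contains all of $A$) and those $e^{k,0}_{p'}$ with $p'\le p$ in the lexicographic order --- because $e^{k,0}_{p'}\supseteq\overbar{A_{p'}}$ and $a_p\in\overbar{A_{p'}}$ exactly when $p'\le p$ --- whereas $e^0_{\max}$ and the gadget edges contain no vertex of $A$ at all. Since $a_p$ must be covered with weight $\ge 1$ while $E^0$ carries only weight $1$ in total, all of that weight sits on $\{e^{k,0}_{p'}\mid p'\le p\}\cup\{e^0_{(0,0)}\}$; this forces $\gamma_u(e^{k,0}_{p'})=0$ for all $p'>p$, $\gamma_u(e^0_{\max})=0$, and $\gamma_u=0$ on the gadget edges of $E^0$. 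The mirror-image observation uses $a'_p$, the element of $A'_p$ with the largest index, whose only containing edges in $E^1$ are $e^1_{\max}$ and the $e^{k,1}_{p'}$ with $p'\ge p$; it concentrates the $E^1$-weight on $\{e^{k,1}_{p'}\mid p'\ge p\}\cup\{e^1_{\max}\}$ and forces $\gamma_u(e^{k,1}_{p'})=0$ for all $p'<p$, $\gamma_u(e^1_{(0,0)})=0$, and $\gamma_u=0$ on the primed gadget edges.

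Finally I would splice the two sides together via complementarity. Since $e^{k,0}_{p'}$ and $e^{k,1}_{p'}$ are complementary, $\gamma_u(e^{k,0}_{p'})=\gamma_u(e^{k,1}_{p'})$, which turns the vanishing of $\gamma_u(e^{k,1}_{p'})$ for $p'<p$ into the vanishing of $\gamma_u(e^{k,0}_{p'})$ for $p'<p$; so among all the $e^{k,0}_{p'},e^{k,1}_{p'}$ only those with $p'=p$ can carry positive weight. Likewise complementarity forces $\gamma_u(e^0_{(0,0)})=\gamma_u(e^1_{(0,0)})=0$, $\gamma_u(e^0_{\max})=\gamma_u(e^1_{\max})=0$, and weight $0$ on the $M_1/M_2$ and $M'_1/M'_2$ gadget pairs. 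Hence $\gamma_u$ can put non-zero weight only on $e^{k,0}_p$ and $e^{k,1}_p$ for $k\in\{1,2,3\}$, and since the total weight $1$ on $E^0$ now rests entirely on $e^{1,0}_p,e^{2,0}_p,e^{3,0}_p$ we get $\sum_{k=1}^3\gamma_u(e^{k,0}_p)=1$, and by symmetry $\sum_{k=1}^3\gamma_u(e^{k,1}_p)=1$. I expect the main obstacle to be not any inequality but the bookkeeping of which edge families meet $a_p$ and $a'_p$: it is tempting to believe that a single pivot vertex already isolates $p'=p$, whereas each pivot only eliminates one ``side'' of the $e^{k,0}_{p'}$ (resp.\ $e^{k,1}_{p'}$) chain, so that Lemma~\ref{lem:compl_edge} is genuinely needed to close the remaining gap.
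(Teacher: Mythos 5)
Your proposal is correct and follows essentially the same route as the paper's proof: both first use the $z_1,z_2$ constraint to confine all weight to $E^0$ and $E^1$, then use the covering requirements on $\overbar{A_p}$ and $A'_p$ (your pivots $a_p$ and $a'_p$) to restrict the support to $E^0_p=\{e^{k,0}_s\mid s\le p\}\cup\{e^0_{(0,0)}\}$ and $E^1_p=\{e^{k,1}_r\mid r\ge p\}\cup\{e^1_{\max}\}$, and finally invoke Lemma~\ref{lem:compl_edge} to intersect the two restrictions and isolate the pairs with $p'=p$. Your explicit bookkeeping of which edges contain the pivot vertices is a slightly more detailed rendering of the same argument, and your closing remark correctly identifies why complementarity is genuinely needed to finish.
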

\begin{proof}
     As in the proof of Lemma~\ref{lem:compl_edge}, to cover $z_1$ we 
have to put 
     weight $1$ on the edges in $E^0$ and to cover 
     $z_2$ we have to put weight $1$ on the edges in $E^1$, where $E^0$ and $E^1$ 
     are defined as in the proof of Lemma~\ref{lem:compl_edge}. Since we have 
     $\width(\mcF) \leq 2$, we have to cover $A'_p \cup \overbar{A_p} \cup S$
     with the weight already put on the edges in $E^0 \cup E^1$. 
     In order to cover 
     $A'_p$, we have to put weight 1 on the edges in $E^1_p$, where
     \[ E^1_p = \{ e^{k,1}_r \mid r \geq p \} \cup \{ e^1_{\max} \}. \]
     Notice that, $E^1_p \subseteq E^1$ and 
     therefore $\sum_{e \in E^1 \setminus E^1_p} \gamma_u(e) = 0$. 
     Similar,
     in order to cover $\overbar{A_p}$, we have to put weight 1 on the edges in
     $E^0_p$, where 
     \[ E^0_p = \{ e^{k,0}_s \mid s \leq p \} \cup  \{ e^0_{(0,0)} \}.\] 
     Again, since 
     $E^0_p \subseteq E^0$,
     $\sum_{e \in E^0 \setminus E^0_p} \gamma_u(e) = 0$. It remains to cover 
     $S \cup \{z_1,z_2\}$. By Lemma~\ref{lem:compl_edge}, in order to cover 
$S$, 
$z_1$ and 
     $z_2$, we have to put the same weight $w$ on complementary 
     edges $e$ and $e'$. The only complementary edges in the sets $E^0_p$ and
     $E^1_p$ are edges of the form $e^{k,0}_p$ and $e^{k,1}_p$ with $k \in 
\{1,2,3\}$. 
 In total, we thus have
     $\sum_{k=1}^3 e^{k,0}_p = 1$ and $\sum_{k=1}^3 e^{k,1}_p = 1$.
  \end{proof}

 \noindent
 {\bf Proof of the ``only if''-direction.}
  It remains to show that $\varphi$ is satisfiable if $H$ has a GHD or FHD of 
  width $\leq 2$. Due to the inequality $\fhw(H) \leq \ghw(H)$,
  it suffices to show that $\varphi$ is satisfiable if $H$ 
  has an FHD of  width $\leq 2$.
  For this, let $\mcF = \left< T, (B_u)_{u\in T}, (\gamma_u)_{u\in T} 
  \right>$ be such an 
  FHD. Let $u_A, u_B, u_C$ and $u'_A, u'_B, u'_C$ be the nodes that are
  guaranteed by Lemma~\ref{lem:gadgetH0}.
  We state several %
  properties of the path connecting $u_A$ and $u'_A$, which heavily 
rely on Lemmas~\ref{lem:compl_edge} and~\ref{lem:covering}. 

\medskip
{\sc  Claim A.}{\it \clmA}

\medskip
{\sc Proof of Claim A.}
We only show that none of the nodes $u'_i$ with $i \in \{A,B,C\}$ is on the 
path from $u_A$ to $u_C$. 
The other 
property is shown analogously. Suppose to the contrary that some $u'_i$ is on 
the path from $u_A$ to $u_C$. Since $u_B$ is also on the path between $u_A$ and 
$u_C$ we distinguish two cases:

\begin{itemize}
\item Case~(1): $u'_i$ is on the path between $u_A$ 
and $u_B$; then $\{ b_1, b_2 \} \subseteq B_{u'_i}$. This 
contradicts the property shown in Lemma \ref{lem:gadgetH0} that $u'_i$ cannot cover any
vertices outside $H'_0$. 

\item Case~(2): $u'_i$ is on the path between $u_B$ and $u_C$;
then $\{ c_1, c_2 \} \subseteq B_{u'_i}$, which again contradicts Lemma \ref{lem:gadgetH0}.
\end{itemize}
Hence, the paths from $u_A$ to $u_C$ and from $u'_A$ to $u'_C$ are indeed disjoint. $\hfill\diamond$

\medskip
{\sc  Claim B.}{\it \clmB}

\medskip
{\sc Proof of Claim B.}
Suppose to the contrary that there is a $u_X$ (the proof for $u'_X$ is analogous) 
for some $X \in 
\{A,B,C\}$, s.t.\ $u_X \in \nodes(A\cup A',\mcF)$; then
there is some $a \in (A\cup A')$, s.t.\ $a \in B_{u_X}$. This contradicts 
the property shown in Lemma \ref{lem:gadgetH0} that $u_X$ cannot cover any
vertices outside $H_0$. $\hfill \diamond$

\medskip
We are now interested in the sequence of nodes $\hat{u}_i$ that cover the 
edges 
$e^0_{(0,0)}, e_{\min}, e_{\min \oplus 1}$, \dots,
$e_{\max \ominus 1}$, $e_{\max}$. 
Before we formulate Claim~C, 
it is convenient to introduce the following notation. To be able to refer to 
the 
edges $e^0_{(0,0)}$, $e_{\min}$, $e_{\min \oplus 1}$, \dots, $e_{\max\ominus 
1}$, $e^1_{\max}$ in a uniform way,
we use $e_{\min \ominus 1}$ as synonym of $e^0_{(0,0)}$ and 
$e_{\max}$ as synonym of $e^1_{\max}$. We can thus define the natural order 
$e_{\min \ominus 1} < e_{\min} < e_{\min \oplus 1} < \dots < e_{\max \ominus 1} 
< e_{\max}$ on these edges.

\medskip
{\sc Claim C.}{\it \clmC}

\medskip
{\sc Proof of Claim C.}
Suppose to the contrary that no such path exists. Let $p \geq \min$ be the maximal value such 
that there is a path containing nodes $\hat{u}_1, \hat{u}_2, \ldots, 
\hat{u}_{\ell}$, which cover 
$e_{\min\ominus 1}, \ldots, e_p$ in this order. 
Clearly, there exists a node $\hat{u}$ that covers $e_{p \oplus 1} = A'_{p 
\oplus 1} \cup \overbar{A_{p\oplus 1}}$. We distinguish 
four cases:

\begin{itemize}
 \item Case~(1): $\hat{u}_1$ is on the path from $\hat{u}$ to all other nodes
                 $\hat{u}_i$, with $1 < i \leq \ell$. By the connectedness 
                 condition, $\hat{u}_1$ covers $A'_p$.
                 Hence, in total $\hat{u}_1$ covers $A'_p \cup 
                 A$ with 
                 $A'_{p} = \{a'_{\min}, \dots, a'_{p}\}$ and 
                 $A = \{a_{\min}, \dots, a_{\max}\}$.
                 Then 
                 $\hat{u}_1$ covers all edges $e_{\min\ominus 1}, \ldots, e_p$. 
Therefore, 
                 the path containing nodes $\hat{u}_1$ and $\hat{u}$ covers 
                 $e_{\min\ominus 1}, \ldots,$ $e_{p\oplus 1}$ in this order,
                 which contradicts
                 the maximality of $p$.
  \item Case~(2): $\hat{u} = \hat{u}_1$,
                 hence, $\hat{u}_1$ covers $A'_{p\oplus 1} \cup 
                 A$ with 
                 $A'_{p\oplus 1} = \{a'_{\min}, \dots, a'_{p\oplus 1}\}$ and 
                 $A = \{a_{\min}, \dots$, $a_{\max}\}$.
                 Then, 
                 $\hat{u}_1$ covers all %
                 $e_{\min\ominus 1}, \ldots, 
                 e_{p\oplus 1}$,
                 which contradicts
                 the maximality of $p$.

 \item Case~(3): $\hat{u}$ is on the path from $\hat{u}_1$ to $\hat{u}_{\ell}$
           and $\hat{u} \neq \hat{u}_1$. 
                 Hence, $\hat{u}$ is between two nodes $\hat{u}_{i}$ and 
                 $\hat{u}_{i+1}$ for some $1 \leq i < \ell$ or 
                 $\hat{u} = \hat{u}_{i+1}$ for some $1 \leq i < \ell-1$.
                 The following
                 arguments hold for both cases.
                 Now, there is some
                 $q \leq p$, such that $e_{q}$ is covered by $\hat{u}_{i+1}$
                 and $e_{q \ominus 1}$ is covered by $\hat{u}_{i}$.
                 Therefore,  $\hat{u}$ covers  $\overbar{A_q}$
                 either by the connectedness condition (if $\hat{u}$ is between 
$\hat{u}_{i}$ and $\hat{u}_{i+1}$)
                 or simply because $\hat{u} = \hat{u}_{i+1}$.
                 Hence, in total, $\hat{u}$
                 covers $A'_{p\oplus 1} \cup \overbar{A_q}$ with 
                 $A'_{p\oplus 1} = \{a'_{\min}, \dots, a'_{p \oplus 1}\}$ and 
                 $\overbar{A_q} = \{a_q, a_{q\oplus 1}, \dots, a_p, a_{p\oplus 
1}, \dots a_{\max}\}$.
                 Then, $\hat{u}$ covers all edges $e_q, e_{q\oplus 1}, 
                 \ldots, e_{p\oplus 1}$. Therefore, the path containing nodes
                 $\hat{u}_1, \dots, \hat{u}_i, \hat{u}$
                 covers $e_{\min\ominus 1}, \ldots, e_{p\oplus 1}$ in this 
order,
                 which contradicts
                 the maximality of $p$.
 \item Case~(4): There is a $u^*$ on the path from $\hat{u}_1$ to $\hat{u}_{\ell}$, 
                 such that the paths from $\hat{u}_1$ to $\hat{u}$ and from 
                 $\hat{u}$ to $\hat{u}_{\ell}$ go 
                 through $u^*$ and, moreover, $u^* \neq \hat{u}_1$.                 
                 Then, $u^*$ is either between %
                 $\hat{u}_i$ and $\hat{u}_{i+1}$ for some $1 \leq i < \ell$
                 or
                 $u^* = \hat{u}_{i+1}$ for some $1 \leq i < \ell-1$. The following
                 arguments hold for both cases. There is some $q \leq p$, 
                 such that
                 $e_q$ is covered by $\hat{u}_{i+1}$ and $e_{q\ominus 1}$ is
                 covered by $\hat{u}_i$. By the connectedness condition, 
                 $u^*$ covers 
                 \begin{itemize}
                  \item $A'_p = \{ a'_{\min},\ldots,a'_p \}$, since $u^*$ is on 
                        the path from $\hat{u}$ to $\hat{u}_{\ell}$, and
                  \item $\overbar{A}_q = \{a_q,  \dots, a_p, 
                        a_{p\oplus 1}, \dots a_{\max}\}$, since $u^*$ is on
                        the path from $\hat{u}_1$ to $\hat{u}_{i+1}$ or 
                        $u^* = \hat{u}_{i+1}$.
                 \end{itemize}
                 Then $u^*$ covers all edges $e_q,e_{q\oplus 
                 1},\ldots,e_p$. Therefore, the path containing the nodes 
                 $\hat{u}_1,\ldots,\hat{u}_{i}$, $u^*$, $\hat{u}$ 
                 covers $e_{\min\ominus 1}, 
                 \ldots, e_{p\oplus 1}$ in this order,
                 which contradicts
                 the maximality of~$p$.                
                 $\hfill \diamond$

\end{itemize}

So far we have shown, that there are three disjoint paths from $u_A$ to $u_C$, 
from $u'_A$ to $u'_C$ and from $\hat{u}_1$ to $\hat{u}_N$, respectively. It is 
easy to see, 
that $u_A$ 
is closer to the path $\hat{u}_1$, \dots, $\hat{u}_N$ than $u_B$ and $u_C$, 
since otherwise 
$u_B$ and $u_C$ would have to cover $a_1$ as well, which is impossible 
by Lemma~\ref{lem:gadgetH0}. 
The same also
holds for $u'_A$. In the next claims we will argue that the path from $u_A$
to $u'_A$ goes through some node $\hat{u}$ of the path from $\hat{u}_1$ to 
$\hat{u}_N$. We write $\ppi$ as a short-hand notation for 
the path 
from $\hat{u}_1$ to $\hat{u}_N$. Next, we state some important properties of 
$\ppi$ and the path from $u_A$ to $u_A'$.

\medskip
{\sc  Claim D.} 
{\it \clmD}

\medskip
{\sc Proof of Claim D.} Suppose to the contrary that the path from $u_A$ to 
$u'_A$ is disjoint from 
$\ppi$. We distinguish three cases:
\begin{itemize}
 \item Case~(1):
$u_A$ is on the path from $u'_A$ to (some node in) $\ppi$. Then, by the 
connectedness condition, $u_A$ must contain $a'_1$, which 
contradicts Lemma~\ref{lem:gadgetH0}.
 \item Case~(2): $u'_A$ is on the path from $u_A$ to $\ppi$. 
Analogously to Case (1), we get a contradiction by the fact that then 
$u'_A$ must contain $a_1$.
\item Case~(3): There is a node $u^*$ on the path from $u_A$ to $u'_A$, which 
is closest to $\ppi$, i.e., $u^*$ lies on the path from $u_A$ to $u'_A$ and 
both 
paths, the one
connecting $u_A$ with $\ppi$   and the one connecting $u'_A$ with $\ppi$, go 
through $u^*$. 
Hence, by the connectedness condition, the bag of $u^*$ contains $S \cup 
\{z_1,z_2,$ $a_1,a'_1\}$. By Lemma~\ref{lem:compl_edge}, 
in order to cover $S \cup \{ z_1, z_2 \}$ with weight $\leq 2$,
we are only allowed to put 
non-zero weight on pairs of complementary edges.
However, then it is impossible to achieve also weight $\geq 1$ on 
$a_1$ and $a'_1$ at the same time. $\hfill \diamond$
\end{itemize}

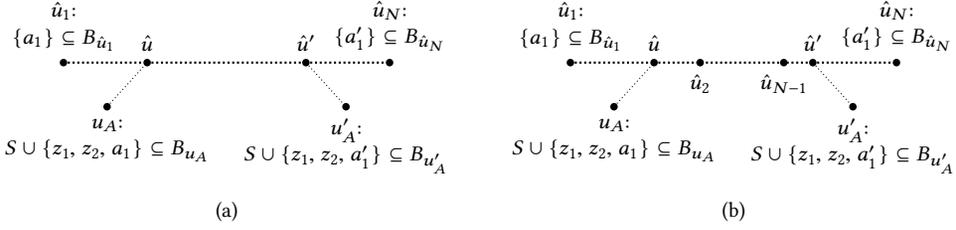
\begin{figure}[t]
    \centering 
  \footnotesize
    
    \begin{minipage}[c]{0.48\textwidth}
       \centering
       \begin{tikzpicture}[
   vert/.style={fill, circle, inner sep = 1pt},
   ell/.style={ellipse,draw,minimum width=2.5cm, inner sep=0cm}]
   \node[vert,label={[align=center]above:$\hat{u}_1$: \\ $\{ a_1 \} \subseteq B_{\hat{u}_1}$}] (u1) {};
   \node[vert,label=above:$\hat{u}$, right=1 of u1] (u) {};
   \node[vert,label={[align=center]below:$u_A$: \\ $S \cup \{ z_1, z_2, a_1 \} \subseteq B_{u_A}$},below right=.5 and .5 of u1] (ua) {};
   \node[vert,label=above:$\hat{u}'$, right=2 of u] (up) {};
   \node[vert,label={[align=center]above:$\hat{u}_N$: \\ $\{ a'_1 \} \subseteq B_{\hat{u}_N}$}, right=1 of up] (un) {};
   \node[vert,label={[align=center]below:$u'_A$: \\ $S \cup \{ z_1, z_2, a'_1 \} \subseteq B_{u'_A}$},below left=.5 and .5 of un] (uap) {};
   
   \draw[densely dotted, thick] (u1) -- (u);
   \draw[densely dotted, thick] (u) -- (up);
   \draw[densely dotted, thick] (up) -- (un);
   \draw[densely dotted] (ua) -- (u);
   \draw[densely dotted] (up) -- (uap);

\end{tikzpicture}

     \centering\medskip\noindent (a)
     \end{minipage}
     \begin{minipage}[c]{0.48\textwidth}
       \centering
       \begin{tikzpicture}[
   vert/.style={fill, circle, inner sep = 1pt},
   ell/.style={ellipse,draw,minimum width=2.5cm, inner sep=0cm}]
   \node[vert,label={[align=center]above:$\hat{u}_1$: \\ $\{ a_1 \} \subseteq B_{\hat{u}_1}$}] (u1) {};
   \node[vert,label=above:$\hat{u}$, right=1 of u1] (u) {};
   \node[vert,label={[align=center]below:$u_A$: \\ $S \cup \{ z_1, z_2, a_1 \} \subseteq B_{u_A}$},below right=.5 and 0.5 of u1] (ua) {};
   \node[vert,label=above:$\hat{u}'$, right=2 of u] (up) {};
   \node[vert,label={[align=center]above:$\hat{u}_N$: \\ $\{ a'_1 \} \subseteq B_{\hat{u}_N}$}, right=1 of up] (un) {};
   \node[vert,label={[align=center]below:$u'_A$: \\ $S \cup \{ z_1, z_2, a'_1 \} \subseteq B_{u'_A}$},below left=.5 and .5 of un] (uap) {};
   
   \node[vert,label=below:$\hat{u}_2$, right=.5 of u] (u2) {};
   \node[vert,label=below:$\hat{u}_{N-1}$, right=1 of u2] (un1) {};

   \draw[densely dotted, thick] (u1) -- (u);
   \draw[densely dotted, thick] (u) -- (up);
   \draw[densely dotted, thick] (up) -- (un);
   \draw[densely dotted] (ua) -- (u);
   \draw[densely dotted] (up) -- (uap);
\end{tikzpicture}

     \centering\medskip\noindent (b)
     \end{minipage}

    \caption{Arrangement of the nodes 
    $\hat{u}_1$, $\hat{u}$, $\hat{u}'$, and $\hat{u}_N$   from Claim E (a) and Claim G (b).} 
    \label{fig:u-and-uprime}
\end{figure}

\medskip
{\sc Claim E.} 
{\it \clmE}

\medskip 
{\sc Proof of Claim E.}
  First, we show that $\hat{u}$ and $\hat{u}'$ are indeed distinct. 
Suppose towards a contradiction that they are not, i.e.\ $\hat{u} = \hat{u}'$. Then, 
by connectedness, $\hat{u}$ has to cover $S \;\cup $ $\{ z_1, 
z_2\}$, because $S \;\cup $ $\{ z_1, z_2\}$ is contained in 
$B_{u_A}$ and in  $B_{u'_A}$.
Moreover, again by connectedness, $\hat{u}$ also has to cover
$\{a_1, a'_1\}$, because $a_1$ is contained in $B_{\hat{u}_1}$ and in $B_{u_A}$ and 
$a'_1$ is contained in $B_{\hat{u}_N}$ and in $B_{u'_{A}}$.
As in Case (3) in the proof of Claim D, this is impossible
by Lemma~\ref{lem:compl_edge}. Hence, $\hat{u}$ and $\hat{u}'$ are distinct. 

Second, we show that, on the path from 
$\hat{u}_1$ to  $\hat{u}_N$, the node $\hat{u}$ comes
before $\hat{u}'$.
Suppose to the contrary that $\hat{u}'$ comes before $\hat{u}$. 
Then, by the connectedness condition,  $\hat{u}$ covers the following (sets of) 
vertices: 
\begin{itemize}
\item $a'_1$, since we are assuming that $\hat{u}'$ comes before $\hat{u}$, i.e., 
$\hat{u}$ is on the path from $\hat{u}_N$ to $u'_A$;
\item $a_1$, since $\hat{u}$ is on the path from  $\hat{u}_1$ to $u_A$;
\item $S \cup \{z_1,z_2\}$, since $\hat{u}$ is on the path from  $u_A$ to 
$u'_A$.
\end{itemize}
In total, $\hat{u}$  has to cover all vertices in 
$S \;\cup $  $\left\{ z_1, z_2, a_1, a'_1\right\}$. 
Again, by Lemma~\ref{lem:compl_edge}, 
this 
is impossible with weight $\leq 2$. $\hfill\diamond$

\medskip
{\sc Claim F.} 
{\it \clmF}

\medskip
{\sc Proof of Claim F.}
First, it is easy to verify that $N \geq 2$ must hold. Otherwise, a single node 
would have to cover
$\{e_{\min \ominus 1}, e_{\min}$, $e_{\min \oplus 1}$, \dots, $e_{\max \ominus 
1}$, $e_{\max}\}$
and, hence, in particular, $S \cup \{z_1,z_2,a_1,a'_1\}$, 
which is impossible as we  have already seen in Case (3) of the proof of Claim D.

It remains to prove $N \geq 3$. Suppose to the contrary that $N = 2$. By the 
problem reduction, hypergraph $H$ has distinct edges $e_{\min \ominus 1}$, $e_{\min}$ and $e_{\max}$.
Hence, 
$\hat{u}_1$ covers at least $e_{\min \ominus 1}$ and $\hat{u}_2$ covers at least $e_{\max}$.
Recall from Claim~E the nodes $\hat{u}$ and $\hat{u}'$, which constitute the endpoints of the intersection
of the path from $u_A$ to $u'_A$ with the path $\ppi$, cf.\ Figure~\ref{fig:u-and-uprime}(a).  
Here we are assuming $N = 2$. We now show that, by the connectedness condition of FHDs,
the nodes $\hat{u}$ and $\hat{u}'$ must cover certain vertices, which will lead to 
a contradiction by Lemma~\ref{lem:compl_edge}.

\begin{itemize}
 \item vertices covered by $\hat{u}$: node $\hat{u}$ is on the path between $u_A$ and $u'_A$. Hence, it covers $S \cup \{ z_1,z_2\}$. Moreover, $\hat{u}$ is on the path between
 $\hat{u}_1$ and $u_A$ (or even coincides with $\hat{u}_1$). Hence, 
 it also covers $a_1$. In total, $\hat{u}$  covers at least $S \cup \{ z_1,z_2, a_1\}$.

 \item vertices covered by $\hat{u}'$: node $\hat{u}'$ is on the path between $u_A$ and $u'_A$. Hence, it covers $S \cup \{ z_1,z_2\}$. Moreover, $\hat{u}'$ is on the path between
 $\hat{u}_2$ and $u'_A$ (or even coincides with $\hat{u}_2$). Hence, 
 it also covers $a'_1$. In total, $\hat{u}'$  covers at least $S \cup \{ z_1,z_2, a'_1\}$.
\end{itemize} 

\smallskip

\noindent
One of the nodes $\hat{u}_1$ or $\hat{u}_2$ must also cover the edge $e_{\min}$.
We inspect these 2 cases separately:

\begin{itemize}
\item Case~(1): suppose that the edge $e_{\min}$ is covered by $\hat{u}_1$. 
Then, $\hat{u}_1$ covers vertex $a'_{\min}$, which is also covered by $\hat{u}_2$.
Hence, also $\hat{u}$  covers $a'_{\min}$. In total, $\hat{u}$ covers 
$S \cup \{ z_1,z_2, a_1, a'_{\min}\}$. However, 
by Lemma~\ref{lem:compl_edge}, we know that, 
to cover $S \cup \{ z_1, z_2 \}$ with weight $\leq 2$,
we are only allowed to put 
non-zero weight on pairs of complementary edges. Hence, 
it is impossible to achieve also weight $\geq 1$ on $a_1$ and on $a'_{\min}$
at the same time. 

\item Case~(2): suppose that the edge $e_{\min}$ is covered by $\hat{u}_2$. 
Then, $\hat{u}_2$ covers vertex $a_{\min}$ (actually, it even covers all of $A$), 
which is also covered by $\hat{u}_1$.
Hence, also $\hat{u}'$  covers $a_{\min}$. In total, $\hat{u}'$ covers 
$S \cup \{ z_1,z_2, a'_1, a_{\min}\}$. Again, this is impossible 
by Lemma~\ref{lem:compl_edge}.
\end{itemize}

Hence, the path $\ppi$ indeed has at least 3 nodes $\hat{u}_i$. $\hfill\diamond$

\medskip
{\sc Claim G.} 
{\it \clmG}

\medskip
{\sc Proof of Claim G.}
We have to prove that $\hat{u}$ lies between 
$\hat{u}_1$ and $\hat{u}_2$ (not including $\hat{u}_2$)
and $\hat{u}'$ lies between 
$\hat{u}_{N-1}$ and $\hat{u}_N$ (not including $\hat{u}_{N-1}$).
For the first property, 
suppose to the contrary that 
$\hat{u}$ does not lie between $\hat{u}_1$ and $\hat{u}_2$
or $\hat{u} = \hat{u}_2$.
This means, that there exists $i \in \{2, \dots, N-1\}$
such that $\hat{u}$ lies between $\hat{u}_i$ and $\hat{u}_{i+1}$, 
including  the case that $\hat{u}$
coincides with $\hat{u}_i$. Note that, by Claim E, $\hat{u}$ cannot coincide with $\hat{u}_N$, 
since there is yet another
node $\hat{u}'$ between  $\hat{u}$ and $\hat{u}_N$.

By definition of $\hat{u}_i$ and $\hat{u}_{i+1}$, there is a $p \in [2n+3;m]$, 
such that both $\hat{u}_i$ and $\hat{u}_{i+1}$ cover $a'_p$. 
Then, by the connectedness condition,  $\hat{u}$ covers the following (sets of) 
vertices: 
\begin{itemize}
\item $a'_p$, since $\hat{u}$ is on the path from $\hat{u}_{i}$ 
to $\hat{u}_{i+1}$
(or 
$\hat{u}$ coincides with $\hat{u}_{i}$), 
\item $a_1$, since $\hat{u}$ is on the path from  $\hat{u}_1$ to $u_A$,
\item $S \cup \{z_1,z_2\}$, since $\hat{u}$ is on the path from  $u_A$ to 
$u'_A$.
\end{itemize}
However, by Lemma~\ref{lem:compl_edge}, we know that, 
to cover $S \cup \{ z_1, z_2 \}$ with weight $\leq 2$,
we are only allowed to put 
non-zero weight on pairs of complementary edges.
Hence, it is impossible to achieve also weight $\geq 1$ on $a'_p$ and $a_1$
at the same time.

It remains to show that $\hat{u}'$ lies between 
$\hat{u}_{N-1}$ and $\hat{u}_N$ (not including $\hat{u}_{N-1}$).
Suppose to the contrary that it does not. Then, 
analogously to the above considerations for $\hat{u}$, 
it can be shown that there exists some $p \in [2n+3;m]$, such that $\hat{u}'$
covers the vertices
$S \cup \{z_1,z_2, a_p, a'_1\}$. Again, this is impossible
by Lemma~\ref{lem:compl_edge}. $\hfill \diamond$

\medskip
By Claim C, the decomposition $\mcF$ contains a path 
$\hat{u}_1 \cdots \hat{u}_N$ that covers the edges 
$e_{\min \ominus 1}, e_{\min}$, $e_{\min \oplus 1}$, \dots, $e_{\max \ominus 
1}$, $e_{\max}$
in this order. We next strengthen this  
property by showing that every node $\hat{u}_i$ covers exactly one edge $e_p$.

\medskip
{\sc  Claim H.}{\it \clmH}

\medskip
{\sc Proof of Claim H.}
We prove this property for the ``outer nodes'' $\hat{u}_1$, $\hat{u}_N$ 
and for the ``inner nodes'' $\hat{u}_2 \cdots \hat{u}_{N-1}$ separately.
We start with the ``outer nodes''.
The proof for $\hat{u}_1$ and $\hat{u}_N$ is symmetric. We thus only work out 
the details for 
$\hat{u}_1$. Suppose to the contrary that $\hat{u}_1$ not only covers $e_{\min 
\ominus 1}$ 
but also $e_{\min}$. We distinguish two cases according to the position of 
node $\hat{u}$ in Figure~\ref{fig:u-and-uprime} (b):

\begin{itemize}
 \item Case~(1): $\hat{u} = \hat{u}_1$. 
Then,  $\hat{u}_1$ has to cover the following (sets of) vertices: 
\begin{itemize}
\item $S \cup \{z_1,z_2\}$, since $\hat{u}$ is on the path from $u_A$ to 
$u'_A$ and we are assuming $\hat{u} = \hat{u}_1$. 

\item $a_1$, since $\hat{u}_1$ covers $e_{\min \ominus 1}$,

\item $a'_{\min}$, since we are assuming that $\hat{u}_1$ also covers $e_{\min}$. 

\end{itemize}

By applying Lemma~\ref{lem:compl_edge}, we may conclude that 
the set $S \cup \{z_1,z_2,a_1,a'_{\min}\}$
cannot be covered by a fractional edge cover of weight $\leq 2$.

\item Case~(2): $\hat{u} \neq \hat{u}_1$. Then 
$\hat{u}$ is on the path from $\hat{u}_1$ to 
$\hat{u}_{2}$. Hence, 
$\hat{u}$ has to cover the following (sets of) vertices: 
\begin{itemize}
\item $S \cup \{z_1,z_2\}$, since $\hat{u}$ is on the path from $u_A$ to $u'_A$,

\item $a_1$, since $\hat{u}$ is on the path from $u_A$ to $\hat{u}_1$, 

\item $a'_{\min}$, since $\hat{u}$ is on the path from $\hat{u}_1$ to 
$\hat{u}_2$.
\end{itemize}
As in Case (1) above, 
$S \cup \{z_1,z_2,a_1,a'_{\min}\}$
cannot be covered by a fractional edge cover of weight $\leq 2$
due to Lemma~\ref{lem:compl_edge}.
\end{itemize}

It remains to consider the ``inner'' nodes $\hat{u}_i$ with  $2 \leq i \leq 
N-1$. 
Each such $\hat{u}_i$ has to cover $S \cup \{z_1,z_2\}$ since all these nodes 
are on the path from $u_A$ to $u'_A$ by Claim G. Now suppose that $\hat{u}_i$ 
covers 
$e_p = A'_p \cup \overbar{A_p}$
for some $p \in \{e_{\min}, \dots, e_{\max \ominus 1}\}$. 
By Lemma~\ref{lem:covering}, covering all of the vertices $A'_p \cup 
\overbar{A_p} \cup S \cup \{z_1,z_2\}$ by a fractional edge cover of weight 
$\leq 2$ 
requires that we put total weight $1$ on  the
edges $e^{k,0}_p$ and total weight $1$ on the edges $e^{k,1}_p$
with $k \in \{ 1,2,3\}$.
However, then it is 
impossible to cover also $e_{p'}$ for some $p'$ with $p' \neq p$.
This concludes the proof of Claim F.$\hfill \diamond$

\medskip
We can now associate with each $\hat{u}_i$ for $1 \leq i \leq N$ the 
corresponding 
edge $e_p$ and write $u_p$ to denote the node that covers the edge $e_p$.
By Claim G, we know that all of the nodes $u_{\min} \dots, u_{\max\ominus 1}$ 
are on the path from $u_A$ to $u'_A$. Hence, by the connectedness condition, 
all these nodes cover $S \cup \{z_1,z_2\}$.

We are now ready to construct a satisfying truth assignment $\sigma$ of 
$\varphi$. For each $i \leq 2n+3$, let $X_i$ be the set $B_{u_{(i,1)}} \cap (Y 
\cup Y')$. As 
  $Y \subseteq B_{u_A}$ and $Y' \subseteq B_{u'_A}$, the sequence $X_1 \cap 
  Y, \ldots, X_{2n+3}\cap Y$ is non-increasing and the sequence $X_1 \cap Y', 
  \ldots,
  X_{2n+3}\cap Y'$ is non-decreasing. Furthermore, as all edges $e_{y_i} = \{ 
  y_i, y'_i \}$ must be covered by some node in $\mcF$, we conclude that for 
  each 
  $i$ and $j$, $y_j \in X_i$ or $y'_j \in X_i$.
  Then, there is some $s \leq 2n+2$ such that $X_s = X_{s+1}$. Furthermore,
  all nodes between $u_{(s,1)}$ and $u_{(s+1,1)}$ cover $X_s$. 
  We derive a truth assignment for $x_1, \ldots, x_n$ from $X_s$ as follows. For
  each $\ell \leq n$, we set $\sigma(x_{\ell}) = 1$ if $y_{\ell} \in X_s$ and otherwise
  $\sigma(x_{\ell}) = 0$. Note that in the latter case $y'_{\ell} \in X_s$.

\medskip
{\sc  Claim I.}
{\it \clmI}

\medskip
{\sc Proof of Claim I.}
We have to show that every clause $c_j =   L_j^1   \vee L_j^2 \vee L_j^3$ of $\varphi$ 
is true in $\sigma$. 
Choose an arbitrary $j \in \{1, \dots, m\}$.
We have to show that there exists a literal in $c_j$ which is true in $\sigma$.
To this end, we inspect the node $u_{(s,j)}$, which, by construction, 
lies between $u_{(s,1)}$ and $u_{(s+1,1)}$. 
Let $p = (s,j)$.
Then we have $A'_{p} \cup \overbar{A_{p}} \cup S   \cup  \{  z_1, z_2\}
\subseteq B_{u_{p}}$. Moreover, by the definition of $X_s$, we also have 
$X_s \subseteq B_{u_{p}}$.
  By Lemma~\ref{lem:covering}, the only way to cover
  $B_{u_{p}}$ with weight $\leq 2$ is by using exclusively the edges 
  $e^{k,0}_p$ and $e^{k,1}_p$ with $k \in \{1,2,3\}$.
More specifically, we have 
 $\sum_{k=1}^3 \gamma_{u_{p}}(e^{k,0}_p) = 1$ and $\sum_{k=1}^3 \gamma_{u_{p}}(e^{k,1}_p) = 1$.
Therefore, $\gamma_{u_{p}}(e^{k,0}_p) > 0$ for some $k$. 
We distinguish two cases depending on the form of literal $L^k_j$:

\begin{itemize}
\item Case (1): First, suppose $L^k_j = x_{\ell}$. By Lemma~\ref{lem:compl_edge}, 
complementary edges must have equal weight. Hence, 
from $\gamma_{u_{p}}(e^{k,0}_p) > 0$ it follows that 
also 
  $\gamma_{u_{p}}(e^{k,1}_p) > 0$ holds. Thus, the weight on $y'_{\ell}$ is less
  than $1$, which means that $y'_{\ell} \not\in B(\gamma_{u_{p}})$ and 
  consequently $y'_{\ell} \not\in X_s$. Since this implies that $y_{\ell} \in X_s$, we indeed 
have that 
  $\sigma(x_{\ell}) = 1$. 
  
\item Case (2):  Conversely, suppose $L^k_j =  \neg x_{\ell}$. Since $\gamma_{u_{p}}(e^{k,0}_p) > 0$, the weight on $y_{\ell}$ is 
  less
  than $1$, which means that $y_{\ell} \not\in B(\gamma_{u_{p}})$ and 
  consequently $y_{\ell} \not\in X_s$. Hence, we have $\sigma(x_{\ell}) = 0$. 
  
\end{itemize}  
In either case, literal  $L^k_p$ is satisfied by $\sigma$ and therefore, the $j$-th clause $c_j$ is 
satisfied by $\sigma$. Since $j$ was arbitrarily chosen, $\sigma$  indeed satisfies $\varphi$.$\hfill \diamond$

\medskip
\noindent
Claim~I completes the proof of Theorem~\ref{thm:npcomp}.
\end{proof}

\noindent
We conclude this section  by mentioning that the above reduction 
is easily extended to $k+\ell$ for arbitrary
$\ell \geq 1$: for integer values $\ell$, simply add a clique of $2 \ell$ fresh 
vertices 
$v_1, \dots, v_{2 \ell}$ to $H$ and connect each $v_i$ with each ``old'' vertex 
in $H$. Now assume a rational value $\ell \geq 1$, i.e., 
$\ell = r / q$ for natural numbers $r,q$ with $r >q > 0$. 
To achieve a rational bound $k + r/q$, we add %
$r$ fresh 
vertices and 
add hyperedges 
$\{v_i, v_{i\oplus 1},\dots, v_{i \oplus (q-1)}\}$ with $i \in \{1, \dots, 
r\}$
to $H$, where $a \oplus b$ denotes $a + b$ modulo $r$. Again, we connect 
each 
$v_i$ with each ``old'' vertex in $H$. With this construction we
can give NP-hardness proofs for any (fractional) $k \geq 3$. For all fractional values
$k < 3$ (except for $k = 2$) different gadgets and ideas might be needed to
prove NP-hardness of \rec{FHD,$k$}, which we leave  for future work.
 
\section{A Framework for Efficient Computation of Decompositions}
\label{sect:framework}

Before we move on to the easy cases for \rec{GHD,\,$k$} and \rec{FHD,\,$k$}, we will introduce a framework for a uniform presentation of the tractability proofs in the following sections. Conceptually, we can split the task of checking whether a decomposition of certain width exists into two parts: (1) deciding which sets of vertices are acceptable as bags (the \emph{candidate bags}) and (2) deciding if there is a tree decomposition made up of only acceptable bags. In this section, we focus on the second part and show that this task is indeed tractable as long as the decompositions satisfy a certain normal form. This will allow us to show the \textsc{Check} problem tractable for settings where we can compute an appropriate set of candidate bags in polynomial time.

To emphasize the generality of the approach we will focus on tree decompositions in this section. Recall that a generalized hypertree decomposition of width at most $k$ is simply a tree decomposition where every bag has an integral edge cover with weight at most $k$. The same is true for fractional hypertree decompositions and fractional edge covers.  Hence, the \rec{GHD,\,k} problem can be solved by computing appropriate sets $\mathbf{S}$ of candidate bags that can be covered by $k$ edges and then deciding whether there exists a TD using only bags from $\mathbf{S}$.  If such a TD exists, it is a witness for the existence of a GHD of width at most $k$. Of course, the same strategy also works for \rec{FHD,\,k}.

First, we will formally define the task we are interested in as the candidate tree decomposition problem.  We show that the problem is \np-complete even for acyclic graphs. Following that, we show that the problem becomes tractable if we limit our search to finding TDs that adhere to a certain normal form which is sufficient for our purposes.

\begin{definition}
  Let $H$ be a hypergraph and $\mcT=\defTD$ be a tree decomposition of $H$.  Let the \emph{candidate bags} ${\bf S}$ be a family of subsets
  of $V(H)$.  If for each $u \in T$ there exists an $S \in \mathbf{S}$ such
  that $B_u=S$, then we call $\mcT$ a \emph{candidate tree decomposition} of ${\bf S}$.  We
  denote by $\realization{\mathbf{S}}$ the
  set of all candidate tree decompositions of $H$.
\end{definition}

\begin{theorem}
  Let $H$ be a hypergraph and $\mathbf{S} \subseteq 2^{V(H)}$. It is \NP-complete to decide 
  whether $\realization{\mathbf{S}} \neq \emptyset$. The problem remains \NP-complete even if 
  we restrict the choice of $H$ to acyclic graphs.
\end{theorem}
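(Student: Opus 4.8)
I would prove membership in \NP\ by a size‑reduction argument for candidate tree decompositions, and \NP‑hardness by a reduction from 3SAT whose output hypergraph is acyclic (in fact, path‑shaped), which immediately yields the strengthened statement.

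\smallskip\noindent\textbf{Membership in \NP.}
The key step is the claim: \emph{if $\realization{\mathbf{S}}\neq\emptyset$, then $H$ has a candidate tree decomposition with at most $|\mathbf{S}|$ nodes.} To prove it, take a candidate tree decomposition of $H$ with the fewest nodes. First, no two adjacent nodes carry the same bag: otherwise contracting the edge between them yields a smaller candidate tree decomposition (the merged node keeps that bag, so it is still in $\mathbf{S}$, and conditions (1)--(2) of Definition~\ref{def:TD} are obviously preserved). Second, no bag is a \emph{proper} subset of an adjacent bag: if $B_v\subsetneq B_w$ with $v$ adjacent to $w$, delete $v$ and reconnect its remaining neighbours to $w$; all bags are unchanged (hence still in $\mathbf{S}$), every hyperedge is still covered because $B_v\subseteq B_w$, and connectedness is preserved because $B_v\subseteq B_w$ and $v$ was a neighbour of $w$ -- again contradicting minimality. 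Finally, if $B_u=B_v$ for two distinct nodes $u,v$, then by connectedness every vertex of $A:=B_u=B_v$ lies in every bag on the $u$--$v$ path, so the neighbour $w$ of $u$ on that path satisfies $A\subseteq B_w$, contradicting one of the two facts just established. Hence all bags are pairwise distinct and there are at most $|\mathbf{S}|$ nodes. Membership in \NP\ then follows: guess a tree on at most $|\mathbf{S}|$ nodes whose nodes are labelled by members of $\mathbf{S}$, and verify conditions (1) and (2) in polynomial time.

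\smallskip\noindent\textbf{\NP‑hardness.}
I would reduce from 3SAT, reusing the gadget philosophy of Section~\ref{sect:hardness}. From $\varphi=\bigwedge_{j=1}^{m}(\ell_j^1\vee\ell_j^2\vee\ell_j^3)$ over $x_1,\dots,x_n$, let $H$ be essentially a path that sweeps through the $m$ clauses $2n+1$ times, so that its $k$‑th edge is ``responsible'' for clause $C_{j(k)}$ with $j(k)=((k-1)\bmod m)+1$; besides the path vertices, $V(H)$ carries marker vertices encoding the truth value of each variable. For each path edge and each of the three literals of the associated clause, $\mathbf{S}$ contains one bag that covers that path edge and records the choice of that literal as a witness together with the corresponding variable constraint; crucially $|\mathbf{S}|=\calO(nm)$, so the reduction is polynomial. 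Any candidate tree decomposition is forced into an essentially path‑shaped form covering the path edges in order; the connectedness condition on a marker vertex then forbids a node choosing one polarity of a variable from lying between two nodes that do not, so the positions choosing $x_i$ and those choosing $\neg x_i$ each form one contiguous block along the path. Hence each variable induces at most one ``flip'', and since there are $2n+1$ sweeps, at least one sweep is flip‑free; in such a sweep all $m$ clauses are witnessed consistently, yielding a satisfying assignment of $\varphi$. Conversely, a satisfying assignment gives a candidate tree decomposition by using its chosen witnessing literal for each clause throughout. As $H$ is a path, it is acyclic, so the same reduction proves the strengthened claim.

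\smallskip\noindent\textbf{Main obstacle.}
The delicate part is the correctness of the reduction under the constraint that $\mathbf{S}$ stay polynomial: the bags cannot afford to spell out a full truth assignment, so the assignment has to be recovered only indirectly from the interval (connectedness) behaviour of the marker vertices, and -- in the other direction -- the gadgets must be arranged so that, on a satisfiable $\varphi$, the marker vertices genuinely occupy connected node sets in the witnessing decomposition. Reconciling these two requirements is exactly the job of the ``repeat $2n+1$ times'' device of Section~\ref{sect:hardness} (connectedness bounds the number of variable flips by $\calO(n)$, so a clean sweep must exist), and designing the marker gadgets so that they simultaneously force consistency and admit a legal decomposition on yes‑instances is the step I expect to be the most technical. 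It is, in fact, precisely the obstruction that the normal‑form restriction introduced in the remainder of this section is meant to bypass.
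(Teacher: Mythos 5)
Your \NP-membership argument is correct and in fact more careful than what the paper provides: the paper simply asserts membership, whereas you supply the (genuinely needed) observation that a minimum-node candidate tree decomposition has pairwise distinct bags and hence at most $|\mathbf{S}|$ nodes, so the witness is polynomially bounded. That part stands on its own.

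The hardness part, however, has a genuine gap: it is a plan, not a proof. You reduce from 3SAT by transplanting the gadget philosophy of Section~\ref{sect:hardness}, but you never specify the hypergraph $H$, the marker gadgets, or the bags in $\mathbf{S}$, and the two load-bearing claims are left unargued. First, you assert that ``any candidate tree decomposition is forced into an essentially path-shaped form covering the path edges in order.'' In Section~\ref{sect:hardness} this forcing comes from clique gadgets together with the weight bound $\leq 2$ on covers (Lemmas~\ref{lem:gadgetH0}, \ref{lem:compl_edge}, \ref{lem:covering}); here there is no width constraint at all, only the requirement that bags come from $\mathbf{S}$, so whatever forces the ordering must be built into the bags themselves, and you do not say how. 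Second, the strengthened statement requires $H$ to be an acyclic \emph{graph}, whereas the Section~\ref{sect:hardness} machinery relies essentially on large hyperedges (the $M_1,M_2$ blocks, the $e^{k,0}_p$ edges covering $S\cup Y\cup\{z_1\}$, etc.); you acknowledge that ``designing the marker gadgets so that they simultaneously force consistency and admit a legal decomposition on yes-instances'' is the hard step and expect it to be technical --- that step is precisely the proof, and it is missing. The paper avoids all of this with a far simpler reduction, from \textsc{Exact Cover}: $H$ is a spider tree on $v,v_1,\dots,v_n,u_1,\dots,u_n$ with edges $\{v,v_i\}$ and $\{v_i,u_i\}$, and $\mathbf{S}$ consists of the hub bag $\{v,v_1,\dots,v_n\}$ together with one bag per set $X_i$ (namely $X_i\cup\{v_j\mid u_j\in X_i\}$); a minimum candidate TD is then forced to be a star whose leaf bags intersect $U$ in pairwise disjoint sets covering $U$, i.e.\ an exact cover. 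I would recommend either adopting that reduction or, if you want to keep 3SAT, writing out the gadgets and the forcing argument in full --- as it stands, the ``only if'' direction of your reduction is not established.
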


\begin{proof} 
  The problem is clearly in \np.  We show \NP-hardness by reduction
  from the exact cover problem: Let $U=\{u_1, \dots, u_n\}$ be the
  universe and let $X_1, \dots, X_m$ be subsets of $U$. The exact
  cover problem asks for a cover of $U$ by elements of
  $\{X_1, \dots, X_m\}$ such that the sets in the cover are pairwise
  disjoint.

  We define an acyclic graph $G$ as follows: $G$ is a tree with vertices
  $v,v_1, \dots. v_n,u_1, \dots, u_n$.  The edges of $G$  are
  $\{v,v_i\}$ and $\{v_i,u_i\}$ for each $1 \leq i \leq n$.  Let
  ${\bf S}=\{S,S_1, \dots, S_m\}$ where $S=\{v,v_1, \dots v_n\}$ and
  each $S_i = X_i \cup \{v_j \mid u_j \in X_i\}$, i.e., by taking
  $X_i$ and adding $v_j$ for each $u_j$ contained in $X_i$.

  We claim that then $\realization{\mathbf{S}} \neq \emptyset$ iff there is
  an exact cover of $U$ by $X_1, \dots, X_m$.  One direction is
  easy. Let ${\bf X} \subseteq \{X_1, \dots, X_m\}$ be an exact 
  cover of $U$. Denote the elements of ${\bf X}$ by
  $X'_1, \dots, X'_q$.  For $1 \leq i \leq q$, let $S'_i$ be obtained
  from $X'_i$ by adding $v_i$ for each $u_i \in X'_i$. Clearly
  $S'_i \in \{S_1, \dots, S_m\}$ and there exists a candidate tree decomposition $\defTD$ of $G$ where $V(T)=\{t,t_1, \dots, t_q\}$,
  $E(T)=\{\{t,t_1\}, \dots, \{t,t_q\}\}$, $B_t=S$ and for each
  $1 \leq i \leq q$, $B_{t_i}=S'_i$.

  The other direction is more complicated. Let $\defTD$ be a
  smallest (w.r.t. the number of nodes of $T$) candidate TD of
  $G$.  In particular, this means that the bags of all the nodes are distinct.
  The proof proceeds in several steps:

  \begin{enumerate}
  \item There is $t \in V(T)$ such that $B_t=S$.
    Indeed, otherwise, $v$ is not covered.
  \item For each $1 \leq i \leq n$, there is $t' \in V(T)$ such that
    $u_i \in B_{t'}$ and $t$ is adjacent to $t'$.  Indeed, assume the
    opposite and let $t' \in V(T)$ be a node with $u_i \in B_{t'}$
    such that $t$ is not adjacent to $t'$.

    Since we assume a candidate TD, we have $B_{t'} = S_j$ for some $j$ and hence
    $v_i \in B_{t'}$.  Now consider the path between $t$ and $t'$ and let
    $s$ be the node next to $t$ on this path.  By our minimality
    assumption we know that $B_s \neq B_t$ and thus $B_s = S_k$ with
    $k \neq j$.  Since we assume the claim to be false we have
    $u_i \notin S_k$ and hence $v_i \notin S_k$. As $v_i \in B_{t'}$ and
    $v_i \in B_t$, the connectedness condition of the tree decomposition
    is violated.
  \item Let $t_1, \dots, t_q$ be the neighbors of $t$ in $T$.  We claim
    that $T$ has no other nodes. Indeed, all the vertices of $G$ are
    covered by the bags of $t,t_1, \dots t_q$ by the previous two items.
    Each edge $\{v,v_i\}$ is contained in $B_t$. Also, each edge
    $\{u_i,v_i\}$ is covered by some $B_{t_j}$ containing $u_i$ (existing by
    the previous item).  It follows that $T[\{t,t_1, \dots, t_q\}]$
    together with the corresponding bags form a tree decomposition of
    $G$. By the minimality assumption, $T$ does not have other nodes.
  \item We claim that for any $1 \leq i \neq j \leq q$,
    $B_{t_i} \cap B_{t_j} \neq \emptyset$.  Indeed, otherwise, there is
    $u_k \in B_{t_i} \cap B_{t_j}$. However, $u_k \notin B_t$ in
    contradiction to the connectedness condition.  It follows that
    $B_{t_1} \cap U, \dots, B_{t_q} \cap U$ are disjoint elements of
    $\{X_1, \dots, X_m\}$ covering all of $U$ as required.
  \end{enumerate}
  
\vspace{-14pt}

\end{proof}

To obtain a tractable version of the problem we will introduce a generalization of the normal form that was used in the tractability proof for HDs in~\cite{2002gottlob}. 

\begin{definition}
  \label{def:compnf}
  A tree decomposition $\defTD$ of a hypergraph $H$ is in {\em
    component normal form (\compnf)\/} if for each node $r \in T$, and
  for each child $s$ of $r$ there is {\em exactly one\/} \comp{$B_r$}
  $C_s$ such that $\VTs = C_s \cup (B_r \cap B_s)$ holds. We say $C_s$
  is the component \emph{associated with} node $s$.
\end{definition}

By the above definition, we know that in in a \compnf TD, every child $s$ of a node $r$ is associated with at most 
one \comp{$B_r$} $C_s$. By the connectedness condition, also the converse is true, i.e., 
every \comp{$B_r$} $C_s$ is associated with \emph{at most one} child $s$. Note that (by applying the ideas of the 
transformation of HDs into the normal form of~\cite{2002gottlob}), every TD can be transformed in polynomial time into a TD in 
\compnf without increasing the width (more precisely, the bags in the resulting TD are subsets of the bags in the 
original TD).

\begin{definition}
  Let $H$ be a hypergraph and let ${\bf S}$ be a family of subsets of
  $V(H)$.  Let $\mcT \in \realization{\mathbf{S}}$ be a tree
  decomposition in \compnf.  We say $\mcT$ is a \emph{\compnf
    candidate tree decomposition} of ${\bf S}$.  We denote by
  $\blockreal{\mathbf{S}}$ the set of all \compnf candidate tree
  decompositions of $H$.
\end{definition}

\begin{theorem}
  \label{thm:frameworkalg}
  Let $H$ be a hypergraph and $\mathbf{S} \subseteq 2^{V(H)}$. There
  exists an algorithm that takes $H$ and $\mathbf{S}$ as an input and
  decides in polynomial time whether
  $\blockreal{\mathbf{S}} \neq \emptyset$, and if so, return a $\mcT \in \blockreal{\mathbf{S}}$.
\end{theorem}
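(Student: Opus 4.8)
The plan is to give a dynamic-programming algorithm over rooted subtrees, where the states record the "interface" between an already-decomposed part of $H$ and the remaining part. Because we are looking for a \compnf candidate TD, Definition~\ref{def:compnf} tells us that at every node $s$ with parent $r$, the subtree below $s$ covers exactly $C_s \cup (B_r \cap B_s)$ for a single $[B_r]$-component $C_s$. This means the relevant state is a pair $(C, S)$ where $S \in \mathbf{S}$ is a candidate bag and $C$ is a single connected component of $V(H) \setminus S$ with $\edges(C) \subseteq \edges(S) \cup \{\,e \mid e \subseteq C \cup S\,\}$ — intuitively, ``$S$ is placed at a node and $C$ is the part of the hypergraph still to be decomposed below it.'' The number of such pairs is polynomial: there are $|\mathbf{S}|$ choices for $S$, and for each $S$ at most $|V(H)|$ components $C$. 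The algorithm marks a pair $(C,S)$ as \emph{good} if the induced subproblem ``decompose everything in $C$, hanging off a node with bag $S$'' admits a \compnf candidate TD. A \compnf candidate TD of all of $H$ then exists iff there is a root bag $S_0 \in \mathbf{S}$ with $\edges(S_0) = $ (all edges meeting $S_0$ directly) such that \emph{every} $[S_0]$-component $C$ makes $(C, S_0)$ good (and $S_0$ together with the components covers all of $V(H)$, which is automatic).

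**Key steps.**
First I would make precise the subproblem solved by a state $(C,S)$: it asks for a rooted tree decomposition $\langle T', (B_u) \rangle$ in \compnf of the sub-hypergraph induced by $C \cup S$, whose root bag is some $S' \in \mathbf{S}$ with $S' \supseteq B_r \cap B_s$ in the appropriate sense — more carefully, one should root the recursion so that the ``boundary'' $B_r \cap B_s$ is contained in the root bag $S'$ of the subtree and, by the \compnf condition, the subtree covers exactly $C \cup (B_r \cap B_s)$; hence we track the state as (component $C$, set of boundary vertices we must still ``see'', candidate bag $S'$ used at the root). I'd define: $(C,S)$ is good iff there exists $S' \in \mathbf{S}$ with $C \cap S \subseteq S'$, $S' \subseteq C \cup S$, $\edges(C) \subseteq \edges(S')$, and for every $[S']$-component $C'$ with $C' \subseteq C$, the pair $(C', S')$ is good. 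Second, I would argue this recursion is well-founded because each recursive component $C' \subsetneq C$ is strictly smaller (it omits at least the vertices of $S' \cap C$, which are nonempty whenever $C' \neq C$; the case $C' = C$ is excluded because then $S' \cap C = \emptyset$, contradicting that $S'$ must contain a vertex of $C$ as $\edges(C)\subseteq\edges(S')$ and $C \neq \emptyset$). Hence the ``good'' predicate can be computed by a fixpoint / bottom-up pass over states ordered by $|C|$, in polynomial time. Third, I would verify soundness and completeness: (soundness) from a labelling of good states one reconstructs an actual \compnf candidate TD by picking, at each step, a witnessing $S'$ and recursing into each sub-component — the pieces glue into a tree because the components partition $C \setminus S'$ and the boundary condition $C \cap S \subseteq S'$ guarantees the connectedness condition at the join; (completeness) given any $\mcT \in \blockreal{\mathbf S}$, one reads off good states by induction from the leaves: Definition~\ref{def:compnf} guarantees each child's subtree covers exactly $C_s \cup (B_r\cap B_s)$, so each relevant $(C_s, B_s)$ (or rather the state at the child) is witnessed, and conditions (1) and (3) of the decomposition give the edge-covering requirements. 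Finally, to also \emph{return} a $\mcT$, not just decide existence, I would store a witnessing $S'$ with each good state and do a top-down reconstruction pass.

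**Main obstacle.**
The delicate point is getting the state definition exactly right so that (a) the recursion stays polynomial — in particular that the boundary set $B_r \cap B_s$ does not need to be tracked as an independent component of the state (it should be forced to be $C \cap S$ or a subset thereof, and the \compnf condition $\VTs = C_s \cup (B_r \cap B_s)$ is what lets us avoid tracking it separately), and (b) the glued-together tree genuinely satisfies the connectedness condition (condition (2) of Definition~\ref{def:TD}) and the edge condition (condition (1)). The connectedness condition across the ``seam'' at a parent–child pair is where one must be careful: a vertex $v$ appearing both above $r$ and somewhere strictly below $s$ must lie in $B_r \cap B_s$, and the \compnf requirement $\VTs = C_s \cup (B_r\cap B_s)$ combined with $v \notin C_s$ (since $C_s$ is disjoint from $B_r$) forces $v \in B_r \cap B_s \subseteq S'$; checking that this is exactly the condition $C \cap S \subseteq S'$ in the recursive call, and that it propagates correctly through nested components, is the crux of the correctness argument. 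I expect the polynomial bound itself to be routine once the state space is pinned down: $O(|\mathbf{S}| \cdot |V(H)|)$ states, each processed by scanning $O(|\mathbf{S}|)$ candidate root bags and computing components in polynomial time, for an overall polynomial running time.
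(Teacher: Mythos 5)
Your overall strategy is the same as the paper's: a dynamic program over polynomially many (bag, component) pairs (the paper calls them \emph{blocks} $(B,C)$), a bottom-up ``good''/marked predicate whose witness is the child bag (the paper's \emph{basis}), soundness by gluing the sub-decompositions of the $[S']$-components under a new root, and completeness by induction over an existing \compnf decomposition using the fact that $\VTs = C_s \cup (B_r\cap B_s)$. The acceptance criterion (some $S_0\in\mathbf{S}$ all of whose components are good) and the polynomial bound on the state space also match. However, two of the concrete conditions you write into the recursion are wrong as stated, and they are exactly the points you flagged as delicate. First, the edge condition $\edges(C)\subseteq\edges(S')$ is too strong: an edge $e$ with $e\cap C\neq\emptyset$ may be covered at a node deep inside the subtree and need not intersect the child's bag $S'$ at all. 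On the path $a\mbox{--}b\mbox{--}c\mbox{--}d\mbox{--}e$ with $\mathbf{S}$ the edge set itself, the component $C=\{c,d,e\}$ of $S=\{a,b\}$ has $\{d,e\}\in\edges(C)\setminus\edges(\{b,c\})$, so your test rejects the obvious path-shaped decomposition. The correct requirement (the paper's basis condition) is containment, not intersection: every edge meeting $C$ must lie inside $S'\cup\bigcup_i Y_i$, where the $Y_i$ are the $[S']$-components used in the recursive calls.

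Second, your termination argument does not go through. You claim $C'=C$ is impossible because $\edges(C)\subseteq\edges(S')$ forces $S'\cap C\neq\emptyset$; but since every edge meeting $C$ is automatically contained in $C\cup S$, the bag $S'$ may intersect all such edges only inside $S$, leaving $S'\cap C=\emptyset$ and $C'=C$. So induction on $|C|$ alone is not well-founded. This is repairable (e.g., order states lexicographically, or handle the degenerate case separately), but the cleaner fix — and what the paper actually does — is to compute the marking as a least fixpoint: repeatedly scan all unmarked blocks and mark any that acquire a basis among already-marked blocks, stopping when nothing changes. Termination is then immediate from the finiteness of the block set, and no strict-decrease argument is needed. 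With the edge condition corrected and the fixpoint formulation adopted, your soundness and completeness sketches do line up with the paper's Lemmas on unions of TDs and on reading off bases from an existing \compnf decomposition.
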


The intuition behind a polynomial-time algorithm for this problem is simple. Every parent/child relationship in a tree decomposition corresponds to a separator $S$ (the bag of the parent) and an $[S]$-component. It should therefore be enough to first enumerate all pairs $(S,C)$ of separators $S$ in $\mathbf{S}$ and [$S$]-components $C$, 
and then check  if these pairs, which we will call \emph{blocks}, can be combined to form a valid tree decomposition. Through the restriction to a specific set $\mathbf{S}$, that is part of the input, the number of blocks we have to consider is only polynomial in the input.

We are not aware of a proof of Theorem~\ref{thm:frameworkalg} in the
literature. In light of the hardness result for the general case, we
choose to present a full proof of the theorem in
Appendix~\ref{sec:algappendix} even though it could be considered
folklore.
  In particular, methods for subedge-based decompositions~\cite{2009gottlob} as well as tree projections and their associated algorithms, see e.g.,~\cite{DBLP:journals/jcss/GoodmanS84,DBLP:journals/jcss/GottlobGS18,DBLP:journals/jal/LustigS99}, are closely related to the \compnf CTD problem.

The restriction to component normal form will ultimately not restrict us in the
following sections. For generalized and fractional hypertree width,
\compnf can be enforced without increasing the width.  Such a
transformation can be found, e.g, in~\cite{2002gottlob} and as part of
the proof of Lemma~\ref{lem:nfghd}.  Still, some care will be required
in the enumeration of the candidate bags to guarantee that they allow
for a decomposition in component normal form.

\section{Tractable Cases of GHD Computation}
\label{sect:bmip}
\label{sect:ghd}

As discussed in Section \ref{sect:introduction} we are interested in finding a realistic and non-trivial criterion on hypergraphs that 
makes the \rec{GHD,\,$k$} problem tractable for fixed $k$.
We thus propose here such a simple property, namely the bounded intersection of two or more edges.

\begin{definition}\label{def:bip}
The {\em intersection width} $\iwidth{\HH}$ of a hypergraph $\HH$ is the 
maximum 
cardinality of any intersection $e_1\cap e_2$ of two distinct edges $e_1$ and 
$e_2$ 
of $\HH$.
We say that a hypergraph $H$ has
the {\em $i$-bounded intersection property ($i$-BIP)} if 
$\iwidth\HH\leq i$ holds. 

Let $\classC$ be a class of hypergraphs. 
We say that $\classC$ 
has the  
{\em bounded intersection property (BIP)} if there exists some 
integer 
constant $i$ such that
every hypergraph $H$ in $\classC$
 has the $i$-BIP.
Class $\classC$ has the 
{\em logarithmically-bounded intersection property (LogBIP)}
if for each of its elements $\HH$, $\iwidth{\HH}$ is $\calO(\log n)$, where 
$n = ||\HH||$ denotes the size of $\HH$. 
\end{definition}

The BIP criterion properly generalizes bounded arity and is indeed non-trivial in the sense that there exist classes of unbounded $\ghw$ that enjoy the BIP. Among others this includes the classes of graphs, regular hypergraphs, and linear hypergraphs.
Moreover, 
a recent empirical study \cite{pods/FischlGLP19} 
suggests that the overwhelming number of CQs enjoys the $2$-BIP
(i.e., one hardly joins two relations over more than 2 attributes).
To allow for a yet bigger class of hypergraphs, the BIP can be 
relaxed as follows.

\begin{definition}\label{def:bmip}
The {\em $c$-multi-intersection width} $\cmiwidth{c}{\HH}$ of a 
hypergraph 
$\HH$ 
is the maximum cardinality of any intersection $e_1\cap\cdots\cap e_c$  of 
$c$ distinct edges $e_1, \ldots,  e_c$ of $\HH$.  
We say that a hypergraph $H$ has
the 
{\em $i$-bounded $c$-multi-intersection property (\icBMIP)} if
$\cmiwidth{c}{\HH}\leq i$ holds.

Let $\classC$ be a class of hypergraphs. 
We say that $\classC$ 
has the  
{\em bounded multi-intersection property (BMIP)} 
if there exist constants $c$ and $i$ 
such that
every hypergraph $H$ in $\classC$
has the \icBMIP.
Class $\classC$ of hypergraphs has the 
{\em logarithmically-bounded  multi-intersection property (LogBMIP)} if there 
is 
a constant $c$ such that 
for the hypergraphs $\HH\in \classC$, $\cmiwidth{c}{\HH}$ is 
$\calO(\log n)$, 
where $n$ denotes the size of the
hypergraph $\HH$.
\end{definition}

  \begin{figure}[h]
    \centering
     \includegraphics{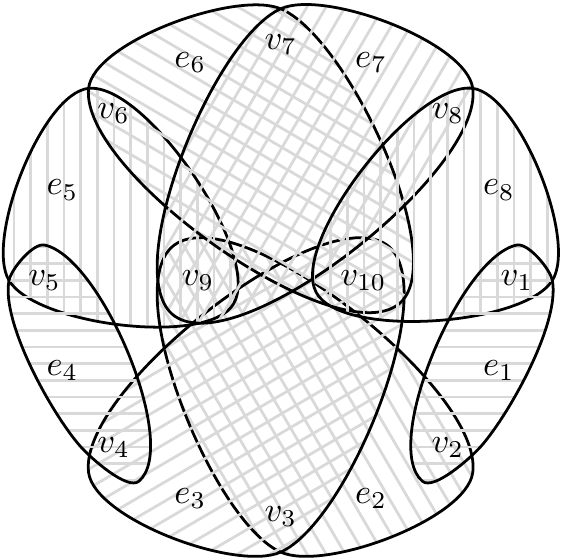}
    \caption{Hypergraph $\HH_0$ from Example \ref{ex:ghw1}} 
    \label{fig:GHDvsHD_graph}
  \end{figure}

  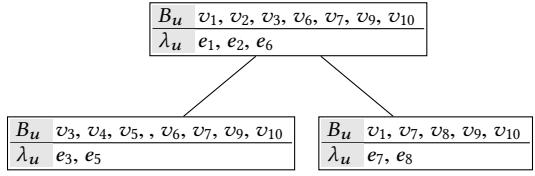
\begin{figure}[h]
  \centering 
  \footnotesize
    \tikzstyle{htmatrix}=[%
      matrix,
      matrix of nodes,
      nodes={draw=none},
      column 1/.style={nodes={fill=gray!20,align=right},anchor=base east,minimum width=2em},
      column 2/.style={anchor=base west},
      ampersand replacement=\&
    ]
      \begin{tikzpicture}[sibling distance=13em,
     every node/.style = {shape=rectangle,draw,inner sep=1,align=center}]
  \node[htmatrix] { \htnode{$v_1,v_2,v_3,v_6,v_7,v_9,v_{10}$}{$e_1,e_2,e_6$}}
    child { node[htmatrix] {\htnode{$v_3,v_4,v_5,,v_6,v_7,v_9,v_{10}$}{$e_3,e_5$}} }
    child { node[htmatrix] {\htnode{$v_1,v_7,v_8,v_9,v_{10}$}{$e_7,e_8$} }
    };
\end{tikzpicture}
\caption{HD of hypergraph $H_0$ in Figure~\ref{fig:GHDvsHD_graph}}
\label{fig:H0_HD}
\end{figure}

\begin{example}
  \label{ex:ghw1}
  
  Figure~\ref{fig:GHDvsHD_graph} shows the hypergraph $H_0 = (V_0,E_0)$ with $\ghw(\HH_0)=2$ but $\hw(\HH_0)$=3. (This example is from~\cite{2009gottlob}, which, in turn, was inspired by work of Adler~\cite{adler2004marshals}). 
    Figure \ref{fig:H0_HD}
  shows an HD of width 3 and Figure \ref{fig:H0_GHD} shows GHDs of width 2 for the hypergraph $H_0$. 
  The \emph{iwidth} and the \emph{$3$-miwidth} of $H_0$ is 1. Starting from $c=4$, the \emph{$c$-miwidth} is 0. 
  \hfill$\Diamond$
\end{example}

The LogBMIP is  
the most liberal restriction on classes of hypergraphs introduced in 
Definitions~\ref{def:bip} and \ref{def:bmip}. 
The main result in this section will be that the
\rec{GHD,$\,k$} problem with fixed $k$ 
is tractable for any class of hypergraphs satisfying this 
criterion.

\begin{theorem}
  \label{theo:LogBMIP}
For every hypergraph class $\classC$ that enjoys the LogBMIP, and for every 
constant 
$k\geq 1$, the \rec{GHD,\,$k$} problem is tractable, i.e., 
given a hypergraph $\HH$, it is feasible in polynomial time to 
check $\ghw(\HH)\leq k$ and, if so, to compute a GHD of 
width $k$ of $\HH$.%
\end{theorem}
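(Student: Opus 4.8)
The plan is to reduce the \rec{GHD,\,$k$} problem to the \compnf candidate tree decomposition problem of Section~\ref{sect:framework} (Theorem~\ref{thm:frameworkalg}). Concretely, I would first argue that it suffices to compute, in polynomial time, a family $\mathbf{S}$ of candidate bags with two properties: (i) each $S \in \mathbf{S}$ satisfies $\rho(S) \le k$ (so it can be turned into a GHD node with an integral edge cover of weight $\le k$), and (ii) if $\ghw(\HH) \le k$ then there is a GHD of width $\le k$ in \compnf all of whose bags lie in $\mathbf{S}$. Given such an $\mathbf{S}$, we run the algorithm of Theorem~\ref{thm:frameworkalg} on $(\HH,\mathbf{S})$; a positive answer yields a \compnf candidate TD whose every bag is coverable by $k$ edges, i.e.\ a GHD of width $\le k$, and a negative answer certifies $\ghw(\HH) > k$ by property~(ii). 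Since $|\mathbf{S}|$ will be polynomial and Theorem~\ref{thm:frameworkalg} runs in polynomial time, the whole procedure is polynomial.

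\textbf{Constructing the candidate bags.} The heart of the proof is the bound on the number of relevant candidate bags under the LogBMIP. The key structural observation (going back to the analysis of HDs in~\cite{2002gottlob} and the GHD-vs-HD insights mentioned in the introduction) is that in a \compnf GHD, the bag $B_u$ of a node is determined by its $\lambda$-cover together with the component it ``guards''; more usefully, for each node $u$ with associated component $C$, one may assume $B_u \subseteq \bigcup \lambda_u$ and $B_u \supseteq \edges(C) \cap (\text{something})$, so that $B_u$ lies between a union of $\le k$ edges and the vertices forced by connectedness. Thus a candidate bag has the form: pick $\le k$ edges $e_1,\dots,e_k$ and a ``boundary'' set $X$ of vertices from $\bigcup_{i} e_i$ that are shared with neighbouring edges. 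The crucial point is that under the \icBMIP, any vertex that is both in some $e_i$ and must appear in a bag for connectedness reasons lies in an intersection of at least $2$ edges, and one shows that the set of "relevant" vertices of the form $e_i \cap e$ for $e \notin \{e_1,\dots,e_k\}$ can be split into blocks each contained in a $c$-fold intersection; hence, under LogBMIP, the total number of such relevant vertices associated with a fixed $k$-tuple of edges is $\calO(k^2 \log n)$-ish, giving at most $n^{\calO(k)} \cdot 2^{\calO(\text{poly}(k)\log n)} = n^{\text{poly}(k)}$ candidate bags. I would formalize this by: (a) proving a normal-form lemma for GHDs (the \compnf transformation, referenced as Lemma~\ref{lem:nfghd}) that additionally constrains the bags to the above shape; (b) proving a combinatorial lemma that, under the \icBMIP with $i = \calO(\log n)$, the number of subsets of $\bigcup_{j\le k} e_j$ arising as boundaries is polynomially bounded — this uses that a bounded union of logarithmically-small multi-intersections still has only polynomially many subsets of the relevant restricted form.

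\textbf{Main obstacle and wrap-up.} The main obstacle is part (a)–(b): pinning down exactly which subsets of $\bigcup \lambda_u$ can occur as bags of a \compnf GHD and showing their number is polynomial under LogBMIP. A naive bound gives $2^{kr}$ subsets (with $r$ the rank), which is useless when $r$ is large; the BMIP is precisely what tames this, because the only "choices" in a bag beyond the covering edges themselves concern vertices lying in intersections of several edges, and these intersections are small. One must be careful that the \compnf transformation does not destroy this structure — I would prove the transformation produces bags that are subsets of original bags and still respect the edge-cover/boundary decomposition. Once the candidate-bag family $\mathbf{S}$ is built in polynomial time and shown to be complete (contains the bags of some optimal-width \compnf GHD whenever $\ghw(\HH)\le k$) and sound (every $S\in\mathbf{S}$ has $\rho(S)\le k$), the theorem follows immediately from Theorem~\ref{thm:frameworkalg}. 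Finally, the fixed-parameter-tractability refinement (Main Result 3) follows by tracking the exponent: with the \icBMIP for constant $c$, the number of candidate bags is $f(k,i)\cdot n^{g(k,c)}$, and a more careful accounting pushes the dependence on $i$ entirely into the parameter, yielding an FPT algorithm parameterized by $i$ for fixed $k$ and $c$.
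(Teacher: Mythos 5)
Your overall architecture is the same as the paper's: build, in polynomial time, a family $\mathbf{S}$ of candidate bags that is sound (every $S\in\mathbf{S}$ has $\rho(S)\le k$) and complete (whenever $\ghw(\HH)\le k$, some width-$k$ \compnf GHD uses only bags from $\mathbf{S}$), then invoke Theorem~\ref{thm:frameworkalg}. The two points you flag as ``the main obstacle'' are, however, exactly where your sketch is not yet a proof, and one of them contains a claim that is false as stated.

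First, \compnf alone does not determine $B_u$ from $\lambda_u$ and the guarded component --- if it did, \rec{GHD,\,$k$} would not be \np-hard in general. The missing device is \emph{bag-maximality}: the paper first shows (Lemma~\ref{lem:nfghd}) that any width-$k$ GHD can be converted into a bag-maximal \compnf GHD of width $\le k$, and then (Lemma~\ref{lem:crit}) that in such a decomposition $e\cap B_u = e\cap\bigcap_{j=1}^{\ell}B(\lambda_{u_j})$, the intersection being taken along the critical path from $u$ to the nearest node fully covering $e$. This equality is what turns your informal ``vertices forced by connectedness'' into a concretely characterized, enumerable family; without it, $B_u$ ranges over exponentially many subsets of $B(\lambda_u)$ and the candidate-bag set cannot be bounded.

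Second, your count of the ``relevant'' vertices as $\calO(k^2\log n)$ per $k$-tuple of edges does not hold under the LogBMIP for $c>2$: only intersections of $c$ or more distinct edges are guaranteed to have size $\calO(\log n)$, while a pairwise intersection $e_i\cap e$ can contain up to $n$ vertices. (Your intuition is correct for the BIP case $c=2$, where the paper indeed gives the simplified construction you describe.) The general argument (Lemma~\ref{lem:mainintersect}, via the $\cupcap$-tree) splits $e\cap\bigcap_j B(\lambda_{u_j})$ into a union of at most $k^{c-1}$ ``shallow'' intersections of at most $c$ edges --- possibly large, but fully determined by the choice of edges, hence only $m^{\calO(ck^{c-1})}$ possibilities and no subset enumeration needed --- plus a ``deep'' remainder contained in at most $k^{c}$ intersections of exactly $c$ edges, of total size at most $i k^{c}=\calO(k^{c}\log n)$, whose subsets can be enumerated in $n^{\calO(ak^{c})}$ time. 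Your final bound accidentally has the right shape, but the reasoning must distinguish these two parts; treating all boundary vertices as logarithmically few would break down already for $c=3$.
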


Our plan is to make use of Theorem~\ref{thm:frameworkalg} by computing
appropriate sets $\mathbf{S}$ of candidate bags such that
$\blockreal{\mathbf{S}} \neq \emptyset$ if and only if there exists a
GHD with width at most $k$. Example~\ref{ex:ghw2} illustrates the
main challenge that needs to be tackled to compute such sets of
candidate bags.  A bag $B_u$ in a GHD can be any subset of
$B(\lambda_u)$ and choosing smaller subsets can decrease the width.
At the same time, enumerating all subsets of $B(\lambda_u)$
is not an option if we are interested in classes of hypergraphs with
unbounded rank. The main reason why the \textsc{Check} problem
is tractable for HDs is that the additional special condition
severely restricts the possible choices of $B_u$ for given $B(\lambda_u)$.

\begin{example}[Example \ref{ex:ghw1} continued]
  \label{ex:ghw2}
  In Figure~\ref{fig:H0_GHD}, we have two GHDs of width 2 of the hypergraph $H_0$ from 
  Figure~\ref{fig:GHDvsHD_graph}. In the root $u_0$ of both GHDs, we have
  $v_2 \in B(\lambda_{u_0})$ since $v_2 \in e_2$ 
  but $v_2 \not \in B_{u_0}$.  Hence, both GHDs violate the special condition in node $u_0$.
  However, if $v_2$ were added to $B_{u_0}$, then it can be seen that covering the edges $e_1, e_2, e_7, e_8$
  below $u_0$ is no longer possible in a width 2 GHD.
  That is why the HD in Figure~\ref{fig:H0_HD} has width 3.
  \hfill$\Diamond$
\end{example}

\begin{figure}[b]
    \centering 
  \footnotesize
    \tikzstyle{htmatrix}=[%
      matrix,
      matrix of nodes,
      nodes={draw=none},
      column 1/.style={nodes={fill=gray!20,align=right},anchor=base east,minimum width=2em},
      column 2/.style={anchor=base west},
      ampersand replacement=\&
    ]

    \begin{minipage}[c]{0.48\textwidth}
       \centering
       \begin{tikzpicture}[sibling distance=12em,
     every node/.style = {shape=rectangle,draw,inner sep=1,align=center}
     ]
      
  \node (root) [htmatrix] {\htnode{$v_3,v_6,v_7,v_9,v_{10}$}{$e_2,e_6$ }}
        child { node (c1) [htmatrix]{\htnode{$v_3,v_7,v_8,v_9,v_{10}$}{ $e_3,e_7$}} 
           child { node (c2) [htmatrix]{\htnode{$v_1,v_2,v_3,v_8,v_9,v_{10}$}{$e_2,e_8$ }}
                 }
              }
        child { node (c1a) [htmatrix]{\htnode{$v_3,v_6,v_9,v_{10}$}{$e_3,e_5$}}
     child { node (c3) [htmatrix]{\htnode{$v_3,v_4,v_5,v_6,v_9,v_{10}$}{$e_3,e_5$}}
        }
     };
     
     \node [above=of c1a.north west, draw=none, style={color=red}, anchor=south west, yshift=-1cm] { $u'$: };
     
     \node [above=of root.north west, draw=none, style={color=red}, anchor=south west, yshift=-1cm] { $u_0 = u$: };

     \node [above=of c1.north west, draw=none, style={color=red}, anchor=south west, yshift=-1cm] { $u_1$: };
     \node [above=of c2.north west, draw=none, style={color=red}, anchor=south west, yshift=-1cm] { $u_2 = u^*$: };

\end{tikzpicture}
     \centering\medskip\noindent (a)
     \end{minipage}
     \begin{minipage}[c]{0.48\textwidth}
       \centering
       \begin{tikzpicture}[sibling distance=12em,
     every node/.style = {shape=rectangle,draw,inner sep=1,align=center}
     ]
     
  \node (root) [htmatrix] {\htnode{$v_3,v_6,v_7,v_9,v_{10}$}{$e_2,e_6$}}
        child { node (c1) [htmatrix]{\htnode{$v_3,v_7,v_8,v_9,v_{10}$}{$e_3,e_7$}} 
           child { node (c2) [htmatrix]{\htnode{$v_1,v_2,v_3,v_8,v_9,v_{10}$}{$e_2,e_8$}}
                 }
              }
     child { node (c3) [htmatrix]{\htnode{$v_3,v_4,v_5,v_6,v_9,v_{10}$}{$e_3,e_5$}}
     };
     
     \node [above=of root.north west, draw=none, style={color=red}, anchor=south west, yshift=-1cm] { $u_0 = u$: };

     \node [above=of c1.north west, draw=none, style={color=red}, anchor=south west, yshift=-1cm] { $u_1$: };
     \node [above=of c2.north west, draw=none, style={color=red}, anchor=south west, yshift=-1cm] { $u_2 = u^*$: };
     
     \node (circ1) [shape=circle, style={color=red}, thick] at ([yshift=-5,xshift=-19]root) {  \;\;\; };
     \node (circ2) [shape=circle, style={color=red}, thick] at ([yshift=5,xshift=-13]c2) {  \;\;\; };
\end{tikzpicture}
     \centering\medskip\noindent (b)
     \end{minipage}

   \caption{(a) non bag-maximal vs. (b) bag-maximal GHD of hypergraph $H_0$ in Figure~\ref{fig:GHDvsHD_graph}}
   \label{fig:H0_GHD}
  \end{figure}
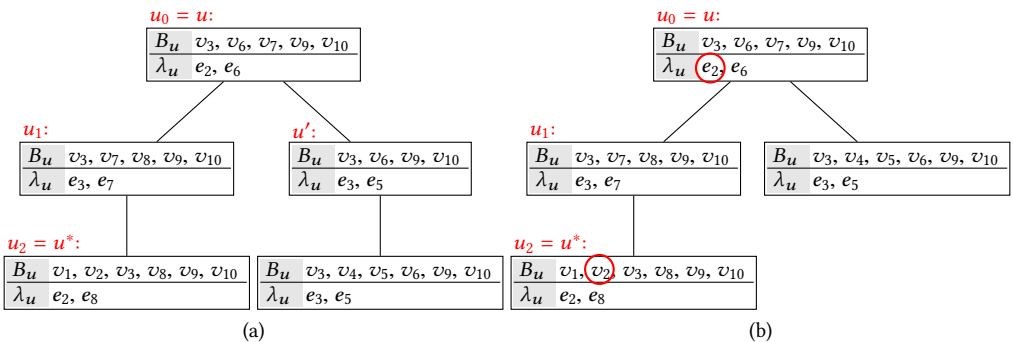

We start by introducing a useful property of GHDs, which we will call 
{\em bag-maximality\/}. 
Let $\mcH=\defGHD$ be a GHD of some hypergraph $H = (V(H), E(H))$. 
For each node $u$ in $T$, we have $B_u \subseteq B(\lambda_u)$ by definition of GHDs
and, in general, $B(\lambda_u) \setminus B_u$ may be non-empty.
We observe that it is sometimes possible to take some vertices from 
$B(\lambda_u) \setminus B_u$ and add them to $B_u$ 
without violating the connectedness condition. Of course, such an addition of vertices to $B_u$ 
does not violate any of the other conditions of GHDs. Moreover, it does not increase the width.

\begin{definition}
 Let $\mcH=\defGHD$ be a GHD of some hypergraph $H = (V(H), E(H))$.
 We call $\mcH$ {\em bag-maximal\/}, if for every node $u$ in $T$,
adding a vertex $v \in B(\lambda_u) \setminus B_u$ to $B_u$ would violate the connectedness condition. 
\end{definition}

It is easy to verify that if $H$ has a GHD of width $\leq k$, then it
also has a bag-maximal GHD of width $\leq k$. However, since we want
to build on the algorithm from Section~\ref{sect:framework}, we need to
show that this also holds for bag-maximal \compnf GHDs. The problem
here is that adding vertices to a bag $B_u$, to make it maximal, can change the set of
$[B_u]$-components. Fortunately, we can reuse existing arguments on the existence of hypertree decompositions
to show that if $H$ has a GHD of width $\leq k$, then it indeed
also has a bag-maximal \compnf GHD of width $\leq k$.

We now carry over several properties of HDs from \cite{2002gottlob}. An inspection of the corresponding proofs in \cite{2002gottlob} reveals that these properties hold also in the generalized case. We thus state the following results below without explicitly ``translating'' the proofs of \cite{2002gottlob} to the generalized setting. 
Note that \cite{2002gottlob} deals with HDs and, therefore, in all decompositions 
considered there, the special condition holds. However, 
in Lemmas~\ref{lem:52} and \ref{lem:53} below, 
the special condition is not needed.

  We briefly recall the crucial notation for the following lemmas. For a set $V' 
  \subseteq V(H)$, we define $\nodes(V') = 
  \{ u \in T \mid B_u \cap V' \neq \emptyset \}$.
  If we want to make explicit the decomposition $\mcG$, 
  we also write $\nodes(V', \mcG)$ synonymously with $\nodes(V')$.
  By further overloading the $\nodes$ operator, we also write 
  $\nodes(T_u)$ or $\nodes(T_u, \mcG)$ to denote the 
  nodes in a subtree $T_u$ of $T$, i.e.,  $\nodes(T_u) = \nodes(T_u,\mcG) =\{ v \mid v \in T_u \}$.

\begin{lemma}[Lemma~5.2 from \cite{2002gottlob}]
\label{lem:52}
Consider an arbitrary GHD $\mcH = \defGHD$ of a hypergraph $H$. 
Let $r$ be a node in $T$, 
let $s$ be a child of $r$ and let $C$ be a \comp{$B_r$} of $H$ such that
$C \cap \VTs \neq \emptyset$. Then, $\nodes(C,\mcH) 
\subseteq \nodes(T_s)$.
\end{lemma}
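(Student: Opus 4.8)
\textbf{Proof plan for Lemma~\ref{lem:52}.}
The plan is to argue by contradiction, exploiting the connectedness condition together with the assumption that $C$ is a \comp{$B_r$}. So suppose for contradiction that $\nodes(C,\mcH) \not\subseteq \nodes(T_s)$, i.e.\ there is a vertex $w \in C$ and a node $t \notin T_s$ with $w \in B_t$. On the other hand, by hypothesis $C \cap \VTs \neq \emptyset$, so there is a vertex $v \in C$ and a node $t' \in T_s$ with $v \in B_{t'}$. The first thing I would establish is that, since $C$ is \connected{$B_r$}, there is a \path{$B_r$} from $v$ to $w$; I then want to promote this to a statement about the decomposition, namely that every vertex on such a path, and in particular $w$ itself, must also appear in some bag inside $T_s$.

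The key step is the following standard observation about GHDs (which does not use the special condition, only conditions (1) and (2) of a TD): if $\{x,y\}$ is contained in some edge $e$ of $H$, then $\nodes(\{x\})$ and $\nodes(\{y\})$ are both connected subtrees of $T$ and they are not vertex-disjoint, because condition~(1) gives a node whose bag contains $e \supseteq \{x,y\}$; hence their union is again a (connected) subtree. Applying this along the \path{$B_r$} $v = v_0, v_1, \dots, v_h = w$ with witnessing edges $e_0, \dots, e_{h-1}$, and using that none of the $v_i$ lies in $B_r$ (by definition of a \path{$B_r$}), I would show that $\bigcup_{i} \nodes(\{v_i\})$ is a connected subtree of $T$ that avoids every node whose bag is $B_r$ — in particular it avoids the node $r$ (more precisely, it does not pass through $r$ on its way between any two of its nodes). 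Since this connected subtree contains $t' \in T_s$ and also contains $t \notin T_s$, and since removing $r$ disconnects $T_s$ from the rest of $T$, the subtree would have to contain $r$, contradicting that it avoids $r$. Here I am using that $v \in B_{t'}\cap C$ with $v \notin B_r$, so $t'$ is genuinely in $T_s$ and not equal to $r$, and similarly $w \notin B_r$.

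To make the ``avoids $r$'' argument precise, I would note that $B_r \cap C = \emptyset$ (as $C \subseteq V(H)\setminus B_r$ by definition of \comp{$B_r$}), so $r \notin \nodes(\{v_i\})$ for any $i$; hence $r$ is not in the connected subtree $\bigcup_i \nodes(\{v_i\})$. A connected subtree of $T$ that contains a node in $T_s$ and a node outside $T_s$ must contain $r$ (the unique node separating $T_s$ from $T \setminus T_s$ is $r$ itself, as $s$ is a child of $r$), which is the desired contradiction. Therefore $w$ cannot exist, and $\nodes(C,\mcH) \subseteq \nodes(T_s)$.

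The main obstacle, such as it is, is purely bookkeeping: being careful that ``connected subtree avoiding $r$'' really does force containment in exactly one of the components of $T - r$, and that the path-connectedness of $C$ relative to $B_r$ translates correctly into the chain of overlapping subtrees $\nodes(\{v_i\})$. There is no deep difficulty here — this is essentially Lemma~5.2 of \cite{2002gottlob} and the argument goes through verbatim in the generalized setting since it never invokes the special condition~(4), only the TD axioms (1) and (2).
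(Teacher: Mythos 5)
Your proof is correct and is essentially the standard argument from Lemma~5.2 of \cite{2002gottlob}, which the paper itself does not reprove but simply imports, noting that it uses only TD conditions (1) and (2) and never the special condition. The chain-of-overlapping-subtrees argument along the \path{$B_r$}, combined with the fact that $r$ separates $T_s$ from the rest of $T$ while no $v_i$ lies in $B_r$, is exactly the intended reasoning.
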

 
\begin{lemma}[Lemma~5.3 from \cite{2002gottlob}]
\label{lem:53}
Consider an arbitrary GHD $\mcH = \defGHD$ of a hypergraph $H$. 
Let $r$ be a node in $T$ and let 
$U \subseteq V(H) \setminus B_r$ such that
$U$ is [$B_r$]-connected. Then
$\nodes(U,\mcH)$ induces a (connected) subtree of $T$.
\end{lemma}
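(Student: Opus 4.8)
The plan is to show directly that the node set $N := \nodes(U,\mcH)$ induces a \emph{connected} subgraph of $T$; since an induced subgraph of a tree is always a forest, connectedness immediately upgrades it to a subtree. The only ingredients needed are the connectedness condition~(2) of tree decompositions (applied to single vertices) and the covering condition~(1) (used to ``glue'' neighbouring vertices of a path). No use is made of the $\lambda$-labels, so — as in the original HD argument of \cite{2002gottlob} — the statement really is a property of the underlying TD.

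First I would record the single-vertex case: for every $v \in V(H)$, condition~(2) says that $N_v := \nodes(\{v\},\mcH)$ induces a connected subtree of $T$. Next comes the gluing step: if $v, v' \in V(H) \setminus B_r$ satisfy $\{v,v'\} \subseteq e \setminus B_r$ for some edge $e \in E(H)$, then by condition~(1) there is a node $w$ with $e \subseteq B_w$, hence $v,v' \in B_w$ and $w \in N_v \cap N_{v'}$; therefore $T[N_v \cup N_{v'}]$ is connected, using the elementary fact that the union of two connected subgraphs of a tree that share at least one node is connected. Iterating this along a \path{$B_r$} $\pi = v_0, v_1, \dots, v_h$ with witnessing edges $e_0,\dots,e_{h-1}$, I obtain that $T\bigl[\bigcup_{i=0}^{h} N_{v_i}\bigr]$ is connected, since $N_{v_i}$ and $N_{v_{i+1}}$ meet for every $i$.

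Now I would assemble the argument. Fix two arbitrary nodes $w, w' \in N$. By definition there are $v, v' \in U$ with $w \in N_v$ and $w' \in N_{v'}$, and since $U$ is \connected{$B_r$} there is a \path{$B_r$} $\pi$ from $v$ to $v'$. By the previous step $\bigcup_i N_{v_i}$ is a connected node set containing $w$ and $w'$. The crucial point to justify carefully is that $\bigcup_i N_{v_i} \subseteq N$, i.e.\ that every path vertex $v_i$ lies in $U$ (equivalently $\vertices(\pi) \subseteq U$). This is where the structure of $U$ enters: when $U$ is a \comp{$B_r$} — the form in which the lemma is actually applied, e.g.\ in the normal-form transformation and in Lemma~\ref{lem:52} — maximality forces $\vertices(\pi) \subseteq U$, because each $v_i$ is \connected{$B_r$} to $v \in U$. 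With this in hand, $w$ and $w'$ are joined by a path inside $T[N]$, so $T[N]$ is connected and hence a subtree.

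The main obstacle is exactly this last point. For a set $U$ that is merely \connected{$B_r$} but not closed under \path{$B_r$}-reachability, $\nodes(U,\mcH)$ can genuinely fail to be connected — a path graph already provides a small example — so the proof must exploit that $U$ is closed under $[B_r]$-connectivity, which is automatic once $U$ is taken to be a \comp{$B_r$}. Everything else is a routine induction on the length of a connecting path using only conditions~(1) and~(2), plus the observation that a connected induced subgraph of a tree is a subtree.
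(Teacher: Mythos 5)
Your proof is correct, and it follows exactly the route of the argument in \cite{2002gottlob} that the paper defers to for this lemma: each $\nodes(\{v\},\mcH)$ is a subtree by condition~(2), two vertices lying together in some $e\setminus B_r$ have intersecting subtrees by condition~(1), and these facts are chained along a \path{$B_r$}. Your caveat about the last step is also well taken and not merely pedantic: under the paper's definition, the \path{$B_r$} witnessing that $U$ is \connected{$B_r$} may pass through vertices outside $U$, in which case $\bigcup_i N_{v_i}\subseteq\nodes(U,\mcH)$ fails and the statement is genuinely false as written --- e.g., take a path-shaped $T$ whose bags along the path are $\{p,q\},\{q,v\},\{v,u\},\{u,w\},\{u,v'\}$, with $B_r=\{p,q\}$ and $U=\{v,v'\}$: then $U$ is \connected{$B_r$} via $u\notin U$, but $\nodes(U,\mcH)$ omits the middle node $\{u,w\}$ and is disconnected. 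So the hypothesis really needs $U$ to be closed under $[B_r]$-reachability (in particular, a \comp{$B_r$}), which is the only form in which the lemma is invoked in this paper (e.g., in the proof of Lemma~\ref{lem:nfghd}), and your argument establishes precisely that version.
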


\begin{lemma}
  \label{lem:nfghd}
  For every GHD $\mcH=\defGHD$ of width $k$ of a hypergraph $H$,
  there exists a bag-maximal \compnf GHD $\mcH'$ of $H$ of width $\leq k$.
\end{lemma}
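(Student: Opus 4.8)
The plan is to prove the statement in two moves: first enlarge the bags of $\mcH$ until they are maximal, then bring the resulting decomposition into component normal form, and finally argue that the second move does not spoil maximality. The first move is routine. Starting from $\mcH$, I would repeatedly pick a node $u$ and a vertex $v\in B(\lambda_u)\setminus B_u$ such that $\nodes(v)\cup\{u\}$ is connected, and add $v$ to $B_u$. Adding a vertex to a bag cannot destroy edge-coverage, and it cannot violate $B_u\subseteq B(\lambda_u)$ (so the width stays $\leq k$); by the choice of $v$ the connectedness condition is preserved. Since every such step strictly enlarges some bag while all bags stay inside the finite sets $B(\lambda_u)$, the process terminates, producing a bag-maximal GHD $\mcH^*$ of $H$ of width $\leq k$.

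Next I would apply to $\mcH^*$ the transformation of \cite{2002gottlob} that puts a decomposition into normal form. As already noted after Definition~\ref{def:compnf} and in the discussion around Lemmas~\ref{lem:52} and~\ref{lem:53}, this transformation only uses the connectedness condition and the fact that every edge lies inside some bag -- it never appeals to the special condition -- so it applies verbatim to a GHD; moreover its output bags are subsets of the input bags and hence are still covered by the inherited labels $\lambda_u$. This yields a \compnf GHD $\mcH'$ of $H$ of width $\leq k$, and it only remains to check that $\mcH'$ is still bag-maximal.

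For this last point I would argue as follows. Each node of $\mcH'$ is (a copy of) some node $t$ of $\mcH^*$ restricted to a single $[B_r]$-component $C$, where $r$ is its parent; call this node $t'$. Take $v\in B(\lambda_{t'})\setminus B_{t'}$ and distinguish two cases. If $v$ was removed from $B_t$ by the construction, then $v\notin C$ and $v\notin B_r$, so in $\mcH^*$ all occurrences of $v$ lie in the subtree of $r$'s child containing $t$ but in components other than $C$; by the way the construction splits that subtree (justified by Lemma~\ref{lem:52}), these occurrences migrate into subtrees of $\mcH'$ attached to other children of $r$'s copy, disjoint from $t'$'s subtree, so $\nodes(v,\mcH')\cup\{t'\}$ is disconnected. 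If instead $v$ was already absent from $B_t$ in $\mcH^*$, then bag-maximality of $\mcH^*$ tells us no neighbour of $t$ in $\mcH^*$ carries $v$; since the construction only splits nodes and shrinks bags, $\nodes(v,\mcH')$ consists of copies of nodes of $\nodes(v,\mcH^*)$ and the neighbours of $t'$ in $\mcH'$ are copies of neighbours of $t$ in $\mcH^*$, so again $t'$ is not adjacent to $\nodes(v,\mcH')$ and $v$ cannot be added to $B_{t'}$. As $\nodes(v,\mcH')\neq\emptyset$ (the edge containing $v$ is still covered), both cases show the addition violates connectedness, so $\mcH'$ is bag-maximal. The step I expect to require the most care is exactly this final case analysis: precisely tracking where each vertex's occurrences end up under the node-splitting of the normal-form construction and why no node becomes newly adjacent to that occurrence set. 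This is where Lemmas~\ref{lem:52} (each component is confined to a single child subtree) and~\ref{lem:53} (the relevant vertex sets induce connected subtrees) are the essential tools, which is the sense in which we ``reuse the existing arguments'' about hypertree decompositions rather than re-deriving them.
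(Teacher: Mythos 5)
Your proof follows essentially the same route as the paper's: first saturate the bags greedily, then apply the component-splitting normal-form transformation of \cite{2002gottlob} (whose correctness indeed rests only on Lemmas~\ref{lem:52} and~\ref{lem:53}, not on the special condition), and finally verify that the splitting preserves bag-maximality via the same two-case analysis the paper uses -- removed vertices end up in sibling subtrees separated from $t'$ by $r$, while vertices that were already absent remain blocked by the maximality of the intermediate decomposition. The one imprecision in your second case, namely that the neighbours of $t'$ in $\mcH'$ are exactly copies of neighbours of $t$ (the root of a new subtree $T_j$ in fact acquires $r$ as a fresh neighbour), is a point the paper's own proof also passes over, so it is not a gap relative to the published argument.
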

\begin{proof}
  Start with a GHD $\mcH=\defGHD$ of width $k$ of $\HH$. As long as
  there exists a node $u \in T$ and a vertex
  $ v\in B(\lambda_u) \setminus B_u$, such that $v$ can be added to
  $B_u$ without destroying the GHD properties, select such a node $u$
  and vertex $v$ arbitrarily and add $v$ to $B_u$. By exhaustive
  application of this transformation, a bag-maximal GHD of width $k$
  of $\HH$ is obtained. 
  
We proceed by restating a procedure from~\cite{2002gottlob} that fixes violations of the \compnf condition:.
for the bag-maximal GHD obtained above, assume that there exist two nodes $r$ and $s$ such that $s$ is a child of $r$, and the \compnf condition is violated for the pair, i.e., there does not exist a single  $[B_r]$-component $C_s$ such that $V(T_s) = C_s \cup (B_r \cap B_s)$.
  Let $C_1,\ldots,C_h$ be all the $[B_r]$-components containing some
  vertex occurring in $V(T_s)$. Hence,
  $V(T_s) \subseteq \left(\bigcup_{j=1}^h C_j \cup B_r\right)$. For
  each $[B_r]$-component $C_j$ ($1 \leq j \leq h$), consider the set
$\nodes(C_j,\mcH)$.
  By Lemma~\ref{lem:53}, $\nodes(C_j,\mcH)$ induces a subtree of $T$, and by Lemma~\ref{lem:52}, $\nodes(C_j,\mcH) \subseteq \nodes(T_s)$. 
  Hence $\nodes(C_j,\mcH)$ induces in fact a subtree of $T_s$.
 
\newcommand{\new}{\ensuremath{u}}
For each node $n \in \nodes(C_j,\mcH)$ define a new node
$\new_{n,j}$ and let $\lambda_{\new_{n,j}}=\lambda_n$ and
$B_{\new_{n,j}} = B_n \cap (C_j \cup B_r)$. Note that
$B_{\new_{n,j}} \neq \emptyset$, because by definition of
$\nodes(C_j,\mcH)$, $B_n$ contains some vertex belonging to $C_j$. Let
$N_j$ = $\{ u_{n,j} \mid n \in \nodes(C_j,\mcH)\}$ and, for any
$C_j$ ($1 \leq j \leq h$), let $T_j$ denote the (directed) graph
$(N_j,E_j)$ such that $u_{p,j}$ is a child of $u_{q,j}$ if and
only if $p$ is a child of $q$ in $T$. $T_j$ is clearly isomorphic to
the subtree of $T_s$ induced by $\nodes(C_j,\mcH)$, hence $T_j$ is a
tree as well. 
 
 Now transform the GHD $\mcH$ as follows: delete the subtree  $T_s$ from $T$
and attach to $r$ every tree $T_j$ for
$1 \leq j \leq h$. In other words, we replace the subtree $T_s$ by a set
 of trees $\{ T_1, \ldots, T_h \}$. By construction, $T_j$ contains a
 node $u_{n,j}$ for each node $n$ belonging to
 $\nodes(C_j,\mcH)\ (1 \leq j \leq h)$. Then, if we let
 $\ffo{children}(r)$ denote the set of children of $r$ in the new tree
 $T$ obtained after the transformation above, it holds that for any
 $s' \in \ffo{children}(r)$, there exists a $[B_r]$-component $C$ of
 $H$ such that $\nodes(T_{s'})=\nodes(C,\mcH)$, and
 $V(T_{s'}) \subseteq (C\cup B_r)$.
We want to show $V(T_{s'})  = C \cup (B_{s'} \cap B_r)$.  
For the ``$\supseteq$''-direction, we observe that 
$C \subseteq T_{s'}$ clearly holds, since we have $C \subseteq T_{s}$ 
and the bags $B_{u_{n,j}}$ in $T_{s'}$ were obtained from $B_n$ in $T_s$ 
as $B_{u_{n,j}} = B_n \cap (C_j \cup B_r)$ and we are considering the component $C = C_j$ here. 
Moreover, $B_{s'} \cap B_r \subseteq B_{s'} \subseteq V(T_{s'})$ clearly holds. Hence, 
we have $C \cup (B_{s'} \cap B_r) \subseteq V(T_{s'})$.
For the ``$\subseteq$''-direction,
we conclude from $V(T_{s'}) \subseteq C \cup B_r$ that also 
$V(T_{s'}) \subseteq C \cup (V(T_{s'}) \cap B_r)$ holds. Hence, it suffices to show that
$(V(T_{s'}) \cap B_r) \subseteq B_{s'}$ holds.
By connectedness, we have $V(T_{s}) \cap B_r \subseteq B_{s} \cap B_r$ and, therefore, 
also  $V(T_{s'}) \cap B_r \subseteq B_{s} \cap B_r$.
Moreover, by construction, we have $B_{s'} = B_s \cap (C \cup B_r)$ and, therefore 
$B_{s} \cap B_r \subseteq B_{s'}$. We thus also arrive at 
$V(T_{s'})  \subseteq C \cup (B_{s'} \cap B_r)$.

It remains to show that $T_{s'}$ is bag-maximal.
Assume to the contrary that there exists a node $\new_{n,j} \in T_{s'}$ and a vertex
 $ v\in B(\lambda_{\new_{n,j}}) \setminus B_{\new_{n,j}}$ such
 that $v$ can be added to $B_{\new_{n,j}}$ without destroying the
 GHD properties.  Recall that
 $B_{\new_{n,j}} = B_n \cap (C_j \cup B_r)$ and
 $\lambda_{\new_{n,j}} = \lambda_n$. Since $T$ was bag-maximal
 initially and from the construction (which only makes bags smaller),
 the only candidates for such a $v$ are those vertices
 $B_n \setminus (C_j \cup B_r)$ that got removed from the
 bag. However, all neighboring nodes are either $r$ or in $T_{s'}$,
 which means their bags are subsets of $C_j \cup B_r$.  Hence, no
 $v \in B_n \setminus (C_j \cup B_r)$ is contained in a neighbor of
 ${\new_{n,j}}$ and adding it would break connectedness. Our newly
 constructed $T_{s'}$ is therefore also bag-maximal.

 Iterating this procedure for all  \compnf  violations will eventually produce a new GHD that is still bag-maximal and in \compnf.
\end{proof}

\begin{example}[Example \ref{ex:ghw2} continued]
\label{ex:ghw3}
Clearly, the GHD in Figure \ref{fig:H0_GHD}(a) violates bag-maximality in node $u'$, since 
the vertices $v_4$ and $v_5$ can be added to $B_{u'}$ without violating any GHD properties. 
If we add $v_4$ and $v_5$ to $B_{u'}$, then bag $B_{u'}$ at node $u'$ and the bag at its child node are the same, which allows us to delete one of the nodes. This results in the GHD given in Figure \ref{fig:H0_GHD}(b), which is bag-maximal. In particular, the vertex $v_2$ cannot be added to $B_{u_0}$: indeed, adding $v_2$ to $B_{u_0}$ would violate the connectedness condition, since $v_2$ is not in $B_{u_1}$ but in $B_{u_2}$.  
\hfill$\Diamond$
\end{example}

  For the following arguments, the reader is advised to be careful in distinguishing between the bag $B_u$ of a node $u$ and the set of vertices $B(\lambda_u)$ that are covered by the integral edge cover $\lambda_u$.
Before we prove a crucial lemma, 
we introduce some useful notation: 
\begin{definition}
\label{def:critical_ghd}
Let $\mcH=\defGHD$  
be a GHD
of a hypergraph $\HH$. Moreover, let 
$u$ be a node in $\mcH$ and 
let $e \in \lambda_u$ such that 
$e \setminus B_u \neq \emptyset$ holds. 
Let $u^*$ denote the node closest to $u$, such that $u^*$ covers $e$, i.e., 
$e \subseteq B_{u*}$. Then,
we call the path $\pi = (u_0,u_1,\ldots,u_\ell)$ with $u_0 = u$ and $u_\ell = u^*$ the
{\em critical path\/} of $(u,e)$ denoted as $\critp(u,e)$.
\end{definition}

\begin{lemma}
\label{lem:crit}
Let $\mcH=\defGHD$  be a bag-maximal GHD of a hypergraph 
$\HH = (V(H),$ $E(H))$, 
let $u \in T$, $e \in \lambda_u$, and $e \setminus B_u \neq \emptyset$. 
Let $\pi = (u_0, u_1, \dots, u_\ell)$ with $u_0 = u$  be the critical path of $(u,e)$.
Then the following equality holds.
$$e\cap B_u=\ \ \ \  e\,\cap\bigcap_{j=1}^\ell B(\lambda_{u_j})
$$
\end{lemma}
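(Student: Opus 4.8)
The plan is to prove the two inclusions separately, where one direction is essentially trivial and the other direction is where all the work lies. For the ``$\subseteq$''-direction, note that $u_0 = u$ lies on the critical path $\pi$, so $e \cap B_u = e \cap B_{u_0} \subseteq e \cap B(\lambda_{u_0})$; and since every $u_j$ on the path covers $e$ in the sense that $e \subseteq B(\lambda_{u_j})$ (indeed $B_{u_j} \subseteq B(\lambda_{u_j})$ for the starting node, while intermediate nodes $u_1, \dots, u_{\ell-1}$ need a short argument, see below, and $u_\ell = u^*$ has $e \subseteq B_{u^*} \subseteq B(\lambda_{u^*})$), intersecting over all $j$ from $1$ to $\ell$ cannot drop any vertex of $e$ that is already in $B_u$. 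Actually the cleanest way: every vertex $v \in e \cap B_u$ lies in $B_{u_0}$ and in $B_{u_\ell}$ (the latter because $e \subseteq B_{u^*}$), hence by the connectedness condition $v \in B_{u_j}$ for all $j$ on the path, so $v \in \bigcap_{j=1}^{\ell} B_{u_j} \subseteq \bigcap_{j=1}^{\ell} B(\lambda_{u_j})$, giving $e \cap B_u \subseteq e \cap \bigcap_{j=1}^{\ell} B(\lambda_{u_j})$.

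For the ``$\supseteq$''-direction — the substantive part — I would argue by contradiction using bag-maximality. Suppose there is a vertex $v \in e$ with $v \in \bigcap_{j=1}^{\ell} B(\lambda_{u_j})$ but $v \notin B_u = B_{u_0}$. The goal is to show that $v$ could be added to $B_{u_0}$ without violating the connectedness condition, contradicting bag-maximality of $\mcH$. Since adding $v$ to $B_{u_0}$ can only break connectedness for $v$ itself, I need to show that the current set $\nodes(\{v\},\mcH)$ of nodes whose bags contain $v$, together with $u_0$, still induces a connected subtree. The key observation is that $u_\ell = u^*$ satisfies $e \subseteq B_{u_\ell}$, hence $v \in B_{u_\ell}$, so $u_\ell \in \nodes(\{v\},\mcH)$; thus $\nodes(\{v\},\mcH)$ is a nonempty subtree somewhere in $T$, and $u_0$ is at distance $\ell$ from the ``nearest'' relevant node along $\pi$. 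Here the definition of critical path as a \emph{shortest} path from $u$ to a node covering $e$, combined with $v \in B(\lambda_{u_j})$ for every intermediate $j$, is what I would exploit: I want to show $v \in B_{u_j}$ for all $j = 1, \dots, \ell$, so that the whole path $\pi$ lies in $\nodes(\{v\},\mcH)$ and thus attaching $u_0$ keeps it connected.

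The crux, therefore, is establishing that each intermediate node $u_j$ ($1 \le j \le \ell$) already has $v \in B_{u_j}$. I expect to prove this by a minimality/exchange argument on the critical path: if some intermediate $u_j$ did not contain $v$, I would try to derive either a violation of connectedness for $v$ between $u_\ell$ and wherever else $v$ appears, or a shorter path from $u$ to a node covering $e$, contradicting the choice of $\critp(u,e)$. One natural route: since $v \in B(\lambda_{u_j})$ and $\mcH$ is bag-maximal, if $v \notin B_{u_j}$ then adding $v$ to $B_{u_j}$ must break connectedness, which forces the nodes containing $v$ to be ``split'' around $u_j$ on two sides; analyzing which side $u_\ell$ is on, and using that $e \subseteq B_{u_\ell}$, should pin down a contradiction with $\pi$ being a shortest such path. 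This exchange argument on the critical path is the step I expect to be the main obstacle, and it is where the bag-maximality hypothesis and the ``closest node covering $e$'' clause in Definition~\ref{def:critical_ghd} are both genuinely needed; once $v \in B_{u_j}$ for all $j$ is in hand, the contradiction with bag-maximality at $u_0$ is immediate and the lemma follows.
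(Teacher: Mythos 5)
Your ``$\subseteq$''-direction is correct and is exactly the paper's argument: every $v \in e \cap B_u$ lies in $B_{u_0}$ and in $B_{u_\ell}$ (since $e \subseteq B_{u_\ell}$), hence in every $B_{u_j}$ by connectedness, hence in every $B(\lambda_{u_j})$. (Your first attempt via ``$e \subseteq B(\lambda_{u_j})$ for intermediate $j$'' would not have worked -- there is no reason an intermediate node's $\lambda$-label covers $e$ -- but you correctly discard it.)

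For the ``$\supseteq$''-direction you have the right high-level idea (a contradiction with bag-maximality), but the step you yourself flag as the main obstacle is a genuine gap, and the route you sketch for closing it would not go through as described. The paper's resolution is much simpler than an exchange argument on the critical path: do not try to add $v$ to $B_{u_0}$ (which forces you to first prove $v \in B_{u_j}$ for all intermediate $j$); instead, pick the \emph{transition index}. Since $v \in B_{u_\ell}$ and $v \notin B_{u_0}$, there is a least $j^*\geq 1$ with $v \in B_{u_{j^*}}$; set $\alpha = j^*-1$, so $v \notin B_{u_\alpha}$ and $v \in B_{u_{\alpha+1}}$. Adding $v$ to $B_{u_\alpha}$ preserves connectedness, because $\nodes(\{v\},\mcH)$ is already a single connected subtree of $T$ containing the adjacent node $u_{\alpha+1}$; and $v \in B(\lambda_{u_\alpha})$ holds either by hypothesis (if $\alpha \geq 1$) or because $v \in e$ and $e \in \lambda_{u_0}$ (if $\alpha = 0$). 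This contradicts bag-maximality immediately, with no use of the ``closest node covering $e$'' clause and no shortest-path exchange. Two further points on your sketch: (i) your picture that a failed insertion ``splits the nodes containing $v$ around $u_j$ on two sides'' is wrong -- in a tree decomposition $\nodes(\{v\},\mcH)$ is one connected subtree, and inserting $u_j$ fails exactly when $u_j$ is neither in that subtree nor adjacent to it; (ii) if you insist on deriving the stronger sub-claim $v \in B_{u_j}$ for all $j \geq 1$ so as to conclude at $u_0$, it does hold, but it follows by iterating the same transition-point argument downward from $u_\ell$, not from minimality of the critical path.
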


\begin{proof}
  ``$\subseteq$'':
Given that $e \subseteq B_{u_\ell}$ and by the connectedness 
condition, 
$e \cap B_u$ must be a subset of $B_{u_j}$ for every $j \in \{1,\dots,\ell\}$. 
Therefore, $e\cap B_u \subseteq e \cap \bigcap_{j=1}^\ell B(\lambda_{u_j})$ holds.

\smallskip

\noindent
``$\supseteq$'': Assume to the contrary that there exists some vertex 
$v \in e$ with  $v \not\in B_u$ but 
$v \in \bigcap_{j=1}^\ell B(\lambda_{u_j})$. 
By $e \subseteq B_{u_\ell}$, we have $v \in B_{u_\ell}$.
By the connectedness condition, along the path $u_0, \dots, u_\ell$ with $u_0 =u$, there exists $\alpha \in \{0, \dots, \ell-1\}$, s.t.\
$v \not \in B_{u_{\alpha}}$ and $v \in B_{u_{\alpha+1}}$. However, by the assumption, $v \in \bigcap_{j=1}^\ell B(\lambda_{u_j})$ holds. 
In particular, $v \in B(\lambda_{u_{\alpha}})$. Hence, we could 
safely add $v$ to $B_{u_{\alpha}}$ without violating the connectedness condition nor any other 
GHD condition. This contradicts the bag-maximality of $\mcH$. 
\end{proof}
\begin{example}[Example \ref{ex:ghw2} continued]
\label{ex:ghw4}
Consider root node $u$ of the GHD in Figure \ref{fig:H0_GHD}(b). 
We have $e_2 \in \lambda_u$ and $e_2 \setminus B_u = \{v_2\} \neq \emptyset$. 
As $e_2$ is covered by $u_2$, 
the critical path of $(u,e_2)$ is $\pi = (u,u_1,u_2)$.
It is easy to verify that 
$e_2 \cap B_u = e_2 \cap (e_3 \cup e_7) \cap (e_8 \cup e_2) = \{v_3,v_9\}$ indeed holds. 
\hfill$\Diamond$
\end{example}

Lemma~\ref{lem:crit} characterizes the overlap of an edge with a bag as an intersection of unions. However, to utilize the proposed intersection constraints, we would prefer unions of intersections instead. A straightforward transformation from an intersection of unions to a union of intersections may introduce certain redundant terms that we would like to avoid for technical reasons. We therefore employ a particular transformation, via the $\cupcap$-trees defined below, that avoids such redundant terms in the union.

\begin{definition}
  \label{def:uitree}
  Let $H$ be a hypergraph, $e$ an edge of $H$ and let
  $Q_1, \dots, Q_{\ell}$ be sets of edges.
  The $\cupcap$-tree $T$ of $e, Q_1 \dots, Q_\ell$ is the output of Algorithm~\ref{alg:uitree} with inputs
  $e, Q_1, \dots Q_\ell$. We refer to the set of all leaves of $T$ as $leaves(T)$.
\end{definition}

\begin{algorithm}[ht]
\SetKwData{Left}{left}\SetKwData{This}{this}\SetKwData{Up}{up}
\SetKwFunction{Union}{Union}\SetKwFunction{FindCompress}{FindCompress}

\SetKwData{N}{N}

\SetKwInput{Output}{output}
\SetKwInput{Input}{input}

\Input{An edge $e \in E(H)$, sets $Q_1, \dots, Q_\ell$ of edges of $E(H)$}
\Output{$\cupcap$-tree $T$ of $e,Q_1, \dots Q_\ell$}
\BlankLine
\tcc{Initialization: compute $(N,E)$ for $T_0$ }
$N \la \{r\}$\;
$E \la \emptyset$\;
$\lab(r) \la \{e \}$\;
$T \la (N,E)$\;
\BlankLine
\tcc{Compute $T_j$ from $T_{j-1}$ in a loop over $j$}
\For{$j\leftarrow 1$ \KwTo $\ell$}{
  \ForEach{leaf node $p$ of $\;T$}{
    \If{$\lab(p) \cap Q_{j}  = \emptyset$}{
          Let $Q_{j} = \{e_{j 1}, \dots,  e_{j h_j}\}$\;
Create new nodes $\{p_1,  \dots,  p_{h_j}\}$\;
           \lFor{$\alpha \la 1$ \KwTo $h_j$}{$\lab(p_\alpha) \la \lab(p_\alpha) \cup \{ e_{j \alpha} \}$}
           $N \la N \cup \{p_1,  \dots,  p_{h_j}\}$\;
           $E \la E \cup \{(p,p_1),  \dots,  (p,p_{h_j})\}$\;
}
}
$T\la(N,E)$\;
}
\caption{\uitree}\label{alg:uitree}
\end{algorithm}%

\begin{lemma}
  \label{lem:uitree}
  Let $H$ be a hypergraph, $e$ an edge of $H$ and let $Q_1, \dots, Q_\ell$ be sets of edges. Let $T$ be the $\cupcap$-tree of $e,Q_1,\dots,Q_\ell$, then
   \[
     e \cap \bigcap_{j=1}^\ell \bigcup Q_j = \bigcup_{p \in leaves(T)} \bigcap \lab(p)
   \]
\end{lemma}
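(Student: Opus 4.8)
The claim is that the $\cupcap$-tree $T$ produced by Algorithm~\ref{alg:uitree} satisfies
\[
  e \cap \bigcap_{j=1}^\ell \bigcup Q_j \;=\; \bigcup_{p \in leaves(T)} \bigcap \lab(p).
\]
The natural approach is induction on $\ell$, tracking the intermediate trees $T_0, T_1, \dots, T_\ell = T$ built by the algorithm (where $T_j$ is the tree after processing $Q_1,\dots,Q_j$). The invariant I would maintain is: after iteration $j$, the leaves of $T_j$ satisfy $\bigcup_{p \in leaves(T_j)} \bigcap \lab(p) = e \cap \bigcap_{i=1}^{j} \bigcup Q_i$, where the $j=0$ base case reads $\bigcup \lab(r) = \{e\}$ so $\bigcap \lab(r) = e$, matching the empty intersection on the right. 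The base case is immediate from the initialization block.

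**Inductive step.** Assume the invariant holds for $T_{j-1}$ with leaf set $L$. Iteration $j$ examines each leaf $p \in L$. Split $L$ into those $p$ with $\lab(p) \cap Q_j \neq \emptyset$ (which remain leaves, unchanged) and those with $\lab(p) \cap Q_j = \emptyset$ (which get children $p_1,\dots,p_{h_j}$ with $\lab(p_\alpha) = \lab(p) \cup \{e_{j\alpha}\}$). The key algebraic fact, applied leaf-by-leaf, is:
\begin{itemize}
\item If $\lab(p) \cap Q_j \neq \emptyset$, say $e' \in \lab(p) \cap Q_j$, then $\bigcap \lab(p) \subseteq e' \subseteq \bigcup Q_j$, so $\bigcap\lab(p) \cap \bigcup Q_j = \bigcap \lab(p)$. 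Thus keeping $p$ as a leaf correctly records the intersection of its old value with $\bigcup Q_j$.
\item If $\lab(p) \cap Q_j = \emptyset$, then $\bigcap \lab(p) \cap \bigcup Q_j = \bigcap\lab(p) \cap \bigcup_{\alpha=1}^{h_j} e_{j\alpha} = \bigcup_{\alpha=1}^{h_j}\bigl(\bigcap\lab(p) \cap e_{j\alpha}\bigr) = \bigcup_{\alpha=1}^{h_j} \bigcap \lab(p_\alpha)$, by distributivity of $\cap$ over $\cup$.
\end{itemize}
Taking the union over all $p \in L$ and using the induction hypothesis, $\bigcup_{p \in leaves(T_j)} \bigcap\lab(p) = \bigl(\bigcup_{p \in L}\bigcap\lab(p)\bigr) \cap \bigcup Q_j = \bigl(e \cap \bigcap_{i=1}^{j-1}\bigcup Q_i\bigr) \cap \bigcup Q_j = e \cap \bigcap_{i=1}^{j}\bigcup Q_i$. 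This closes the induction, and at $j = \ell$ gives exactly the statement of the lemma.

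**Where the care is needed.** The only subtlety — and the reason the algorithm tests $\lab(p)\cap Q_j$ rather than blindly branching — is making sure that the set-theoretic identities above are applied to the \emph{right} objects: we must verify $\bigcap\lab(p_\alpha) = \bigcap\lab(p)\cap e_{j\alpha}$, which holds because $\lab(p_\alpha) = \lab(p)\cup\{e_{j\alpha}\}$ and intersection of a family is unchanged/refined by adding a member. One should also note that leaves $p$ with $\lab(p)\cap Q_j\neq\emptyset$ genuinely need no branching precisely because the "new" constraint $\bigcup Q_j$ is already subsumed, so the tree stays finite and the claimed equality is not disturbed; this is the content of the first bullet. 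There is nothing deep here, but the bookkeeping over the two kinds of leaves is exactly what must be written out carefully; no single step is a real obstacle.
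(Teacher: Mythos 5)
Your proof is correct and follows essentially the same route as the paper's: induction on $\ell$ over the intermediate trees $T_j$, with the same case split on whether $\lab(p)\cap Q_j$ is empty (distributivity for branching leaves, subsumption $\bigcap\lab(p)\subseteq\bigcup Q_j$ for non-branching ones). No substantive differences.
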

\begin{proof}
  Proof is by induction over $\ell$. For $\ell = 0$, we have
  $leaves(T) = \{r\}$ and $\lab(r)=e$ and the statement trivially
  holds. For $0 \leq j \leq \ell$, let $T_j$ denote the $\cupcap$-tree of
  $e,Q_1,\dots,Q_j$. Suppose the statement is true for $\ell-1$,
  then we observe the following equality
  \[
    e \cap \bigcap_{j=1}^{\ell} \bigcup Q_j =  \left(e \cap \bigcap_{j=1}^{\ell-1} \bigcup Q_j\right) \cap \bigcup Q_\ell =
        \bigcup_{p \in leaves(T_{\ell-1})} \left(\bigcap \lab(p) \cap \bigcup Q_\ell\right)
  \]
  where the right equality follows from the induction hypothesis and distribution of $\bigcup Q_\ell$ over the union over the leaves of $T_{\ell-1}$.
  Now, consider a leaf $p \in leaves(T_{\ell-1})$. The construction of $T_\ell$ either adds new leaves $new_p$ as children of $p$, or $p$ remains a leaf in $T_\ell$. We claim that in the first case $\bigcap \lab(p) \cap \bigcup Q_\ell = \bigcup_{p' \in new_p} \bigcap \lab(p')$ and in the second case, $\bigcap \lab(p) \cap \bigcup Q_\ell = \bigcap \lab(p)$. If the claim holds, we have the following equality and the statement follows immediately.
\[
    \bigcup_{p \in leaves(T_{\ell-1})} \left( \bigcap \lab(p)  \cap \bigcup Q_\ell\right)  = \bigcup_{p \in leaves(T_{\ell})} \bigcap \lab(p)
  \]  

  What is left is to verify the claim. The case where new children are added to $p$ is straightforward by distributivity as
  the label of each new child corresponds to a term of the union
$\left(\bigcap \lab(p) \cap e_{\ell 1} \right) \cup \cdots \cup \left(\bigcap \lab(p) \cap e_{\ell h_\ell} \right) = \bigcap \lab(p)  \cap \bigcup Q_\ell$.
  If $p$ remains a leaf, then we have $\lab(p) \cap Q_\ell \neq \emptyset$. Thus, $\bigcap \lab(p) \subseteq \bigcup Q_\ell$ and therefore $\bigcap \lab(p) \cap \bigcup Q_\ell = \bigcap \lab(p)$.

\end{proof}

Throughout the rest of this paper, we will
be interested in how bags can be represented as combinations of
edges. In particular, we will see that, under the various restrictions
introduced at the beginning of this section, we are able to bound the
representation of bags as \emph{unions of intersections} of
edges. After introducing some notation for such unions of intersections we
can show how the LogBMIP allows for a bounded representation of bags
for GHDs.  The main result then follows by using this representation
to compute an appropriate set of candidate bags, to which Algorithm~\ref{alg:bre}
from
Section~\ref{sect:framework} can then be applied.

\begin{definition}
  Let $H$ be a hypergraph. A  \emph{$(q, p)$-set} $X \subseteq V(H)$ is a set of the form
  $X = X_1 \cup \cdots \cup X_{q'}$ with $q' \leq q$ and where every $X_i$ is the intersection
  of at most $p$ edges. We will use $q$-set as shorthand for $(q,1)$-set.
\end{definition}

We will repeatedly make use of the fact that, by the idempotence of
union and intersection, we can w.l.o.g. assume a $(q,p)$-set to be the
union of exactly $q$ terms, each consisting of the intersection of
exactly $p$ edges.

\begin{lemma}
  \label{lem:mainintersect}
  Let $H$ be a hypergraph with $\cmiwidth{c}{H}\leq i$ and let
 $B_1, \dots, B_{\ell}$ be $(q, 1)$-sets.
 For each $e \in E(H)$, there exists a $(q^{c-1}, c)$-set $I$ and a subedge $e' \subseteq e$ with $|e'| \leq i\,q^{c}$,
 such that 
 \[
   e \cap \bigcap_{j=1}^{\ell} B_j =  I \cup e'
 \]
   Furthermore, $e'$ is the union of at most $q^{c}$ subsets of intersections of exactly $c$ edges.

\end{lemma}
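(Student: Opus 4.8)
The plan is to start from the expression $e \cap \bigcap_{j=1}^{\ell} B_j$ and rewrite it using the $\cupcap$-tree machinery of Lemma~\ref{lem:uitree}. Since each $B_j$ is a $(q,1)$-set, we may write $B_j = \bigcup Q_j$ where $Q_j$ is a set of exactly $q$ edges (using idempotence to pad up to exactly $q$ terms). Applying Lemma~\ref{lem:uitree} to $e, Q_1, \dots, Q_\ell$ gives
\[
  e \cap \bigcap_{j=1}^{\ell} B_j \;=\; \bigcup_{p \in leaves(T)} \bigcap \lab(p),
\]
where $T$ is the $\cupcap$-tree. The key structural observation about $T$ is that along any root-to-leaf path, at each level $j$ we either stop (if the current label already meets $Q_j$) or branch into exactly $q$ children, each adding one new edge to the label. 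Hence each leaf label $\lab(p)$ is $\{e\}$ together with one edge chosen from each of some subset of the $Q_j$'s, and — crucially — no edge is repeated within a single label and distinct leaves have distinct labels (this is precisely the redundancy-avoidance that motivated the $\cupcap$-tree construction).

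The heart of the argument is then a case split on the leaves by the size of their label. Call a leaf $p$ \emph{short} if $|\lab(p)| \le c$ and \emph{long} if $|\lab(p)| \ge c+1$. For a long leaf, $\bigcap \lab(p)$ is an intersection of at least $c$ distinct edges, hence of at least $c$ distinct edges among $E(H)$; by the $c$-miwidth bound, $\cmiwidth{c}{H}\le i$, we get $|\bigcap \lab(p)| \le i$. So $e' := \bigcup_{p \text{ long}} \bigcap \lab(p)$ is a union of subsets of intersections of $c$ (or more; by idempotence, exactly $c$) edges. Moreover $e' \subseteq e$ since $e \in \lab(p)$ for every leaf. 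For the short leaves: first note that in $T$ a node at depth $d$ (distance $d$ from the root) has a label of size exactly $d+1$ (the root has label $\{e\}$, size $1$, and each branching step adds one edge). Therefore a short leaf has depth $\le c-1$, so every short leaf lies in the first $c-1$ levels of the tree; since the branching factor is at most $q$ at each level, there are at most $q^{c-1}$ such leaves. Each contributes a term $\bigcap \lab(p)$ that is an intersection of at most $c$ edges. Hence $I := \bigcup_{p \text{ short}} \bigcap \lab(p)$ is a $(q^{c-1}, c)$-set, and $e \cap \bigcap_{j=1}^{\ell} B_j = I \cup e'$.

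It remains to bound $|e'|$ and the number of terms in $e'$. The long leaves are exactly those leaves at depth $\ge c$; a leaf at depth $c$ has at most $q$ children if it branched, so the set of nodes at depth exactly $c$ has size at most $q^{c}$ — and every long leaf is either at depth exactly $c$ or is a descendant of a unique node at depth $c$. Since $\bigcap \lab(p') \subseteq \bigcap \lab(p)$ whenever $p'$ is a descendant of $p$, we have $\bigcup_{p \text{ long}} \bigcap \lab(p) = \bigcup_{p \text{ at depth } c} \bigcap \lab(p)$, a union of at most $q^{c}$ terms, each of size $\le i$. Thus $|e'| \le i\,q^{c}$ and $e'$ is the union of at most $q^{c}$ subsets of intersections of exactly $c$ edges, as required. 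The main obstacle I anticipate is the bookkeeping on tree depths versus label sizes and making the ``descendant of a node at depth $c$'' reduction airtight — in particular checking that a leaf's label really has size equal to (depth $+1$), which relies on the $\cupcap$-tree never creating a child whose label duplicates an already-present edge, a fact that must be read off carefully from Algorithm~\ref{alg:uitree} and the disjointness reasoning behind Lemma~\ref{lem:uitree}.
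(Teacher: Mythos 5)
Your proof follows essentially the same route as the paper's: build the $\cupcap$-tree, split the leaves into those with labels of size $\le c$ (depth $\le c-1$, at most $q^{c-1}$ of them, yielding the $(q^{c-1},c)$-set $I$) and the rest, and bound the latter by pushing each long leaf up to its depth-$c$ ancestor and invoking $\cmiwidth{c}{H}\le i$. One small nit: your claimed identity $\bigcup_{p\text{ long}}\bigcap\lab(p)=\bigcup_{p\text{ at depth }c}\bigcap\lab(p)$ is in general only a containment ``$\subseteq$'' (a depth-$c$ internal node's intersection may strictly contain the union over its leaf descendants), but the containment is all that is needed for the bounds $|e'|\le i\,q^{c}$ and the count of $q^{c}$ terms, so the argument stands.
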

\begin{proof}
  For $j \in [\ell]$, fix a set of edges $Q_j = \{e_{j1},\dots,e_{jq} \}$ such that
  $B_j = e_{j1} \cup \cdots \cup e_{jq}$. Let $T$ be the $\cupcap$-tree of $e,Q_1,\dots,Q_\ell$. For a node $p$ of $T$, we refer to the number of edges in the path from the root to $p$ as the depth of $p$, or $depth(p)$.
  Note that by construction, $|\lab(p)| = depth(p)+1$ for each node $p$ in $T$.
  We consider the following partition of $leaves(T)$: let $SMALL$ contain all the leaves of $T$ at depth at most $c-1$ and, conversely, let $FULL$ be the set of leaves at depth at least $c$.

  \sloppy
  By Lemma~\ref{lem:uitree},
  $e \cap \bigcap_{j=1}^\ell B_j = \left( \bigcup_{p \in SMALL} \bigcap
    \lab(p) \right) \cup \left(\bigcup_{p \in FULL} \bigcap \lab(p)
  \right)$.  Hence, to prove the lemma, it suffices to show that
  $\bigcup_{p \in SMALL} \bigcap \lab(p)$ is a $(q^{c-1}, c)$-set and
  that $|\bigcup_{p \in FULL} \bigcap \lab(p)| \leq i q^{c}$ holds.

  \medskip
  \textsc{Claim A.} 
  \emph{$\bigcup_{p \in  SMALL}\bigcap \lab(p)$ is a $(q^{c-1}, c)$-set.} 
  
  \smallskip
  \textsc{Proof of Claim A.}
  Since each of the sets $Q_j$ has at most $q$ members, every node in $T$ has at most $q$ children. Hence, there are at most $q^{c-1}$ leaves at depth $\leq c-1$ and therefore $|SMALL| \leq q^{c-1}$. Furthermore, we have $|\lab(p)|=depth(p)+1 \leq c$, i.e., each intersection has at most $c$ terms.
    $\hfill\diamond$

  \medskip
  \textsc{Claim B.} \emph{$|\bigcup_{p \in FULL} \bigcap \lab(p)| \leq i q^{c}$.}

  \smallskip
  \textsc{Proof of Claim B.}
  First, observe that for each $p \in FULL$, there exists a node $p'$
  in $T$ at depth $c$ such that $\lab(p) \supseteq \lab(p')$ and
  therefore also $\bigcap \lab(p) \subseteq \bigcap \lab(p')$. Note that there are at most $q^{c}$ nodes $p'$ at depth $c$.
  Furthermore, because $|\lab(p')|=c$ and we assume $\cmiwidth{c}{H}\leq i$, it holds that $|\bigcap \lab(p')| \leq i$.
  In total, we thus have that  $\bigcup_{p \in FULL} \bigcap \lab(p)$ is a union of sets $X_p$ such that each 
$X_p$ is the subset of one out of at most $q^{c}$ vertex sets, and 
each of these vertex sets has cardinality at most $i$. 

\end{proof}

For a given edge cover $\lambda_u$ and arbitrary edge $e \in E(H)$ with $\lambda_u(e) = 1$, 
Lemma \ref{lem:crit} gives us a representation of $e \cap B_u$ of the form 
$e\,\cap\bigcap_{j=1}^\ell B(\lambda_{u_j})$. Clearly, the sets $B(\lambda_{u_j})$ are $(k,1)$-sets, i.e., 
unions of (up to) $k$ edges. We can therefore apply 
Lemma~\ref{lem:mainintersect} by taking $B_j = B(\lambda_{u_j})$ and $q = k$ to get a representation of the 
form $I \cup e'$ for each of the possible subedges $e\,\cap B_u$ that may
ever be used in a bag-maximal \compnf GHD. This idea is formalized in the following lemma,
where
we identify a polynomially big family of vertex sets $\mathbf{S} \subseteq 2^{V(H)}$, such that the 
bags of any bag-maximal \compnf GHD of $H$ must be a member of this family.

\begin{lemma}
  \label{lem:ghdreal}
  Let $H$ be a hypergraph with $\cmiwidth{c}{H}\leq b = a \log(|V(H)| + |E(H)|)$ and fix an integer $k > 0$.
  There exists a set $\mathbf{S} \subseteq 2^{V(H)}$, which 
can be computed in polynomial time (for fixed $a$, $c$, and $k$), such that 
$\blockreal{\mathbf{S}} \neq \emptyset$ if and only if $\ghw(H) \leq k$.

  Furthermore, for any bag-maximal \compnf GHD $\defGHD$ of $H$ of width $\leq k$, we have $\defTD \in \blockreal{\mathbf{S}}$.
\end{lemma}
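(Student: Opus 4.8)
The plan is to build $\mathbf{S}$ as the set of all vertex sets that arise as ``bag candidates'' from choosing an integral edge cover $\lambda$ of weight $\leq k$ together with, for each edge $e \in \lambda$, a subedge of $e$ of the form $I \cup e'$ guaranteed by Lemma~\ref{lem:mainintersect}. Concretely, for a fixed family $e_1,\dots,e_k$ of edges, Lemma~\ref{lem:crit} tells us that in any bag-maximal \compnf GHD $\mcH=\defGHD$, whenever a node $u$ has $\lambda_u = \{e_1,\dots,e_k\}$ (viewed as a set), the bag decomposes as $B_u = \bigcup_{t=1}^{k} (e_t \cap B_u)$ with $e_t \cap B_u = e_t \cap \bigcap_{j=1}^{\ell} B(\lambda_{u_j})$ along the critical path of $(u,e_t)$. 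Since each $B(\lambda_{u_j})$ is a $(k,1)$-set, Lemma~\ref{lem:mainintersect} (with $q=k$ and the given $c$) says $e_t \cap B_u = I_t \cup e'_t$ where $I_t$ is a $(k^{c-1},c)$-set and $e'_t \subseteq e_t$ has size $\leq i\,k^{c} = a\log(|V(H)|+|E(H)|)\cdot k^{c}$. The key point is that an $e'_t$ of that size is a union of at most $k^c$ sets, each of which is a subset of an intersection of exactly $c$ edges — and there are only polynomially many intersections of $c$ edges, each of polylogarithmic size, hence only quasi-polynomially-then-actually-polynomially (for fixed $a,c,k$) many such candidate subsets. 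I would therefore \emph{define} $\mathbf{S}$ to contain every set of the form $\bigcup_{t=1}^{k}(I_t \cup e'_t)$ where the $e_t$ range over all $k$-tuples of edges of $H$, each $I_t$ ranges over all $(k^{c-1},c)$-sets built from intersections of $c$-subsets of $E(H)$, and each $e'_t$ ranges over all unions of at most $k^c$ subsets of $c$-fold edge intersections of size $\leq b$; I would also throw in, for safety, every set $B(\lambda)$ for weight-$\leq k$ covers $\lambda$, though the above already suffices.

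\textbf{Polynomial-time computability of $\mathbf{S}$.} I would count: there are $|E(H)|^{O(1)}$ choices for each of the polynomially many ``ingredient'' edges; a $c$-fold intersection of edges is computable in polynomial time and there are $\binom{|E(H)|}{c}=|E(H)|^{O(c)}$ of them, each of size $\leq b = O(\log \|H\|)$; a subset of such an intersection is one of at most $2^{b} = (\|H\|)^{O(a)}$ sets, polynomially many for fixed $a$; taking a union of $k^c$ of these and then a further union over $t=1,\dots,k$ keeps us at $\|H\|^{O(a k^c)}$ total sets, each listable in polynomial time. So $|\mathbf{S}|$ and the time to produce it are polynomial in $\|H\|$ for fixed $a,c,k$. (This counting is the routine part — I would state the exponents but not belabor them.)

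\textbf{The two directions of the ``iff''.} For the ``only if'' direction and the ``furthermore'' clause simultaneously: suppose $\ghw(H)\leq k$. Then $H$ has a bag-maximal \compnf GHD $\mcH=\defGHD$ of width $\leq k$ by Lemma~\ref{lem:nfghd}. Fix any node $u$; I claim $B_u \in \mathbf{S}$. Indeed, for each $e_t \in \lambda_u$ either $e_t \subseteq B_u$ (then $e_t \cap B_u = e_t$, which is itself a degenerate $I \cup e'$ and lies in our catalogue) or $e_t \setminus B_u \neq \emptyset$, in which case Lemma~\ref{lem:crit} applies along $\critp(u,e_t)$ and Lemma~\ref{lem:mainintersect} gives $e_t \cap B_u = I_t \cup e'_t$ of exactly the catalogued shape — here one uses $\cmiwidth{c}{H}\leq b$ so the $e'_t$ bound $i\,k^c$ becomes $b\,k^c$ with $b$ logarithmic, which is precisely what the enumeration allows. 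Since $B_u = \bigcup_t (e_t \cap B_u)$, we get $B_u \in \mathbf{S}$, so $\mcT=\defTD$ is a \compnf candidate TD of $\mathbf{S}$, i.e., $\mcT \in \blockreal{\mathbf{S}}$ and in particular $\blockreal{\mathbf{S}} \neq \emptyset$. For the ``if'' direction: if $\mcT=\defTD \in \blockreal{\mathbf{S}}$, then for every node $u$, $B_u \in \mathbf{S}$, and by construction $B_u$ is covered by at most $k$ edges (each term $I_t \cup e'_t$ is a subset of $e_t$), so $\rho(B_u) \leq k$; a TD all of whose bags have $\rho \leq k$ is exactly a GHD of width $\leq k$ (adding the covers $\lambda_u$ witnessing $\rho(B_u)\le k$), hence $\ghw(H)\leq k$.

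\textbf{Main obstacle.} The delicate point — the one I would spend the most care on — is ensuring that the \emph{same} $\mathbf{S}$ that certifies $\ghw(H)\le k$ also guarantees a decomposition \emph{in \compnf}, i.e., that restricting attention to bags from $\mathbf{S}$ does not destroy the ability to reach component normal form; this is exactly the ``some care will be required in the enumeration of candidate bags'' caveat flagged after Theorem~\ref{thm:frameworkalg}. The resolution is that Lemma~\ref{lem:nfghd} produces a \emph{bag-maximal} \compnf GHD whose bags are subsets of the original bags, and the critical-path characterization of Lemma~\ref{lem:crit} was proved precisely for bag-maximal GHDs; since every bag of such a decomposition lies in $\mathbf{S}$ by the argument above, $\blockreal{\mathbf{S}}$ is already non-empty, and then Theorem~\ref{thm:frameworkalg} can be invoked on this $\mathbf{S}$ — so the normal-form issue is absorbed by working with bag-maximal \compnf GHDs from the outset rather than arbitrary ones.
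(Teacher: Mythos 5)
Your proposal is correct and follows essentially the same route as the paper: enumerate subedges of the form $I \cup e'$ via Lemma~\ref{lem:mainintersect} applied to the critical-path characterization of Lemma~\ref{lem:crit}, close under unions of $\leq k$ such subedges, and resolve the normal-form issue by starting from the bag-maximal \compnf GHD supplied by Lemma~\ref{lem:nfghd}. The only nit is that your enumeration should explicitly require $I_t \cup e'_t \subseteq e_t$ (as the paper's definition of $Sub$ does) — you use this fact in the ``if'' direction to conclude $\rho(B_u)\leq k$, and without it an unconstrained $(k^{c-1},c)$-set $I_t$ need not be covered by the single edge $e_t$.
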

\begin{proof}
  Let $n = ||H||$ refer to the size of $H$ and $m = |E(H)|$.
  We define the following sets:

  \smallskip
  $\mathcal{I} = \{ I \mid I \mbox{ is a } (k^{c-1},c)\mbox{-set}\}$

  $\mathcal{C} = \{ e' \mid \mbox{there exist distinct } e_1,\dots,e_c \in E(H)$, 
  such that $e' \subseteq e_1 \cap \cdots \cap e_c \}$
  
  $Sub = \{ I \cup \bigcup_{j=1}^{k^{c}} C_j \mid I \in \mathcal{I},\, C_1,\dots, C_{k^{c}} \in \mathcal{C},$  
  and $I \cup \bigcup_{j=1}^{k^{c}} C_j \subseteq e$ for some $e \in E(H) \}$.
  \medskip
  
By construction, $Sub$ contains only $E(H)$ and subedges of $H$.
  There are no more than $m^{ck^{c-1}}$ possible $(k^{c-1},c)$ sets. 
Also, by the condition $\cmiwidth{c}{H} \leq b$, we have
  $|\mathcal{C}| \leq 2^bm^c$ and, therefore, $Sub$ has at most
  $m^{ck^{c-1}} m^{ck^c} 2^{bk^c}$ elements. In our concrete case, where $b=a \log n$, we have $2^b = n^{a}$,
  and $Sub$ can be computed in
  $O(n^{f_1(a,c,k)})$ time for some function $f_1$ 
by straightforward enumeration. We can now
  construct our desired set $\mathbf{S}$ as the set of all unions of
  up to $k$ elements of $Sub$. It follows that the construction of $\mathbf{S}$ is
  possible in $O(n^{f_2(a,c,k)})$ time for some function $f_2$.

It remains to show that this $\mathbf{S}$ indeed has the
  property that (1) $\blockreal{\mathbf{S}} \neq \emptyset$ if and only if
  $\ghw(H) \leq k$ and (2) for any bag-maximal \compnf GHD $\defGHD$ of $H$ of width $\leq k$, we have $\defTD \in \blockreal{\mathbf{S}}$.

First, assume $\blockreal{\mathbf{S}} \neq
  \emptyset$. Then there exists a TD of $H$ where each bag is in
  $\mathbf{S}$ and therefore a union of $k$ subedges of
  $H$. Hence,  every bag of the TD can also be covered by $k$ edges
  of $H$ and thus can clearly be turned into a GHD of width at most
  $k$.

Now, assume $\ghw(H) \leq k$.  Let $\mcH=\defGHD$ be a bag-maximal
  \compnf GHD of width at most $k$ and let $u$ be a node of $T$. By
  Lemma~\ref{lem:nfghd}, such a GHD always exists if $\ghw(H) \leq
  k$. W.l.o.g. we assume that
  $B(\lambda_u) = e_1 \cup \cdots \cup e_k$ and, therefore, also
  \[
    B_u = B_u \cap B(\lambda_u) = (B_u \cap e_1) \cup \cdots \cup (B_u \cap e_k)
  \]
  We will show that $B_u \cap e_{j} \in Sub$ for each
  $j \in [k]$ and therefore also $B_u \in \mathbf{S}$. The case where
  $e_{j} \cap B_u = e_{j}$ is trivial as
  $e_{j} \in E(H) \subseteq Sub$. So, let $e_{j}$ be any edge from
  this representation of $B(\lambda_u)$ where
  $e_{j} \cap B_u \neq e_{j}$.  By Lemma~\ref{lem:crit} the following
  equality holds for the critical path $(u, u_1,\dots,u_\ell)$ of
  $(u, e_j)$
  \[
    e_j\cap B_u=\ \ \ \  e_j\,\cap\bigcap_{j=1}^\ell B(\lambda_{u_j})
  \]
  By assumption, every such $B(\lambda_{u_j})$ is a $k$-set and
  hence, by Lemma~\ref{lem:mainintersect}, we know that
  $e_{j} \cap \bigcap^\ell_{j=1} B(\lambda_{u_j})$ is precisely
  the union of an element of $\mathcal{I}$ and at most $k^{c}$ sets from $\mathcal{C}$, i.e.,
  $e_{j} \cap B_u \in Sub$.  Since the choice of $u$ was
  arbitrary, every bag of $\mcH$ is contained in $\mathbf{S}$ and we
  have $\mcH \in \blockreal{\mathbf{S}}$.
\end{proof}

\emph{Proof of Theorem~\ref{theo:LogBMIP}}: We assume that $\classC$ enjoys
the LogBMIP, i.e., for every $H \in \classC$, we have
$\cmiwidth{c}{H}\leq a \log n$. To solve the \rec{GHD,\, k} problem, we can then
simply compute, in polynomial time, the set $\mathbf{S}$ from
Lemma~\ref{lem:ghdreal} and decide whether
$\blockreal{\mathbf{S}} \neq \emptyset$. By
Theorem~\ref{thm:frameworkalg} this is also tractable and therefore, so is
the whole procedure.
\qed

We have defined 
in Section~\ref{sect:introduction}
the degree $d$ of a hypergraph $H$. 
We now consider hypergraphs of bounded degree.

\begin{definition}\label{def:bdegree}
We say that a hypergraph $H$ has
the 
{\em $d$-bounded degree property (\dBDP)} if \linebreak 
$\ddegree{\HH}\leq d$ holds.

Let $\classC$ be a class of hypergraphs. 
We say that $\classC$ 
has the  
{\em bounded degree property (BDP)} 
if there exists a constant $d$
such that
every hypergraph $H$ in $\classC$
has the \dBDP.
\end{definition}

The class of 
hypergraphs of bounded degree is an interesting
special case of the class of hypergraphs enjoying the BMIP.
Indeed, suppose that each vertex in a hypergraph $H$ 
occurs 
in at most $d$ edges for some constant $d$. 
Then the intersection of $d+1$ hyperedges is always empty. 
The following corollary is thus immediate.

\begin{corollary}
\label{corol:degreeGHD}
For every class $\classC$ of hypergraphs of bounded degree, for each constant 
$k$, 
the problem  \rec{GHD,\,$k$} is tractable.
\end{corollary}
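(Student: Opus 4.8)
The plan is to obtain this corollary essentially for free from Theorem~\ref{theo:LogBMIP}, by observing that the BDP is a (very strong) special case of the LogBMIP. First I would fix a class $\classC$ of hypergraphs with the \dBDP{} for some constant $d$ and show that every $H \in \classC$ satisfies the $0$-bounded $(d+1)$-multi-intersection property. Indeed, if $v \in e_1 \cap \cdots \cap e_{d+1}$ for pairwise distinct edges $e_1, \ldots, e_{d+1}$ of $H$, then $v$ occurs in $d+1$ distinct edges, contradicting $\ddegree{H} \leq d$. Hence no $d+1$ distinct edges have a common vertex, i.e. $\cmiwidth{d+1}{H} = 0$, so $\classC$ enjoys the BMIP with the fixed constants $c = d+1$ and $i = 0$.

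Next I would note that the BMIP with fixed constants trivially entails the LogBMIP: taking $c = d+1$ and any $a \geq 1$, the bound $\cmiwidth{c}{H} = 0 \leq a \log(|V(H)| + |E(H)|)$ holds for every (nonempty) hypergraph $H \in \classC$. Thus $\classC$ enjoys the LogBMIP, and we may invoke Theorem~\ref{theo:LogBMIP}: for the given constant $k$, the \rec{GHD,\,$k$} problem is tractable for $\classC$, i.e. given $H \in \classC$ one can decide $\ghw(H) \leq k$ in polynomial time and, if so, compute a GHD of width $k$.

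There is no real obstacle here; the proof is a two-line specialization. The only point worth a sanity check is that the polynomial-time guarantee of Theorem~\ref{theo:LogBMIP} is uniform over all of $\classC$: the running time in Lemma~\ref{lem:ghdreal} is $O(n^{f_2(a,c,k)})$, and since $a$, $c = d+1$, and $k$ are all constants fixed once and for all for the class (the degree bound $d$ is a property of $\classC$), the exponent is a constant and the bound is genuinely polynomial in $\lVert H \rVert$.
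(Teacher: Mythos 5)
Your proof is correct and follows exactly the paper's route: the paper likewise observes that degree at most $d$ forces the intersection of any $d+1$ distinct hyperedges to be empty, so the class enjoys the BMIP with $c=d+1$ and $i=0$, and the corollary is then immediate from the (Log)BMIP tractability result for \rec{GHD,\,$k$}. Your added sanity check that the exponent in the running time depends only on the fixed constants is a reasonable, if unstated in the paper, observation.
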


In case of the BMIP,
the upper bound on $\mathbf{S}$ in the proof of Lemma~\ref{lem:ghdreal},
is $2^{f(i,k,c)}m^{g(k,c)}$ for some function $g$.
Recall from Theorem~\ref{thm:frameworkalg}
that \blockreal{\textbf{S}}$\neq \emptyset$ can be decided in time complexity
  that is polynomial in $|\mathbf{S}| \cdot |V(H)|$.
We thus get the following parameterized complexity result.

\begin{theorem}\label{theo:BMIPftp}
For constants $k$ and $c$,
the \rec{GHD,\,$k$} problem is fixed-parameter tractable 
w.r.t.\ the parameter $i$ for hypergraphs enjoying the \icBMIP, i.e., 
in this case, 
\rec{GHD,\,$k$} can be solved in time $\calO(h(i) \cdot \poly(n))$, where 
$h(i)$ is a function depending  on the intersection width $i$ only
and 
$\poly(n)$ is a function that depends polynomially on the size $n=||H||$ of 
the given hypergraph $H$.
\end{theorem}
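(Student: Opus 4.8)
The plan is to re-examine the proof of Lemma~\ref{lem:ghdreal} and make explicit how the size of the candidate-bag set $\mathbf{S}$ depends on the multi-intersection width $b = \cmiwidth{c}{H}$ as opposed to on $n = ||H||$. In that proof the only place where a \emph{logarithmic} bound on $b$ was used was to conclude that the factor $2^{b}$ is polynomial in $n$; everything else was uniform in $b$. Keeping $b$ symbolic, the set $\mathcal{C}$ of candidate subedges has size $|\mathcal{C}| \leq 2^{b} m^{c}$ (where $m = |E(H)|$), the set $Sub$ has at most $m^{c k^{c-1}}\cdot m^{c k^{c}}\cdot 2^{b k^{c}}$ members, and $\mathbf{S}$ --- the set of unions of at most $k$ elements of $Sub$ --- has at most $\bigl(m^{c k^{c-1}} m^{c k^{c}} 2^{b k^{c}}\bigr)^{k}$ members, all computable within the same time bound by straightforward enumeration. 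Substituting $b = i$ for the \icBMIP, this bound becomes $2^{f(i,k,c)}\, m^{g(k,c)}$ with $f(i,k,c) = i\,k^{c+1}$ linear in $i$ and $g(k,c) = c\,k^{c} + c\,k^{c+1}$ depending on the fixed constants $c$ and $k$ only. So the first step is simply to observe that $|\mathbf{S}|$, and the cost of producing $\mathbf{S}$, are both of the form $h(i)\cdot\poly(n)$, where the degree of $\poly$ is independent of $i$.

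The second step is to run the algorithm of Theorem~\ref{thm:frameworkalg} on this $\mathbf{S}$: it decides whether $\blockreal{\mathbf{S}}\neq\emptyset$ (and outputs a member if so) in time polynomial in $|\mathbf{S}|\cdot|V(H)|$, hence in time $h'(i)\cdot\poly(n)$ for a function $h'$ of $i$ only and a polynomial whose degree again depends only on $c$ and $k$. Finally, by Lemma~\ref{lem:ghdreal}, $\blockreal{\mathbf{S}}\neq\emptyset$ holds iff $\ghw(H)\leq k$, and any \compnf candidate tree decomposition in $\blockreal{\mathbf{S}}$ directly yields a GHD of width $\leq k$, since each of its bags is a union of at most $k$ subedges of $H$. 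Composing the two steps gives a total running time of $\calO(h(i)\cdot\poly(n))$, which is precisely the claimed fixed-parameter tractability of \rec{GHD,\,$k$} with respect to the parameter $i$.

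The only real work is the accounting in the first step: one must verify that, in the proof of Lemma~\ref{lem:ghdreal}, every super-polynomial factor is subsumed by the single term $2^{b}$, so that fixing $b = i$ confines the exponential blow-up to the parameter, while every remaining factor is polynomial in $n$ with an exponent that is a function of the fixed constants $c$ and $k$ and never of $i$. I do not anticipate any further combinatorial obstacle: once this separation of the parameter $i$ from the input size $n$ is made precise, the result follows immediately from Lemma~\ref{lem:ghdreal} together with the polynomial-time guarantee of Theorem~\ref{thm:frameworkalg}.
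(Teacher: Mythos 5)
Your proposal is correct and follows essentially the same route as the paper: the paper likewise observes that the bound on $|\mathbf{S}|$ from the proof of Lemma~\ref{lem:ghdreal} has the form $2^{f(i,k,c)}m^{g(k,c)}$ once the multi-intersection width is kept as a parameter, and then invokes Theorem~\ref{thm:frameworkalg}, whose running time is polynomial in $|\mathbf{S}|\cdot|V(H)|$. Your accounting of $f$ and $g$ is a correct (and slightly more explicit) version of what the paper leaves implicit.
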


For practical purposes, this is of particular interest in case of the BIP, i.e., $c=2$.
Here, the construction of $Sub$ can be significantly simplified since,  for any
critical path $\critp(u,e) = (u, u_1, \dots, u_\ell)$, it is enough to consider only one step, i.e.,
  $e \cap B_u \subseteq e \cap B(\lambda_{\hat{u}}) = e \cap (e_{\hat{u} 1} \cup \cdots \cup e_{\hat{u} k})$,
for some node  $\hat{u}$, 
such that all edges $e_{\hat{u} j}$ are distinct from $e$.
This is the case for the following reasons: first, by connectedness, 
$e \cap B_u \subseteq e_{u_j 1} \cup \cdots \cup e_{u_j k}$ for every node $u_j$ on this critical path. 
And second, there is at least one node $\hat{u}$ on this critical path such that $e \not\in \lambda_{\hat{u}}$. Indeed, if $e$ were present in $\lambda_{u_j}$ for every $j$, then we could add $e$ to every bag $B_{u_j}$ on the critical path
without violating the connectedness condition. This would contradict the assumption of bag-maximality. 

By the BIP, we have $|e \cap e_{\hat{u} j}| \leq i$ for every $j$ and, therefore, 
we can simply compute all the subsets of $e \cap B(\lambda_{\hat{u}})$ in polynomial time to construct 
$Sub$ as follows: 
$$Sub= E(H) \cup \bigcup_{e\in E(\HH)}   
\Big({\bigcup_{e_1,\ldots,e_j\in (E(\HH) \setminus \{e\}),\, j\leq k }} 
2^{(e\cap (e_1\cup\cdots\cup e_j))}\Big),$$
i.e., $Sub$ contains $E(H)$ and all subsets of intersections of edges
$e \in E(H)$ with unions of $\leq k$ edges of $H$ different from $e$.
Here, the BIP is only used implicitly in the sense that we can be sure that the above expression is 
computable in polynomial time, because $|e\cap (e_1\cup\cdots\cup e_j)| \leq ji \leq ki$ holds by the BIP.
The explicit use of the BIP allows for a yet simpler (more coarse-grained) way to define $Sub$ 
as follows: 
$$
Sub= E(H)  \cup \{ e' \mid e' \subseteq e \mbox{ for some $e \in E(H)$ and } |e'| \leq ki\}.
$$

\section{Tractable Cases of FHD Computation}
\label{sect:fhd-bip}
\label{sect:fhd-exact}
\label{sect:fhd}

\subsection{Initial Considerations and Lemmas}

In Section \ref{sect:ghd}, we have shown that under certain conditions (with the BIP and BDP as most specific and the LogBMIP as most general conditions) the problem of computing a GHD of width $\leq k$ can be reduced to finding a \compnf candidate tree decomposition for an appropriate set of candidate bags.  The key to this problem reduction was to enumerate a set of subedges, which allowed us to enumerate all bags of possible bag-maximal GHDs of width $\leq k$. 
When trying to 
carry over these ideas from GHDs to FHDs, we encounter {\em two major challenges\/}:
Can we adapt our approach for GHDs to work with FHDs? And is it possible to find bounded representations of all the sets of vertices that can be \emph{fractionally} covered with weight $\leq k$?

For the second challenge, recall from the GHD-case that the possible bags $B(\lambda_u)$ 
could be easily computed from the given set of subedges, since each $\lambda_u$ can choose at most $k$ subedges.
In contrast, for a fractional cover $\gamma_u$, we do not have such a bound on the number of edges with non-zero weight.  It is easy to exhibit a family $(H_n)_{n \in \mathbb{N}}$ of hypergraphs where it is advantageous to have unbounded $\cov(\gamma_n)$ even if $(H_n)_{n \in \mathbb{N}}$ enjoys the BIP, as the following example illustrates:

\begin{example}
\label{ex:fhwLongEdge}
{\em 
Consider the family $(H_n)_{n \in \mathbb{N}}$ of hypergraphs with $H_n 
=(V_n,E_n)$ defined as follows:

\smallskip
$V_n = \{v_0, v_1, \dots, v_n\}$

$E_n= \{ \{v_0,v_i\} \mid 1 \leq i \leq n\} \cup \{\{v_1, \dots, v_n\}\}$

\smallskip

\noindent
Clearly $\iwidth{H_n} = 1$, but an optimal fractional edge cover of $H_n$ is 
obtained by the following mapping $\gamma$ with 
$\cov(\gamma) = E_n$:

\smallskip

$\gamma(\{v_0, v_i\}) = 1/n$ for each $i \in \{1, \dots, n\}$ and

$\gamma(\{v_1,\dots,  v_n\}) = 1 - (1/n)$ 

\smallskip
\noindent
such that $\weight(\gamma) = 2 - (1/n)$, which is 
optimal in this case. 
}
\end{example}

\nop{************************
We refer the reader to a recent article \cite{DBLP:journals/corr/FischlGP16} for 
most of the basic definitions needed also in this work.
Below, we recall some crucial 
definitions and introduce some 
additional ones mainly to fix the terminology.

\smallskip

\begin{definition}
Let $H = (V(H),E(H))$ be a hypergraph and let $\gamma \colon E(H) \ra [0,1]$ be an
edge-weight function for $H$. For $v\in V(H)$, 
we write $\gamma(v)$ to denote the total weight that $\gamma$ assigns to $v$. Moreover, 
we write by $B(\gamma)$ to denote  the set of all 
vertices {\em covered\/} by $\gamma$, i.e.:
\[ \gamma (v) = \sum_{e\in E(H), v\in e} \gamma(e).
\]
\[ B(\gamma) = \left\{ v\in V(H) \mid \sum_{e\in E(H), v\in e} \gamma(e) \geq 1 \right\} = 
\left\{ v\in V(H) \mid \gamma(v) \geq 1 \right\}
\]
\end{definition}

\begin{definition}
 \label{def:FHD}
Let $H=(V(H),E(H))$ be a  hypergraph. 
A {\em fractional hypertree decomposition\/} (FHD) of $H$
is a tuple 
$\left< T, (B_u)_{u\in N(T)}, (\gamma)_{u\in N(T)} \right>$, such that 
$T = \left< N(T),E(T)\right>$ is a rooted tree and 
the 
following conditions hold:

\begin{enumerate}
 \item for each $e \in E(H)$, there is a node $u \in N(T)$ with $e \subseteq 
B_u$;
 \item for each $v \in V(H)$, the set $\{u \in N(T) \mid v \in B_u\}$ is 
connected 
in $T$;
 \item for each $u\in N(T)$, $\gamma_u$ is an edge-weight function $\gamma_u \colon E(H) 
\ra 
[0,1]$
with 
$B_u \subseteq  B(\gamma_u)$.
\end{enumerate}
\end{definition}

\medskip
\noindent
The width of an FHD is the maximum weight of the functions $\gamma_u$,
over all nodes $u$ in $T$. Moreover, the fractional hypertree width of
$H$ (denoted $\fhw(H)$) is the minimum width over all FHDs of $H$.
Condition~2 is often called the ``connectedness condition'' and the
set $B_u$ is usually referred to as the ``bag'' at node $u$. It is
easy to see that the existence of an FHD of width $\leq k$ is the same
as the existence of a TD where every bag has fractional cover number
$\leq k$.

Note that, in contrast to hypertree decompositions (HDs),
the underlying tree $T$ of an FHD does not need to be rooted. For the
sake of uniformity, we assume that also the tree underlying an FHD is
rooted with the understanding that the root is arbitrarily chosen.
Finally, by slight abuse of notation, we shall write $u \in T$ short
for $u \in N(T)$. Hence, FHDs will be referred to as $\defFHD$ or,
simply as $\defFHDohne$.

We sometimes identify sets of edges with hypergraphs. If a set of edges $E$ is used, where instead a hypergraph is expected, then we mean the hypergraph $(V,E)$, where $V$ is simply the union of all edges in $E$. 
Finally, for a set $E$ of edges, it is convenient to write $\bigcup E$ (and $\bigcap E$, respectively)  to denote the set of vertices obtained by taking the union
(or the intersection, respectively) of the edges in $E$.
************************} %

Below we show that, for cases where the second challenge can be resolved (i.e., we can establish an upper bound on 
$\cov(\gamma_u)$ for all nodes $u$ in an FHD), the check-problem of FHDs can be essentially reduced to the GHD case. The following lemma is thus the crucial tool for the remainder of this paper. It tells us that, in case of the BMIP, it suffices to 
determine a collection of vertex sets $\mathbf{Q} \subseteq 2^{V(H)}$, such that the bags $B_u$ of an FHD can be taken from subsets of the sets in $\mathbf{Q}$.

\begin{lemma}
  \label{lem:fhwreal}
Let $H$ be a hypergraph with $\cmiwidth{c}{H}\leq i$ for constants $c$ and $i$,
and fix
$k,\,q \geq 1$.
Suppose there exists a FHD \mcF, where for each bag $B_u$, there exists a $q$-set $Q_u$ such that $B_u \subseteq Q_u$ and $\rho^*(Q_u) \leq k$.
  Then, there exists a
  polynomial-time computable set $\mathbf{S}$ such that
  $\blockreal{\mathbf{S}} \neq \emptyset$ if and only if $\fhw(H) \leq k$.

\end{lemma}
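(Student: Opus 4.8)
The plan is to follow the proof of Lemma~\ref{lem:ghdreal} for GHDs, with the $q$-sets $Q_u$ taking over the role that the covered sets $B(\lambda_u)$ played there. Note that the only information we actually need from $\mcF$ is its underlying tree decomposition together with, for each node $u$, a $q$-set $Q_u\supseteq B_u$ with $\rho^*(Q_u)\le k$; the weight functions $\gamma_u$ play no further role (they only certify $\rho^*(B_u)\le\rho^*(Q_u)\le k$). First I would establish an analogue of Lemma~\ref{lem:nfghd}: starting from such a decomposition one can, without increasing any $\rho^*(Q_u)$, obtain one that is in \compnf and is \emph{$Q$-bag-maximal}, meaning that for every node $u$ no vertex of $Q_u\setminus B_u$ can be added to $B_u$ without violating the connectedness condition. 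Adding such vertices keeps $B_u\subseteq Q_u$, so $\rho^*(B_u)\le\rho^*(Q_u)\le k$ is maintained; and the normal-form transformation of Lemma~\ref{lem:nfghd} only shrinks bags, so copying $\gamma$ to the newly created nodes and setting $Q_{u_{n,j}}:=Q_n$ preserves the property. Lemmas~\ref{lem:52} and~\ref{lem:53} are statements about the tree and the bags only and hence apply verbatim, and the argument that the transformation preserves bag-maximality carries over unchanged.

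Next I would prove the analogue of Lemma~\ref{lem:crit}: for a $Q$-bag-maximal \compnf decomposition, a node $u$ with $Q_u=e_1\cup\cdots\cup e_q$, and an edge $e_j$ of this representation with $e_j\setminus B_u\neq\emptyset$, letting $\critp(u,e_j)=(u_0,\dots,u_\ell)$ be as in Definition~\ref{def:critical_ghd} with $u_0=u$ (well defined since $e_j\in E(H)$ is covered by some bag), we have
\[
  e_j\cap B_u \;=\; e_j\cap\bigcap_{t=1}^{\ell} Q_{u_t}.
\]
The inclusion ``$\subseteq$'' is immediate from connectedness and $B_{u_t}\subseteq Q_{u_t}$. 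For ``$\supseteq$'', a vertex $v\in e_j$ with $v\notin B_u$ but $v\in\bigcap_{t=1}^\ell Q_{u_t}$ lies in $B_{u_\ell}$; walking along the path, pick $\alpha\in\{0,\dots,\ell-1\}$ with $v\notin B_{u_\alpha}$ and $v\in B_{u_{\alpha+1}}$. Since $e_j\subseteq Q_u=Q_{u_0}$ we have $v\in Q_{u_\alpha}$ in every case, so $v$ could be added to $B_{u_\alpha}$ (it already sits in the neighbour $u_{\alpha+1}$), contradicting $Q$-bag-maximality. Because each $Q_{u_t}$ is a $(q,1)$-set, Lemma~\ref{lem:mainintersect} (with $q$ in place of its $k$) now yields $e_j\cap B_u = I\cup e'$ with $I$ a $(q^{c-1},c)$-set and $e'\subseteq e_j$ a union of at most $q^{c}$ subsets of intersections of exactly $c$ distinct edges, so $|e'|\le i\,q^{c}$ by $\cmiwidth{c}{H}\le i$.

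From here the construction and counting are as in Lemma~\ref{lem:ghdreal}. Let $\mathcal{I}$ be the family of all $(q^{c-1},c)$-sets, let $\mathcal{C}=\{ e' \mid e'\subseteq e_1\cap\cdots\cap e_c \text{ for distinct } e_1,\dots,e_c\in E(H)\}$ (of size at most $2^{i}|E(H)|^{c}$, as $i$ is a constant), and let $Sub$ be the set of all $I\cup\bigcup_{t=1}^{q^{c}}C_t$ with $I\in\mathcal{I}$, $C_t\in\mathcal{C}$, that are contained in some edge of $H$; then $Sub$ consists of $E(H)$ and subedges of $H$ and has polynomially many elements for fixed $c,i,q$. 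Define $\mathbf{S}$ to be the family of all unions of at most $q$ members of $Sub$ whose fractional cover number is at most $k$; the filter is a polynomial-time LP test, so $\mathbf{S}$ is polynomial-time computable for fixed $c,i,k,q$. Every bag of the $Q$-bag-maximal \compnf decomposition equals $\bigcup_{j=1}^q(B_u\cap e_j)$ with each $B_u\cap e_j\in Sub$ and satisfies $\rho^*(B_u)\le\rho^*(Q_u)\le k$, hence lies in $\mathbf{S}$, so the underlying TD witnesses $\blockreal{\mathbf{S}}\neq\emptyset$. Conversely, any member of $\blockreal{\mathbf{S}}$ is a TD all of whose bags have $\rho^*\le k$, so $\fhw(H)\le k$; and the hypothesis already forces $\fhw(H)\le k$, so the two conditions are equivalent. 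The main obstacle is re-running the normal-form and critical-path machinery (Lemmas~\ref{lem:nfghd} and~\ref{lem:crit}) with the $Q_u$ in place of $B(\lambda_u)$; the remaining bookkeeping and the invocation of Lemma~\ref{lem:mainintersect} are routine.
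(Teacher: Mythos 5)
Your proof is correct and is essentially the paper's argument: the paper simply observes that assigning $\lambda_u$ weight $1$ to the $q$ edges whose union is $Q_u$ turns the hypothesized FHD into a GHD of width $\le q$ with $B(\lambda_u)=Q_u$, so it can invoke Lemma~\ref{lem:nfghd}, Lemma~\ref{lem:crit} and Lemma~\ref{lem:ghdreal} (with width parameter $q$) as black boxes and then filter, $\mathbf{S}=\{S\in\mathbf{S}_{ghd}\mid \rho^*(S)\le k\}$, via the same LP test you use, whereas you re-derive that machinery with $Q_u$ in place of $B(\lambda_u)$ (your ``$Q$-bag-maximality'' is exactly bag-maximality of that GHD, and your critical-path identity is exactly Lemma~\ref{lem:crit} for it). The ``main obstacle'' you flag at the end therefore evaporates once this reduction is made explicit; otherwise your construction of $\mathbf{S}$ and both directions of the equivalence coincide with the paper's.
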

\begin{proof}
We first define $\mathbf{S}$ by making use of results from the GHD-case: 
let $\mathbf{S}_{ghd}$ be the set from Lemma~\ref{lem:ghdreal} such that
$\blockreal{\mathbf{S}_{ghd}} \neq \emptyset$ if and only if
  $ghd(H) \leq q$.  We  know that
  such a set exists and can be computed in polynomial time.  We can
  then obtain the required set
  $\mathbf{S} = \{ S \in \mathbf{S}_{ghd} \mid \rho^*(S) \leq k\}$ by
  solving a linear program for every element of $\mathbf{S}_{ghd}$.
  Hence, also $\mathbf{S}$ can be computed in polynomial time.

We now argue that $\blockreal{\mathbf{S}} \neq \emptyset$ if and only if $\fhw(H) \leq k$.
 Suppose $\blockreal{\mathbf{S}} \neq \emptyset$, then clearly there
  is a TD where every bag has fractional cover number at most $k$ and, therefore,
  $\fhw(H) \leq k$.

For the other direction, suppose $\fhw(H) \leq k$, and let
$\mcF=\defFHD$ be the FHD which exists by the assumption of the lemma. 
That is, for each node $u \in T$, $B_u$ is a subset of 
some $Q_u$ with $\rho^*(Q_u) \leq k$.
Moreover, since $Q_u$ is a $q$-set, each 
$Q_u$ is of the form $Q_u = e_{u1} \cup \cdots \cup e_{uq}$. 
Now let $\lambda_u = \{e_{u1}, \dots, e_{uq}\}$ for each
  $u \in T$. It is easy to see that $\mcH = \defGHD$ is a GHD of
  $H$ with $ghw(\mcH) \leq q$.  Note that $\mcH$ is not necessarily
  bag-maximal or in \compnf. However, following the procedure
  described in Lemma~\ref{lem:nfghd}, there exists a bag-maximal
  \compnf $\mcH' = \defGHDprime$ with $ghw(\mcH') \leq q$ and
  therefore $\defTDprime \in \blockreal{\mathbf{S}_{ghd}}$.
  
  Now recall that the 
  transformation into a bag-maximal
  \compnf $\mcH'$ from  Lemma~\ref{lem:nfghd} never increases bags. That is,  
every
  bag of $\mcH'$ is a subset of a bag of $\mcH$ and, hence,  
  still a subset of some $Q$ with $\rho^*(Q) \leq k$. Therefore, \defTDprime is in \compnf and each of its bags is in $\mathbf{S}_{ghd}$ and can be fractionally covered with weight $k$, i.e., $\defTDprime \in \blockreal{\mathbf{S}}$.
\end{proof}

Similarly as in the GHD-case, we will have to deal with unions of intersections of edges also in the FHD-case. 
The following definition and the accompanying two lemmas will be convenient for this purpose.

\begin{definition}
  For a hypergraph $\HH$ we write $\HH^\cap$ for the closure of $H$
  under intersection of edges.
\end{definition}

It will be important to observe that adding subedges does not change the fractional hypertree width of a hypergraph.
Every FHD of $\HH$ is still an FHD of $\HH^\cap$ and every
FHD of $\HH^\cap$ can be easily transformed into an FHD of $\HH$. 
It follows that we always have $\fhw(\HH) = \fhw(\HH^\cap)$.

\begin{lemma}
  \label{lem:boundsupp}
    Let $\HH$ be a hypergraph and $\gamma$ a fractional edge cover of a set $S$ of vertices.
  Then $B(\gamma)$ is a $2^{|\supp(\gamma)|}$-set w.r.t. $\HH^\cap$.
\end{lemma}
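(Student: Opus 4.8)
The plan is to classify the vertices of $B(\gamma)$ according to which edges of $\supp(\gamma)$ contain them, and to observe that a vertex can only lie in $B(\gamma)$ if its membership pattern already carries $\gamma$-weight at least $1$. Concretely, I would set $t = |\supp(\gamma)|$ and, for each $v \in B(\gamma)$, consider the set $P_v = \{ e \in \supp(\gamma) \mid v \in e\}$; by definition of $B(\gamma)$ the covering condition $\sum_{e \ni v}\gamma(e) \geq 1$ is exactly $\sum_{e \in P_v}\gamma(e) \geq 1$. I would then call a subset $P \subseteq \supp(\gamma)$ \emph{heavy} if $\sum_{e\in P}\gamma(e)\geq 1$, and note that heaviness is preserved under taking supersets within $\supp(\gamma)$, since $\gamma(e) \geq 0$ for all $e$; hence minimal heavy sets exist and every heavy set contains one.

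Next I would let $\mathcal{M}$ be the family of inclusion-minimal heavy subsets of $\supp(\gamma)$ and, for each $P \in \mathcal{M}$, form $X_P = \bigcap_{e\in P} e$, discarding those $P$ for which $X_P = \emptyset$. Each surviving $X_P$ is a nonempty intersection of edges of $\HH$, hence an edge of $\HH^\cap$, and clearly $|\mathcal{M}| \leq 2^t$. The heart of the argument is the identity $B(\gamma) = \bigcup_{P \in \mathcal{M}} X_P$. The inclusion ``$\supseteq$'' is immediate: if $v \in X_P$ then $v$ lies in every edge of the heavy set $P$, so $\sum_{e\ni v}\gamma(e) \geq \sum_{e\in P}\gamma(e) \geq 1$, i.e., $v \in B(\gamma)$. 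For ``$\subseteq$'', given $v \in B(\gamma)$ the set $P_v$ is heavy, so it contains some $P \in \mathcal{M}$ with $P \subseteq P_v$, and then $v \in e$ for all $e \in P$, that is, $v \in X_P$. This exhibits $B(\gamma)$ as a union of at most $2^{|\supp(\gamma)|}$ edges of $\HH^\cap$, which is precisely a $2^{|\supp(\gamma)|}$-set w.r.t. $\HH^\cap$.

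I do not expect a serious obstacle here; the lemma is genuinely elementary. The only points needing a little care are (i) the upward-closedness of heaviness, which guarantees both that minimal heavy sets exist and that every heavy $P_v$ dominates one, and (ii) the bookkeeping around empty intersections: an empty $X_P$ contributes nothing and is simply dropped, which does not affect either inclusion, since in the ``$\subseteq$'' direction the relevant $X_P$ visibly contains the witness $v$ and is therefore nonempty. One could replace the crude bound $2^t$ by the size of a maximal antichain in $2^{\supp(\gamma)}$, but since the statement asks exactly for $2^{|\supp(\gamma)|}$ no such sharpening is needed.
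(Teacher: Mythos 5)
Your proof is correct and follows essentially the same route as the paper, which takes the union of $\bigcap E$ over \emph{all} ``full'' (i.e., heavy) subsets $E \subseteq \supp(\gamma)$ rather than only the inclusion-minimal ones; since heaviness is upward-closed and intersection is antitone, the two unions coincide and both are bounded by $2^{|\supp(\gamma)|}$. Your restriction to minimal heavy sets and the remark about empty intersections are harmless refinements that change nothing essential.
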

\begin{proof}
  Let us call a subset $E \subseteq \supp(\gamma)$ full if
  $\sum_{e \in E} \gamma(e) \geq 1$ and let $\mathbf{FS}$ contain all the full subsets. 
For every $E \in \mathbf{FS}$, we have
  $\bigcap E \subseteq B(\gamma)$ and it is straightforward to verify
  the following equality:
  $$B(\gamma) = \bigcup_{E \in \mathbf{FS}} \bigcap E$$
  It is then enough to observe that $\bigcap E \in E(\HH^\cap)$ and that there are at most $2^{|\supp(\gamma)|}$ full subsets.
\end{proof}

\begin{lemma}
  \label{lem:bmipclosure}
  Let $\HH$ be a hypergraph with $\cmiwidth{c}{\HH}\leq i$. 
Then $\HH^\cap$ can be computed in polynomial time for fixed $c$ and $i$.
  Moreover, $2^c$\emph{-miwidth}$(H^{\cap}) \leq i$ holds.
\end{lemma}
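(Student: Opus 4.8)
The plan is to prove the two assertions separately: first the polynomial-time computability of $\HH^\cap$, and then the bound on the $2^c$-multi-intersection width.

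\medskip
\noindent
\textbf{Computing $\HH^\cap$ in polynomial time.}
The naive approach of forming all intersections of arbitrary subsets of $E(\HH)$ is exponential, so the key observation is that, under the \icBMIP, the ``intersection lattice'' stabilizes after $c$ steps. Concretely, I would define a sequence of edge sets $E_0 \subseteq E_1 \subseteq \cdots$ by $E_0 = E(\HH)$ and $E_{j+1} = E_j \cup \{\, e \cap e' \mid e \in E(\HH),\ e' \in E_j \,\}$, i.e.\ at each step we intersect the current edges with a single original edge. Each $E_{j+1}$ is computable from $E_j$ in polynomial time, since $|E_{j+1}| \leq |E(\HH)|\cdot|E_j| + |E_j|$. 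The crucial claim is that this process reaches a fixpoint quickly: any nonempty intersection $e_1 \cap \cdots \cap e_t$ of $t \geq c$ distinct original edges has cardinality $\leq i$ by the \icBMIP, and moreover equals the intersection of \emph{some} subfamily of at most $\ldots$ — more carefully, one argues that every element of $\HH^\cap$ can be obtained as an intersection $e_{j_1} \cap \cdots \cap e_{j_r}$ where either $r \leq c$, giving at most $|E(\HH)|^c$ such sets directly, or $r > c$ in which case the intersection already has size $\leq i$ and can be written as an intersection of at most $c$ original edges \emph{plus} a bounded refinement. Rather than fuss over the exact bookkeeping, the clean statement is: $\HH^\cap = E_c$ (or $E_{c+\mathrm{poly}}$), because once we have all intersections of $\leq c$ edges, any further intersection with a new edge can only shrink a set that is already of size $\leq i$ or produce one of the already-listed small sets; a short induction on the number of original edges involved shows no new sets appear after stage $c$ beyond those already present. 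Hence $\HH^\cap$ has at most polynomially many ($\calO(|E(\HH)|^c \cdot 2^i)$, say) edges and is computable in polynomial time for fixed $c$ and $i$.

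\medskip
\noindent
\textbf{Bounding $2^c$\emph{-miwidth}$(\HH^\cap)$.}
Take $2^c$ distinct edges $f_1, \dots, f_{2^c}$ of $\HH^\cap$. Each $f_\alpha$ is of the form $\bigcap S_\alpha$ for some nonempty $S_\alpha \subseteq E(\HH)$, and we may assume each $|S_\alpha| \leq c$ (if some $|S_\alpha| > c$, then $f_\alpha$ already has size $\leq i$ by the \icBMIP, and then $\bigcap_\alpha f_\alpha \subseteq f_\alpha$ has size $\leq i$ and we are done). So assume $|S_\alpha| \leq c$ for all $\alpha$. Now $\bigcap_{\alpha=1}^{2^c} f_\alpha = \bigcap \big(\bigcup_{\alpha} S_\alpha\big)$, an intersection of original edges. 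If $|\bigcup_\alpha S_\alpha| \geq c$, the \icBMIP directly gives $|\bigcap_\alpha f_\alpha| \leq i$. The remaining case is $|\bigcup_\alpha S_\alpha| < c$: but then the $2^c$ sets $S_\alpha$ are distinct subsets of a set of size at most $c-1$, which has only $2^{c-1} < 2^c$ subsets — a contradiction. (One should also dispose of the degenerate possibility that distinct $f_\alpha$ arise from distinct $S_\alpha$ yet $\bigcap S_\alpha = \bigcap S_\beta$ as vertex sets; this only helps, since it reduces to fewer effective edges and the same counting applies, or one works with minimal representing families.) Either way $|\bigcap_\alpha f_\alpha| \leq i$, establishing $2^c$\emph{-miwidth}$(\HH^\cap) \leq i$.

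\medskip
\noindent
\textbf{Main obstacle.}
The delicate point is the termination/completeness argument for the fixpoint computation of $\HH^\cap$: one must argue carefully that intersecting more than $c$ original edges never produces a genuinely new vertex set beyond those already obtained as intersections of $\leq c$ edges together with the (boundedly many, since size $\leq i$) subsets that can arise — equivalently, that the lattice generated under intersection has polynomially many elements. The cleanest route is probably to observe that every element of $\HH^\cap$ either is an intersection of at most $c$ edges of $\HH$, or has cardinality at most $i$; the latter family is contained in the power set of $\bigcup (\text{any }c\text{-wise intersection})$-type sets and is again polynomially bounded, so $|E(\HH^\cap)| = \calO(|E(\HH)|^c 2^i)$. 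Everything else is routine counting.
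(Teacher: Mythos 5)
Your proposal is correct and follows essentially the same route as the paper: the $2^c$-miwidth bound is obtained by exactly the same counting argument (fewer than $c$ underlying original edges cannot supply $2^c$ distinct non-empty representing families), and the polynomial-time computation rests on the same cardinality bound $|E(\HH^\cap)| = \calO(|E(\HH)|^c \cdot 2^i)$, since every element of $\HH^\cap$ is either an intersection of fewer than $c$ original edges or a subset, of size at most $i$, of some $c$-wise intersection. The only wobble is the interim claim that your fixpoint iteration stabilizes already at stage $c$ — that is not true in general, as an intersection of $c+2$ original edges can be a genuinely new vertex set — but it is also not needed, because the cardinality bound you state in your closing paragraph already guarantees that the iteration runs for at most polynomially many stages, each of polynomial size.
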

\begin{proof}
First of all note that if $|E(H^\cap)| \leq 2^c$, then 
also $|E(H)| \leq 2^c$ and we can trivially compute $H^\cap$ in polynomial time 
by simply computing all possible intersections of edges in $E(H)$.
Moreover, in this case, 
the condition
$2^c$\emph{-miwidth}$(H^{\cap}) \leq i$ is void, since there are no distinct 
$2^c$ edges in $H^{\cap}$.

Now let $\alpha = 2^c$ and consider an intersection $I$ of $\alpha$
distinct edges of $H^\cap$ of the form
$I = e'_1 \cap \cdots \cap e'_{\alpha}$, where $e'_j\in E(H^\cap)$ for
$j \in [\alpha]$.
Each $e'_j \in E(H^\cap)$ is
an intersection of edges from $E(H)$, i.e., there exists 
$\mathcal{E}_j \subseteq E(H)$ such that
$e_j' = \bigcap_{e\in \mathcal{E}_j} e$. Clearly,
$I = \bigcap \big( \bigcup_{j=1}^\alpha \mathcal{E}_j\big)$.
  
We claim that $|\bigcup_{j=1}^\alpha \mathcal{E}_j| \geq c$, i.e., 
$I$ is the intersection of at least $c$ edges from $E(H)$. 
Indeed,
less than $c$ distinct edges, there could only be less than $2^c-1$ non-empty sets
$\mathcal{E}_j$. It thus follows that $I$ is a subset of
an intersection of at least $c$ edges of $E(H)$. Thus, by $\cmiwidth{c}{H}\leq i$, 
we conclude that  $|I|\leq i$ holds. Hence, also in the case $|E(H^\cap)| > 2^c$, 
the condition $2^c$\emph{-miwidth}$(H^{\cap}) \leq i$ holds.

It remains to show that $E(H^{\cap})$ can be computed from $E(H)$ in polynomial time: let $m = |E(H)|$. 
Then there are 
less than  $m^c$ intersections of less than $c$ distinct edges from $E(H)$ and these 
intersections can clearly be computed in polynomial time. 
In order to compute also the set of intersections of at least $c$ edges from $E(H)$, we proceed as follows: 
we first compute the set $\mathcal{I}_0$ of intersections of $c$ edges. 
Then, for every $j \in [m-c]$, we compute the set $\mathcal{I}_j$ of
intersections of $\alpha$ edges from $E(H)$ with $\alpha \in \{c, \dots, c +j\}$. 
By $\cmiwidth{c}{H}\leq i$, we know that $|\mathcal{I}_j| \leq 2^im^c$ holds for every $j \in [m-c]$.
Hence, 
all these intersections can clearly be computed in polynomial time (for fixed $c$ and $i$).
\end{proof}

In order to apply Lemma~\ref{lem:fhwreal}, we will have to prove that an 
appropriate constant $q$ indeed exists. The following two subsections will show that
such a constant indeed exists for classes of either bounded
intersection width or bounded degree. Furthermore, 
in Section \ref{sect:approx},
we will also use
the lemma to derive an approximation result for
classes that exhibit the BMIP.

\subsection{Computing FHDs Under Bounded Degree}
\label{sect:fhw-bd}

First, we will investigate the situation for classes of hypergraphs
whose degree is bound by a constant $d$.  We show that a fractional
edge cover $\gamma$ of a hypergraph $H$ has support $\supp(\gamma)$
bounded by a constant that depends only on $weight(\gamma)$ and
$d$. From there, tractability follows from the combination of
Lemma~\ref{lem:fhwreal} and Theorem~\ref{thm:frameworkalg}.

\begin{theorem}\label{thm:exactFHDBD}
  For every hypergraph class $\classC$ that has bounded degree, and
  for every constant $k \geq 1$, the \rec{FHD,\,$k$} problem is
  tractable, i.e., given a hypergraph $\HH \in \classC$, it is
  feasible in polynomial time to check $\fhw(\HH)\leq k$ and, if so,
  to compute an FHD of width $k$ of $\HH$.%
\end{theorem}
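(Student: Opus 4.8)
The plan is to reduce the problem, via Lemma~\ref{lem:fhwreal}, to the candidate tree decomposition framework of Theorem~\ref{thm:frameworkalg}; the only genuinely new ingredient is a bound on the support of an \emph{optimal} fractional edge cover in a bounded-degree hypergraph, after which everything else is bookkeeping.

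First I would prove the following support bound: if $H$ has degree at most $d$ and $\gamma$ is a minimum-weight fractional edge cover of a vertex set $S\subseteq V(H)$, then $|\supp(\gamma)|\le d\cdot\weight(\gamma)$. This is an LP-duality argument. The fractional edge cover LP for $S$ has as its dual the ``fractional matching'' LP: maximize $\sum_{v\in S}y_v$ subject to $\sum_{v\in e\cap S}y_v\le 1$ for every $e\in E(H)$ and $y\ge 0$. Fix an optimal dual solution $y$, so that $\sum_{v\in S}y_v=\weight(\gamma)$ by strong duality, and by complementary slackness $\gamma(e)>0$ forces $\sum_{v\in e\cap S}y_v=1$. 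Summing this identity over all $e\in\supp(\gamma)$ and exchanging the order of summation gives
\[
  |\supp(\gamma)| \;=\; \sum_{e\in\supp(\gamma)}\ \sum_{v\in e\cap S} y_v \;=\; \sum_{v\in S} y_v\,\bigl|\{\,e\in\supp(\gamma)\colon v\in e\,\}\bigr| \;\le\; d\sum_{v\in S}y_v \;=\; d\cdot\weight(\gamma),
\]
where the inequality uses that each vertex lies in at most $d$ edges of $H$, hence in at most $d$ edges of $\supp(\gamma)$.

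Next I would assemble the reduction. Since every vertex of $H$ lies in at most $d$ edges, the intersection of any $d+1$ edges is empty, so $H$ satisfies the hypothesis of Lemma~\ref{lem:bmipclosure} with $c=d+1$ and $i=0$; consequently $H^\cap$ can be computed in polynomial time, $\cmiwidth{2^{d+1}}{H^{\cap}}\le 0$, and $\fhw(H)=\fhw(H^\cap)$. I now claim that whenever $\fhw(H)\le k$ there is an FHD of $H$ in which every bag $B_u$ is contained in a $q$-set with respect to $H^\cap$ of fractional cover number $\le k$, where $q:=2^{dk}$ is a constant. Indeed, take any FHD of width $\le k$; at each node $u$ we have $\rho^*(B_u)\le k$, so pick a minimum-weight fractional edge cover $\gamma_u$ of $B_u$ in $H$. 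By the support bound, $|\supp(\gamma_u)|\le d\cdot\weight(\gamma_u)=d\,\rho^*(B_u)\le dk$; hence by Lemma~\ref{lem:boundsupp} the set $Q_u:=B(\gamma_u)$ is a $2^{dk}$-set with respect to $H^\cap$, and clearly $B_u\subseteq Q_u$ and $\rho^*(Q_u)\le\weight(\gamma_u)\le k$. Thus the hypothesis of Lemma~\ref{lem:fhwreal} is satisfied for the hypergraph $H^\cap$ with constants $c=2^{d+1}$, $i=0$, the given $k$, and $q=2^{dk}$, and the lemma produces a polynomial-time computable set $\mathbf{S}$ with $\blockreal{\mathbf{S}}\ne\emptyset$ iff $\fhw(H^\cap)\le k$ iff $\fhw(H)\le k$. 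Finally, Theorem~\ref{thm:frameworkalg} decides $\blockreal{\mathbf{S}}\ne\emptyset$ in polynomial time and, in the positive case, returns a \compnf candidate TD of $H^\cap$; this is also a TD of $H$, every bag of it lies in $\mathbf{S}$ and so has fractional cover number $\le k$, and solving one LP per bag turns it into an FHD of $H$ of width $\le k$.

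The main obstacle is the support bound itself. The delicate point is that it is false for arbitrary covers of bounded weight --- for instance, spreading a tiny weight $\varepsilon$ across many singleton edges of a star-like hypergraph leaves the total weight near $1$ while the support grows without bound --- so one must work with a \emph{minimum-weight} cover and genuinely invoke complementary slackness. The only other care required, exactly as in the GHD tractability results of Section~\ref{sect:ghd}, is to phrase everything relative to the intersection closure $H^\cap$, since $B(\gamma_u)$ is in general a union of intersections of edges and therefore only qualifies as a $q$-set once such intersections are admitted as edges.
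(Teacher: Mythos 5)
Your proof is correct and follows the same overall architecture as the paper's: establish a constant bound $dk$ on the support of an optimal fractional cover of any bag, conclude via Lemma~\ref{lem:boundsupp} that each bag sits inside a $2^{dk}$-set with respect to $H^\cap$, invoke Lemma~\ref{lem:bmipclosure} with $c=d+1$, $i=0$ to handle $H^\cap$, and then feed everything into Lemma~\ref{lem:fhwreal} and Theorem~\ref{thm:frameworkalg}. Where you genuinely diverge is in how the support bound is obtained. The paper imports it from F\"uredi's theorem on fractional vertex covers (Proposition~\ref{prop:furedi}) and transfers it to edge covers through the dual hypergraph, which forces it to first set up the reduced-hypergraph assumptions (no isolated vertices, no duplicate edge- or vertex-types) so that $H^{dd}=H$ and the correspondence $\rho^*(H)=\tau^*(H^d)$ is clean. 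You instead prove the bound directly by LP duality: complementary slackness forces every edge in the support of a minimum-weight cover to meet a tight dual constraint $\sum_{v\in e\cap S}y_v=1$, and double counting against the degree bound gives $|\supp(\gamma)|\le d\cdot\weight(\gamma)$. This argument is sound (complementary slackness holds for any optimal primal--dual pair, and strong duality applies since both LPs are feasible and bounded), it works directly for covers of arbitrary vertex subsets $S$ without passing to induced subhypergraphs, and it sidesteps the dual-hypergraph machinery entirely; what it gives up is only the connection to the more general combinatorial result of F\"uredi, which the paper's route makes visible. Your remark that the bound genuinely requires optimality of the cover --- and your counterexample of spreading tiny weights over many edges --- correctly identifies the one place where the argument would otherwise break.
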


For this result, we need to introduce, analogously to edge-weight functions 
and edge covers in Section \ref{sect:prelim},
the notions of vertex-weight functions and vertex covers.

\begin{definition}
\label{def:cover-and-support}
A vertex-weight function $w$ for a hypergraph $\HH$ assigns a weight $w(v) \geq 0$ to each vertex $v$ of $\HH$. 
We say that $w$ is a 
{\em fractional vertex cover\/}
of $\HH$  if for each edge  $e\in E(\HH)$, $\Sigma_{v\in e} w(v)\geq 1$ holds.
For a vertex-weight function $w$ for hypergraph $\HH$, we denote by 
$\weight(w)$ its total weight, i.e. $\Sigma_{v\in V(\HH)}w(v)$.
The {\em fractional vertex cover number\/}  $\tau^*(\HH)$ is defined as the minimum 
$\weight(w)$ where $w$ ranges over all fractional  vertex covers of $\HH$.
The {\em vertex support\/} $\vsupp(w)$ of a hypergraph $\HH$ under a vertex-weight function $w$  is defined as $\vsupp(w)=\{v\in V(\HH) \, | \, w(v)>0\}$.
\end{definition}

For our result on bounded support, we will exploit the well-known dualities $\rho^*(\HH)=\tau^*(\HH^d)$ and $\tau^*(\HH)=\rho^*(\HH^d)$, where  $\HH^d$ denotes the dual of $\HH$. 
To make optimal use of this, we make, for the moment,  several assumptions.
First of all, we will assume that (1) hypergraphs have no isolated vertices and (2) no empty edges. 
Furthermore, we assume that 
(3) hypergraphs never have two distinct vertices of the same ``edge-type'' 
(i.e., we exclude that two vertices occur in precisely the same edges) and 
(4) they never have two distinct edges of the same ``vertex-type'' (i.e., we exclude duplicate edges). 

Assumptions (1) -- (4) can be safely made.
Recall that we are ultimately interested in the computation of an FHD of width $\leq k$ for 
given $k$. As mentioned above, without assumption (1), the computation of an edge-weight function and, hence, of an FHD of width $\leq k$ makes no sense. Assumption (2) does not restrict the search for a specific FHD since 
we would never define an edge-weight function with non-zero weight on an empty edge.
As far as assumption (3) is concerned, suppose that a hypergraph $H$ has groups of multiple vertices of identical edge-type. Then it is sufficient to consider 
the reduced hypergraph $H^-$ resulting from $H$ by ``fusing'' each such group to a single vertex. Obviously $\rho^*(H)=\rho^*(H^-)$, and each edge-weight function for $H^-$ 
can be extended in the obvious way to an edge-weight function of the same total weight to $H$.
Finally, assumption (4) can also be made w.l.o.g., since we can again 
define a reduced hypergraph $H^-$ resulting from $H$ by retaining only one edge from each group of identical edges. Then every edge cover of $H^-$ is an edge cover of $H$. Conversely, every 
edge cover of $H$ can be turned into an edge cover of $H^-$ by assigning to each edge $e$ in $H^-$ the sum of the weights of $e$ and all edges identical to $e$ in $H$.

Under our above assumptions  (1) -- (4), 
for every hypergraph $\HH$, the property $\HH^{dd}=\HH$ holds 
and there is an obvious one-to-one correspondence between the edges (vertices) of $\HH$ and the vertices (edges)  of $\HH^d$.
Moreover, there is an obvious one-to-one correspondence between the fractional edge covers of 
$\HH$ and the fractional vertex covers of $\HH^{d}$. In particular, if there is a fractional edge cover $\gamma$ for $\HH$, then its corresponding ``dual'' $\gamma^d$  assigns to each vertex $v$ of $\HH^d$ the same weight as to the edge in $\HH$ that is represented by this vertex and vice versa. 

Note that if we do not make assumptions (3) and (4), then there are hypergraphs $H$ with $H^{dd}\neq H$. For instance, consider the hypergraph $\HH_0$ with $V(\HH_0)=\{a,b,c\}$ and $E(\HH_0)=\{\,e=\{a,b,c\}\,\}$, i.e., property (3) is violated.
The hypergraph $\HH_0^d$ has a unique vertex $e$ and a unique hyperedge $\{e\}$. 
Hence, $H_0^{dd}$ is (isomorphic to) the hypergraph with a unique vertex $a$ and 
a unique hyperedge $\{a\}$, which is clearly different from the original hypergraph~$H_0$.

To get an upper bound on the support $\supp(\gamma)$ of a 
fractional edge cover of a hypergraph $H$, we make use of the following result for fractional vertex covers.
This result is due to Zolt{\'a}n F{\"u}redi~\cite{furedi1988}, who 
extended earlier results by Chung et al.~\cite{chung1988}. 
Below, we appropriately reformulate F{\"u}redi's result for our purposes: 

\begin{proposition}[\cite{furedi1988}, page 152, \ Proposition 5.11.(iii)]\label{prop:furedi}
For every hypergraph $\HH$ of rank (i.e., maximal edge size) $r$, and every fractional vertex cover $w$ for $\HH$  satisfying  $\weight(w)= \tau^*(\HH)$, the 
property $|\vsupp(w)| \leq r\cdot \tau^*(\HH)$ holds.
\end{proposition}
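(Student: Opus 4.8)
I would prove Proposition~\ref{prop:furedi} via linear-programming duality together with complementary slackness, the standard route for support bounds of this kind. Recall that $\tau^*(\HH)$ is the optimum of the \emph{fractional vertex cover} LP: minimise $\sum_{v\in V(\HH)} w(v)$ subject to $\sum_{v\in e} w(v)\ge 1$ for every $e\in E(\HH)$ and $w(v)\ge 0$ for every $v$. Its LP dual is the \emph{fractional matching} LP: maximise $\sum_{e\in E(\HH)} y(e)$ subject to $\sum_{e\ni v} y(e)\le 1$ for every $v\in V(\HH)$ and $y(e)\ge 0$ for every $e$. Since $\HH$ has only non-empty edges, the primal is feasible, both LPs have finite optimal value, and strong duality applies; in particular the dual optimum is attained by some $y^{*}$ with $\sum_{e\in E(\HH)} y^{*}(e)=\tau^*(\HH)$.

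\textbf{Key steps.} Fix the given optimal primal solution $w$, i.e.\ $\weight(w)=\tau^*(\HH)$, together with an optimal dual $y^{*}$. By complementary slackness applied to this optimal primal--dual pair, $w(v)>0$ forces the $v$-th dual constraint to be tight, that is, $\sum_{e\ni v} y^{*}(e)=1$ for every $v\in\vsupp(w)$. Summing this identity over all $v\in\vsupp(w)$ and exchanging the order of summation yields
\[
  |\vsupp(w)| \;=\; \sum_{v\in\vsupp(w)}\ \sum_{e\ni v} y^{*}(e)
             \;=\; \sum_{e\in E(\HH)} y^{*}(e)\cdot |e\cap\vsupp(w)|
             \;\le\; r\cdot\sum_{e\in E(\HH)} y^{*}(e)
             \;=\; r\cdot\tau^*(\HH),
\]
where the inequality uses $|e\cap\vsupp(w)|\le |e|\le r$ since $\HH$ has rank $r$. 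This is exactly the asserted bound.

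\textbf{Main obstacle / remarks.} The only genuine care needed is in the duality bookkeeping: one must check that the dual optimum is attained (immediate, as both LPs are feasible and share a finite value) and invoke the complementary slackness theorem, which holds for \emph{every} pair of optimal primal and dual solutions, not merely for a specially chosen one. An alternative, purely primal argument would instead take $w$ to be a basic optimal solution (a vertex of the cover polytope) and exploit that the incidence vectors of the edge-constraints tight at $w$, restricted to $\vsupp(w)$, must span $\mathbb{R}^{\vsupp(w)}$; however, converting this linear independence into the numerical estimate $r\cdot\tau^*(\HH)$ still requires a weighting argument, so the complementary-slackness proof above is the cleaner one. This reproduces Füredi's Proposition~5.11(iii), which is obtained in the same spirit.
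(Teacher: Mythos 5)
Your argument is correct: complementary slackness does hold for \emph{every} pair of optimal primal and dual solutions, so the tightness $\sum_{e\ni v}y^{*}(e)=1$ for all $v\in\vsupp(w)$ is justified, and the double-counting step with $|e\cap\vsupp(w)|\le r$ cleanly yields the bound. Note, however, that the paper does not prove this proposition at all --- it is imported verbatim as Proposition 5.11(iii) from F\"uredi's 1988 paper --- so there is no in-paper proof to compare against; your LP-duality derivation is the standard self-contained justification of the cited result and is valid as written.
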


By  duality, exploiting the relationship $\rho^*(\HH)=\tau^*(\HH^d)$ and by recalling that the degree of $\HH$ corresponds to the rank of $\HH^d$, we immediately get the following corollary:

\begin{corollary}\label{cor:bsupp}
For every hypergraph $\HH$ of degree $d$, and every fractional edge cover $\gamma$ for $\HH$ satisfying  $\weight(\gamma) = \rho^*(\HH)$,   
the property
$|\supp(\gamma)|\leq d\cdot \rho^*(\HH)$ holds.
\end{corollary}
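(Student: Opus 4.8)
The plan is to obtain the statement as a direct corollary of Füredi's bound on the support size of an optimal fractional vertex cover (Proposition~\ref{prop:furedi}), by transporting the problem to the dual hypergraph. First I would invoke the LP-duality setup already established in the paragraphs preceding the statement: working under assumptions~(1)--(4), the dual $\HH^d$ is again a hypergraph with $\HH^{dd}=\HH$, the edges (vertices) of $\HH$ are in one-to-one correspondence with the vertices (edges) of $\HH^d$, and this bijection induces a weight-preserving one-to-one correspondence between fractional edge covers $\gamma$ of $\HH$ and fractional vertex covers $\gamma^d$ of $\HH^d$. In particular $\weight(\gamma)=\weight(\gamma^d)$, so $\rho^*(\HH)=\tau^*(\HH^d)$, and $\gamma$ is an optimal fractional edge cover of $\HH$ exactly when $\gamma^d$ is an optimal fractional vertex cover of $\HH^d$.

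Next I would record the two ``type'' translations needed to match the hypotheses of Proposition~\ref{prop:furedi}. The rank (maximal edge size) of $\HH^d$ equals the maximal degree of $\HH$: the edge of $\HH^d$ associated with a vertex $v\in V(\HH)$ is precisely $\edges(\{v\})$, the set of edges of $\HH$ containing $v$, whose cardinality is the degree of $v$. Hence $\HH$ having degree $d$ means $\HH^d$ has rank at most $d$. Moreover, the support $\supp(\gamma)$, i.e.\ the set of edges $e$ of $\HH$ with $\gamma(e)>0$, corresponds under the bijection exactly to the vertex support $\vsupp(\gamma^d)$ of $\HH^d$, so $|\supp(\gamma)|=|\vsupp(\gamma^d)|$.

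Putting these together: given a fractional edge cover $\gamma$ of $\HH$ with $\weight(\gamma)=\rho^*(\HH)$, the dual $\gamma^d$ is a fractional vertex cover of $\HH^d$ with $\weight(\gamma^d)=\tau^*(\HH^d)$, so Proposition~\ref{prop:furedi} applied to $\HH^d$ (of rank $\le d$) gives $|\vsupp(\gamma^d)|\le d\cdot\tau^*(\HH^d)$. Re-translating through the two identifications yields $|\supp(\gamma)|=|\vsupp(\gamma^d)|\le d\cdot\tau^*(\HH^d)=d\cdot\rho^*(\HH)$, which is the claimed bound. For a hypergraph violating assumptions~(1)--(4), one first passes to the reduced hypergraph $\HH^-$ described in the preceding discussion, which affects neither $\rho^*$ nor the relevant weights and cannot increase the degree, proves the bound there, and lifts the cover back to $\HH$.

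I do not expect a genuine obstacle here: the entire content of the argument is Proposition~\ref{prop:furedi}, and the only thing requiring care is the bookkeeping --- making sure the weight-preserving dual is applied to an \emph{optimal} cover on the dual side so that Füredi's hypothesis $\weight(w)=\tau^*(\HH^d)$ is actually met, and stating the rank/degree and support/vertex-support correspondences cleanly so that nothing is lost in translation.
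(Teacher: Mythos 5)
Your proof is correct and follows exactly the paper's route: the paper also obtains this corollary immediately from F\"uredi's bound (Proposition~\ref{prop:furedi}) by dualizing, using $\rho^*(\HH)=\tau^*(\HH^d)$, the fact that the degree of $\HH$ equals the rank of $\HH^d$, and the correspondence between $\supp(\gamma)$ and $\vsupp(\gamma^d)$. Your version merely spells out the bookkeeping that the paper leaves implicit.
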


\emph{Proof of Theorem~\ref{thm:exactFHDBD}}: Let $\classC$ be a class of
hypergraphs with degree at most $d$ and let $\HH \in \classC$.
Observe that we can extend Corollary~\ref{cor:bsupp} to any vertex subset $V' \subseteq V(H)$ because $\rho^*(V')$ in $H$ is the same as $\rho^*(H[V'])$. The degree of the induced subhypergraph can not become greater than the degree of $H$.

Therefore, for every set of vertices
$V' \subseteq V(H)$, if $\rho^*(V') \leq k$, then $V'$ can be fractionally covered using at most $kd$ edges.
From Lemma~\ref{lem:boundsupp}
we see that any such $V'$ is a $2^{kd}$-set w.r.t. $H^\cap$
and so is any bag of an FHD of $H^\cap$ with width $\leq k$.
Recall that $\ddegree{\HH}\leq d$ also means $\cmiwidth{(d+1)}{H}=0$. By Lemma~\ref{lem:bmipclosure}, 
we have that $H^\cap$ can be computed in polynomial time and that $2^{d+1}$\emph{-miwidth}$(H^{\cap}) \leq 0$.

We can then use Lemma~\ref{lem:fhwreal} to obtain a set $\mathbf{S}$ in polynomial time such that
$\blockreal{\mathbf{S}} \neq \emptyset$ if and only if $fhw(H^\cap) \leq k$.
As discussed before, we have $\fhw(H^\cap) = \fhw(H)$.
So, all that is left to do is to decide $\blockreal{\mathbf{S}} \neq \emptyset$.
Theorem~\ref{thm:frameworkalg} shows us that this is also 
feasible in polynomial time and therefore, so is the whole procedure.
\qed

\subsection{Computing FHDs Under Bounded Intersection}
\label{sect:fhw-bip}

We will now follow the same strategy used in the previous section on
bounded degree to prove tractability of checking $\fhw$ for hypergraph
classes enjoying the BIP. However, bounded intersection width
has different structural implications than bounded
degree. Example~\ref{ex:fhwLongEdge} illustrated our main challenge in the
fractional setting, namely the potentially unbounded size of the support
of an optimal fractional edge cover.
In that example, the growth of the support
is linked to the increase in the degree of $v_0$ whereas the intersection
width remains $1$, no matter how large $n$ becomes. A combinatorial
result that bounds the size of the support in terms of the optimal weight of the cover and the 
intersection width is therefore impossible.

Instead, we will show that every vertex set $B(\gamma)$ for some fractional cover $\gamma$
can be expressed by a
combination of edges and vertices, where the number of both is bounded
by functions of $\weight(\gamma)$ and the intersection width.
We can then again make use of Lemma~\ref{lem:fhwreal} to derive our main result.

\begin{theorem}\label{theo:exactFHDBIP}
  \label{thm:fhwbip}
  For every hypergraph class $\classC$ that enjoys the BIP, and
  for every constant $k \geq 1$, the \rec{FHD,\,$k$} problem is
  tractable, i.e., given a hypergraph $\HH \in \classC$, it is
  feasible in polynomial time to check $\fhw(\HH)\leq k$ and, if so,
  to compute an FHD of width $k$ of $\HH$.%
\end{theorem}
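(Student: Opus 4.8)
The plan is to mirror the strategy used for bounded degree (Theorem~\ref{thm:exactFHDBD}) by establishing a bound on the number of edges and vertices needed to ``describe'' any vertex set $B(\gamma)$ that is fractionally coverable with weight $\leq k$ in a hypergraph enjoying the $i$-BIP, and then invoking Lemma~\ref{lem:fhwreal}. Concretely, first I would pass to the intersection closure $H^\cap$, which by Lemma~\ref{lem:bmipclosure} (with $c = 2$) can be computed in polynomial time and satisfies $4$-\emph{miwidth}$(H^\cap) \leq i$; recall also $\fhw(H) = \fhw(H^\cap)$. The key combinatorial claim to prove is: there is a constant $q = q(i,k)$ such that for every $S \subseteq V(H)$ with $\rho^*(S) \leq k$, the set $S$ is contained in some $q$-set $Q$ with $\rho^*(Q) \leq k$ (working in $H^\cap$). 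Once this claim is available, Lemma~\ref{lem:fhwreal} immediately yields a polynomial-time computable $\mathbf{S}$ with $\blockreal{\mathbf{S}} \neq \emptyset$ iff $\fhw(H) \leq k$, and Theorem~\ref{thm:frameworkalg} finishes the argument.

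The heart of the proof, and what I expect to be the main obstacle, is the combinatorial claim above --- i.e., bounding how a fractionally covered set can be ``rounded'' to a union of boundedly many edges of $H^\cap$ without losing the weight-$\leq k$ property. The difficulty, as Example~\ref{ex:fhwLongEdge} shows, is that the support $\cov(\gamma)$ of an optimal fractional cover genuinely cannot be bounded in terms of $i$ and $k$ alone. The way around this is to not bound the support directly, but to bound the number of \emph{distinct} vertex sets that arise. I would proceed roughly as follows. Take an optimal fractional cover $\gamma$ of $S$. Classify the edges in $\cov(\gamma)$ by how they intersect $S$. Edges that contribute ``a lot'' to covering $S$ (say, those $e$ with $\gamma(e)$ above some threshold depending on $k$) are few in number, so they directly contribute a bounded number of edges. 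The problematic edges are the low-weight ones; but here the BIP is crucial: any two distinct edges meet $S$ in at most $i$ vertices, so a family of low-weight edges whose union covers a given vertex $v \in S$ with total weight $\geq 1$ must be large, yet their pairwise intersections on $S$ are tiny. The plan is to argue that the vertices of $S$ covered ``only fractionally, via many small edges'' can be collected into a bounded number of edges of $H^\cap$ (intersections of edges), or alternatively handled as a bounded-size residual vertex set --- combined exactly as in Lemma~\ref{lem:boundsupp}-style reasoning (full subsets of a bounded support). Making this rounding argument precise, and tracking the exact dependence of $q$ on $i$ and $k$, is the technical core.

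Once the claim is in place, the rest is bookkeeping. I would state a lemma: \emph{if $H$ has the $i$-BIP and $\fhw(H) \leq k$, then $H$ has an FHD $\mcF$ of width $\leq k$ in which every bag $B_u$ is contained in some $q$-set $Q_u$ (w.r.t.\ $H^\cap$) with $\rho^*(Q_u) \leq k$}, where $q = q(i,k)$. This is obtained by taking an arbitrary optimal FHD and applying the combinatorial claim node-by-node to $S = B_u$ --- note we only need the \emph{existence} of the enlarged $q$-set $Q_u \supseteq B_u$, not that we replace $B_u$ by it, so connectedness is untouched. Then Lemma~\ref{lem:fhwreal} (applied to $H^\cap$, which has $4$-\emph{miwidth} $\leq i$, with this $q$) produces the polynomial-time computable $\mathbf{S}$ with $\blockreal{\mathbf{S}} \neq \emptyset \iff \fhw(H^\cap) \leq k \iff \fhw(H) \leq k$. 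Deciding $\blockreal{\mathbf{S}} \neq \emptyset$ in polynomial time is Theorem~\ref{thm:frameworkalg}, and if the answer is yes, the returned \compnf candidate TD has all bags fractionally coverable with weight $\leq k$, from which an FHD of width $\leq k$ is read off by solving an LP at each node. This completes the proof of Theorem~\ref{theo:exactFHDBIP}.

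Two points deserve care in the write-up. First, the transition $H \rightsquigarrow H^\cap$ must be justified both for tractability (Lemma~\ref{lem:bmipclosure}) and for correctness ($\fhw$ invariance, stated in the text right after the definition of $H^\cap$); I would cite these explicitly rather than re-derive them. Second, in the combinatorial claim one must be slightly careful that the enlarged set $Q_u$ still has $\rho^*(Q_u) \leq k$ --- the natural construction keeps $\gamma$ itself as a witness cover, since we only add vertices that were already in $B(\gamma)$ or in sets $\bigcap E$ for full subfamilies $E$ of $\cov(\gamma)$, all of which lie in $B(\gamma)$; hence $Q_u \subseteq B(\gamma)$ and $\rho^*(Q_u) \leq \weight(\gamma) = \rho^*(B_u) \leq k$. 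With that observation the argument closes cleanly.
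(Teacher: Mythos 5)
Your overall architecture matches the paper's: pass to a suitably enriched hypergraph, show every weight-$k$-coverable set sits inside a $q$-set of bounded description that still has fractional cover number $\leq k$, then invoke Lemma~\ref{lem:fhwreal} and Theorem~\ref{thm:frameworkalg}. You also correctly isolate the right dichotomy (few heavy edges; a residual set $U$ of vertices seen only by light edges, which the BIP bounds by a constant --- this is exactly Lemma~\ref{lem:boundedU}, giving $|U|<2ik^3$, and it is why the paper enriches $H$ with all \emph{singleton} edges $H^1$ rather than with $H^\cap$). However, there is a genuine gap at the step you declare ``closes cleanly.'' Your witness that $\rho^*(Q_u)\leq k$ is the claim $Q_u\subseteq B(\gamma)$, so that $\gamma$ itself covers $Q_u$. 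But a $q$-set is a union of \emph{edges} (of $H^\cap$ or $H^1$), and the pieces of $B(\gamma)$ lying in a heavy edge $e$, namely $e\cap B(\gamma)$, are in general neither a full edge nor a bounded union of edges of $H^\cap$: under the BIP every proper intersection of two edges has size $\leq i$, so $e\cap B(\gamma)$ can be a large proper subset of $e$ that decomposes only into unboundedly many size-$\leq i$ intersections (take the hypergraph of Example~\ref{ex:fhwLongEdge} with a long edge only half of whose vertices are touched by the small edges). So either $Q_u\not\subseteq B(\gamma)$, or $Q_u$ is not a $q$-set for constant $q$.

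If you instead take $Q_u$ to be the union of the \emph{full} heavy edges plus the residual vertices (the only construction that yields a bounded $q$-set), then $Q_u\supsetneq B(\gamma)$ and $\gamma$ no longer witnesses $\rho^*(Q_u)\leq k$; the natural cover assigns weight $1$ to each heavy edge and covers $U$ optimally, which a priori only gives $\rho^*(Q_u)\leq k+\tfrac{1}{2}$. Showing that one can always achieve weight $\leq k$ exactly is the real technical core, and it is where the paper invests most of its effort: Lemma~\ref{lem:max-gap} shows the rounding loss shrinks as the heavy edges get large, small heavy edges are absorbed into the residual set, and --- crucially --- a discreteness argument (Definition~\ref{def:HC}: the set $M(k,c)$ of achievable values $\rho^*(G)+j-k$ for $|V(G)|\leq c$ has a positive minimum $\mu(k,c)$ when nonempty) shows that the rounded weight cannot land strictly between $k$ and $k+\mu(k,c_0)$, so once the loss is below $\mu(k,c_0)$ it must actually be $\leq 0$. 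This is Theorem~\ref{cor:c-bounded-cover}, and without it (or a substitute) your reduction to Lemma~\ref{lem:fhwreal} does not go through: you would only certify $\fhw\leq k+\tfrac12$ rather than $\fhw\leq k$.
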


Throughout this subsection we consider a hypergraph $H$ with $\iwidth H \leq i$ for some constant $i$.
We will investigate fractional edge covers $\gamma$ of vertices $B(\gamma) \subseteq V(H)$.
We write $\Eh$ and $\El$ to denote the \emph{heavy} and \emph{light-weight} edges under $\gamma$, respectively. For given $k \geq 1$, 
we choose $1 - \frac{1}{2k}$ as the boundary between heavy and light-weight edges. More precisely, 
let $\cov(\gamma)$ denote the support of $\gamma$;  then we define  $\Eh$ and $\El$ as follows:

\smallskip

$\El = \{ e \in  \cov(\gamma) \mid \gamma(e) < 1 - \frac{1}{2k}\}$,

$\Eh = \{ e \in \cov(\gamma) \mid \gamma(e) \geq 1 - \frac{1}{2k}\}$,

\medskip

We do not require $\gamma$ to be optimal but we require it to be \emph{redundancy-free} in the following sense: 
if $\gamma'$ is a fractional cover  with  $\gamma'(e) < \gamma(e)$ for some $e \in \cov(\gamma)$ and 
$\gamma'(e) = \gamma(e)$ for all other edges $e \in E$, then $B(\gamma') \subset B(\gamma)$.
The set $\Bg$ has the following \emph{split and canonical representation} in terms of heavy and light-weight edges:

\begin{definition}
\label{def:canonicalRepresentation}
Let $\gamma$ be an edge-weight function of some hypergraph $H$ with $\weight(\gamma) \leq k$ for some $k \geq 1$. 
Then 
$e'_1 \cup \dots \cup e'_\ell \cup U$ with 
$\Bg = e'_1 \cup \dots \cup e'_\ell \cup U$ is 
a  {\em split representation\/} of $\Bg$ if 
the following property (1) holds:
\begin{itemize}
\item[(1)] for every $\alpha \in \{1, \dots, \ell\}$, $e'_\alpha = \Bg \cap e_\alpha$ for some $e_\alpha \in \Eh$;
\end{itemize}
If, additionally, the following property (2) holds, we call $\gamma$ the \emph{canonical representation} of $\Bg$:
\begin{itemize}
 \item[(2)] $U = \{ v \in \Bg \mid \forall e \in \Eh$: $v \not\in e\}$;
 \end{itemize}
\end{definition}

Recall that we are assuming $\gamma$ to be redundancy-free. Hence, 
also the union $e'_1 \cup \dots \cup e'_\ell$ is non-redundant in the
sense that, for every $e'_\alpha \in \{1, \dots, \ell\}$, we have
$(e'_1 \cup \dots \cup e'_\ell) \setminus e'_\alpha \subset (e'_1 \cup
\dots \cup e'_\ell)$. Moreover, for each $\alpha$, the edge
$e_\alpha \in \Eh$ with $e'_\alpha = \Bg \cap e_\alpha$ is in fact
unique. The reason for the uniqueness is that 
the weight of every
heavy edge is greater than $0.5$. Therefore, if $e'_i = e'_j$ for some 
indices $i \neq j$, then the weight put by $\gamma$ on the vertices in $e'_i$ (and, hence, also in $e'_j$) is 
greater than 1. We could thus safely reduce the weight of one of the edges $e_i$ or $e_j$ without decreasing 
$B(\gamma)$, which contradicts the irredundancy of $\gamma$.

In this section, we are considering hypergraphs satisfying the
BIP. Hence, 
we
can show that the number of vertices in $\Bg$ which are only covered
by light-weight edges, is bounded by a constant that exclusively depends on
$k$ and $i$.

\begin{lemma}
  \label{lem:boundedU}
Let $ k \geq 1$ and $i \geq 0$ be constants, let $H$ be a hypergraph with $\iwidth H \leq i$ 
and let $\gamma$ be an edge-weight function of $H$ with $\weight(\gamma) \leq k$. Moreover, 
let $e'_1 \cup \dots \cup e'_\ell \cup U$ be the 
canonical representation of $\Bg$. Then $|U| < 2ik^3$ holds.
\end{lemma}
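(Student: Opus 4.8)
The plan is to bound $|U|$, the set of vertices in $\Bg$ covered \emph{only} by light-weight edges (those with $\gamma(e) < 1 - \frac{1}{2k}$), by exploiting the BIP together with a counting argument on the weight budget $k$. First I would fix some vertex $v \in U$. Since $v \in \Bg = B(\gamma)$, the total weight of edges containing $v$ is $\geq 1$; and since every such edge is light-weight (weight $< 1 - \frac{1}{2k}$), at least two distinct light-weight edges must contain $v$. In fact, more is true: if only edges of weight $< 1-\frac{1}{2k}$ cover $v$, then to reach total weight $\geq 1$ we need enough of them that the ``missing mass'' from the first edge, at least $\frac{1}{2k}$, is made up by the others — so $v$ lies in at least two light edges, and the second-heaviest light edge through $v$ contributes at least $\ldots$ well, the precise bookkeeping is what the calculation will handle. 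The key qualitative fact is: every $v \in U$ lies in the intersection of two distinct light-weight edges.

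The main idea then is a charging/double-counting argument. For each pair $\{e, e'\}$ of distinct light-weight edges, by the BIP we have $|e \cap e'| \leq i$, so $|\,(e\cap e')\cap U\,| \leq i$. If I can show that $U$ is covered by only $O(k^2)$ such pairs (or that the relevant pairs number $O(k^2)$), then $|U| \leq i \cdot O(k^2) = O(ik^2)$, and the target bound $2ik^3$ follows with room to spare (the extra factor of $k$ gives slack). To bound the number of relevant pairs I would argue as follows: the light-weight edges in $\cov(\gamma)$ that actually matter for $U$ are those carrying non-trivial weight; since $\weight(\gamma) \leq k$ and we can group the light edges by weight-bands, or simply observe that a vertex in $U$ needs at least weight $\frac{1}{2k}$ \emph{beyond} its heaviest covering edge, a Vitali-style / greedy selection shows only $O(k^2)$ ``significant'' light edges (or significant pairs) are needed to account for all of $U$. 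Concretely: for each $v \in U$, pick two light edges $e_v, e'_v \ni v$; the edges picked this way, each used with weight $< 1$, together with the constraint $\sum \gamma(e) \leq k$, limit how many \emph{distinct} such edges can carry appreciable weight, and hence how many distinct pairs $(e_v, e'_v)$ arise — each pair contributes at most $i$ vertices to $U$ by the BIP.

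The step I expect to be the main obstacle is the precise combinatorial bound on the number of light-weight-edge pairs (or equivalently, turning ``each $v\in U$ is in $\geq 2$ light edges, total weight $\leq k$'' into ``$O(k^2)$ pairs suffice''). The naive bound ``at most $k/(1-\frac{1}{2k})$ many edges of weight $\geq \ldots$'' only controls edges above a fixed threshold; light edges can have tiny weight, and there can be many of them. So one must be careful: I would proceed by sorting, for each $v\in U$, its covering light edges in decreasing weight order $\gamma(e_1^v)\geq \gamma(e_2^v)\geq\cdots$; then $\gamma(e_1^v) + \gamma(e_2^v) \geq \frac{2}{?}$ — actually since $\gamma(e_1^v) < 1-\frac{1}{2k}$ and the total over all edges at $v$ is $\geq 1$, we get $\gamma(e_2^v) + \gamma(e_3^v) + \cdots \geq \frac{1}{2k}$, so $v$ is covered with weight $\geq \frac{1}{2k}$ by light edges \emph{other than its heaviest}. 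Now consider the restriction of $\gamma$ to light edges: its total weight is $\leq k$, and for each $v \in U$ there are $\geq \frac{1}{2k}$ units of this weight on edges through $v$ excluding one. A standard argument (essentially: the bipartite incidence between $U$ and light edges, weighted by $\gamma$, has each $U$-vertex receiving $\geq \frac{1}{2k}$ from $\geq 2$ edges, total edge weight $\leq k$) will bound the number of light edges that can be "second-or-later" for some vertex by roughly $2k \cdot k = 2k^2$, hence $U$ is contained in the union of pairwise intersections of these $O(k^2)$ edges with the overall heaviest ones, giving $|U| \leq i\cdot O(k^2)$. I would then absorb constants and the slack into the stated bound $2ik^3$. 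Finishing cleanly means getting the exact constant right — which is exactly the kind of routine calculation I would defer to the written-out proof rather than grind through here.
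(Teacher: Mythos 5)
There is a genuine gap at exactly the step you flag as the main obstacle. Your plan is to cover $U$ by $O(k^2)$ pairwise intersections of light edges and invoke the BIP once per pair, but the number of distinct light edges in $\cov(\gamma)$ is \emph{not} bounded by any function of $k$: a single vertex of $U$ can receive its required weight $1$ from $n$ light edges of weight $1/n$ each (this is precisely the phenomenon of Example~\ref{ex:fhwLongEdge}), so already the set of edges that are ``second-or-later'' for one vertex can be arbitrarily large, and no family of $O(k^2)$ pairs can be guaranteed to cover $U$. The weighted bipartite incidence you sketch does not repair this: knowing that each $v\in U$ receives weight at least $\frac{1}{2k}$ (in fact at least $1$) from light edges, while the total edge weight is $\leq k$, only bounds the total weight flowing into $U$; to convert that into a bound on $|U|$ you need an upper bound on $|e\cap U|$ for each \emph{individual} light edge $e$, and that bound is exactly what your argument never establishes.

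That per-edge bound is the key idea of the paper's proof, and it uses the BIP in a weighted rather than a counting fashion: for a light edge $e$ with $m=|\Bg\cap e|$, every vertex of $\Bg\cap e$ must receive weight $>\frac{1}{2k}$ from edges other than $e$, so those other edges put total weight $>\frac{m}{2k}$ on the vertices of $e$; on the other hand, an edge $e'\neq e$ carrying weight $w$ can put at most $iw$ weight on the vertices of $e$ (it meets $e$ in at most $i$ vertices), so this total is at most $ik$, giving $m<2ik^2$. Then $|U|\le\sum_{e\in\El}\gamma(e)\,|e\cap U|<(2ik^2)\cdot k=2ik^3$, since each vertex of $U$ receives weight $\geq 1$ from light edges alone. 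A pair-based route in your spirit can be made to work, but the pairs must be \emph{weighted} by $\gamma(e)\gamma(e')$ rather than counted, with a lower bound on the pair-weight incident to each $v\in U$ and the BIP bounding each pair's contribution by $i$; this is essentially what the paper does later for the BMIP case (Lemma~\ref{lem:bmip}, via the tuple-counting Lemma~\ref{lem:bmipcomb}), and it is substantially more involved than the direct argument above.
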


\begin{proof}
By definition, $U$ is only covered by edges from $\El$. Let $e$ be an arbitrary edge in $\El$ and let $m = |\Bg \cap e|$. 
We first show that $m < 2ik^2$. Indeed, by definition of $\El$, $e$ puts weight $< 1 - \frac{1}{2k}$ on each vertex in $\Bg \cap e$. 
Hence, weight $> \frac{1}{2k}$ has to be put on each vertex in $\Bg \cap e$ by the other edges. In total, the other edges thus have to put
weight $>  \frac{m}{2k}$ on the $m$ vertices $\Bg \cap e$.

By the BIP, whenever $\gamma$ puts weight $w$ on some edge
$e'$ different from $e$, then, in total,  at most weight $i w$ is put on the vertices in $e$. Hence, since we are assuming $\weight (\gamma) \leq k$, 
the total weight of all edges in $\El$ (even the total weight of all edges in $E(H)$) is $\leq k$. Hence, in total
at most weight $ik$ can be put on the vertices in $\Bg \cap e$ by the edges different from $e$. We therefore have 
$ik > \frac{m}{2k} $ or, equivalently,  $m < 2ik^2$.

Now let $m$ be the maximum size of any edge in $\El$ and suppose that, for an arbitrary edge $e \in \El$, $\gamma(e) = w$ holds. Then, in total, $e$ puts weight $\leq m  w$ on the vertices in $U$. 
Hence, the total weight put by all edges of $\El$ on all vertices in $U$ is $\leq m k$. Moreover, recall that every vertex in $U$ receives weight at least $1$. Together with the above bound $2ik^2$ on the size of the edges in $\El$, we thus get  
$|U| < 2ik^3$.
\end{proof}

Our goal now is to show that every fractional cover $\gamma$ can be replaced by a fractional cover that is, in a sense, very close to an
integral cover. To formalize this closeness to an integral cover, 
we introduce the notion of {\em $c$-bounded fractional part\/}.
For $\gamma : E(H) \ra [0,1]$  and  $S \subseteq \cov(\gamma)$, we
write $\gamma |_S$ to denote the {\em restriction of $\gamma$ to $S$\/}, i.e., 
$\gamma |_S (e) = \gamma(e)$ if $e \in S$ and  $\gamma|_S (e)= 0$  otherwise.

\begin{definition}
\label{def:c-bounded}
Let 
$\calF= \left< T, (B_u)_{u\in T}, (\gamma_u)_{u\in T} \right>$ be 
an  FHD of some hypergraph $H$
and let $c \geq 0$. 
We say that $\calF$ has {\em $c$-bounded fractional part\/} if
in every node $u \in T$, the following property holds: 
Let $R = \{e \in \cov(\gamma_u) \mid \gamma_u(e) < 1\}$;  
then $|B(\gamma_u|_{R})| \leq c$.
\end{definition}

A naive approach towards our goal of reaching an FHD with $c$-bounded
fractional part will be to simply take the fractional edge cover
$\gamma_u$ at each node $u$ and set the weight of the heavy edges to
1. Of course, this will, in general, increase the width. However, as
will be illustrated below, it will not increase the width a
lot. Moreover, we will establish conditions under which the increase
of the width can be neglected.  We first give a formal definition of
the \emph{naive cover}:

\begin{definition}
  \label{def:naive-cover}
Let $\gamma$ be an edge-weight function of some hypergraph $H$ with
$\weight(\gamma) \leq k$ for some $k \geq 1$ and let
$e'_1 \cup \dots \cup e'_\ell \cup U$ be a split representation
of $\Bg$. Then we call the edge-weight function $\nu$ a
{\em naive cover\/} if the following properties hold:
\begin{itemize}
\item[(1)] for every $\alpha \in \{1, \dots, \ell\}$, let
  $e_\alpha \in \Eh$ with $e'_\alpha = \Bg \cap e_\alpha$; then we set
  $\nu(e_\alpha) = 1$.
\item[(2)] let $U' = U \setminus (e_1 \cup \cdots \cup e_\ell)$. $\nu$ is an optimal fractional edge cover of $U'$, i.e., let
  $S = \{e \mid e \in \cov(\nu)$ and $e \cap U' \neq \emptyset\}$; then
  $\weight(\nu|_S) = \rho^*(U')$.
\end{itemize}
\end{definition}

If we consider the naive cover of a canonical representation, we have $U'=U$ in the above definition.
Intuitively, a naive cover $\nu$ is close to an integral cover in that
it assigns weight 1 to some of the edges and the 
fractional part of $B(\nu)$ (i.e., the vertices in $B(\nu)$ which are
outside these edges with weight 1) are covered optimally. Ultimately, we will
show that we can always find a naive cover where the number vertices in the fractional part
is bounded by a constant.
Indeed, for the naive cover of a canonical representation, we have the bound $|U| \leq 2ik^3$ by
Lemma \ref{lem:boundedU}, provided that $\iwidth{H} \leq i$.
  However, the naive cover depends not only on $\gamma$
  but also on the split representation it is based on. The naive cover based on the canonical representation is not necessarily the one with the least total weight.

Recall that, since we are assuming fractional covers to be
redundancy-free, for each $\alpha$ in the definition above, the
edge $e_\alpha \in \Eh$ is unique. But of course, there may be several
optimal fractional edge covers of $U$.
By our definition of ``heavy edges''
we immediately get the
inequality $\weight(\nu) - \weight(\gamma) \leq 0.5$, since $\nu$
increases the weight of each heavy edge by at most $\frac{1}{2k}$ and
there cannot be more than $k$ heavy edges under $\gamma$. Below we
illustrate that this gap between $\nu$ and $\gamma$ might be even
smaller.

\begin{example}
\label{ex:big-edge} 
Recall from Example~\ref{ex:fhwLongEdge}
the hypergraph $H_n = (V,E)$ with 
$V = \{v_0, v_1, \dots, v_n\}$ and $E = \{e_0, \dots, e_n\}$, where
$e_0 = \{v_1, \dots, v_n\}$ and, for $\alpha \in \{1, \dots, n\}$, $e_\alpha = \{v_0, v_\alpha\}$. 
That is, $H$ contains a big edge $e_0 = \{v_1, \dots, v_n\}$, a single vertex $v_0$ outside this edge and small edges connecting 
each of the vertices in $e_0$ with the outside vertex $v_0$. 

Now let $V^* \subseteq V$ with $v_0 \in V^*$ and $|V^* \cap e_0| = d$ for some integer $d \geq 1$. For the sake of simplicity, 
suppose that
$V^* = \{v_1, \dots, v_d\}$. 
Then an optimal fractional edge cover 
$\gamma$ of $V^*$ would set $\gamma(e_\alpha) = \frac{1}{d}$ for each $\alpha \in \{1, \dots, d\}$ and $\gamma(e_0) = 1 -\frac{1}{d}$.
We thus get $\weight(\gamma) = 2 -\frac{1}{d}$. A naive cover of the canonical representation of $\gamma$ would set $\nu(e_0) = 1$ and  
$\nu(e_\alpha) = 1$
for a single (arbitrarily chosen) $\alpha \in\{1,\dots, n\}$
and $\nu(e_\beta) = 0$ for all other edges $e_\beta$.
\hfill$\Diamond$
\end{example}

In the above example, we observe that $\weight(\nu) - \weight(\gamma) \leq \frac{1}{d}$ holds. This means that, the bigger $d = |V^* \cap e_0|$ gets, the smaller the possible improvement over a naive cover will be. Of course, the above example is very simple in that $\Eh$ consists of a single 
edge $e_0$, $\Bg$ contains a single vertex $v_0$ outside $e_0$, and the light-weight edges containing  $v_0$ cover a single vertex in~$e_0$. The following lemma generalizes the observation 
that $\weight(\nu) - \weight(\gamma)$ decreases as the contribution of the heavy edges to $\Bg$ increases.

\begin{lemma}
\label{lem:max-gap}
Let $d, k \geq 1$ and $i \geq 0$ be constants and let $H$ be a
hypergraph with $\iwidth H \leq i$.  Moreover, let $\gamma$ be an
edge-weight function of $H$ with a canonical representation
$\Bg = e'_1 \cup \dots \cup e'_\ell \cup U$, s.t.\ $\ell \leq k$ and for
every $\alpha \in \{1, \dots, \ell\}$, the following properties hold:
\begin{itemize}
\item[(1)] $e'_\alpha = \Bg \cap e_\alpha$ for some $e_\alpha \in \Eh$;
\item [(2)] $|e'_\alpha| \geq d + ki$.
\end{itemize}
Then $\weight(\nu) - \weight(\gamma) < \frac{ik^2}{d}$ holds, where
$\nu$ denotes a naive cover corresponding to the canonical representation
$\Bg = e'_1 \cup \dots \cup e'_\ell \cup U$.
\end{lemma}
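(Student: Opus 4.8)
\textbf{Proof plan for Lemma~\ref{lem:max-gap}.}

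The plan is to bound the extra weight $\weight(\nu) - \weight(\gamma)$ by isolating, for each heavy edge, exactly how much weight $\gamma$ "saved" by assigning $e_\alpha$ a value strictly below $1$, and then to argue that this saving is necessarily small when $e'_\alpha = \Bg \cap e_\alpha$ is large. Concretely, for each $\alpha \in \{1,\dots,\ell\}$ write $\gamma(e_\alpha) = 1 - \delta_\alpha$ with $0 \le \delta_\alpha \le \frac{1}{2k}$ (the bound on $\delta_\alpha$ is exactly the definition of a heavy edge). Then $\nu$ raises the weight of $e_\alpha$ from $1-\delta_\alpha$ to $1$, contributing $\delta_\alpha$, and additionally $\nu$ optimally covers $U = U \setminus (e_1 \cup \cdots \cup e_\ell)$ (recall that for a canonical representation $U' = U$), contributing $\rho^*(U)$. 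Hence
\[
  \weight(\nu) \;\le\; \ell \;-\; \sum_{\alpha=1}^\ell \delta_\alpha \;+\; \rho^*(U),
\]
while $\weight(\gamma) \ge \sum_{\alpha=1}^\ell (1-\delta_\alpha) + (\text{weight }\gamma\text{ puts on light-weight edges})$. The task reduces to showing that the weight $\gamma$ places on light-weight edges is at least $\rho^*(U) - \sum_\alpha \delta_\alpha - \frac{ik^2}{d}$, or more directly, that each $\delta_\alpha$ is small.

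First I would bound each $\delta_\alpha$. Since each vertex $v \in e'_\alpha$ must receive total weight $\ge 1$ under $\gamma$ and the edge $e_\alpha$ contributes only $1-\delta_\alpha$, the remaining edges must supply total weight $\ge \delta_\alpha$ to \emph{each} of the $|e'_\alpha| \ge d + ki$ vertices of $e'_\alpha$. Now I use the BIP exactly as in the proof of Lemma~\ref{lem:boundedU}: any single edge $e \ne e_\alpha$ with weight $w$ can contribute at most $iw$ in total to the vertices of $e_\alpha$ (because $|e \cap e_\alpha| \le i$). Since $\weight(\gamma) \le k$, the total weight the other edges can deliver to the vertices of $e'_\alpha$ is at most $ik$. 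On the other hand this total must be $\ge \delta_\alpha \cdot |e'_\alpha| \ge \delta_\alpha (d + ki)$. Hence $\delta_\alpha (d + ki) \le ik$, giving $\delta_\alpha \le \frac{ik}{d+ki} < \frac{ik}{d}$.

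Next I would handle the contribution $\rho^*(U)$ of the fractional part. The key observation is that $\gamma$ restricted to the light-weight edges already fractionally covers $U$ (every vertex of $U$ lies only in light-weight edges, by definition of the canonical representation), so $\rho^*(U) \le \weight(\gamma|_{\El})$. Since the edge-weight function $\nu$ can reuse an optimal cover of $U$ of weight $\rho^*(U) \le \weight(\gamma|_{\El})$, the total weight satisfies
\[
  \weight(\nu) \;\le\; \sum_{\alpha=1}^\ell 1 \;+\; \rho^*(U)
  \;\le\; \sum_{\alpha=1}^\ell (1-\delta_\alpha) + \sum_{\alpha=1}^\ell \delta_\alpha + \weight(\gamma|_{\El})
  \;\le\; \weight(\gamma) + \sum_{\alpha=1}^\ell \delta_\alpha,
\]
where in the last step I used $\weight(\gamma) = \sum_\alpha \gamma(e_\alpha) + \weight(\gamma|_{\El}) \ge \sum_\alpha(1-\delta_\alpha) + \weight(\gamma|_{\El})$ (heavy edges other than the $e_\alpha$ can only help, and here I may assume $\Eh = \{e_1,\dots,e_\ell\}$ by irredundancy and uniqueness). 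Combining with the bound on $\delta_\alpha$ and $\ell \le k$:
\[
  \weight(\nu) - \weight(\gamma) \;\le\; \sum_{\alpha=1}^\ell \delta_\alpha \;<\; \ell \cdot \frac{ik}{d} \;\le\; \frac{ik^2}{d}.
\]

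\textbf{Main obstacle.} The delicate point is the bookkeeping around $\rho^*(U)$ versus $\weight(\gamma|_{\El})$ and making sure that reusing $\gamma$'s light-weight part as (a witness for an upper bound on) $\nu$'s cover of $U$ does not double-count weight already spent on the heavy edges — i.e., that the inequality $\weight(\nu) \le \ell + \rho^*(U)$ is actually what Definition~\ref{def:naive-cover} gives, and that no light-weight edge is forced to carry weight in both $\nu$'s optimal cover of $U$ and elsewhere. One must also be careful that property~(2) of the hypothesis, $|e'_\alpha| \ge d + ki$, is used with the right constant: the "$+ki$" slack is precisely what keeps the denominator $d + ki$ from collapsing and is what lets us write the clean bound $\delta_\alpha < ik/d$ rather than something weaker. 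Everything else is a routine application of the BIP counting argument already rehearsed in Lemma~\ref{lem:boundedU}.
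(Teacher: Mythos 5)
Your proof is correct and follows essentially the same route as the paper's: both reduce the gap $\weight(\nu)-\weight(\gamma)$ to the total deficit $\sum_{\alpha}(1-\gamma(e_\alpha))$ on the heavy edges (via $\weight(\nu)=\ell+\rho^*(U)$ and $\rho^*(U)\le\weight(\gamma|_{\El})$) and then bound each deficit by a BIP counting argument that exploits $|e'_\alpha|\ge d+ki$. The only, harmless, variation is in that counting step: the paper restricts attention to the more than $d$ vertices of $e'_\alpha$ lying in no other heavy edge and exhibits one such vertex receiving less than $ki/d$ from the remaining edges, whereas you sum the per-vertex shortfall over all of $e'_\alpha$ against the aggregate bound $ik$, which yields the marginally sharper $\delta_\alpha\le ik/(d+ki)$.
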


\begin{proof}
We first partition  $\cov(\gamma)$ and $\cov(\nu)$ 
into $R \cup S$ and $R' \cup S$, respectively,  with 
$S = \{e_\alpha \in E(H) \, | \, 1 \leq \alpha \leq \ell$ and $e'_\alpha = \Bg \cap e_\alpha\}$ and 
$R = \cov(\gamma) \setminus S$ and $R' = \cov(\nu) \setminus S$.

Clearly, $\weight(\gamma) = \weight(\gamma|_R) + \weight(\gamma|_S)$
and $\weight(\nu) = \weight(\nu|_{R'}) + \weight(\nu|_S) = \weight(\nu|_{R'}) + \ell$
hold.
Moreover, since 
a naive cover $\nu$ is an optimal fractional cover on $U$, we have 
$\weight(\nu|_{R'})  = \rho^*(U) \leq \weight(\gamma|_R)$. 
Hence, in order to prove the lemma, 
it suffices to show that $\ell - \weight(\gamma|_S) \leq \frac{ik^2}{d}$.

Consider $e'_\alpha$ for some $\alpha \in \{1, \dots, \ell\}$.
By $\iwidth H \leq i$, we have $|e'_\alpha \cap e'_\beta| \leq i$ for each of the $\ell-1$ $\beta$'s with  $\beta \neq \alpha$. 
Hence, less than $k i$ vertices in $e'_\alpha$ are also contained in one of the other heavy edges $e_\beta$.
Now let $e''_\alpha = \{ v \in e'_\alpha \mid v \not\in e'_\beta$ for every $\beta \neq\alpha\}$.
Then $|e''_\alpha| > d$ holds by the assumption 
$|e'_\alpha| > d + ki$.
Since $e''_\alpha \subseteq \Bg$, the cover $\gamma$ must put weight $\geq 1$ on each 
of the vertices in $e''_\alpha$. The edges 
$e_\beta$ with  $\beta \neq\alpha$ do not put any weight on the vertices in $e''_\alpha$. 
It remains to consider the edges in $R$: whenever an edge in $R$ has weight $w$ in $\gamma$, 
then it can put at most weight $wi$ in total on the vertices in $e''_\alpha$. Since $\weight(\gamma|_R) \leq \weight(\gamma) \leq k$, all of the edges in $R$ taken together can only put 
$\leq ki$ weight in total on the vertices in $e''_\alpha$. By 
$|e''_\alpha| > d$, there exists at least one vertex in $e''_\alpha$ that receives weight 
$< \frac{ki}{d}$ by $\gamma|_R$. Hence, since all vertices of $e''_\alpha$ are contained in $\Bg$, 
$\gamma(e_\alpha) > 1-\frac{ki}{d}$ must hold. We therefore get the inequality
$\weight(\gamma|_S) > \ell \cdot (1-\frac{ki}{d})$ and, thus, also 
$\ell - \weight(\gamma|_S) < \ell \cdot \frac{ki}{d} \leq k \cdot \frac{ki}{d} = \frac{k^2i}{d}$.
\end{proof}

As in Example~\ref{ex:big-edge}, we again observe that $\weight(\nu) - \weight(\gamma)$
decreases as the contribution of the heavy edges to $B(\gamma)$ increases. Nevertheless, no matter how big the heavy edges get, this gap may remain greater than 0. However, for our purposes, a slightly weaker condition than 
$\weight(\nu) = \weight(\gamma)$ suffices, namely: for a sufficiently big lower bound on the size of the heavy edges, if $\weight(\gamma) \leq k$, then also $\weight(\nu)  \leq k$ holds. 
Towards this goal, we
will show next that, for relevant values of $\weight(\nu)$, the difference 
$\weight(\nu) - k$ is bounded from below by some constant which only depends on 
$i$ and $k$. By relevant we mean that $\weight(\nu)$ is in the interval $(k,k+0.5]$. The reason for the irrelevance of the values outside this interval is that, if 
$\weight(\nu) \leq k$,  then we may simply replace $\gamma$ by $\nu$ without further ado. 
And if $\weight(\nu) > k + 0.5$, then,  by  $\weight(\nu) - \weight(\gamma) \leq 0.5$,
also $\weight(\gamma) > k$ would hold, which contradicts the assumption that 
$\weight(\gamma) \leq k$ holds. 

The following definitions are crucial: 
\begin{definition}
\label{def:HC}
For constants $c,k \geq 1$, we define:
\begin{eqnarray*}
\calH(c)  & = & \{ H = (V,E) \, \mid \, |V| \leq c\} \\
M(k,c) & = & \{ \rho^*(H) + j - k \, \mid \, H \in  \calH(c),\, j \in \mathbb{N}, \mbox{ and } 
\rho^*(H) + j - k \in (0, 0.5] \} \\
\mu(k,c) & = & \min (M(k,c)), \mbox{ if } M(k,c) \neq \emptyset \mbox{ and undefined otherwise}. 
\end{eqnarray*}
\end{definition}
Consider a fractional edge cover $\gamma$ with 
$\Bg = e'_1 \cup \dots \cup e'_\ell \cup U$, s.t.\  $|U| \leq c$. Moreover, we assume that 
$\weight(\gamma) \leq k$ holds. If $M(k,c) = \emptyset$, then $\weight(\nu) \leq k$ and we are done. 
Note that this case arises, for instance, in Example~\ref{ex:big-edge} for $k = 2$. There we have
$\ell = 1$ and $|U| = c = 1$. Hence, the fractional cover number of the induced subhypergraph
$H[U]$ is $1$ and the 
naive cover which sets $\nu(e_0) = 1$ and  $\nu(e_\alpha) = 1$ for a single $\alpha \in\{1,\dots, n\}$
clearly satisfies $\weight(\nu) \leq k$. 

Based on Lemmas \ref{lem:boundedU} and \ref{lem:max-gap}, we can now prove the central combinatorial result:

\begin{theorem}
  \label{cor:c-bounded-cover}
  Let $H$ be a hypergraph with $\iwidth{H} \leq i$ and let $\gamma : E(H) \to [0,1]$ with $\weight(\gamma) \leq k$.
  Then there exist a constant $c=f(k, i)$ for some function $f$ and an edge-weight function 
  $\nu : E(H) \to [0,1]$ with $\weight(\nu) \leq k$ and 
  $B(\gamma) \subseteq B(\nu)$ such that $\nu$ has $c$-bounded fractional part.
\end{theorem}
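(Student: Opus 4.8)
The plan is to combine the two preceding lemmas (Lemma~\ref{lem:boundedU} and Lemma~\ref{lem:max-gap}) with a case distinction based on the size of the heavy edges under $\gamma$. Given $\gamma$ with $\weight(\gamma) \leq k$, consider its canonical representation $\Bg = e'_1 \cup \dots \cup e'_\ell \cup U$ with $e'_\alpha = \Bg \cap e_\alpha$ for heavy edges $e_\alpha \in \Eh$. Note $\ell \leq k$, since every heavy edge carries weight $> 1 - \frac{1}{2k} \geq \frac12$, so more than $k$ of them would exceed total weight $k$. We may also assume $\gamma$ is redundancy-free (otherwise shrink its support without decreasing $\Bg$). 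By Lemma~\ref{lem:boundedU} we already have $|U| < 2ik^3 =: c_0$. Now let $\nu$ be a naive cover corresponding to this canonical representation; by Definition~\ref{def:naive-cover}, $\nu$ sets $\nu(e_\alpha) = 1$ for all $\alpha \leq \ell$ and covers $U' = U \setminus (e_1 \cup \cdots \cup e_\ell) = U$ optimally. Then the set $R = \{ e \in \cov(\nu) \mid \nu(e) < 1 \}$ satisfies $B(\nu|_R) \subseteq U$, so $B(\nu)$ automatically has $c_0$-bounded fractional part. The only thing left to verify is $\weight(\nu) \leq k$.

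The key step is a dichotomy on the size of the heavy edges. Fix a threshold $d^* := d^*(k,i)$ to be chosen (roughly $d^* = 2ik^2 / \mu(k,c_0)$ when $M(k,c_0) \neq \emptyset$, and any value otherwise). \textbf{Case 1: every heavy edge satisfies $|e'_\alpha| \geq d^* + ki$.} Then Lemma~\ref{lem:max-gap} (applied with $d = d^*$) gives $\weight(\nu) - \weight(\gamma) < \frac{ik^2}{d^*}$. If $M(k,c_0) = \emptyset$ then by Definition~\ref{def:HC} no value $\rho^*(H[U]) + j - k$ lands in $(0, 0.5]$; since $\weight(\nu) = \ell + \rho^*(U)$ with $\ell \in \mathbb{N}$ and $\weight(\nu) - \weight(\gamma) \leq 0.5$, the value $\weight(\nu) - k$ is either $\leq 0$ (done) or $> 0.5$, but the latter contradicts $\weight(\nu) \leq \weight(\gamma) + 0.5 \leq k + 0.5$ — wait, it does not contradict, so instead we argue: $\weight(\nu) - k = \rho^*(U) + \ell - k$ is of the form in $M(k,c_0)$ whenever it lies in $(0,0.5]$, hence if $M(k,c_0) = \emptyset$ it cannot lie in $(0,0.5]$, and combined with $\weight(\nu) \leq k + 0.5$ we get $\weight(\nu) \leq k$. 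If $M(k,c_0) \neq \emptyset$, choose $d^*$ so that $\frac{ik^2}{d^*} < \mu(k,c_0)$; then $\weight(\nu) - \weight(\gamma) < \mu(k,c_0)$, so $\weight(\nu) - k < \mu(k,c_0)$ as well, and since $\weight(\nu) - k$, if positive, must be $\geq \mu(k,c_0)$ by minimality, we conclude $\weight(\nu) \leq k$.

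\textbf{Case 2: some heavy edge satisfies $|e'_\alpha| < d^* + ki$.} Here we do not use the naive cover directly; instead we recurse. Absorb such a ``small heavy edge'' into the fractional part: drop $e_\alpha$ from the heavy set (reset $\nu(e_\alpha)$ to its $\gamma$-value, or simply treat the vertices of $e'_\alpha$ as part of $U$). This enlarges $U$ by at most $d^* + ki$ vertices per absorbed edge and there are at most $\ell \leq k$ of them, so after absorbing all small heavy edges we obtain a split representation with the fractional part bounded by $c := c_0 + k(d^* + ki)$, where now every remaining heavy edge is genuinely large and Case~1 applies with the constant $c$ in place of $c_0$ (re-choosing $d^*$ against $\mu(k,c)$ if necessary — note this is a bounded bootstrap since enlarging the bound only happens once). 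This yields $\weight(\nu) \leq k$ and $c$-bounded fractional part with $c = f(k,i)$ a fixed function of $k$ and $i$. Finally $B(\gamma) \subseteq B(\nu)$ holds because $\nu$ assigns weight $1$ to each retained heavy edge $e_\alpha \supseteq e'_\alpha$ and covers the remaining vertices of $\Bg$ optimally, so every vertex of $\Bg$ is covered with weight $\geq 1$ by $\nu$.

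The main obstacle is managing the interplay between the two thresholds: the bound $c$ on $|U|$ depends (via Case~2) on $d^*$, while the choice of $d^*$ in Case~1 depends on $\mu(k,c)$, which depends on $c$. This circularity has to be broken carefully — the right order is to first absorb all small heavy edges using a \emph{provisional} bound $d^*_0$ (anything works for the absorption step, it only affects the size $c$), then compute $c = f(k,i)$ from that, then pick the \emph{final} threshold $d^*$ against $\mu(k,c)$ and check that no heavy edge surviving the absorption is smaller than $d^* + ki$ — which requires choosing the provisional $d^*_0$ already at least as large as the final $d^*$. Concretely one defines $d^*$ from $\mu(k, 2ik^3 + k\cdot(\text{something}))$ in one shot by a fixed-point-free direct computation, exploiting that $M(k,c)$ only grows as $c$ grows and $\mu$ is therefore monotone non-increasing, so picking $c$ as a sufficiently large explicit function of $k,i$ and then $d^*$ accordingly is self-consistent. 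Once this bookkeeping is set up, each individual estimate is a direct invocation of Lemma~\ref{lem:boundedU} or Lemma~\ref{lem:max-gap}.
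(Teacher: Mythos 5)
Your overall strategy is the paper's: the canonical representation, Lemma~\ref{lem:boundedU} to bound $|U|$, the naive cover, the case split on whether $M(k,c)$ is empty, Lemma~\ref{lem:max-gap} when all heavy edges are large, and absorption of small heavy edges into the fractional part otherwise. The one genuine gap is exactly the spot you flag yourself: the circularity between the fractional-part bound $c$ and the size threshold $d^*$. You propose to resolve it in one shot by choosing a provisional threshold $d^*_0$ at least as large as the final $d^* = ik^2/\mu(k,c)$ with $c = c_0 + k(d^*_0+ki)$, and you justify the existence of such a $d^*_0$ by the monotonicity of $\mu$. But that monotonicity cuts \emph{against} you: enlarging $d^*_0$ enlarges $c$, which can only shrink $\mu(k,c)$, which enlarges the required $d^*$. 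You therefore need a solution of $x \geq ik^2/\mu\bigl(k,\, c_0+k(x+ki)\bigr)$, an inequality $x \geq g(x)$ with $g$ non-decreasing; since $\rho^*$ of a hypergraph on $c$ vertices can be a rational with denominator growing super-polynomially in $c$, $\mu(k,c)$ can decay fast enough that $g$ outpaces the identity, and you give no argument that a solution exists. Likewise, the remark that ``enlarging the bound only happens once'' is unjustified: after raising $d^*$ against the new $\mu(k,c)$, heavy edges that previously counted as large may fall below the new threshold and need to be absorbed in turn.

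The paper breaks the circle sequentially: it absorbs \emph{one} small heavy edge per round, recomputing the threshold $d_j = ik^2/\mu(k,c_j)$ from the current bound $c_j$ at each round and setting $c_{j+1} = c_j + d_j + ki$. Termination needs no fixed point, because each round removes one heavy edge and there are at most $\ell \leq k$ of them; in the round where Lemma~\ref{lem:max-gap} is finally applied, threshold and bound are consistent by construction, and the final constant is the ($\leq k$)-fold iterate of $c \mapsto c + ik^2/\mu(k,c) + ki$ starting from $2ik^3$ --- still a function of $k$ and $i$ only. Your proof becomes correct if you replace the one-shot absorption by this round-by-round scheme. Two smaller points: your justification of $\ell \leq k$ from ``each heavy edge has weight $\geq \tfrac12$'' only yields $\ell \leq 2k$; you need the actual threshold $1-\tfrac{1}{2k}$, and even then $k=1$ admits two heavy edges of weight exactly $\tfrac12$. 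Also, when $U=\emptyset$ and no heavy edge is absorbed, $\weight(\nu)=\ell\leq k$ holds immediately, so that case needs none of the machinery.
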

\begin{proof}
Let a canonical representation of $\gamma$ be of the form 
$\Bg = e'_1 \cup \dots \cup e'_\ell \cup U$ with 
$e'_\alpha = \Bg \cap e_\alpha$ such that $e_\alpha \in \Eh$ for every $\alpha \in \{1, \dots, \ell\}$.
By Lemma \ref{lem:boundedU}, we have $|U| < c_0 =  2ik^3$. 
By our definition of heavy edges, we know that $\weight(\nu) \leq \weight(\gamma) + 0.5$
holds for any naive cover $\nu$. Hence, together with the assumption 
$\weight(\gamma) \leq k$, we conclude that 
$\weight(\nu) \leq k+0.5$ must hold. 
We now distinguish the following cases:

\smallskip
\noindent
{\em Case 1.} Suppose that $U = \emptyset$. Let $\Bg = e'_1 \cup \dots \cup e'_\ell$. 
If $\gamma (e_\alpha) = 1$ for every $\alpha \in \{1, \dots, \ell\}$, then we are done. Otherwise,
let $\epsilon = \max (\{1-\gamma (e_\alpha) \, \mid \, 1 \leq \alpha \leq \ell\}$.
We  consider two subcases:

\smallskip
\noindent
{\em Case 1.1.} Suppose that $|e'_\alpha| > 2ki$ for every $\alpha \in \{1, \dots, \ell\}$. 
As in the proof of Lemma \ref{lem:max-gap}, 
we define $e''_\alpha = \{ v \in e'_\alpha \mid v \not\in e'_\beta$ for every $\beta \neq\alpha\}$.
By $\iwidth H \leq i$ and $\weight(\gamma) \leq k$,
we have $|e'_\alpha \setminus e''_\alpha| \leq  ki$.
Hence, by $|e'_\alpha| > 2ki$, 
we conclude that $|e''_\alpha| > ki$ for every $\alpha$.
By the definition of $\epsilon$, there exists $\alpha$ with 
$\epsilon = 1-\gamma (e_\alpha)$. Hence, by $e''_\alpha \subseteq\Bg$, the edges
outside $\Eh$ must put weight at least $\epsilon$ on each of the vertices in $e''_\alpha$. 
By $|e''_\alpha| > ki$, the edges outside $\Eh$ must put total weight $> ki\epsilon$ on all of the 
vertices in $e''_\alpha$. By $\iwidth H \leq i$, this requires that 
$\weight(\gamma|_R) > k\epsilon$ for $R = \cov(\gamma) \setminus \Eh$ must hold.
Hence, $\weight (\gamma) > \weight (\gamma|_{\Eh}) + k\epsilon$. 
On the other hand, the naive cover $\nu$ increases (compared with $\gamma$) 
the weight of each edge $e_1, \dots, e_\ell$
by at most $\epsilon$. That is, $\weight (\nu) \leq \weight (\gamma|_{\Eh}) + k\epsilon$.
Hence, $\weight(\nu) \leq \weight(\gamma)$ and we may replace 
$\gamma$ by $\nu$.

\smallskip
\noindent
{\em Case 1.2.} Suppose that there exists $\alpha \in \{1, \dots, \ell\}$ with $|e'_\alpha| \leq 2ki$.
W.l.o.g., assume that $\alpha = \ell$. Then we may represent $\Bg$ as
$\Bg = e'_1 \cup \dots \cup e'_{\ell-1} \cup U_1$  with
$U_1 = e'_\ell$. Clearly $|U_1| \leq 2ki \leq c_0$
and we move on to Case 2 or 3 below.

\smallskip
\noindent
{\em Case 2.} Suppose that $U \neq \emptyset$ and $M(k,c_0) = \emptyset$. 
The latter condition implies that 
$\rho^*(G) + j - k \not\in (0, 0.5]$ for every hypergraph $G \in \calH(c_0)$
and every natural number $j$. 
In particular, the induced subhypergraph $H[U]$ is in $\calH(c_0)$. Moreover, 
choose $j = \ell$. Then also 
$\rho^*(H[U]) + \ell - k \not\in (0, 0.5]$, i.e.,
$\weight(\nu) - k \not\in (0, 0.5]$. On the other hand, as argued above, 
$\weight(\nu) \leq k+0.5$ or, equivalently,  
$\weight(\nu) -k  \leq 0.5$
The only possibility to satisfy both conditions is that 
$\weight(\nu) \leq k$ holds. We may therefore again replace 
$\gamma$ by a naive cover $\nu$ to get $\Bg \subseteq B(\nu)$ and $\weight(\nu) \leq k$.

\smallskip
\noindent
{\em Case 3.} Suppose that $U \neq \emptyset$ and $M(k,c_0) \neq \emptyset$. Then we define the following 
value $d_0$ to distinguish between small and big heavy edges: 
$$d_0 =  \frac{ik^2}{\mu(k,c_0)}
$$
We distinguish two subcases: 

\smallskip
\noindent
{\em Case 3.1.} 
If $|e'_\alpha| > d_0 + ki$ for every $\alpha \in \{1, \dots, \ell\}$, 
then, by Lemma \ref{lem:max-gap},
we have
$$\weight(\nu) - \weight(\gamma) < 
\frac{ik^2}{d_0} =
\frac{ik^2}{ik^2/\mu(k,c_0)} = \mu(k,c_0).$$ 
Hence, together with the assumption $\weight(\gamma) \leq k$, we have 
$\weight(\nu) - k < \mu(k,c_0)$.
Moreover, 
as argued above, 
we may assume  $\weight(\nu) \leq k + 0.5$ or, equivalently, 
$\weight(\nu) - k \leq 0.5$.

Note that $\weight(\nu) = \ell + \rho^*(H[U])$ holds for the subhypergraph $H[U]$ of $H$ induced by $U$. 
By $|U| \leq c_0$, we thus have $H[U] \in \calH(c_0)$.
Hence, by $\weight(\nu) - k \leq 0.5$, 
either $\weight(\nu) - k \leq  0$ or $\weight(\nu) - k \in M(k,c_0)$ holds. 
The latter case can be ruled out because $\mu(k,c_0) = \min (M(k,c_0))$
and $\weight(\nu) - k < \mu(k,c_0).$ 
Hence, we conclude that $\weight(\nu) - k \leq 0$ or, equivalently, $\weight(\nu) \leq k$ holds. 
We may therefore again replace 
$\gamma$ by a naive cover $\nu$ to get $\Bg \subseteq B(\nu)$ and $\weight(\nu) \leq k$.

\smallskip
\noindent
{\em Case 3.2.} Now suppose that 
there exists $\alpha \in \{1, \dots, \ell\}$ with 
$|e'_\alpha| \leq d_0 + ki$. W.l.o.g., suppose that $\alpha = \ell$. 
Then we may in fact represent $\Bg$ as
$\Bg = e'_1 \cup \dots \cup e'_{\ell-1} \cup U_1$  with 
$U_1 = U \cup e''_\ell$ 
and $e''_\ell = \{ v \in e'_\ell \, \mid \, v \not\in e'_\beta$ for every $\beta \neq \ell\}$. 
Then, in particular, $|U_1| \leq c_1 = c_0 + d_0 + ki$ holds, i.e., 
the size of the fractional part $U_1$ is 
bounded by a constant that depends only on $k$ and $i$. 
Note that then still Case 3 applies since $U_1 \supseteq U\neq \emptyset$ and 
$M(k,c_1) \supseteq M(k,c_0) \neq \emptyset$ clearly hold. 
Hence, after at most $\ell$ iterations of Case 3, 
eventually Case 3.1 applies and we may replace $\gamma$ by
a naive cover, such that the size of the fractional part is bounded by 
a constant that depends only on $k$ and $i$. 
\end{proof}

  A cover that has $c$-bounded fractional part also has bounded support as the $c$ fractionally covered vertices require at most one edge each as support. We thus immediately obtain the following corollary which may be of independent interest.

\begin{corollary}
  Let $H$ be a hypergraph with $\iwidth{H} \leq i$ and let $\gamma : E(H) \to [0,1]$ with $\weight(\gamma) \leq k$.
  Then there exists an assignment
$\nu : E(H) \to [0,1]$ such that $\weight(\nu) \leq k$,
$B(\gamma) \subseteq B(\nu)$ and $\nu$ has bounded support (depending on $i$ and $k$).
\end{corollary}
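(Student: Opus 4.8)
The plan is to derive this corollary directly from Theorem~\ref{cor:c-bounded-cover}, which already provides an edge-weight function $\nu$ with $\weight(\nu) \leq k$, $B(\gamma) \subseteq B(\nu)$, and $c$-bounded fractional part for $c = f(k,i)$. The only thing left to observe is that having $c$-bounded fractional part forces $\nu$ to have bounded support. First I would recall the definition: $\nu$ has $c$-bounded fractional part means that, setting $R = \{e \in \cov(\nu) \mid \nu(e) < 1\}$, we have $|B(\nu|_R)| \leq c$. Intuitively, all but a bounded number of covered vertices are covered by edges of weight exactly $1$, and there can be at most $k$ such integral edges, since each contributes weight $1$ and $\weight(\nu) \leq k$.

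The key step is to bound $|R|$ itself, not just $|B(\nu|_R)|$. The naive cover construction (Definition~\ref{def:naive-cover}) already guarantees that the fractional edges form an \emph{optimal} fractional edge cover of the bounded set $U'$ (the fractional part), so we may assume $\nu|_R = \rho^*(U')$-weight optimal cover of a vertex set of size $\leq c$. By Carath\'eodory-type / vertex-support arguments — or more simply, by the fact that an LP optimum for a cover of $c$ vertices can be taken to use at most $c$ edges with nonzero weight — we get $|R| \leq c$. Alternatively, and more cleanly for the proof, I would just re-examine the construction in the proof of Theorem~\ref{cor:c-bounded-cover}: in every case the final $\nu$ is a naive cover, and by Definition~\ref{def:naive-cover}(2) the fractional part is covered by an \emph{optimal} fractional edge cover of a set $U'$ with $|U'| \leq c$; standard LP theory (basic feasible solutions) guarantees such an optimal cover with at most $|U'| \leq c$ edges in its support.

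Putting these together: $\cov(\nu) = \{e \mid \nu(e) = 1\} \cup R$. The first set has size at most $k$ since $\weight(\nu) \leq k$ and each such edge carries weight $1$. The second set $R$ has size at most $c = f(k,i)$. Hence $|\supp(\nu)| \leq k + f(k,i)$, which is a constant depending only on $k$ and $i$, as claimed.

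I do not expect any real obstacle here — this is a short bookkeeping argument on top of Theorem~\ref{cor:c-bounded-cover}. The only mild subtlety is making precise that ``$c$-bounded fractional part'' (a statement about $|B(\nu|_R)|$) yields ``bounded $|R|$'': this needs either the explicit naive-cover structure from the proof of Theorem~\ref{cor:c-bounded-cover} (where the fractional part is an optimal LP cover of a set of size $\leq c$, hence supported on $\leq c$ edges) or a one-line appeal to basic feasible solutions of the covering LP. I would phrase the proof using the naive-cover structure to avoid invoking LP machinery, writing simply: ``By inspection of the proof of Theorem~\ref{cor:c-bounded-cover}, the function $\nu$ can be taken to be a naive cover whose fractional part optimally covers a vertex set of size at most $c$, and such an optimal fractional edge cover has support of size at most $c$; together with the at most $k$ edges of weight $1$, this bounds $|\supp(\nu)|$ by $k + c$.''

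\begin{proof}
  By Theorem~\ref{cor:c-bounded-cover}, there is an edge-weight function
  $\nu : E(H) \to [0,1]$ with $\weight(\nu) \leq k$, $B(\gamma) \subseteq B(\nu)$, and
  $c$-bounded fractional part for $c = f(k,i)$. Inspecting the construction in the proof of
  Theorem~\ref{cor:c-bounded-cover}, $\nu$ may be taken to be a naive cover in the sense of
  Definition~\ref{def:naive-cover}: its support splits into a set $S$ of heavy edges with
  $\nu(e) = 1$ for all $e \in S$, together with an optimal fractional edge cover of a vertex set
  $U'$ with $|U'| \leq c$. Since every edge in $S$ carries weight $1$ and $\weight(\nu) \leq k$, we
  have $|S| \leq k$. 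An optimal fractional edge cover of $U'$ can be chosen as a basic feasible
  solution of the associated covering linear program, hence with at most $|U'| \leq c$ edges of
  nonzero weight. Therefore $|\supp(\nu)| \leq k + c = k + f(k,i)$, a constant depending only on
  $i$ and $k$.
\end{proof}
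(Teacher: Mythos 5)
Your proof is correct and follows essentially the same route as the paper, which derives the corollary immediately from Theorem~\ref{cor:c-bounded-cover} with the one-line observation that a $c$-bounded fractional part forces bounded support. Your extra care in passing from ``$|B(\nu|_R)|\leq c$'' to ``$|R|\leq c$'' (via the naive-cover structure and a basic optimal solution of the covering LP on $U'$) fills in exactly the detail the paper leaves implicit, and the resulting bound $|\supp(\nu)|\leq k+c$ is the intended one.
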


The theorem gives us the desired representation of the bags in an FHD of a hypergraph of 
intersection width bounded by some constant $i$.
We are now ready to prove the main theorem of this subsection, namely the tractability of the 
\rec{FHD,\,k} problem for hypergraph classes with bounded intersection.

\medskip

\emph{Proof of Theorem~\ref{thm:fhwbip}}: We show that \rec{FHD,\,k}
is tractable for a class $\classC$ of hypergraphs with intersection
width at most $i$. Let $\HH \in \classC$ and let $\HH^1$ be the
hypergraph obtained by adding all edges of size 1 to $\HH$. Adding
these edges has no effect on any fractional edge covers and, in
particular, $\fhw(\HH) = \fhw(\HH^1)$.

Let $c$ be the constant from Theorem~\ref{cor:c-bounded-cover} for our
$k$ and $i$.  We define $\mathbf{R}$ as the set of all
$(k+c)$-sets w.r.t. $H^1$ with fractional cover number at most $k$. By Theorem~\ref{cor:c-bounded-cover}, for
every fractional cover $\gamma$ with $\weight(\gamma)\leq k$ we have that $B(\gamma)$ has $c$-bounded fractional part.
Thus, it is a subset of $k$ edges plus $c$ vertices, i.e., a $(k+c)$-set w.r.t. $H^1$.
Then, just like for bounded degree, we can combine Lemma~\ref{lem:fhwreal} and Theorem~\ref{thm:frameworkalg} to check $\fhw(\HH^1)\leq k$.
\qed

\medskip\noindent{\bf Deciding the {\sc Check} Problem for Hypergraphs of Bounded Rank.}
We conclude this section by discussing an important special case of bounded intersection width: bounded rank. It is easy to see that for every hypergraph $\HH$, we have $\iwidth{\HH}\leq \rarity{H}$. The tractability of the \rec{FHD,$k$} problem for classes of bounded rank is therefore already a consequence of Theorem~\ref{thm:fhwbip}. However, the complexity of the algorithm described above may be prohibitive for practical use. We therefore choose to briefly present a significantly simpler method for hypergraph classes with bounded rank.

\begin{definition}\label{def:brank}
We say that a hypergraph $H$ has
the {\em $r$-bounded rank property ($r$-BRP)} if 
$\rarity{\HH}\leq r$ holds. 
For a class $\classC$ of hypergraphs,
we say that $\classC$ 
has the  
\emph{bounded rank property (BRP)} 
if there exists a constant $r$
such that
every hypergraph $H$ in $\classC$
has the $r$-BRP.
\end{definition}

For every hypergraph with $\rarity{\HH}\leq r$ for some constant $r$, the following lemma is immediate:

\begin{lemma}
\label{lem:brp}
Let $H$ be a hypergraph whose rank is bounded by some constant $r$. Then for every (fractional) edge weight function $\gamma$ for $H$ satisfying $\weight(\gamma) \leq k$, the property 
$|B(\gamma)| \leq r \cdot k$ holds.  
\end{lemma}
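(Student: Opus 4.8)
Lemma~\ref{lem:brp} asserts a very direct quantitative bound: if $H$ has rank at most $r$ and $\gamma$ is an edge-weight function with $\weight(\gamma) \leq k$, then the covered set $B(\gamma)$ has at most $r\cdot k$ vertices. The plan is to give a short counting argument that compares the total weight distributed over the edges with the total weight accumulated on the covered vertices. Concretely, I would first recall that every $v \in B(\gamma)$ receives, by definition of $B(\gamma)$, total weight $\sum_{e \ni v} \gamma(e) \geq 1$. Summing this inequality over all $v \in B(\gamma)$ gives $|B(\gamma)| \leq \sum_{v \in B(\gamma)} \sum_{e \in E(H),\, v \in e} \gamma(e)$.

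Next I would rewrite the double sum on the right by swapping the order of summation, so that it becomes $\sum_{e \in E(H)} \gamma(e)\cdot |\{ v \in B(\gamma) \mid v \in e\}| = \sum_{e \in E(H)} \gamma(e) \cdot |e \cap B(\gamma)|$. Since $|e \cap B(\gamma)| \leq |e| \leq r$ by the rank bound, and $\gamma(e) \geq 0$, each term is bounded by $r\cdot \gamma(e)$, hence the whole sum is at most $r \sum_{e \in E(H)} \gamma(e) = r\cdot \weight(\gamma) \leq r\cdot k$. Chaining the two inequalities yields $|B(\gamma)| \leq r\cdot k$, as desired. This is essentially the same weight-counting idea already used in the proof of Lemma~\ref{lem:cliquewidth}, just applied in the reverse direction (bounding the number of covered vertices rather than the cover weight).

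There is really no obstacle here; the only point worth a sentence of care is the legitimacy of the Fubini-style exchange of the two finite sums, which is immediate since all summands are nonnegative and both index sets are finite. I would also note explicitly that the argument is agnostic to whether $\gamma$ takes values in $\{0,1\}$ or in $[0,1]$, so it covers both the integral and the fractional case uniformly, which is exactly the phrasing ``(fractional) edge weight function'' in the statement. So the ``main obstacle'' is merely to present the two-line computation cleanly; the lemma is a genuine one-liner and the surrounding text already signals this by calling it ``immediate''.
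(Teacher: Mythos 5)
Your proof is correct and is exactly the weight-counting argument the paper has in mind (the paper simply declares the lemma ``immediate'' and gives no proof, but your double-counting computation $|B(\gamma)| \leq \sum_{e} \gamma(e)\,|e \cap B(\gamma)| \leq r\cdot\weight(\gamma) \leq r\cdot k$ is the canonical justification, mirroring the argument of Lemma~\ref{lem:cliquewidth}). Nothing is missing.
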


We can now use the above lemma to directly generate the candidate bags
for the algorithm from Theorem~\ref{thm:frameworkalg}. For any
hypergraph $H$, we simply compute the set $\mathbf{S}$ of all sets
with at most $r\cdot k$ vertices that can be fractionally covered with
weight $k$. Clearly, if $\blockreal{\mathbf{S}} \neq \emptyset$, then there
exists an FHD of width $k$. On the other hand, if $\fhw(\HH)\leq k$, then there exists a \compnf FHD
$\defFHD$ of width $k$. This is easy to see from the proof of Lemma~\ref{lem:nfghd}. The construction of a \compnf GHD there translates directly to FHDs.
Hence, $\defTD$ is a \compnf TD where every bag
is in $\mathbf{S}$, i.e., $\blockreal{\mathbf{S}} \neq \emptyset$.  We thus get the following special case of Theorem~\ref{thm:fhwbip}:

\newcommand{\thmRankFHD}{%
For every hypergraph class $\classC$ 
that has bounded rank, and for every 
constant $k \geq 1$, 
the 
\rec{FHD,\,$k$} problem is tractable, i.e., 
given a hypergraph $\HH \in \classC$, it is feasible in polynomial time to 
check  $\fhw(\HH)\leq k$ and, if so, to compute an FHD of 
width $k$ of $\HH$.%
}

\begin{corollary}\label{theo:exactFHD}
\thmRankFHD
\end{corollary}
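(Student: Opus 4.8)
The plan is to derive Corollary~\ref{theo:exactFHD} as a direct specialization of the machinery already developed for the BIP case, but using the much cruder bound of Lemma~\ref{lem:brp} in place of Theorem~\ref{cor:c-bounded-cover}. First I would observe that bounded rank is a special case of bounded intersection width, since $\iwidth{\HH} \leq \rarity{\HH} \leq r$ for every hypergraph $\HH$ in the class; hence tractability already follows from Theorem~\ref{thm:fhwbip}. The point of the corollary, however, is to give a self-contained and computationally cheaper argument, so the real content is an explicit description of the candidate-bag set together with a correctness proof that only invokes Lemma~\ref{lem:nfghd} and Theorem~\ref{thm:frameworkalg}.

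The key steps, in order, would be as follows. (1) Given $\HH \in \classC$ with $\rarity{\HH} \leq r$, define $\mathbf{S}$ to be the set of all vertex subsets $W \subseteq V(\HH)$ with $|W| \leq r \cdot k$ such that $\rho^*(W) \leq k$. Since there are only $O(|V(\HH)|^{rk})$ such subsets and checking $\rho^*(W) \leq k$ is a polynomially solvable LP, the set $\mathbf{S}$ is computable in polynomial time for fixed $r$ and $k$. (2) For soundness: if $\blockreal{\mathbf{S}} \neq \emptyset$, then there is a \compnf TD of $\HH$ in which every bag lies in $\mathbf{S}$ and hence has fractional edge cover number $\leq k$; equipping each node $u$ with an optimal fractional edge cover $\gamma_u$ of $B_u$ yields an FHD of width $\leq k$, so $\fhw(\HH) \leq k$. (3) For completeness: assume $\fhw(\HH) \leq k$. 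By (the FHD-analogue of) Lemma~\ref{lem:nfghd} — whose \compnf construction, as noted in the excerpt, carries over verbatim to FHDs — there is a \compnf FHD $\left<T,(B_u)_{u\in T},(\gamma_u)_{u\in T}\right>$ of width $\leq k$. By Lemma~\ref{lem:brp}, each bag satisfies $|B_u| \leq |B(\gamma_u)| \leq r \cdot k$, and of course $\rho^*(B_u) \leq \weight(\gamma_u) \leq k$, so $B_u \in \mathbf{S}$ for every node $u$. Therefore the underlying TD $\left<T,(B_u)_{u\in T}\right>$ witnesses $\blockreal{\mathbf{S}} \neq \emptyset$. (4) Combine: running the algorithm of Theorem~\ref{thm:frameworkalg} on $\HH$ and $\mathbf{S}$ decides $\blockreal{\mathbf{S}} \neq \emptyset$ in polynomial time, and if the answer is ``yes'' it returns a \compnf TD from which an FHD of width $k$ is read off as in step (2).

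I do not anticipate a genuine obstacle here; the corollary is essentially bookkeeping once Lemma~\ref{lem:brp}, Lemma~\ref{lem:nfghd}, and Theorem~\ref{thm:frameworkalg} are in hand. The only point requiring a line of care is the claim that the \compnf transformation of Lemma~\ref{lem:nfghd} applies to FHDs rather than only GHDs: the construction there only ever \emph{shrinks} bags (replacing $B_n$ by $B_n \cap (C_j \cup B_r)$) and reattaches subtrees, operations under which a fractional edge cover of a bag remains a fractional edge cover of the smaller bag of the same weight. Hence the width is not increased and the bounded-rank bound of Lemma~\ref{lem:brp} is preserved at every node, which is exactly what step~(3) needs. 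A secondary sanity check is that Lemma~\ref{lem:brp} itself is trivial: each edge with non-zero $\gamma$-weight contributes at most $r$ vertices to $B(\gamma)$, and at most $k$ edges' worth of weight is available, so $|B(\gamma)| \leq rk$; this is the only new ingredient beyond the BIP development and it is immediate.
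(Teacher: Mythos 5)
Your proposal is correct and follows essentially the same route as the paper: it defines $\mathbf{S}$ as the sets of at most $r\cdot k$ vertices with fractional cover number $\leq k$, invokes the FHD-analogue of the \compnf transformation from Lemma~\ref{lem:nfghd} together with Lemma~\ref{lem:brp} for completeness, and applies Theorem~\ref{thm:frameworkalg} to decide $\blockreal{\mathbf{S}} \neq \emptyset$. Your added remark that the \compnf construction only shrinks bags and therefore carries over to FHDs is exactly the observation the paper relies on.
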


\section{Approximation of $\fhw$ Under Constraints}
\label{sect:approx}

We now turn our attention to approximations of the $\fhw$. 
It is known from~\cite{DBLP:journals/talg/Marx10}
that 
a tractable cubic approximation of the $\fhw$ always 
exists, i.e.: for $k \geq 1$,
there exists a polynomial-time algorithm that, given a hypergraph $H$ with 
$\fhw(H) \leq k$, finds an FHD of $H$ of  width $\calO(k^3)$. 
In this section, we search for conditions which guarantee a better 
approximation 
of the $\fhw$.

The natural first candidate for restricting hypergraphs is the BMIP.
For the \rec{GHD,\,$k$} problem, this restriction guarantees tractability.
Furthermore, the BIP and the BDP also lead to tractability
of the \rec{FHD,\,$k$} problem for fixed $k \geq 1$.
Both of these properties are generalized by the BMIP, making it the natural next step. For the fractional case, we will show that a significantly better polynomial-time approximation 
of the $\fhw$ than in the general case is possible for hypergraphs enjoying the BMIP. It is noteworthy, that we can reuse the strategy from the exact cases without modifications for this task. Finally, we show that classes with bounded Vapnik-Chervonenkis (VC) dimension --  an even more general property than the BMIP -- also allow for a polynomial-time approximation of the $\fhw$, but with slightly weaker guarantees.

\subsection{Approximation of $\fhw$ Under BMIP}
\label{sect:fhw-bip}

We first inspect the case of the bounded multi-intersection property. We will show that the BMIP allows for 
an arbitrarily close approximation of the $\fhw$ in polynomial time. 
Formally, the main result of this section is 
as follows:

\begin{theorem}\label{theo:fhwbmip}
  Let $\classC$ be a hypergraph class that enjoys the BMIP and let
  $k, \epsilon$ be arbitrary constants with $k \geq 1$ and
  $\epsilon > 0$.  Then there exists a polynomial-time algorithm that,
  given a hypergraph $H \in \classC$ with $\fhw(H) \leq k$, finds an
  FHD of $H$ of width $\leq k (1+ \epsilon)$.%
\end{theorem}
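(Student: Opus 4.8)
The plan is to imitate the strategy of the exact-width results under the BIP and the BDP (Theorems~\ref{thm:fhwbip} and~\ref{thm:exactFHDBD}): reduce the task to computing in polynomial time an appropriate family $\mathbf{S}$ of candidate bags and then invoking the algorithm of Theorem~\ref{thm:frameworkalg}. The only genuinely new ingredient will be an \emph{approximate} counterpart of Theorem~\ref{cor:c-bounded-cover} that trades ``no increase in width'' for ``increase by at most a factor $1+\epsilon$'' and that exploits the \emph{multi}-intersection bound rather than just bounded intersection (which, by Example~\ref{ex:fhwLongEdge}, cannot suffice on its own here).

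The combinatorial lemma I would prove is: if $H$ satisfies the \icBMIP, then for every fractional edge cover $\gamma$ with $\weight(\gamma)\leq k$ there is a fractional edge cover $\nu$ with $B(\gamma)\subseteq B(\nu)$, $\weight(\nu)\leq k(1+\epsilon)$, and $|\supp(\nu)|\leq q$ for a constant $q=q(k,i,c,\epsilon)$. To prove it I would fix a threshold $\theta=\theta(k,c,\epsilon)\in(0,1)$, split $\supp(\gamma)$ into the \emph{heavy} edges $E_H=\{e:\gamma(e)\geq\theta\}$ (at most $k/\theta$ of them) and the rest, and round every heavy edge's weight up to $1$; choosing $\theta$ close enough to $1$ keeps the total extra weight below $\tfrac{\epsilon}{2}k$. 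Every vertex of $B(\gamma)$ lying in a heavy edge is then covered, so it remains to cover the residual set $U$ of vertices of $B(\gamma)$ in no heavy edge, and $\gamma$ restricted to the light edges already covers each vertex of $U$ with weight $\geq 1$. The multi-intersection bound enters here: once a vertex of $U$ draws more than a small fraction of its weight from genuinely light edges, it lies in at least $c$ of them, hence in a $c$-wise intersection of edges of $H$, which by the \icBMIP has at most $i$ elements. Iterating this --- peeling off boundedly many near-integral edges per round and bounding the surviving residual via this $c$-wise intersection property --- should, after a bounded number of rounds, yield $\nu$ with bounded support and $\weight(\nu)\leq k+\tfrac{\epsilon}{2}k+\rho^*(R)$ for a final residual $R$ that $\gamma$ still covers within the leftover budget, whence $\weight(\nu)\leq k(1+\epsilon)$. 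I expect the main obstacle to be precisely the bookkeeping of this iteration, i.e.\ arguing that both the number of rounds and the accumulated support are bounded by a function of $k,i,c,\epsilon$ alone and not of $\|H\|$; this is exactly the place where bounded \emph{multi}-intersection is indispensable.

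Granted this lemma, the theorem follows as in the bounded-degree and bounded-intersection cases. Given $H\in\classC$ with $\cmiwidth{c}{H}\leq i$, pass to $H^\cap$; by Lemma~\ref{lem:bmipclosure} it is computable in polynomial time, satisfies $2^c$\emph{-miwidth}$(H^{\cap})\leq i$, and has $\fhw(H^\cap)=\fhw(H)$, and every FHD of $H$ is also an FHD of $H^\cap$. If $\fhw(H)\leq k$, take an FHD of $H$ of width $\leq k$ and replace the cover $\gamma_u$ at each node $u$ by the $\nu_u$ from the lemma; by Lemma~\ref{lem:boundsupp} each bag then satisfies $B_u\subseteq B(\gamma_u)\subseteq B(\nu_u)$, where $B(\nu_u)$ is a $2^{q}$-set with respect to $H^\cap$ and $\rho^*(B(\nu_u))\leq\weight(\nu_u)\leq k(1+\epsilon)$. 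This is exactly the hypothesis of Lemma~\ref{lem:fhwreal} invoked for $H^\cap$ with $2^c$ in place of $c$, width parameter $k(1+\epsilon)$, and $q$-parameter $2^{q}$: it produces a polynomial-time computable family $\mathbf{S}$ (concretely, the members of the set $\mathbf{S}_{ghd}$ of Lemma~\ref{lem:ghdreal} whose fractional cover number is $\leq k(1+\epsilon)$, obtained by solving one linear program per candidate) such that $\fhw(H)\leq k$ implies $\blockreal{\mathbf{S}}\neq\emptyset$, and every $\mcT\in\blockreal{\mathbf{S}}$ is a \compnf tree decomposition all of whose bags have fractional cover number $\leq k(1+\epsilon)$. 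Finally run the algorithm of Theorem~\ref{thm:frameworkalg} on $H^\cap$ and $\mathbf{S}$: under the promise $\fhw(H)\leq k$ it returns some $\mcT\in\blockreal{\mathbf{S}}$, and attaching to each of its bags an optimal fractional edge cover (of weight $\leq k(1+\epsilon)$) and turning the resulting FHD of $H^\cap$ back into one of $H$ gives an FHD of $H$ of width $\leq k(1+\epsilon)$. Since $\theta,q,c,i,k,\epsilon$ are all constants, the entire procedure is polynomial.
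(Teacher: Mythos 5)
Your overall architecture is exactly the paper's: pass to $H^\cap$ with all singleton edges added (Lemma~\ref{lem:bmipclosure}), prove a combinatorial lemma giving each bag a bounded $q$-set superset of fractional cover number $\leq k(1+\epsilon)$, and then feed this into Lemma~\ref{lem:fhwreal} and Theorem~\ref{thm:frameworkalg}. The reduction part of your argument is fine. The gap is in the combinatorial core, and it is not mere bookkeeping.

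You take the heavy threshold $\theta$ \emph{close to $1$} and round heavy edges up to weight $1$. Then every vertex of the residual set $U$ gets all of its weight from edges of weight $<\theta$, but since $\theta\approx 1$, two such edges (of weight, say, $0.6$ each) already suffice to cover a vertex; so the claim that a vertex of $U$ ``lies in at least $c$'' light edges is false, and the $c$-wise intersection bound never engages. To force membership in $\geq c$ light edges you would need the light threshold to be below roughly $1/c$, and the iterative ``peeling'' you sketch cannot get there: rounding an edge of weight $\theta'$ up to $1$ costs $1-\theta'$ per edge with up to $k/\theta'$ such edges, so lowering the threshold from near $1$ down to $1/c$ accumulates extra weight on the order of $k\log c$, which exceeds $\epsilon k$ for small $\epsilon$. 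This is why your flagged ``main obstacle'' is in fact fatal to the rounding-up strategy. The paper's Lemma~\ref{lem:bmip} avoids rounding entirely: it sets the heavy threshold to the \emph{small} constant $\epsilon/(4c)$ (so there are at most $4ck/\epsilon$ heavy edges), scales the whole cover $\gamma$ by the factor $(1+\epsilon)$, and lets $X$ be the vertices receiving weight $\geq 1-\epsilon/2$ from heavy edges; after scaling these are fully covered, and by Lemma~\ref{lem:boundsupp} the set $X$ is a $2^{4ck/\epsilon}$-set w.r.t.\ $H^\cap$ without any integrality of the cover being needed. The leftover vertices receive $\geq\epsilon/2$ weight from edges each of weight $<\epsilon/(4c)$, hence lie in more than $2c$ light edges; a counting argument over $c$-tuples of light edges (Lemma~\ref{lem:bmipcomb} plus $\cmiwidth{c}{H}\leq i$) bounds their number by $i(4k/\epsilon)^c$, and they are absorbed as singleton edges. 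Replacing your rounding step by this scaling-plus-counting argument is the missing idea.
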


Even though we are now interested in approximations we can still
employ the same strategy that was used in the previous section.
However, it is not clear how to find a bounded representation of the sets
that can be covered with weight $k$. Yet, if we introduce an
arbitrarily small constant $\epsilon>0$, then it is possible to find supersets 
with weight $k (1+\epsilon)$ which are indeed representable in the desired fashion,
as we will show in Lemma \ref{lem:bmip} below. The proof of this lemma will make use of the 
following combinatorial result:

\begin{lemma}
  \label{lem:bmipcomb}
  Fix an integer $c\geq 1$.
  Let $X= \{x_1, \dots, x_n \}$ be a set of positive numbers $\leq \delta$
  and fix $w$ such that $\sum_{j=1}^{n}x_j \geq w \geq \delta c$.
  Then we have $\sum x_{i_1}\cdot x_{i_2} \cdot \,\cdots\, \cdot x_{i_c} \geq (w-\delta c)^c$,
  where the sum is over all $c$-tuples $(i_1, \dots, i_c)$ of \emph{distinct} integers
  from $[n]$.

\end{lemma}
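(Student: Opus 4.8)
The plan is a straightforward induction on $c$, peeling off one coordinate at a time. First I would set up notation: write $S = \sum_{j=1}^{n} x_j$ and let $P_c$ denote the sum in the statement, i.e.\ the sum of $x_{i_1} x_{i_2} \cdots x_{i_c}$ over all ordered $c$-tuples $(i_1,\dots,i_c)$ of \emph{distinct} indices from $[n]$. (This equals $c!\,e_c(x_1,\dots,x_n)$, but we will not need that fact.)

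The base case $c=1$ is immediate: $P_1 = S \geq w \geq w - \delta = (w-\delta c)^{c}$, and if $w-\delta$ happens to be negative the inequality is trivial since $P_1 > 0$. For the inductive step, I would group the tuples counted by $P_c$ according to their last coordinate $i_c = j$; the remaining $c-1$ coordinates then run over ordered $(c-1)$-tuples of distinct indices avoiding $j$, so
\[
  P_c \;=\; \sum_{j=1}^{n} x_j \cdot P_{c-1}^{(j)},
\]
where $P_{c-1}^{(j)}$ denotes the analogous sum for the set $X_j := X \setminus \{x_j\}$.

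Next I would invoke the induction hypothesis on each $X_j$ with the reduced threshold $w' := w - \delta$. Its hypotheses hold: the elements of $X_j$ are positive and $\leq \delta$; their sum is $S - x_j \geq w - \delta = w'$ since $x_j \leq \delta$; and $w' = w-\delta \geq \delta c - \delta = \delta(c-1)$ since $w \geq \delta c$. Hence $P_{c-1}^{(j)} \geq (w' - \delta(c-1))^{c-1} = (w-\delta c)^{c-1}$ for every $j$, and substituting into the identity above yields $P_c \geq (w-\delta c)^{c-1}\sum_j x_j = (w-\delta c)^{c-1} S$. Finally, since $w \geq \delta c$ we have $w - \delta c \geq 0$, hence $(w-\delta c)^{c-1}\geq 0$, and since $S \geq w \geq w - \delta c$ we may multiply through to obtain $P_c \geq (w-\delta c)^{c-1}(w-\delta c) = (w-\delta c)^{c}$, as required.

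The one step that is not pure bookkeeping is verifying that the reduced threshold $w - \delta$ still dominates $\delta(c-1)$, which is precisely what makes the induction hypothesis applicable; note also that these same constraints force $n-1 \geq c-1$, so each $P_{c-1}^{(j)}$ is a sum over a nonempty index set and no degenerate empty-product issues arise. I expect this threshold check to be the only genuinely delicate point of the argument.
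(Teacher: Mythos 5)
Your proof is correct and follows essentially the same route as the paper's: induction on $c$, grouping the tuples by their last coordinate, applying the induction hypothesis to $X\setminus\{x_j\}$ with the threshold reduced to $w-\delta$, and then summing over $j$ and using $\sum_j x_j \geq w \geq w-\delta c$. Your version is in fact slightly more careful, since you explicitly verify that the reduced threshold still satisfies $w-\delta \geq \delta(c-1)$ and that no degenerate cases arise.
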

\begin{proof}
  We proceed by induction over $c$. For $c=1$ the statement holds by the requirement that $\sum_{j=1}^{n}x_j \geq w$.
  Suppose the statement holds for $c-1$. Let us consider only the tuples with a fixed value for $i_c$ and let us consider only
  the sum over those tuples, which we can write as $x_{i_c}\cdot \sum x_{i_1} \cdot x_{i_2} \cdot  \cdots \cdot x_{i_{c-1}}$. Notably, the sum now ranges only over all $(c-1)$-tuples of distinct integers from $[n]\setminus \{i_c\}$. Therefore, we need to apply the induction hypothesis for the values $X \setminus x_{i_c}$. This also shifts the lower bound of the sum over all values from $w$ to $w - \delta$, since 
  we are assuming $x_j \leq \delta$ for every $j$. 
  Applying the induction hypothesis in this way then gives us the following inequality

  $$x_{i_c}\cdot \sum x_{i_1} \cdot x_{i_2} \cdot  \cdots \cdot x_{i_{c-1}} \geq x_{i_c} ((w-\delta)-\delta(c-1))^{c-1} = x_{i_c} (w-\delta c)^{c-1}$$

  Now, take the sum over all values in $[n]$ for $i_c$ to again
  consider all the $c$-tuples, and we arrive at the statement:
  $$\sum x_{i_1}\cdot  x_{i_2} \cdot \,\cdots\, \cdot x_{i_c} \geq (w-\delta c)^{c-1} \sum_{j=1}^n x_j  \geq w (w-\delta c)^{c-1} \geq (w-\delta c)^{c}$$
  
  \vspace{-20pt}
\end{proof}

\begin{lemma}
  \label{lem:bmip}
  Let $H$ be a hypergraph with $\cmiwidth{c}{H} \leq i$ for some $c \geq 1$ and $i \geq 0$ such that 
  $H$ is closed
  under intersection of edges and contains all edges of size 1. Moreover, fix a $k \geq 1$ and an $\epsilon$ with 
  $0 < \epsilon \leq 1$. 
  Then there exists an integer $q$, depending only on $k$, $c$, $i$, and $\epsilon$,
  such that for any set of vertices $S$ with $\rho^*(S) \leq k$, there exists a
  $q$-set $S'$ with $S \subseteq S'$ and $\rho^*(S') \leq k (1+\epsilon)$.
\end{lemma}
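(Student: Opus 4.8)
The plan is to take an optimal fractional edge cover $\gamma$ of $S$, so $\weight(\gamma)=\rho^*(S)\le k$, and to replace it by a cover $\gamma'$ of a superset of $S$ whose weight grows by at most a factor $1+\epsilon$ but whose support becomes structured enough to read off a $q$-set. The first step is a \emph{discretization}: round every positive weight $\gamma(e)$ up to the nearest integer power of $(1+\epsilon)$. The rounding factor is strictly below $1+\epsilon$ and never pushes a weight above $1$, so the resulting $\gamma'$ satisfies $\weight(\gamma')\le(1+\epsilon)\weight(\gamma)\le(1+\epsilon)k$, $B(\gamma)\subseteq B(\gamma')$, and every edge weight lies in the discrete set $\{(1+\epsilon)^{-j}:j\ge 0\}$.

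Next, let $\theta$ be the largest element of $\{(1+\epsilon)^{-j}:j\ge 0\}$ that is strictly smaller than $1/c$ (so $\theta c<1$), and call an edge $e$ with $\gamma'(e)>0$ \emph{light} if $\gamma'(e)\le\theta$ and \emph{heavy} otherwise; a heavy edge has $\gamma'(e)\ge 1/c$, so there are at most $c\,(1+\epsilon)k$ heavy edges. The combinatorial core of the argument bounds the set $U$ of vertices covered by $\gamma'$ that lie in no heavy edge. For such a $v$, the light edges through $v$ carry weights $\le\theta$ that sum to $\ge 1$, so Lemma~\ref{lem:bmipcomb} (applied with those weights as $X$, $\delta=\theta$, $w=1$ and the given $c$) yields $\sum_{\tau}\prod_{e\in\tau}\gamma'(e)\ge(1-\theta c)^{c}=:\beta>0$, where $\tau$ ranges over the (ordered) $c$-tuples of distinct light edges all containing $v$. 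Summing over $v\in U$ and regrouping by $\tau$,
\[
  \beta\,|U|\ \le\ \sum_{\tau}\Bigl(\prod_{e\in\tau}\gamma'(e)\Bigr)\,\bigl|\,U\cap{\textstyle\bigcap}\tau\,\bigr|\ \le\ i\sum_{\tau}\prod_{e\in\tau}\gamma'(e)\ \le\ i\,\bigl(\weight(\gamma')\bigr)^{c}\ \le\ i\,\bigl((1+\epsilon)k\bigr)^{c},
\]
where the second inequality uses $|\bigcap\tau|\le i$, which holds by $\cmiwidth{c}{H}\le i$ since the $c$ edges of $\tau$ are pairwise distinct, and the third follows by bounding $\sum_{\tau}\prod_{e\in\tau}\gamma'(e)$ (over ordered $c$-tuples of distinct light edges) by $\bigl(\sum_{e\ \text{light}}\gamma'(e)\bigr)^{c}$. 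Hence $|U|\le i\bigl((1+\epsilon)k\bigr)^{c}/\beta$, a constant depending only on $k$, $c$, $i$ and $\epsilon$ (note that $\theta$, and hence $\beta$, depends only on $c$ and $\epsilon$).

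It remains to assemble a $q$-set, which is the delicate part. Every vertex of $B(\gamma')$ either lies in one of the boundedly many heavy edges or belongs to $U$; since $H$ is closed under intersection and contains all edges of size $1$, one can describe a set $S'$ with $S\subseteq B(\gamma)\subseteq B(\gamma')\subseteq S'$ as a union of $q$ edges of $H$, where $q$ is built from $c(1+\epsilon)k$, the bound on $|U|$, and $O(1)$. One cannot simply take $S'$ to be the union of the heavy edges, because covering them in full could cost up to $\Theta(ck)$; instead $S'$ is kept inside $B(\gamma')$, so that $\gamma'$ itself witnesses $\rho^*(S')\le\rho^*(B(\gamma'))\le\weight(\gamma')\le(1+\epsilon)k$. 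I expect the main obstacle to be exactly this simultaneous demand: $S'$ must be a bounded union of edges \emph{and} remain fractionally coverable with weight $\le(1+\epsilon)k$, which requires tracking how the light edges feed vertices of the heavy edges and using $H=H^{\cap}$ to confine their net effect to the $O(1)$-size set bounded above. Via Lemma~\ref{lem:fhwreal} (and Lemma~\ref{lem:bmipclosure} to pass to $H^{\cap}$), this lemma then delivers Theorem~\ref{theo:fhwbmip}.
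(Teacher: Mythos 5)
Your first two steps (discretization and the bound on the set $U$ of covered vertices lying in no heavy edge) are sound and the counting argument for $|U|$ is essentially the paper's, but the proof has a genuine gap exactly where you flag it: the assembly of the $q$-set. Your partition of $B(\gamma')$ into ``lies in some heavy edge'' versus ``lies in no heavy edge'' leaves a problematic middle class unhandled. A vertex $v$ can lie in a heavy edge $e$ (so $v\notin U$ and your cardinality bound does not apply to it) while still receiving most of its weight from light edges; the set of such vertices inside $e$ is an arbitrary subset of $e$, not expressible as a bounded union of edges of $H=H^{\cap}$, and you cannot add all of $e$ to $S'$ without losing the bound $\rho^*(S')\le k(1+\epsilon)$. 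Declaring this ``the main obstacle'' and gesturing at $H=H^{\cap}$ does not resolve it, and with your thresholds it cannot be resolved: membership in a heavy edge gives no lower bound on the light weight a vertex receives, so neither Lemma~\ref{lem:bmipcomb} nor Lemma~\ref{lem:boundsupp} applies to these vertices.

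The paper's fix is to threshold on the \emph{weight received from heavy edges} rather than on membership. With $\Eh=\{e:\gamma(e)\ge\epsilon/(4c)\}$ (so $|\Eh|\le 4ck/\epsilon$), let $X$ be the vertices with $\gamma|_{\Eh}$-weight at least $1-\epsilon/2$. After scaling $\gamma$ by $(1+\epsilon)$, every $v\in X$ is \emph{fully} covered by heavy edges, since $(1+\epsilon)(1-\epsilon/2)\ge 1$ for $\epsilon\le 1$; hence $X\subseteq B(\gamma'|_{\Eh})$, and Lemma~\ref{lem:boundsupp} (taking intersections over the full subsets of the heavy support, which are edges because $H$ is intersection-closed) exhibits $B(\gamma'|_{\Eh})$ as a $2^{4ck/\epsilon}$-set contained in $B(\gamma')$. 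Every remaining vertex of $S$ receives more than $\epsilon/2$ of its weight from light edges, so your counting argument goes through with $w=\epsilon/2$ and $\delta=\epsilon/(4c)$, bounding their number by $i(4k/\epsilon)^{c}$; these are added as singleton edges. Taking $S'=B(\gamma'|_{\Eh})\cup(S\setminus X)$ gives a $q$-set inside $B(\gamma')$, whence $\rho^*(S')\le\weight(\gamma')\le k(1+\epsilon)$. This weight-based split, together with the use of the $\epsilon$-scaling to promote ``almost heavy-covered'' to ``heavy-covered'', is the missing idea.
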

\begin{proof}
  Let $\gamma$ be a fractional edge cover of some $S \subseteq V(H)$
  with $weight(\gamma) \leq k$. Similar to the proof for \rec{FHD,\,k}
  under BIP, we will partition the edges with non-zero weight into heavy and light edges. This time the
  threshold for heavy edges is chosen to be $\epsilon/(4c)$, i.e., the sets $\El$ and $\Eh$ are defined as follows:

\smallskip

$\El = \{ e \in  \cov(\gamma) \mid \gamma(e) < \epsilon/(4c)\}$,

$\Eh = \{ e \in \cov(\gamma) \mid \gamma(e) \geq \epsilon/(4c)\}$.

\medskip

Let $X$ be the set of all vertices for which the weight of
$\gamma|_{\Eh}$ is at least $1-\epsilon/2$ and let $S' = S \cup X$.
We will show that $S'$ has the desired properties by showing that $X$ and $S\setminus X$ are 
$q_1$ and $q_2$ sets for appropriately chosen $q_1$ and $q_2$, both of which depend only on $k$, $c$, $i$, and $\epsilon$.

First, we can increase the weight of each edge in $\gamma$ by the factor 
$(1+\epsilon)$ to obtain a new $\gamma'$ with $S' \subseteq B(\gamma')$
and $weight(\gamma')\leq k(1+\epsilon)$. Now, in $\gamma'$, $X$ is fully covered by the 
edges in $\Eh$, i.e., $X \subseteq B(\gamma'|_{\Eh})$. This is due to the fact that
$(1+\epsilon) (1-\epsilon/2) = 1 + \epsilon/2 - \epsilon^2/2 \geq 1$ holds by the assumption 
$0 < \epsilon \leq 1$, i.e., if $\gamma|_{\Eh}$ put at least $1-\epsilon/2$ weight on $X$, then $\gamma'|_{\Eh}$ will put at least weight 1 on $X$.
Since we are assuming $\weight(\gamma) \leq k$ and $\gamma(e)  \geq \epsilon/(4c)$ for every $e \in \Eh$, 
there can be no more than $4ck/\epsilon$ edges in $\Eh$. From
Lemma~\ref{lem:boundsupp} we then know that $X$ is a
$\left(2^{4ck/\epsilon}\right)$-set. 

To complete the proof, we show that $S\setminus X$ contains at most
$f(c,k,i, \epsilon)$ vertices and can therefore be represented as a
union of that many edges of size 1. By the definition of $X$, the vertices in $S \setminus X$
need to receive at least $\epsilon/2$ weight from edges in $\El$. As every light edge contributes 
$\leq \epsilon/(4c)$ weight, there need to be at least $2c$ light edges incident to every vertex in $S \setminus X$.

We proceed with a counting argument.
Imagine a bipartite graph $G = (S \setminus X, T, E(G))$ where $T$ is the set of all $c$-tuples of distinct light edges.
There is an edge from $v \in S \setminus X$ to $(e_1, \dots, e_c) \in T$ iff $v$ is in $e_1 \cap \cdots \cap e_c$.
Furthermore, we assign weight $\prod_{j=1}^c \gamma(e_j)$ to every edge in $E(G)$ 
incident to a $(e_1, \dots, e_c) \in T$.

\sloppy
We now count the total weight in $G$ from both sides. First observe that on the $T$ side, we have degree at most $i$ because $\cmiwidth{c}{H}\leq i$. Therefore, the total weight is at most $i \cdot \sum_{(e_1, \dots, e_c) \in T} \prod_{j=1}^c \gamma(e_j)$.
Observe that $\sum_{(e_1, \dots, e_c) \in T} \prod_{j=1}^c \gamma(e_j) \leq \left(\sum_{e_1 \in \El} \gamma(e_1)\right) \cdot \cdots \cdot \left(\sum_{e_c \in \El} \gamma(e_c)\right)$ as, by distributivity, all the terms of the left-hand side sum are also present on the right-hand side of the inequality. Furthermore, we have $\sum_{e \in \El} \gamma(e) \leq k$ and thus, by putting it all together, we see that the total weight in $G$ is at most $k^ci$.

From the $S \setminus X$ side, consider an arbitrary vertex $v \in S \setminus X$ and let $e_1, \dots, e_n$ be the
light edges in $E(H)$ containing $v$.  As $v$ is not in $X$, the heavy edges of $\gamma$ contribute 
less than $1-\epsilon/2$ weight to $v$. Hence, we have $\sum_{j=1}^n\gamma(e_j) \geq \epsilon/2$ and $\gamma(e_j)< \epsilon/(4c)$ for each $j \in [n]$. We can
apply Lemma~\ref{lem:bmipcomb} for $X = \{\gamma(e_1), \dots, \gamma(e_n)\}$, $\delta = \epsilon/(4c)$, and $w = \epsilon/2$ to get the inequality
$\sum \gamma(e_{j_1}) \cdot \, \cdots \, \cdot \gamma(e_{j_c})  
\geq (\epsilon/2 - c\epsilon/(4c))^c = (\epsilon/4)^c$, where
the sum ranges over all $c$-tuples 
$(e_{j_1}, \dots, e_{j_c})$
of distinct edges from  $\{e_1, \dots, e_n\}$, which denote the 
light edges in $E(H)$ containing $v$. 

We conclude that $v$ (now considered as a vertex in $G$) is incident to edges whose total weight is 
$\geq (\epsilon/4)^c$ in $E(G)$.
Since we have seen above that the total weight of all edges in $E(G)$  is $\leq k^ci$,
there can be no more than $i(4k/\epsilon)^c$ vertices in $S \setminus X$.
\end{proof}

\emph{Proof of Theorem~\ref{theo:fhwbmip}}:
We proceed analogously to the proofs of the two tractability results for $\fhw$ in Section~\ref{sect:fhd}.
Recall that, for every $\HH \in \classC$, we are assuming  $\cmiwidth{c}{\HH}\leq i$ for constants $c$ and $i$.
Let $\HH$ be a hypergraph in $\classC$ and let $\HH'$ be the new hypergraph obtained by computing $H^\cap$ and adding every edge of size 1. As argued before, $\fhw(\HH') = \fhw(\HH)$ and, by Lemma~\ref{lem:bmipclosure}, we can compute $\HH'$ in polynomial time.

From the combination of Lemma~\ref{lem:bmip} with Lemma~\ref{lem:fhwreal} we have that there exists a polynomial time computable set $\mathbf{S}$ with $\blockreal{\mathbf{S}} \neq  \emptyset$ if and only if 
$\fhw(\HH') \leq k (1+\epsilon)$. We can then use Theorem~\ref{thm:frameworkalg} to decide $\blockreal{\mathbf{S}}\neq \emptyset$ and, in consequence, $\fhw(\HH') \leq k (1+\epsilon)$ and, therefore, also 
$\fhw(\HH)\leq k(1+\epsilon)$.
\qed

\medskip
\noindent
{\bf A polynomial time approximation scheme for finding optimal FHDs.}
Recall the definition of the \boundedopt\ problem from Section~\ref{sect:introduction}, i.e.: given a hypergraph $H$, we are interested in $\fhw(H)$,
but only if $\fhw(H) \leq K$. 
We will now show that, with the algorithm from Theorem~\ref{theo:fhwbmip} at our disposal, we are able to give a polynomial time approximation scheme for 
the bounded optimization problem. More precisely, we aim at an approximation algorithm with the following properties: 

\begin{definition}[PTAS \cite{DBLP:books/lib/Ausiello99,DBLP:books/daglib/0004338}]
 \label{def:PTAS}
 Let $\Pi$ be an (intractable) minimization problem with positive objective function $f_\Pi$. 
 An algorithm \texttt{Alg} is called an {\it approximation scheme} for $\Pi$ if on input $(I,\epsilon)$, where $I$ is an instance of $\Pi$ and $\epsilon > 0$ is an error parameter, it outputs a solution $s$ such that:
   \[ f_\Pi(I,s) \leq (1+\epsilon) \cdot f_\Pi(I,s^*) \]
 where $s^*$ is an optimal solution of $I$, i.e. for all other solutions $s'$ of $I$ it holds that $f_\Pi(I,s^*) \leq f_\Pi(I,s')$.
 
 \texttt{Alg} is called a \emph{polynomial time approximation scheme} (PTAS), if for every fixed $\epsilon > 0$, its running time is bounded by a polynomial in the size of instance $I$.
\end{definition}

We now show that, in case of the BMIP, the 
$K$-\textsc{Bounded-FHW-Optimization} problem indeed allows for a PTAS:

\begin{algorithm}[t]
\SetKwData{Left}{left}\SetKwData{This}{this}\SetKwData{Up}{up}
\SetKwFunction{Union}{Union}\SetKwFunction{FindCompress}{FindCompress}
\SetKw{not}{not}
\SetKwData{N}{N}

\SetKwInput{Output}{output}
\SetKwInput{Input}{input}

\Input{Hypergraph $H$ with $\cmiwidth{c}{H} \leq i$, numbers $K \geq 1$, $\epsilon \geq 0$}
\Output{Approximation of $\fhw(H)$, i.e.,  FHD $\mcF$ with $\width(\mcF) \leq \fhw(H) + \epsilon$ if $\fhw(H) \leq K$}
\BlankLine
\tcc{Check upper bound}
\If{\not ($\mcF =$ \findFHD ($H$, $K$, $\epsilon$, $c$, $i$))}{
   \Return{\tt fails} \tcc*[r]{$\fhw(H) > k$}
}
\BlankLine
\tcc{Initialization}
$L = 1$\;
$U = K + \epsilon$\;
$\epsilon'$ = $\epsilon / 3$\;
\BlankLine
\tcc{Main computation}
\Repeat{$U-L<\epsilon$}{
   \If{$\mcF' =$ \findFHD ($H$, $L + (U-L)/2$, $\epsilon'$, $i$)}{
      $U = L + (U-L)/2 +\epsilon'$\;
      $\mcF = \mcF'$\;
   }
   \Else{
      $L = L + (U-L)/2$\;
   }
}
\Return{$\mcF$}\;
\caption{{\tt FHW-Approximation}} \label{alg:ptaas}
\end{algorithm}%

\begin{theorem}\label{theo:PTAS}
For every hypergraph class $\classC$ that enjoys the BMIP, there exists a PTAS for the \boundedopt\ problem.
\end{theorem}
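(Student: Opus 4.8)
The plan is to observe that essentially all the heavy lifting is already done by Theorem~\ref{theo:fhwbmip}, and that the PTAS for the \boundedopt\ problem is just a binary search wrapped around its algorithm, exactly as in Algorithm~\ref{alg:ptaas}. Concretely, for a class $\classC$ satisfying the BMIP we fix constants $c,i$ with $\cmiwidth{c}{H}\le i$ for every $H\in\classC$, and we use the algorithm of Theorem~\ref{theo:fhwbmip} -- which we call \findFHD, as in Algorithm~\ref{alg:ptaas} -- as an \emph{approximate decision oracle}: on input $(H,w,\delta)$ (with $w,\delta$ constants) it should either return an FHD of $H$ of width at most $w+\delta$, or report failure; and since by Theorem~\ref{theo:fhwbmip} it always succeeds when $\fhw(H)\le w$, a failure certifies $\fhw(H)>w$. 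I would obtain the additive slack $\delta$ from the multiplicative guarantee of Theorem~\ref{theo:fhwbmip} by calling that algorithm with $k:=w$ and error parameter $\delta/w$; in our application $w$ will always lie in the fixed interval $[1,K+\epsilon]$ and $\delta$ will be the fixed constant $\epsilon/3$, so $\delta/w$ is bounded below by a positive constant and each such call runs in time polynomial in $||H||$.

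Given this oracle, I would run the binary search of Algorithm~\ref{alg:ptaas}. First call \findFHD$(H,K,\epsilon)$: if it fails, output ``$\fhw(H)>K$'' and stop (correct, since failure certifies $\fhw(H)>K$); otherwise we hold an FHD of width $\le K+\epsilon$. Now maintain an interval $[L,U]$, initialised to $[1,K+\epsilon]$, with the two invariants (i)~$L\le\fhw(H)$ -- valid initially because $\fhw(H)\ge1$ always -- and (ii)~the FHD $\mcF$ currently in hand has width $\le U$. In each round put $m:=L+(U-L)/2$ and call \findFHD$(H,m,\epsilon')$ with $\epsilon':=\epsilon/3$: on success replace $\mcF$ by the returned FHD and set $U:=m+\epsilon'$ (invariant~(ii) survives, as the new FHD has width $\le m+\epsilon'$); on failure set $L:=m$ (invariant~(i) survives, as failure certifies $\fhw(H)>m$). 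Iterate until $U-L<\epsilon$ and return $\mcF$.

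For the running time I would track the interval length $\ell:=U-L$: a failure round sends $\ell$ to $\ell/2$, a success round sends it to $\ell/2+\epsilon'=\ell/2+\tfrac{1}{3}\epsilon$, so in either case the quantity $\ell-\tfrac{2}{3}\epsilon$ is at least halved, whence after $t$ rounds $\ell\le\tfrac{2}{3}\epsilon+(\ell_0-\tfrac{2}{3}\epsilon)\,2^{-t}$ with $\ell_0=K+\epsilon-1$; thus the loop halts after $t=O(\log(K/\epsilon))$ rounds, which for a constant $K$ and a fixed $\epsilon$ is $O(1)$. Each round is a single polynomial-time call to \findFHD with a bounded bound $m\le K+\epsilon$, so the whole procedure is polynomial in $||H||$ for fixed $\epsilon$. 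For correctness of the approximation ratio, at termination the invariants give an FHD of width $\le U<L+\epsilon\le\fhw(H)+\epsilon\le(1+\epsilon)\,\fhw(H)$, using $\fhw(H)\ge1$; this is exactly the guarantee of Definition~\ref{def:PTAS} with the width of the returned FHD as objective function.

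I do not expect a real obstacle -- Theorem~\ref{theo:fhwbmip} is the hard part and is already in place. The only thing needing care is the bookkeeping of the binary search: every successful round re-injects a slack of $\epsilon'$ into $U$, so one has to verify simultaneously that the interval still contracts geometrically (so the loop terminates quickly) \emph{and} that this accumulated slack never pushes $U$ beyond $(1+\epsilon)\fhw(H)$; the choice $\epsilon'=\epsilon/3$ together with the lower bound $L\ge1$ is precisely what makes both bounds go through. A minor caveat, standard for a bounded optimisation problem, is the ``grey zone'' $K<\fhw(H)\le K+\epsilon$: there the first call to \findFHD may return an FHD instead of the verdict ``$\fhw(H)>K$'', but the FHD finally output is still within factor $1+\epsilon$ of $\fhw(H)$, so this needs no extra handling.
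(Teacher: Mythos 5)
Your proposal is correct and follows essentially the same route as the paper: the same binary search with the algorithm of Theorem~\ref{theo:fhwbmip} as an approximate decision oracle, the same choice $\epsilon'=\epsilon/3$, the same invariants, and the same final bound $\width(\mcF)<\fhw(H)+\epsilon\le(1+\epsilon)\fhw(H)$ via $\fhw(H)\ge1$. The only cosmetic difference is your convergence bookkeeping (halving $U-L-\tfrac{2}{3}\epsilon$) versus the paper's explicit geometric series, which yield the same $O(\log(K/\epsilon))$ iteration bound.
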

\begin{proof}
By Theorem \ref{theo:fhwbmip}, there exists a function \findFHD ($H$, $k$, $\epsilon$, $c$, $i$) with the following properties: 

\begin{itemize}
\item \findFHD\  takes as input a hypergraph $H$ with $\cmiwidth{c}{H} \leq i$ and numbers $k \geq 1$, $\epsilon \geq 0$; 
\item \findFHD returns an FHD $\mcF$ of $H$ of width $\leq k+\epsilon$ if $\fhw(H) \leq k$ 
  and \texttt{fails} otherwise (i.e., $\fhw(H) > k$ holds). Note that finding a FHD of width $\leq k + \epsilon$ is equivalent
  to finding an FHD of width $\leq k(1+\epsilon')$ for $\epsilon' = \epsilon/k$.
\item \findFHD\  runs in time polynomial in the size of $H$, where $k$, $\epsilon$, $c$ and $i$ are considered 
as~fixed.
\end{itemize}
Then we can construct Algorithm \ref{alg:ptaas} ``{\tt FHW-Approximation}'', which uses \findFHD\ as subprocedure. We claim that 
{\tt FHW-Approximation} is indeed a PTAS for the \boundedopt\ problem. First we argue that the algorithm is correct; we will 
then also show its polynomial-time upper bound. 

As for the correctness, note that the algorithm first checks
if $K$ is indeed an upper bound on $\fhw(H)$. This is done via a call of function \findFHD. 
If the function call {\tt fails}, then we know that 
$\fhw(H) > K$ holds. Otherwise, we get an FHD $\mcF$ of width $\leq K + \epsilon$. In the latter case, 
we conclude that $\fhw(H)$ is in the interval $[L,U]$ with 
$L = 1$ and $U = K + \epsilon$.

The loop invariant for the repeat loop is, that $\fhw(H)$ is in the interval $[L,U]$ and 
the width of the FHD $\mcF$ is $\leq U$. 
To see that this invariant is preserved by every iteration of the loop, consider the 
function call \findFHD ($H$, $L + (U-L)/2$, $\epsilon'$, $i$): 
if this call succeeds then the function returns an FHD $\mcF'$ of width $\leq L + (U-L)/2 +\epsilon'$. Hence,
we may indeed set $U = L + (U-L)/2 +\epsilon'$ and $\mcF = \mcF'$ without violating the loop invariant.
On the other hand, suppose that the call of \findFHD\ fails. This means that 
$\fhw(H) > L + (U-L)/2$ holds. Hence,
we may indeed update $L$ to $L + (U-L)/2$ and the loop invariant still holds.
The repeat loop terminates when $U - L < \epsilon$ holds. 
Hence, together with the loop invariant, we conclude that, on termination, 
$\mcF$ is an FHD of $H$ with $\width(\mcF) - \fhw(H) < \epsilon$ or, equivalently, 
$\width(\mcF) < \fhw(H) + \epsilon$.
Note that we may assume w.l.o.g., that $\fhw(H) \geq 1$. Hence, $\fhw(H) + \epsilon \leq \fhw(H) (1+\epsilon)$ holds
and, therefore, also $\width(\mcF) < \fhw(H) (1+ \epsilon)$. 

It remains to show that algorithm {\tt FHW-Approximation} runs in polynomial time w.r.t.\ the size of $H$. 
By Theorem \ref{theo:fhwbmip}, the function \findFHD\ works in polynomial time w.r.t.\ $H$. 
We only have to show that the number of iterations of the repeat-loop is bounded by a polynomial in $H$. 
Actually, we even show that it is bounded by a constant (depending on $K$ and $\epsilon$, but not on $H$): 
let $K' := K + \epsilon - 1$. Then the size of the interval $[L,U]$ initially is $K'$. 
In the first iteration of 
the repeat-loop, we either set 
$U = L + (U-L)/2 +\epsilon'$ or $L = L + (U-L)/2$ holds. In either case, 
at the end of this iteration, we have $U - L \leq K' / 2 + \epsilon'$. 
By an easy induction argument, it can be verified that after $m$ iterations (with $m \geq 1$), 
we have 
$$U - L \leq \frac{K'}{2^m} + \epsilon' + \frac{\epsilon'}{2} + \frac{\epsilon'}{2^2} + \dots + \frac{\epsilon'}{2^{m-1}}
$$ 
Now set $m = \lceil \log (K' / \epsilon') \rceil$. Then we get
$$
\frac{K'}{2^m}  \leq \frac{K'}{2^{\log (K' / \epsilon')}} =
\frac{K'}{(K' / \epsilon')} = \epsilon'.$$
Moreover, $\epsilon' + \frac{\epsilon'}{2} + \frac{\epsilon'}{2^2} + \dots + \frac{\epsilon'}{2^{m-1}} < 2 \epsilon'$
for every $m \geq 1$.
In total, we thus have that, after  $m = \lceil \log (K' / \epsilon')\rceil$ iterations of the repeat loop, 
$U - L < 3 \epsilon' = \epsilon$ holds, i.e., the loop terminates. 
\end{proof}

\subsection{Approximation of $\fhw$ Under Bounded VC Dimension}
\label{sect:fhw-bmip}

We now present a polynomial-time approximation of the $\fhw$ for
classes of hypergraphs enjoying bounded Vapnik-Chervonenkis (VC)
dimension. This bounded VC dimension is yet again more general property than BMIP but the approximation guarantees in this class will be slightly weaker.
We will combine some classical results 
on the VC dimension with some novel observations.
This will yield an approximation of the $\fhw$ 
up to a logarithmic factor. We first recall the definition of the 
VC-dimension of hypergraphs.

\begin{definition}[\cite{1972sauer,1971vc}]
\label{def:vc}
Let $\HH=(V(H),E(H))$ be a hypergraph and $X\subseteq V(H)$ a set of vertices. 
Denote by $E(H)|_X$ the set 
$E(H)|_X =\{X \cap e\, |\, e\in E(H)\}$. 
The vertex set $X$ is called {\em shattered} if 
$E(H)|_X=2^X$.
The {\em Vapnik-Chervonenkis dimension (VC dimension) $\vc(\HH)$} of $\HH$ is 
the maximum cardinality of a shattered subset of $V(H)$. 
\end{definition}

We now provide a link between the VC-dimension and 
our approximation of the $\fhw$.

\begin{definition}
\label{def:transversality}
Let $H = (V(H),E(H))$ be a hypergraph. A {\em transversal}  (also known as {\em 
hitting set\/})
of $H$ is a subset $S \subseteq V(H)$ that has a non-empty intersection with 
every edge of $H$. 
The {\em transversality} $\tau(H)$  of $H$  is the 
minimum cardinality of all transversals of $H$.

Clearly, $\tau(H)$ corresponds to the minimum of the following integer linear 
program: 
find a mapping $w: V\rightarrow \{0,1\}$ 
which minimizes $\Sigma_{v\in V(H)}w(v)$ under the condition that
$\Sigma_{v\in e}w(v)\geq 1$ holds for each hyperedge $e\in E$.

The {\em fractional transversality}  $\tau^*$ of $H$ is defined as the minimum 
of
the above linear program when dropping the integrality condition,
thus allowing mappings $w: V\rightarrow \mathbb{R}_{\geq 0}$.
Finally, the {\em transversal integrality gap} $\tigap{\HH}$ of $\HH$ is the 
ratio $\tau(\HH)/\tau^*(\HH)$.
\end{definition}

Recall that computing the mapping $\lambda_u$ for 
some 
node $u$ in a GHD can be seen as searching for a minimal edge cover $\rho$ of 
the vertex set $B_u$, whereas computing 
$\gamma_u$ in an FHD 
corresponds to the search for a minimal fractional edge cover $\rho^*$
\cite{2014grohemarx}. Again, 
these problems can be cast as linear programs where the first problem has the 
integrality condition and the second one has not. 
Further, we can define the {\em cover integrality gap} $\cigap{H}$ of $H$ as 
the 
ratio $\rho(\HH)/\rho^*(\HH)$.
With this, we state the following approximation result for 
$\fhw$.

\newcommand{\thmApproxVC}{%
Let $\classC$ be a class of hypergraphs with VC-dimension bounded by some 
constant $d$
and let $k \geq 1$.
Then there exists a polynomial-time algorithm that, given a hypergraph $H \in 
\classC$ with 
$\fhw(H) \leq k$, finds an FHD of $H$ of width $\calO(k \cdot \log k)$.%
}

\begin{theorem}\label{theo:ApproxVC}
\thmApproxVC
\end{theorem}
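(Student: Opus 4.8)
\emph{Plan of the proof.} The idea is to avoid constructing an FHD directly and instead reduce the task to the \emph{exact} computation of a hypertree decomposition of constant width, for which a polynomial algorithm is already available from \cite{2002gottlob}. This works because bounded VC-dimension keeps the integrality gap between fractional and integral edge covers logarithmic. Concretely, I would carry out three steps: (i) bound the cover integrality gap of hypergraphs of bounded VC-dimension by $O(1+\log\rho^*)$; (ii) deduce the purely combinatorial inequality $\ghw(H)=O(\fhw(H)\log\fhw(H))$; and (iii) turn this into an algorithm using the bound $\hw(H)\le 3\,\ghw(H)+1$ of \cite{DBLP:journals/ejc/AdlerGG07} together with the polynomial-time algorithm for \rec{HD,\,$w$} of \cite{2002gottlob}.

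\emph{Step (i): logarithmic integrality gap.} Two classical facts about the VC-dimension are needed. First, $\vc$ does not increase when passing to a vertex-induced subhypergraph, so $\vc(H[S])\le\vc(H)$ for every $S\subseteq V(H)$. Second, the dual VC-dimension is bounded in terms of the primal one, $\vc(H^d)\le 2^{\vc(H)+1}$. Next, by the classical $\epsilon$-net theorem of Haussler and Welzl, a hypergraph $G$ of VC-dimension $d'$ has, for every $\epsilon>0$, a transversal of all its $\epsilon$-heavy edges of size $O\big((d'/\epsilon)\log(1/\epsilon)\big)$; scaling an optimal fractional transversal to a probability distribution and setting $\epsilon=1/\tau^*(G)$ yields $\tau(G)=O\big(d'\,\tau^*(G)(1+\log\tau^*(G))\big)$, i.e.\ $\tigap{G}=O(1+\log\tau^*(G))$ for fixed $d'$. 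Combining this with the dualities $\rho(H)=\tau(H^d)$ and $\rho^*(H)=\tau^*(H^d)$ and the two monotonicity facts, we obtain: for every hypergraph $H$ with $\vc(H)\le d$ and every $S\subseteq V(H)$, there is an \emph{integral} edge cover of $S$ in $H$ of weight $O\big(2^{d+1}\cdot\rho^*(H[S])\,(1+\log\rho^*(H[S]))\big)$, where $\rho^*(H[S])$ denotes the fractional edge-cover number of $S$ in $H$.

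\emph{Steps (ii)--(iii): from the gap bound to the algorithm.} Let $\mcF=\left<T,(B_u)_{u\in T},(\gamma_u)_{u\in T}\right>$ be an optimal FHD of $H$, so $\width(\mcF)=\fhw(H)=:k$. Regarded as a tree decomposition, every bag satisfies $\rho^*(H[B_u])\le\weight(\gamma_u)\le k$, hence by Step~(i) it admits an integral edge cover $\lambda_u$ with $\weight(\lambda_u)\le c_0\,k(1+\log k)$ for an explicit constant $c_0=c_0(d)$. Replacing each $\gamma_u$ by $\lambda_u$ turns $\mcF$ into a GHD of width $\le c_0\,k(1+\log k)$, so $\ghw(H)\le c_0\,k(1+\log k)$. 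By \cite{DBLP:journals/ejc/AdlerGG07}, $\hw(H)\le 3\,\ghw(H)+1\le 3c_0\,k(1+\log k)+1$. Put $w_0=\lceil 3c_0\,k(1+\log k)+1\rceil$; since $k$ and $d$ are constants, so is $w_0$. Now run the polynomial-time algorithm for \rec{HD,\,$w_0$} of \cite{2002gottlob} on $H$ (this algorithm needs no assumption on $H$). As $\hw(H)\le w_0$, it returns an HD of $H$ of width $\le w_0=O(k\log k)$; setting $\gamma_u:=\lambda_u$ at every node turns this HD into an FHD of width $\le w_0=O(k\log k)$, and the whole procedure runs in polynomial time.

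\emph{Main obstacle.} The crux is Step~(i): one must import the $\epsilon$-net machinery correctly and verify that both restriction to a vertex subset and dualisation preserve boundedness of the VC-dimension, so that the logarithmic integrality gap is available uniformly for all bags. A secondary technical point is to make the constant $c_0$ (and hence the width $w_0$ fed to the HD algorithm) explicit, which is needed for $w_0$ to be computable. The rest is routine; the key conceptual trick is to route the construction through \rec{HD,\,$w$}, which is tractable, rather than through \rec{GHD,\,$w$}, which is NP-hard.
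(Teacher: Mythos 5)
Your proposal is correct and follows essentially the same route as the paper: establish a logarithmic cover integrality gap for bounded VC-dimension via the duality $\rho^{*}(H)=\tau^{*}(H^{d})$ and the bound $\vc(H^{d})<2^{\vc(H)+1}$, conclude $\ghw(H)=O(k\log k)$ by rounding each bag's fractional cover to an integral one, and then obtain an actual decomposition in polynomial time through $\hw(H)\le 3\,\ghw(H)+1$ and the tractable \rec{HD,\,$w_0$} algorithm. The only difference is cosmetic: you derive the integrality gap from the Haussler--Welzl $\epsilon$-net theorem, whereas the paper cites the Ding--Seymour--Winkler bound $\tau/\tau^{*}\le 2\vc\log(11\tau^{*})$ directly; both yield the same $O(\log\rho^{*})$ gap, and your explicit attention to the induced-subhypergraph monotonicity of $\vc$ and to the computability of the constant $w_0$ is a welcome, if minor, addition.
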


\newcommand{\ggnew}[1]{#1}
\newcommand\hcancel[2][black]{\setbox0=\hbox{$#2$}%
\rlap{\raisebox{.45\ht0}{\textcolor{#1}{\rule{\wd0}{2pt}}}}#2} 
\newcommand{\ggkill}[1]{\hcancel[violet]{#1}}

\begin{proof}
The proof proceeds in several steps.

\smallskip

\noindent
{\em Reduced hypergraphs.} 
Recall the notion of reduced hypergraphs from Section~\ref{sect:fhw-bd}. There, it was established that we can assume, w.l.o.g., that every hypergraph is reduced. For the rest of this proof we therefore consider only reduced hypergraphs. This ensures that 
$(H^d)^d = H$  holds.
It is well-known and easy to verify that the following relationships
between $H$ and $H^d$ hold for any reduced hypergraph $H$,  
(see, e.g., \cite{duchet1996hypergraphs}):

\smallskip

(1) The edge coverings of $\HH$ and the transversals of $\HH^d$
coincide.

(2) The fractional %
edge coverings of $\HH$ and the fractional transversals of 
$\HH^d$
coincide.

(3) $\rho(\HH)=\tau(\HH^d)$, $\rho^*(\HH)=\tau^*(\HH^d)$, and
$\cigap{\HH}=\tigap{\HH^d}$.

\smallskip

\noindent
{\em VC-dimension.} By a classical result (\cite{ding1994} Theorem (5.4), see also \cite{Bronnimann1995} for related results), for every 
hypergraph $H = (V(H),E(H))$  \ggnew{with at least two edges} we have:
$$\tigap{H}= \tau(H)/\tau^*(H) \leq 2\vc(H)\log(11\tau^*(H)).$$
\ggnew{For hypergraphs $H$ with a single edge only, $\vc(H)=0$, and thus the above inequation does not hold. However, for such hypergraphs 
$\tau(H)=\tau^*(H)=1$. By putting this together, we get:}  
\ggnew{$$\tigap{H}= \tau(H)/\tau^*(H) \leq \max(1,2\vc(H)\log(11\tau^*(H))).$$}
Moreover, in \cite{assouad1983}, it is shown that $\vc(\HH^d)<2^{\vc(\HH)+1}$ 
always
holds.
In total, we thus get 

\begin{align*}
\cigap{\HH}=\tigap{\HH^d} & \leq \ggnew{\max(1,}2\vc(\HH^d)\log(11\tau^*(\HH^d))\ggnew{)} \\
& \leq \ggnew{\max(1,}2^{\vc(\HH)+2}\log(11\rho^*(\HH))\ggnew{)}\\
& \ggnew{\leq \max(1,2^{d+2}\log(11\rho^*(\HH)))},\ 
\ggnew{\mbox{which is\ } O(\log\rho^*(H))}.
\end{align*}

\noindent
{\em Approximation of $\fhw$ by $\ghw$.} 
Suppose that $H$ has an FHD $\left< T, (B_u)_{u\in V(T)}, (\lambda)_{u\in V(T)} 
\right>$
of width $k$. Then there exists a GHD of $H$ of width %
$\calO(k \cdot \log k)$. Indeed, we can find such a GHD by leaving the 
tree 
structure $T$ and the bags $B_u$ for every node $u$ in $T$ unchanged and 
replacing each fractional edge cover $\gamma_u$ of $B_u$ by an optimal integral 
edge cover $\lambda_u$ of $B_u$. By the above inequality, we thus increase the 
weight at each node $u$ only by a factor $\calO(\log k)$. Moreover, we know 
from 
\cite{DBLP:journals/ejc/AdlerGG07} that $\hw(H) \leq 3 \cdot \ghw(H) + 1$ holds.
In other words, we can compute an HD of $H$ (which is a special case of an FHD) 
in polynomial time,  whose width is $\calO(k \cdot\log k)$.
\end{proof}

To conclude, the following
Lemma~\ref{lemma:BMIPvsVC} establishes a relationship between BMIP and VC-dimension.

\newcommand{\lemBMIPvsVC}{%
If a class $\classC$ of hypergraphs has the BMIP then it has bounded 
VC-dimension.
However, there exist classes $\classC$ of hypergraphs with bounded VC-dimension 
that do not have 
the BMIP.}

\begin{lemma}\label{lemma:BMIPvsVC}
\lemBMIPvsVC
\end{lemma}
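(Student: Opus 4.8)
The plan is to prove the two halves of Lemma~\ref{lemma:BMIPvsVC} separately: first that the BMIP implies bounded VC-dimension, and then that the converse fails.

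\textbf{BMIP implies bounded VC-dimension.}
First I would assume $\classC$ has the $(c,i)$-BMIP, so every $H\in\classC$ satisfies $\cmiwidth{c}{H}\leq i$. I would argue by contraposition on the size of a shattered set: suppose $X\subseteq V(H)$ is shattered with $|X|=d$. By definition, for every subset $Y\subseteq X$ there is an edge $e_Y\in E(H)$ with $e_Y\cap X = Y$. In particular, for each vertex $v\in X$, taking $Y = X\setminus\{v\}$ yields an edge $e_{X\setminus\{v\}}$ whose trace on $X$ is exactly $X\setminus\{v\}$, i.e.\ $e_{X\setminus\{v\}}$ contains all of $X$ except $v$. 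Now pick any $c$ distinct such ``co-singleton'' edges $e_{X\setminus\{v_1\}},\dots,e_{X\setminus\{v_c\}}$ (these are pairwise distinct since their traces on $X$ differ, provided $d\geq c$). Their intersection contains $X\setminus\{v_1,\dots,v_c\}$, which has size $d-c$. Hence $\cmiwidth{c}{H}\geq d-c$, so the BMIP forces $d-c\leq i$, i.e.\ $d\leq c+i$. Therefore $\vc(H)\leq c+i$ for every $H\in\classC$, which is the claimed bound (one must separately note the trivial case $d<c$, where nothing is to prove, and the case of hypergraphs with fewer than $c$ edges, where no $c$ distinct edges exist and the bound $d\le c-1$ holds a fortiori). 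The main — and only mildly delicate — point here is checking that the $c$ chosen co-singleton edges are genuinely distinct, which holds as soon as $d\geq c$.

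\textbf{Bounded VC-dimension does not imply BMIP.}
For the separation I would exhibit an explicit family. A clean choice: for each $n$ let $H_n$ have vertex set $[n]$ and edge set consisting of all intervals $\{\,\{a,a+1,\dots,b\} \mid 1\le a\le b\le n\,\}$ (the ``interval hypergraph''). It is classical — and easy to verify directly — that interval hypergraphs have VC-dimension $2$: no $3$-element set $\{x<y<z\}$ can be shattered, since the subset $\{x,z\}$ is not the trace of any interval (any interval containing $x$ and $z$ must also contain $y$). On the other hand, the full edge $[n]$ and, say, the edges $[n]\setminus\{1\}=\{2,\dots,n\}$ and $[n]\setminus\{n\}=\{1,\dots,n-1\}$ are $c=3$ (or even $c=2$) distinct edges whose common intersection $\{2,\dots,n-1\}$ has size $n-2$, which grows without bound. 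Hence for every fixed $c\ge 2$ and every fixed $i$, some $H_n$ in the family violates $\cmiwidth{c}{H_n}\le i$, so the class $\{H_n\}$ does not have the BMIP while its VC-dimension is uniformly bounded by $2$. (If one prefers consistency with the $c$ used elsewhere, one checks the intersection of any $c$ nested intervals with a common ``core''; the argument is identical.) I do not anticipate a real obstacle in either direction; the only care needed is to state the VC-dimension claim for interval hypergraphs with a one-line justification rather than citing it as folklore, and to make sure the chosen edges are distinct and their intersection is unbounded simultaneously.
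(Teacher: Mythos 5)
Your proof is correct. The first half is essentially identical to the paper's argument: both extract, from a shattered set $X$, the $c$ edges whose traces on $X$ are the co-singletons $X\setminus\{v_j\}$, observe their intersection contains $X\setminus\{v_1,\dots,v_c\}$, and conclude $\vc(H)\le c+i$; you are slightly more careful about the degenerate cases $|X|<c$ and hypergraphs with fewer than $c$ edges, which the paper glosses over. The only real divergence is in the separating family for the second half: the paper uses $H_n$ with edge set $\{V_n\setminus\{v_j\}\mid j\in[n]\}$, for which $\vc(H_n)\le 1$ (the empty trace is unrealizable on any $2$-set), whereas you use the interval hypergraph on $[n]$, with $\vc=2$ and nested intervals witnessing unbounded $c$-multi-intersection width. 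Both witnesses are equally valid and the surrounding argument is structurally the same, so this is a cosmetic rather than substantive difference; if anything the paper's family gives a marginally cleaner VC bound, while yours has the pedagogical advantage of being a standard, recognizable class.
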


\begin{proof} \mbox{}
[BMIP $\Rightarrow$ bounded VC-dimension.]
Let $c \geq 1, i \geq 0$ and let 
$H$ be a hypergraph with $\cmiwidth{c}{\HH}\leq i$.
We claim that then $\vc(H) \leq c +  i$ holds.

Assume to the contrary that there exists a set  $X \subseteq V$, such 
that $X$ is  shattered and $|X| > c + i$. 
We pick $c$ arbitrary, pairwise distinct vertices $v_1, \dots, v_{c}$ from $X$ 
and 
define $X_j = X \setminus \{v_j\}$ for each $j$. 
Then $X = (X_1 \cap \dots \cap X_c) \cup \{v_1, \dots, v_c\}$ holds
and also $|X| \leq   |X^*| + c$ with 
$X^* \subseteq X_1 \cap \dots \cap X_c$.

Since $X$ is shattered, for each $1 \leq j \leq c$, there exists a distinct
edge $e_j \in E(H)$ with $X_j = X \cap e_j$. 
Hence, $X_j = X \setminus \{v_j\} \subseteq e_j$ and also 
$X^* \subseteq e_1 \cap e_2 \cap  \dots \cap e_c$ holds, i.e., 
$X^*$ is in the intersection of $c$ edges of $H$. 
By $\cmiwidth{c}{\HH}\leq i$, we thus get $|X^*| \leq i$.
In total, we have $|X| \leq |X^*| + c \leq i + c$, which 
contradicts our assumption that $|X| > c+i$ holds.

\smallskip
\noindent
[bounded VC-dimension $\not\Rightarrow$ BMIP.]
It suffices to exhibit a family $(H_n)_{n \in \mathbb{N}}$ 
of hypergraphs such that 
$\vc(H_n)$ is bounded whereas 
$\cmiwidth{c}{\mbox{$H_n$}}$ is unbounded for any constant $c$.
We define $H_n = (V_n,E_n)$ as follows:

\smallskip
$V_n = \{v_1, \dots, v_n\}$ 

$E_n = \{ V_n \setminus \{v_i\} \mid 1 \leq i \leq n\}$

\smallskip
\noindent
Clearly, $\vc(H_n) < 2$. Indeed, take an arbitrary set $X \subseteq V$ with 
$|X| \geq 2$. Then $\emptyset \subseteq X$ but
$\emptyset \neq X \cap e$ for any $e \in E_n$. 
On the other hand, let $c \geq 1$ be an arbitrary constant and let $X = e_{i_1} 
\cap \dots \cap e_{i_\ell}$ for some $\ell \leq c$ and edges $e_{i_j} \in E_n$. 
Obviously, $|X| \geq n - c$ holds. Hence, also $\cmiwidth{c}{\mbox{$H_n$}} \geq 
n - c$, i.e., 
it is not bounded by any constant $i \geq 0$.
\end{proof}

\noindent
In the first part of Lemma \ref{lemma:BMIPvsVC}, 
we have shown that $\vc(H) \leq c +  i$ holds.
For an approximation of an FHD by a GHD, we need
to approximate the fractional edge cover $\gamma_u$ of each bag $B_u$ 
by an integral edge cover $\lambda_u$, i.e.,
we  consider fractional vs.\ integral edge covers of the induced hypergraphs
$H_u = (B_u, E_u)$ with $E_u = \{e  \cap B_u \mid e \in E(H)\}$.
Obviously, the bound $\vc(H) \leq c +  i$ carries over to 
$\vc(H_u) \leq c +  i$.

\section{Conclusion and Future Work}
\label{sect:conclusion}
In this work,  we have settled the complexity of deciding $\fhw(H) \leq k$ for fixed 
constant $k\geq 2$ and $\ghw(H) \leq k$ for 
$k = 2$ by proving the $\NP$-completeness of both problems. 
This gives negative answers to two open problems.
On the positive side, we have identified rather 
mild restrictions such as the BDP (i.e., the bounded degree property), 
BIP (i.e., the bounded intersection property), LogBIP, 
BMIP (i.e., the bounded multi-intersection property), and LogBMIP, 
which give rise  to  a \ptime algorithm 
for the 
\rec{GHD,\,$k$} problem. Moreover, we have shown that the BDP and the BIP
ensure tractability also of the 
\rec{FHD,\,$k$} problem. For the BMIP, we have shown that an arbitrarily close approximation 
of the $\fhw$ in polynomial time exists. In case of bounded VC dimension, we have proposed a polynomial-time
algorithm for approximating the $\fhw$ up to a logarithmic factor.
As the empirical analyses reported 
in 
\cite{pods/FischlGLP19} show, these restrictions are very
well-suited for %
instances of CSPs and, even more so, of CQs. 
We believe that they deserve 
further attention. 

Our work does not finish here. We plan to explore several further issues 
regarding the computation and approximation of the fractional hypertree width. 
We find the following questions particularly appealing: (i)  Does the special 
condition defined by  Grohe and Marx~\cite{2014grohemarx} lead to tractable 
recognizability also for FHDs,  i.e., in case we 
define ``{\it sc-fhw$(H)$}'' as the smallest width an FHD of $H$ satisfying the 
special condition,  can $\mbox{\it sc-fhw}(H) \leq k$ be recognized efficiently?\ 
(ii)  Our tractability result in Section 5 for the \rec{FHD,\,$k$} problem is weaker than for \rec{GHD,\,$k$}. 
In particular, for the BMIP, we have only obtained efficient approximations of the $\fhw$. 
It is open if the BMIP suffices to ensure tractability of \rec{FHD,\,$k$}.
Note that the techniques applied in the tractability proof of \rec{FHD,\,$k$} in case of the BIP 
do not carry over in an obvious way to the BMIP. 
  An important property in our tractability proof is that 
edges with low weight can only contribute a {\em constant\/} number of vertices to $B(\gamma)$. This property crucially depends on the BIP, which implies that the additional weight put by other edges on these vertices is bounded by $ki$. 
There is no immediate analogue of this bound in case of the BMIP.
At any rate, we are currently working on an extension
of the tractability of \rec{FHD,\,$k$}  from BIP to BMIP and we conjecture that tractability indeed holds.

\begin{acks}
We are truly grateful to one of the anonymous referees, whose insightful and incredibly deep review comments have greatly helped to 
simplify the presentation and increase readability.

This work was supported by the Engineering and Physical
Sciences Research Council (EPSRC), Programme Grant EP{\slash}M025268{\slash} VADA:
Value Added Data Systems --- Principles and Architecture as well as by the Austrian
Science Fund (FWF):P30930 and Y698.
\end{acks}

\clearpage

\bibliographystyle{ACM-Reference-Format}
\bibliography{main} 

%%% -*-BibTeX-*-
%%% Do NOT edit. File created by BibTeX with style
%%% ACM-Reference-Format-Journals [18-Jan-2012].

\begin{thebibliography}{53}

%%% ====================================================================
%%% NOTE TO THE USER: you can override these defaults by providing
%%% customized versions of any of these macros before the \bibliography
%%% command.  Each of them MUST provide its own final punctuation,
%%% except for \shownote{}, \showDOI{}, and \showURL{}.  The latter two
%%% do not use final punctuation, in order to avoid confusing it with
%%% the Web address.
%%%
%%% To suppress output of a particular field, define its macro to expand
%%% to an empty string, or better, \unskip, like this:
%%%
%%% \newcommand{\showDOI}[1]{\unskip}   % LaTeX syntax
%%%
%%% \def \showDOI #1{\unskip}           % plain TeX syntax
%%%
%%% ====================================================================

\ifx \showCODEN    \undefined \def \showCODEN     #1{\unskip}     \fi
\ifx \showDOI      \undefined \def \showDOI       #1{#1}\fi
\ifx \showISBNx    \undefined \def \showISBNx     #1{\unskip}     \fi
\ifx \showISBNxiii \undefined \def \showISBNxiii  #1{\unskip}     \fi
\ifx \showISSN     \undefined \def \showISSN      #1{\unskip}     \fi
\ifx \showLCCN     \undefined \def \showLCCN      #1{\unskip}     \fi
\ifx \shownote     \undefined \def \shownote      #1{#1}          \fi
\ifx \showarticletitle \undefined \def \showarticletitle #1{#1}   \fi
\ifx \showURL      \undefined \def \showURL       {\relax}        \fi
% The following commands are used for tagged output and should be
% invisible to TeX
\providecommand\bibfield[2]{#2}
\providecommand\bibinfo[2]{#2}
\providecommand\natexlab[1]{#1}
\providecommand\showeprint[2][]{arXiv:#2}

\bibitem[\protect\citeauthoryear{Aberger, Tu, Olukotun, and R{\'{e}}}{Aberger
  et~al\mbox{.}}{2016a}]%
        {DBLP:conf/sigmod/AbergerTOR16}
\bibfield{author}{\bibinfo{person}{Christopher~R. Aberger},
  \bibinfo{person}{Susan Tu}, \bibinfo{person}{Kunle Olukotun}, {and}
  \bibinfo{person}{Christopher R{\'{e}}}.} \bibinfo{year}{2016}\natexlab{a}.
\newblock \showarticletitle{EmptyHeaded: {A} Relational Engine for Graph
  Processing}. In \bibinfo{booktitle}{\emph{Proceedings of {SIGMOD} 2016}}.
  \bibinfo{publisher}{{ACM}}, \bibinfo{pages}{431--446}.
\newblock


\bibitem[\protect\citeauthoryear{Aberger, Tu, Olukotun, and R{\'{e}}}{Aberger
  et~al\mbox{.}}{2016b}]%
        {aberger2016old}
\bibfield{author}{\bibinfo{person}{Christopher~R. Aberger},
  \bibinfo{person}{Susan Tu}, \bibinfo{person}{Kunle Olukotun}, {and}
  \bibinfo{person}{Christopher R{\'{e}}}.} \bibinfo{year}{2016}\natexlab{b}.
\newblock \showarticletitle{Old Techniques for New Join Algorithms: {A} Case
  Study in {RDF} Processing}.
\newblock \bibinfo{journal}{\emph{CoRR}}  \bibinfo{volume}{abs/1602.03557}
  (\bibinfo{year}{2016}).
\newblock
\urldef\tempurl%
\url{http://arxiv.org/abs/1602.03557}
\showURL{%
\tempurl}


\bibitem[\protect\citeauthoryear{Adler}{Adler}{2004}]%
        {adler2004marshals}
\bibfield{author}{\bibinfo{person}{Isolde Adler}.}
  \bibinfo{year}{2004}\natexlab{}.
\newblock \showarticletitle{Marshals, monotone marshals, and hypertree-width}.
\newblock \bibinfo{journal}{\emph{Journal of Graph Theory}}
  \bibinfo{volume}{47}, \bibinfo{number}{4} (\bibinfo{year}{2004}),
  \bibinfo{pages}{275--296}.
\newblock


\bibitem[\protect\citeauthoryear{Adler, Gottlob, and Grohe}{Adler
  et~al\mbox{.}}{2007}]%
        {DBLP:journals/ejc/AdlerGG07}
\bibfield{author}{\bibinfo{person}{Isolde Adler}, \bibinfo{person}{Georg
  Gottlob}, {and} \bibinfo{person}{Martin Grohe}.}
  \bibinfo{year}{2007}\natexlab{}.
\newblock \showarticletitle{Hypertree width and related hypergraph invariants}.
\newblock \bibinfo{journal}{\emph{Eur. J. Comb.}} \bibinfo{volume}{28},
  \bibinfo{number}{8} (\bibinfo{year}{2007}), \bibinfo{pages}{2167--2181}.
\newblock


\bibitem[\protect\citeauthoryear{Afrati, Joglekar, R{\'{e}}, Salihoglu, and
  Ullman}{Afrati et~al\mbox{.}}{2017}]%
        {DBLP:journals/corr/AfratiJRSU14}
\bibfield{author}{\bibinfo{person}{Foto~N. Afrati}, \bibinfo{person}{Manas
  Joglekar}, \bibinfo{person}{Christopher R{\'{e}}}, \bibinfo{person}{Semih
  Salihoglu}, {and} \bibinfo{person}{Jeffrey~D. Ullman}.}
  \bibinfo{year}{2017}\natexlab{}.
\newblock \showarticletitle{{GYM:} {A} Multiround Join Algorithm In MapReduce}.
  In \bibinfo{booktitle}{\emph{Proceedings of {ICDT} 2017}},
  Vol.~\bibinfo{volume}{68}. \bibinfo{publisher}{Schloss Dagstuhl -
  Leibniz-Zentrum fuer Informatik}, \bibinfo{pages}{4:1--4:18}.
\newblock


\bibitem[\protect\citeauthoryear{Aref, ten Cate, Green, Kimelfeld, Olteanu,
  Pasalic, Veldhuizen, and Washburn}{Aref et~al\mbox{.}}{2015}]%
        {DBLP:conf/sigmod/ArefCGKOPVW15}
\bibfield{author}{\bibinfo{person}{Molham Aref}, \bibinfo{person}{Balder ten
  Cate}, \bibinfo{person}{Todd~J. Green}, \bibinfo{person}{Benny Kimelfeld},
  \bibinfo{person}{Dan Olteanu}, \bibinfo{person}{Emir Pasalic},
  \bibinfo{person}{Todd~L. Veldhuizen}, {and} \bibinfo{person}{Geoffrey
  Washburn}.} \bibinfo{year}{2015}\natexlab{}.
\newblock \showarticletitle{Design and Implementation of the {LogicBlox}
  System}. In \bibinfo{booktitle}{\emph{Proceedings of {SIGMOD} 2015}}.
  \bibinfo{publisher}{{ACM}}, \bibinfo{pages}{1371--1382}.
\newblock


\bibitem[\protect\citeauthoryear{Assouad}{Assouad}{1983}]%
        {assouad1983}
\bibfield{author}{\bibinfo{person}{Patrick Assouad}.}
  \bibinfo{year}{1983}\natexlab{}.
\newblock \showarticletitle{Densit{\'e} et dimension}.
\newblock \bibinfo{journal}{\emph{Annales de l'Institut Fourier}}
  \bibinfo{volume}{33}, \bibinfo{number}{3} (\bibinfo{year}{1983}),
  \bibinfo{pages}{233--282}.
\newblock


\bibitem[\protect\citeauthoryear{Atserias, Grohe, and Marx}{Atserias
  et~al\mbox{.}}{2013}]%
        {DBLP:journals/siamcomp/AtseriasGM13}
\bibfield{author}{\bibinfo{person}{Albert Atserias}, \bibinfo{person}{Martin
  Grohe}, {and} \bibinfo{person}{D{\'{a}}niel Marx}.}
  \bibinfo{year}{2013}\natexlab{}.
\newblock \showarticletitle{Size Bounds and Query Plans for Relational Joins}.
\newblock \bibinfo{journal}{\emph{{SIAM} J. Comput.}} \bibinfo{volume}{42},
  \bibinfo{number}{4} (\bibinfo{year}{2013}), \bibinfo{pages}{1737--1767}.
\newblock


\bibitem[\protect\citeauthoryear{Ausiello}{Ausiello}{1999}]%
        {DBLP:books/lib/Ausiello99}
\bibfield{author}{\bibinfo{person}{Giorgio Ausiello}.}
  \bibinfo{year}{1999}\natexlab{}.
\newblock \bibinfo{booktitle}{\emph{Complexity and approximation: combinatorial
  optimization problems and their approximability properties}}.
\newblock \bibinfo{publisher}{Springer}.
\newblock
\showISBNx{3540654313}
\urldef\tempurl%
\url{http://www.worldcat.org/oclc/249492438}
\showURL{%
\tempurl}


\bibitem[\protect\citeauthoryear{Bakibayev, Kocisk{\'{y}}, Olteanu, and
  Z{\'{a}}vodn{\'{y}}}{Bakibayev et~al\mbox{.}}{2013}]%
        {BKOZ13}
\bibfield{author}{\bibinfo{person}{Nurzhan Bakibayev},
  \bibinfo{person}{Tom{\'{a}}s Kocisk{\'{y}}}, \bibinfo{person}{Dan Olteanu},
  {and} \bibinfo{person}{Jakub Z{\'{a}}vodn{\'{y}}}.}
  \bibinfo{year}{2013}\natexlab{}.
\newblock \showarticletitle{Aggregation and Ordering in Factorised Databases}.
\newblock \bibinfo{journal}{\emph{{PVLDB}}} \bibinfo{volume}{6},
  \bibinfo{number}{14} (\bibinfo{year}{2013}), \bibinfo{pages}{1990--2001}.
\newblock


\bibitem[\protect\citeauthoryear{Bonifati, Martens, and Timm}{Bonifati
  et~al\mbox{.}}{2017}]%
        {DBLP:journals/pvldb/BonifatiMT17}
\bibfield{author}{\bibinfo{person}{Angela Bonifati}, \bibinfo{person}{Wim
  Martens}, {and} \bibinfo{person}{Thomas Timm}.}
  \bibinfo{year}{2017}\natexlab{}.
\newblock \showarticletitle{An Analytical Study of Large {SPARQL} Query Logs}.
\newblock \bibinfo{journal}{\emph{{PVLDB}}} \bibinfo{volume}{11},
  \bibinfo{number}{2} (\bibinfo{year}{2017}), \bibinfo{pages}{149--161}.
\newblock
\urldef\tempurl%
\url{http://www.vldb.org/pvldb/vol11/p149-bonifati.pdf}
\showURL{%
\tempurl}


\bibitem[\protect\citeauthoryear{Br{\"o}nnimann and Goodrich}{Br{\"o}nnimann
  and Goodrich}{1995}]%
        {Bronnimann1995}
\bibfield{author}{\bibinfo{person}{H. Br{\"o}nnimann} {and}
  \bibinfo{person}{M.~T. Goodrich}.} \bibinfo{year}{1995}\natexlab{}.
\newblock \showarticletitle{Almost optimal set covers in finite VC-dimension}.
\newblock \bibinfo{journal}{\emph{Discrete {\&} Computational Geometry}}
  \bibinfo{volume}{14}, \bibinfo{number}{4} (\bibinfo{date}{01 Dec}
  \bibinfo{year}{1995}), \bibinfo{pages}{463--479}.
\newblock
\showISSN{1432-0444}
\urldef\tempurl%
\url{https://doi.org/10.1007/BF02570718}
\showDOI{\tempurl}


\bibitem[\protect\citeauthoryear{Chandra and Merlin}{Chandra and
  Merlin}{1977}]%
        {DBLP:conf/stoc/ChandraM77}
\bibfield{author}{\bibinfo{person}{Ashok~K. Chandra} {and}
  \bibinfo{person}{Philip~M. Merlin}.} \bibinfo{year}{1977}\natexlab{}.
\newblock \showarticletitle{Optimal Implementation of Conjunctive Queries in
  Relational Data Bases}. In \bibinfo{booktitle}{\emph{Proceedings of STOC
  1977}}. \bibinfo{publisher}{{ACM}}, \bibinfo{pages}{77--90}.
\newblock


\bibitem[\protect\citeauthoryear{Chekuri and Rajaraman}{Chekuri and
  Rajaraman}{2000}]%
        {DBLP:journals/tcs/ChekuriR00}
\bibfield{author}{\bibinfo{person}{Chandra Chekuri} {and}
  \bibinfo{person}{Anand Rajaraman}.} \bibinfo{year}{2000}\natexlab{}.
\newblock \showarticletitle{Conjunctive query containment revisited}.
\newblock \bibinfo{journal}{\emph{Theor. Comput. Sci.}} \bibinfo{volume}{239},
  \bibinfo{number}{2} (\bibinfo{year}{2000}), \bibinfo{pages}{211--229}.
\newblock


\bibitem[\protect\citeauthoryear{Chen and Dalmau}{Chen and Dalmau}{2005}]%
        {DBLP:conf/cp/ChenD05}
\bibfield{author}{\bibinfo{person}{Hubie Chen} {and}
  \bibinfo{person}{V{\'{\i}}ctor Dalmau}.} \bibinfo{year}{2005}\natexlab{}.
\newblock \showarticletitle{Beyond Hypertree Width: Decomposition Methods
  Without Decompositions}. In \bibinfo{booktitle}{\emph{Proceedings of {CP}
  2005}} \emph{(\bibinfo{series}{Lecture Notes in Computer Science})},
  Vol.~\bibinfo{volume}{3709}. \bibinfo{publisher}{Springer},
  \bibinfo{pages}{167--181}.
\newblock


\bibitem[\protect\citeauthoryear{Chung, Fueredi, Garey, and Graham}{Chung
  et~al\mbox{.}}{1988}]%
        {chung1988}
\bibfield{author}{\bibinfo{person}{Fan R.~K. Chung}, \bibinfo{person}{Zoltan
  Fueredi}, \bibinfo{person}{MR Garey}, {and} \bibinfo{person}{Ronald~L.
  Graham}.} \bibinfo{year}{1988}\natexlab{}.
\newblock \showarticletitle{On the fractional covering number of hypergraphs}.
\newblock \bibinfo{journal}{\emph{SIAM journal on discrete mathematics}}
  \bibinfo{volume}{1}, \bibinfo{number}{1} (\bibinfo{year}{1988}),
  \bibinfo{pages}{45--49}.
\newblock


\bibitem[\protect\citeauthoryear{Cohen, Jeavons, and Gyssens}{Cohen
  et~al\mbox{.}}{2008}]%
        {DBLP:journals/jcss/CohenJG08}
\bibfield{author}{\bibinfo{person}{David~A. Cohen}, \bibinfo{person}{Peter
  Jeavons}, {and} \bibinfo{person}{Marc Gyssens}.}
  \bibinfo{year}{2008}\natexlab{}.
\newblock \showarticletitle{A unified theory of structural tractability for
  constraint satisfaction problems}.
\newblock \bibinfo{journal}{\emph{J. Comput. Syst. Sci.}} \bibinfo{volume}{74},
  \bibinfo{number}{5} (\bibinfo{year}{2008}), \bibinfo{pages}{721--743}.
\newblock
\urldef\tempurl%
\url{https://doi.org/10.1016/j.jcss.2007.08.001}
\showDOI{\tempurl}


\bibitem[\protect\citeauthoryear{Dalmau, Kolaitis, and Vardi}{Dalmau
  et~al\mbox{.}}{2002}]%
        {DBLP:conf/cp/DalmauKV02}
\bibfield{author}{\bibinfo{person}{V{\'{\i}}ctor Dalmau},
  \bibinfo{person}{Phokion~G. Kolaitis}, {and} \bibinfo{person}{Moshe~Y.
  Vardi}.} \bibinfo{year}{2002}\natexlab{}.
\newblock \showarticletitle{Constraint Satisfaction, Bounded Treewidth, and
  Finite-Variable Logics}. In \bibinfo{booktitle}{\emph{Proceedings of CP
  2002}} \emph{(\bibinfo{series}{Lecture Notes in Computer Science})},
  Vol.~\bibinfo{volume}{2470}. \bibinfo{publisher}{Springer},
  \bibinfo{pages}{310--326}.
\newblock


\bibitem[\protect\citeauthoryear{Dechter and Pearl}{Dechter and Pearl}{1989}]%
        {DBLP:journals/ai/DechterP89}
\bibfield{author}{\bibinfo{person}{Rina Dechter} {and} \bibinfo{person}{Judea
  Pearl}.} \bibinfo{year}{1989}\natexlab{}.
\newblock \showarticletitle{Tree Clustering for Constraint Networks}.
\newblock \bibinfo{journal}{\emph{Artif. Intell.}} \bibinfo{volume}{38},
  \bibinfo{number}{3} (\bibinfo{year}{1989}), \bibinfo{pages}{353--366}.
\newblock


\bibitem[\protect\citeauthoryear{Ding, Seymour, and Winkler}{Ding
  et~al\mbox{.}}{1994}]%
        {ding1994}
\bibfield{author}{\bibinfo{person}{Guo-Li Ding}, \bibinfo{person}{Paul
  Seymour}, {and} \bibinfo{person}{Peter Winkler}.}
  \bibinfo{year}{1994}\natexlab{}.
\newblock \showarticletitle{Bounding the vertex cover number of a hypergraph}.
\newblock \bibinfo{journal}{\emph{Combinatorica}} \bibinfo{volume}{14},
  \bibinfo{number}{1} (\bibinfo{year}{1994}), \bibinfo{pages}{23--34}.
\newblock


\bibitem[\protect\citeauthoryear{Duchet}{Duchet}{1996}]%
        {duchet1996hypergraphs}
\bibfield{author}{\bibinfo{person}{Pierre Duchet}.}
  \bibinfo{year}{1996}\natexlab{}.
\newblock \showarticletitle{Hypergraphs}. In \bibinfo{booktitle}{\emph{Handbook
  of combinatorics (vol. 1)}}. \bibinfo{publisher}{MIT Press},
  \bibinfo{pages}{381--432}.
\newblock


\bibitem[\protect\citeauthoryear{Fagin}{Fagin}{1983}]%
        {fagin1983degrees}
\bibfield{author}{\bibinfo{person}{Ronald Fagin}.}
  \bibinfo{year}{1983}\natexlab{}.
\newblock \showarticletitle{Degrees of acyclicity for hypergraphs and
  relational database schemes}.
\newblock \bibinfo{journal}{\emph{J. ACM}} \bibinfo{volume}{30},
  \bibinfo{number}{3} (\bibinfo{year}{1983}), \bibinfo{pages}{514--550}.
\newblock


\bibitem[\protect\citeauthoryear{Fischl, Gottlob, Longo, and Pichler}{Fischl
  et~al\mbox{.}}{2019}]%
        {pods/FischlGLP19}
\bibfield{author}{\bibinfo{person}{Wolfgang Fischl}, \bibinfo{person}{Georg
  Gottlob}, \bibinfo{person}{Davide~M. Longo}, {and} \bibinfo{person}{Reinhard
  Pichler}.} \bibinfo{year}{2019}\natexlab{}.
\newblock \showarticletitle{HyperBench: A Benchmark and Tool for Hypergraphs
  and Empirical Findings}. In \bibinfo{booktitle}{\emph{Proceedings of {PODS}
  2019}}. \bibinfo{publisher}{{ACM}}, \bibinfo{pages}{464--480}.
\newblock


\bibitem[\protect\citeauthoryear{Fischl, Gottlob, and Pichler}{Fischl
  et~al\mbox{.}}{2018}]%
        {DBLP:conf/pods/FischlGP18}
\bibfield{author}{\bibinfo{person}{Wolfgang Fischl}, \bibinfo{person}{Georg
  Gottlob}, {and} \bibinfo{person}{Reinhard Pichler}.}
  \bibinfo{year}{2018}\natexlab{}.
\newblock \showarticletitle{General and Fractional Hypertree Decompositions:
  Hard and Easy Cases}. In \bibinfo{booktitle}{\emph{Proceedings of {PODS}
  2018}}. \bibinfo{publisher}{{ACM}}, \bibinfo{pages}{17--32}.
\newblock
\urldef\tempurl%
\url{https://doi.org/10.1145/3196959.3196962}
\showURL{%
\tempurl}


\bibitem[\protect\citeauthoryear{Freuder}{Freuder}{1990}]%
        {DBLP:conf/aaai/Freuder90}
\bibfield{author}{\bibinfo{person}{Eugene~C. Freuder}.}
  \bibinfo{year}{1990}\natexlab{}.
\newblock \showarticletitle{Complexity of K-Tree Structured Constraint
  Satisfaction Problems}. In \bibinfo{booktitle}{\emph{Proceedings of AAAI
  1990}}. \bibinfo{publisher}{{AAAI} Press / The {MIT} Press},
  \bibinfo{pages}{4--9}.
\newblock


\bibitem[\protect\citeauthoryear{F{\"u}redi}{F{\"u}redi}{1988}]%
        {furedi1988}
\bibfield{author}{\bibinfo{person}{Zolt{\'a}n F{\"u}redi}.}
  \bibinfo{year}{1988}\natexlab{}.
\newblock \showarticletitle{Matchings and covers in hypergraphs}.
\newblock \bibinfo{journal}{\emph{Graphs and Combinatorics}}
  \bibinfo{volume}{4}, \bibinfo{number}{1} (\bibinfo{year}{1988}),
  \bibinfo{pages}{115--206}.
\newblock


\bibitem[\protect\citeauthoryear{Goodman and Shmueli}{Goodman and
  Shmueli}{1984}]%
        {DBLP:journals/jcss/GoodmanS84}
\bibfield{author}{\bibinfo{person}{Nathan Goodman} {and} \bibinfo{person}{Oded
  Shmueli}.} \bibinfo{year}{1984}\natexlab{}.
\newblock \showarticletitle{The Tree Projection Theorem and Relational Query
  Processing}.
\newblock \bibinfo{journal}{\emph{J. Comput. Syst. Sci.}} \bibinfo{volume}{28},
  \bibinfo{number}{1} (\bibinfo{year}{1984}), \bibinfo{pages}{60--79}.
\newblock
\urldef\tempurl%
\url{https://doi.org/10.1016/0022-0000(84)90076-X}
\showDOI{\tempurl}


\bibitem[\protect\citeauthoryear{Gottlob and Greco}{Gottlob and Greco}{2013}]%
        {gottlob2013decomposing}
\bibfield{author}{\bibinfo{person}{Georg Gottlob} {and}
  \bibinfo{person}{Gianluigi Greco}.} \bibinfo{year}{2013}\natexlab{}.
\newblock \showarticletitle{Decomposing combinatorial auctions and set packing
  problems}.
\newblock \bibinfo{journal}{\emph{J. ACM}} \bibinfo{volume}{60},
  \bibinfo{number}{4} (\bibinfo{year}{2013}), \bibinfo{pages}{24}.
\newblock


\bibitem[\protect\citeauthoryear{Gottlob, Greco, and Scarcello}{Gottlob
  et~al\mbox{.}}{2018}]%
        {DBLP:journals/jcss/GottlobGS18}
\bibfield{author}{\bibinfo{person}{Georg Gottlob}, \bibinfo{person}{Gianluigi
  Greco}, {and} \bibinfo{person}{Francesco Scarcello}.}
  \bibinfo{year}{2018}\natexlab{}.
\newblock \showarticletitle{Tree projections and constraint optimization
  problems: Fixed-parameter tractability and parallel algorithms}.
\newblock \bibinfo{journal}{\emph{J. Comput. Syst. Sci.}}  \bibinfo{volume}{94}
  (\bibinfo{year}{2018}), \bibinfo{pages}{11--40}.
\newblock
\urldef\tempurl%
\url{https://doi.org/10.1016/j.jcss.2017.11.005}
\showDOI{\tempurl}


\bibitem[\protect\citeauthoryear{Gottlob, Leone, and Scarcello}{Gottlob
  et~al\mbox{.}}{2002}]%
        {2002gottlob}
\bibfield{author}{\bibinfo{person}{Georg Gottlob}, \bibinfo{person}{Nicola
  Leone}, {and} \bibinfo{person}{Francesco Scarcello}.}
  \bibinfo{year}{2002}\natexlab{}.
\newblock \showarticletitle{Hypertree Decompositions and Tractable Queries}.
\newblock \bibinfo{journal}{\emph{J. Comput. Syst. Sci.}} \bibinfo{volume}{64},
  \bibinfo{number}{3} (\bibinfo{year}{2002}), \bibinfo{pages}{579--627}.
\newblock
\urldef\tempurl%
\url{https://doi.org/10.1006/jcss.2001.1809}
\showDOI{\tempurl}


\bibitem[\protect\citeauthoryear{Gottlob, Mikl{\'{o}}s, and Schwentick}{Gottlob
  et~al\mbox{.}}{2009}]%
        {2009gottlob}
\bibfield{author}{\bibinfo{person}{Georg Gottlob},
  \bibinfo{person}{Zolt{\'{a}}n Mikl{\'{o}}s}, {and} \bibinfo{person}{Thomas
  Schwentick}.} \bibinfo{year}{2009}\natexlab{}.
\newblock \showarticletitle{Generalized hypertree decompositions: {NP}-hardness
  and tractable variants}.
\newblock \bibinfo{journal}{\emph{J. {ACM}}} \bibinfo{volume}{56},
  \bibinfo{number}{6} (\bibinfo{year}{2009}), \bibinfo{pages}{30:1--30:32}.
\newblock
\urldef\tempurl%
\url{https://doi.org/10.1145/1568318.1568320}
\showDOI{\tempurl}


\bibitem[\protect\citeauthoryear{Grohe}{Grohe}{2007}]%
        {DBLP:journals/jacm/Grohe07}
\bibfield{author}{\bibinfo{person}{Martin Grohe}.}
  \bibinfo{year}{2007}\natexlab{}.
\newblock \showarticletitle{The complexity of homomorphism and constraint
  satisfaction problems seen from the other side}.
\newblock \bibinfo{journal}{\emph{J. {ACM}}} \bibinfo{volume}{54},
  \bibinfo{number}{1} (\bibinfo{year}{2007}), \bibinfo{pages}{1:1--1:24}.
\newblock
\urldef\tempurl%
\url{https://doi.org/10.1145/1206035.1206036}
\showDOI{\tempurl}


\bibitem[\protect\citeauthoryear{Grohe and Marx}{Grohe and Marx}{2006}]%
        {DBLP:conf/soda/GroheM06}
\bibfield{author}{\bibinfo{person}{Martin Grohe} {and}
  \bibinfo{person}{D{\'{a}}niel Marx}.} \bibinfo{year}{2006}\natexlab{}.
\newblock \showarticletitle{Constraint solving via fractional edge covers}. In
  \bibinfo{booktitle}{\emph{Proceedings of {SODA} 2006}}.
  \bibinfo{publisher}{{ACM} Press}, \bibinfo{pages}{289--298}.
\newblock


\bibitem[\protect\citeauthoryear{Grohe and Marx}{Grohe and Marx}{2014}]%
        {2014grohemarx}
\bibfield{author}{\bibinfo{person}{Martin Grohe} {and}
  \bibinfo{person}{D{\'{a}}niel Marx}.} \bibinfo{year}{2014}\natexlab{}.
\newblock \showarticletitle{Constraint Solving via Fractional Edge Covers}.
\newblock \bibinfo{journal}{\emph{{ACM} Trans. Algorithms}}
  \bibinfo{volume}{11}, \bibinfo{number}{1} (\bibinfo{year}{2014}),
  \bibinfo{pages}{4:1--4:20}.
\newblock


\bibitem[\protect\citeauthoryear{Grohe, Schwentick, and Segoufin}{Grohe
  et~al\mbox{.}}{2001}]%
        {DBLP:conf/stoc/GroheSS01}
\bibfield{author}{\bibinfo{person}{Martin Grohe}, \bibinfo{person}{Thomas
  Schwentick}, {and} \bibinfo{person}{Luc Segoufin}.}
  \bibinfo{year}{2001}\natexlab{}.
\newblock \showarticletitle{When is the evaluation of conjunctive queries
  tractable?}. In \bibinfo{booktitle}{\emph{Proceedings of STOC 2001}}.
  \bibinfo{publisher}{{ACM}}, \bibinfo{pages}{657--666}.
\newblock


\bibitem[\protect\citeauthoryear{Gyssens, Jeavons, and Cohen}{Gyssens
  et~al\mbox{.}}{1994}]%
        {DBLP:journals/ai/GyssensJC94}
\bibfield{author}{\bibinfo{person}{Marc Gyssens}, \bibinfo{person}{Peter
  Jeavons}, {and} \bibinfo{person}{David~A. Cohen}.}
  \bibinfo{year}{1994}\natexlab{}.
\newblock \showarticletitle{Decomposing Constraint Satisfaction Problems Using
  Database Techniques}.
\newblock \bibinfo{journal}{\emph{Artif. Intell.}} \bibinfo{volume}{66},
  \bibinfo{number}{1} (\bibinfo{year}{1994}), \bibinfo{pages}{57--89}.
\newblock


\bibitem[\protect\citeauthoryear{Gyssens and Paredaens}{Gyssens and
  Paredaens}{1984}]%
        {DBLP:conf/adbt/GyssensP82}
\bibfield{author}{\bibinfo{person}{Marc Gyssens} {and} \bibinfo{person}{Jan
  Paredaens}.} \bibinfo{year}{1984}\natexlab{}.
\newblock \showarticletitle{A Decomposition Methodology for Cyclic Databases}.
  In \bibinfo{booktitle}{\emph{Advances in Data Base Theory: Volume 2}}.
  \bibinfo{publisher}{Springer}, \bibinfo{pages}{85--122}.
\newblock


\bibitem[\protect\citeauthoryear{Hashmi, Malik, Najmi, and Rezgui}{Hashmi
  et~al\mbox{.}}{2016}]%
        {hashmi2016snrneg}
\bibfield{author}{\bibinfo{person}{Khayyam Hashmi}, \bibinfo{person}{Zaki
  Malik}, \bibinfo{person}{Erfan Najmi}, {and} \bibinfo{person}{Abdelmounaam
  Rezgui}.} \bibinfo{year}{2016}\natexlab{}.
\newblock \showarticletitle{SNRNeg: A social network enabled negotiation
  service}.
\newblock \bibinfo{journal}{\emph{Information Sciences}}  \bibinfo{volume}{349}
  (\bibinfo{year}{2016}), \bibinfo{pages}{248--262}.
\newblock


\bibitem[\protect\citeauthoryear{Khamis, Ngo, R{\'{e}}, and Rudra}{Khamis
  et~al\mbox{.}}{2015}]%
        {KhamisNRR15}
\bibfield{author}{\bibinfo{person}{Mahmoud~Abo Khamis},
  \bibinfo{person}{Hung~Q. Ngo}, \bibinfo{person}{Christopher R{\'{e}}}, {and}
  \bibinfo{person}{Atri Rudra}.} \bibinfo{year}{2015}\natexlab{}.
\newblock \showarticletitle{Joins via Geometric Resolutions: Worst-case and
  Beyond}. In \bibinfo{booktitle}{\emph{Proceedings of {PODS} 2015}}.
  \bibinfo{publisher}{{ACM}}, \bibinfo{pages}{213--228}.
\newblock


\bibitem[\protect\citeauthoryear{Khamis, Ngo, and Rudra}{Khamis
  et~al\mbox{.}}{2016}]%
        {KhamisNR16}
\bibfield{author}{\bibinfo{person}{Mahmoud~Abo Khamis},
  \bibinfo{person}{Hung~Q. Ngo}, {and} \bibinfo{person}{Atri Rudra}.}
  \bibinfo{year}{2016}\natexlab{}.
\newblock \showarticletitle{{FAQ:} Questions Asked Frequently}. In
  \bibinfo{booktitle}{\emph{Proceedings of {PODS} 2016}}.
  \bibinfo{publisher}{{ACM}}, \bibinfo{pages}{13--28}.
\newblock


\bibitem[\protect\citeauthoryear{Kolaitis and Vardi}{Kolaitis and
  Vardi}{2000}]%
        {DBLP:journals/jcss/KolaitisV00}
\bibfield{author}{\bibinfo{person}{Phokion~G. Kolaitis} {and}
  \bibinfo{person}{Moshe~Y. Vardi}.} \bibinfo{year}{2000}\natexlab{}.
\newblock \showarticletitle{Conjunctive-Query Containment and Constraint
  Satisfaction}.
\newblock \bibinfo{journal}{\emph{J. Comput. Syst. Sci.}} \bibinfo{volume}{61},
  \bibinfo{number}{2} (\bibinfo{year}{2000}), \bibinfo{pages}{302--332}.
\newblock


\bibitem[\protect\citeauthoryear{Lustig and Shmueli}{Lustig and
  Shmueli}{1999}]%
        {DBLP:journals/jal/LustigS99}
\bibfield{author}{\bibinfo{person}{Aviv Lustig} {and} \bibinfo{person}{Oded
  Shmueli}.} \bibinfo{year}{1999}\natexlab{}.
\newblock \showarticletitle{Acyclic Hypergraph Projections}.
\newblock \bibinfo{journal}{\emph{J. Algorithms}} \bibinfo{volume}{30},
  \bibinfo{number}{2} (\bibinfo{year}{1999}), \bibinfo{pages}{400--422}.
\newblock
\urldef\tempurl%
\url{https://doi.org/10.1006/jagm.1998.0965}
\showDOI{\tempurl}


\bibitem[\protect\citeauthoryear{Marx}{Marx}{2010}]%
        {DBLP:journals/talg/Marx10}
\bibfield{author}{\bibinfo{person}{D{\'{a}}niel Marx}.}
  \bibinfo{year}{2010}\natexlab{}.
\newblock \showarticletitle{Approximating fractional hypertree width}.
\newblock \bibinfo{journal}{\emph{{ACM} Trans. Algorithms}}
  \bibinfo{volume}{6}, \bibinfo{number}{2} (\bibinfo{year}{2010}),
  \bibinfo{pages}{29:1--29:17}.
\newblock
\urldef\tempurl%
\url{https://doi.org/10.1145/1721837.1721845}
\showDOI{\tempurl}


\bibitem[\protect\citeauthoryear{Marx}{Marx}{2011}]%
        {DBLP:journals/mst/Marx11}
\bibfield{author}{\bibinfo{person}{D{\'{a}}niel Marx}.}
  \bibinfo{year}{2011}\natexlab{}.
\newblock \showarticletitle{Tractable Structures for Constraint Satisfaction
  with Truth Tables}.
\newblock \bibinfo{journal}{\emph{Theory Comput. Syst.}} \bibinfo{volume}{48},
  \bibinfo{number}{3} (\bibinfo{year}{2011}), \bibinfo{pages}{444--464}.
\newblock


\bibitem[\protect\citeauthoryear{Marx}{Marx}{2013}]%
        {DBLP:journals/jacm/Marx13}
\bibfield{author}{\bibinfo{person}{D{\'{a}}niel Marx}.}
  \bibinfo{year}{2013}\natexlab{}.
\newblock \showarticletitle{Tractable Hypergraph Properties for Constraint
  Satisfaction and Conjunctive Queries}.
\newblock \bibinfo{journal}{\emph{J. {ACM}}} \bibinfo{volume}{60},
  \bibinfo{number}{6} (\bibinfo{year}{2013}), \bibinfo{pages}{42:1--42:51}.
\newblock


\bibitem[\protect\citeauthoryear{Moll, Tazari, and Thurley}{Moll
  et~al\mbox{.}}{2012}]%
        {moll2012}
\bibfield{author}{\bibinfo{person}{Lukas Moll}, \bibinfo{person}{Siamak
  Tazari}, {and} \bibinfo{person}{Marc Thurley}.}
  \bibinfo{year}{2012}\natexlab{}.
\newblock \showarticletitle{Computing hypergraph width measures exactly}.
\newblock \bibinfo{journal}{\emph{Inf. Process. Lett.}} \bibinfo{volume}{112},
  \bibinfo{number}{6} (\bibinfo{year}{2012}), \bibinfo{pages}{238--242}.
\newblock
\urldef\tempurl%
\url{https://doi.org/10.1016/j.ipl.2011.12.002}
\showDOI{\tempurl}


\bibitem[\protect\citeauthoryear{Olteanu and Z{\'{a}}vodn{\'{y}}}{Olteanu and
  Z{\'{a}}vodn{\'{y}}}{2015}]%
        {olteanu2015size}
\bibfield{author}{\bibinfo{person}{Dan Olteanu} {and} \bibinfo{person}{Jakub
  Z{\'{a}}vodn{\'{y}}}.} \bibinfo{year}{2015}\natexlab{}.
\newblock \showarticletitle{Size Bounds for Factorised Representations of Query
  Results}.
\newblock \bibinfo{journal}{\emph{{ACM} Trans. Database Syst.}}
  \bibinfo{volume}{40}, \bibinfo{number}{1} (\bibinfo{year}{2015}),
  \bibinfo{pages}{2:1--2:44}.
\newblock
\urldef\tempurl%
\url{https://doi.org/10.1145/2656335}
\showDOI{\tempurl}


\bibitem[\protect\citeauthoryear{Sauer}{Sauer}{1972}]%
        {1972sauer}
\bibfield{author}{\bibinfo{person}{Norbert Sauer}.}
  \bibinfo{year}{1972}\natexlab{}.
\newblock \showarticletitle{On the density of families of sets}.
\newblock \bibinfo{journal}{\emph{J. Combinatorial Theory (A)}}
  \bibinfo{volume}{13}, \bibinfo{number}{1} (\bibinfo{year}{1972}),
  \bibinfo{pages}{145--147}.
\newblock


\bibitem[\protect\citeauthoryear{Tu and R{\'e}}{Tu and R{\'e}}{2015}]%
        {tu2015duncecap}
\bibfield{author}{\bibinfo{person}{Susan Tu} {and} \bibinfo{person}{Christopher
  R{\'e}}.} \bibinfo{year}{2015}\natexlab{}.
\newblock \showarticletitle{Duncecap: Query plans using generalized hypertree
  decompositions}. In \bibinfo{booktitle}{\emph{Proceedings of {SIGMOD} 2015}}.
  {ACM}, \bibinfo{publisher}{{ACM}}, \bibinfo{pages}{2077--2078}.
\newblock


\bibitem[\protect\citeauthoryear{van Bevern, Downey, Fellows, Gaspers, and
  Rosamond}{van Bevern et~al\mbox{.}}{2015}]%
        {bevern2015}
\bibfield{author}{\bibinfo{person}{Ren{\'{e}} van Bevern},
  \bibinfo{person}{Rodney~G. Downey}, \bibinfo{person}{Michael~R. Fellows},
  \bibinfo{person}{Serge Gaspers}, {and} \bibinfo{person}{Frances~A.
  Rosamond}.} \bibinfo{year}{2015}\natexlab{}.
\newblock \showarticletitle{Myhill-Nerode Methods for Hypergraphs}.
\newblock \bibinfo{journal}{\emph{Algorithmica}} \bibinfo{volume}{73},
  \bibinfo{number}{4} (\bibinfo{year}{2015}), \bibinfo{pages}{696--729}.
\newblock
\urldef\tempurl%
\url{https://doi.org/10.1007/s00453-015-9977-x}
\showDOI{\tempurl}


\bibitem[\protect\citeauthoryear{Vapnik and Chervonenkis}{Vapnik and
  Chervonenkis}{1971}]%
        {1971vc}
\bibfield{author}{\bibinfo{person}{Vladimir Vapnik} {and}
  \bibinfo{person}{Alexey Chervonenkis}.} \bibinfo{year}{1971}\natexlab{}.
\newblock \showarticletitle{On the uniform convergence of relative frequencies
  of events to their probabilities}.
\newblock \bibinfo{journal}{\emph{Theory Probab. Appl.}}  \bibinfo{volume}{16}
  (\bibinfo{year}{1971}), \bibinfo{pages}{264--280}.
\newblock


\bibitem[\protect\citeauthoryear{Vazirani}{Vazirani}{2001}]%
        {DBLP:books/daglib/0004338}
\bibfield{author}{\bibinfo{person}{Vijay~V. Vazirani}.}
  \bibinfo{year}{2001}\natexlab{}.
\newblock \bibinfo{booktitle}{\emph{Approximation algorithms}}.
\newblock \bibinfo{publisher}{Springer}.
\newblock
\showISBNx{978-3-540-65367-7}
\urldef\tempurl%
\url{http://www.springer.com/computer/theoretical+computer+science/book/978-3-540-65367-7}
\showURL{%
\tempurl}


\bibitem[\protect\citeauthoryear{Yannakakis}{Yannakakis}{1981}]%
        {DBLP:conf/vldb/Yannakakis81}
\bibfield{author}{\bibinfo{person}{Mihalis Yannakakis}.}
  \bibinfo{year}{1981}\natexlab{}.
\newblock \showarticletitle{Algorithms for Acyclic Database Schemes}. In
  \bibinfo{booktitle}{\emph{Proceedings of VLDB 1981}}.
  \bibinfo{publisher}{{IEEE} Computer Society}, \bibinfo{pages}{82--94}.
\newblock


\end{thebibliography}

\appendix
\section{A Polynomial Time Algorithm for the \compnf CTD Problem}
\label{sec:algappendix}

In ths section we provide a proof of Theorem~\ref{thm:frameworkalg}. To do so, we present a bottom-up construction of \compnf CTDs, if they exist, using dynamic programming. Note that our presentation does not optimize for runtime, our goal is only to establish that the problem can be decided in polynomial time.

\begin{definition}
  A pair $(B,C)$ of \emph{disjoint} subsets of $V(H)$ is a
  \emph{block} if $C$ is a $[B]$-component of $H$ or $C =
  \emptyset$. Such a block is \emph{headed} by $B$.  Let $(B,C)$
  and $(X,Y)$ be two blocks.  We say that $(X,Y) \leq (B,C)$ if
  $X \cup Y \subseteq B \cup C$ and $Y \subseteq C$.

\end{definition}

\begin{definition}
For a block $(B, C)$ and vertex set $X \subseteq V(H)$ with $X \neq B$, we say that 
{\em $X$ is a basis of $(B, C)$} if 
the following conditions hold:
\begin{enumerate}
\item \label{cond:basis1} Let $(X,Y_1), \dots, (X,Y_\ell)$ be all
the blocks headed by $X$ that are less than or equal to 
$(B,C)$. Then $C \subseteq X \cup \bigcup_{i=1}^\ell Y_i$.
\item \label{cond:basis2} For each $e \in E(H)$ such that $e \cap C \neq \emptyset$,
$e \subseteq X \cup \bigcup_{i=1}^\ell Y_i$.
\item  \label{cond:basis3} For each $i \in [\ell]$, there exists a \compnf TD of $H[X \cup Y_i]$ where the root has precisely $X$ as its bag.
\end{enumerate}
\end{definition}

The existence of a basis $X$ intuitively corresponds to the existence of a
tree decomposition that covers the whole component $C$ (by $X$ together with the
[$X$]-components $Y_1, \dots, Y_\ell$)
and connects $X$ to
its parent bag $B$. The following lemmas confirm that this definition of
a basis for a block corresponds to such a TD in the expected way.

\begin{lemma}
  \label{lem:support}
  Let $H$ be a hypergraph, and $\defTD$ be a \compnf TD of $H$.  Let
  $r \in T$ be a non-leaf node. For each child $s$ of $r$, let $C_s$
  be the $[B_r]$-component associated with $s$.  The following two statements are true:
  \begin{itemize}
  \item $B_s$ is a basis
    of the block $(B_r, C_s)$.
    \item $(B_s, D) \leq (B_r, C_s)$ if and only
  if $D$ is either a component associated with a child of $s$ or if $D$ is empty.
  \end{itemize}
\end{lemma}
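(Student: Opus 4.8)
\textbf{Proof plan for Lemma~\ref{lem:support}.}
The plan is to verify the two statements directly from the definitions of \compnf TD and of a basis. For the first statement, I would fix a non-leaf node $r$ and a child $s$, with $C_s$ the unique $[B_r]$-component associated with $s$ (existing by the \compnf property). I must check the three conditions for $B_s$ being a basis of $(B_r, C_s)$. First I would identify the blocks $(B_s, Y_1), \dots, (B_s, Y_\ell)$ headed by $B_s$ that are $\leq (B_r, C_s)$; the natural claim (which the second statement will make precise) is that these $Y_i$ are exactly the components associated with the children of $s$. Granting that, condition~(\ref{cond:basis1}) follows because $\VTs = C_s \cup (B_r \cap B_s)$ by \compnf, and $\VTs$ decomposes as $B_s$ together with the union of $\VTt$ over children $t$ of $s$; each $\VTt = C_t \cup (B_s \cap B_t)$ so $C_s \subseteq B_s \cup \bigcup_i Y_i$. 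Condition~(\ref{cond:basis2}) follows from covering condition~(1) of TDs: any edge $e$ with $e \cap C_s \neq \emptyset$ must be contained in some bag $B_u$ with $u \in T_s$ (by Lemma~\ref{lem:52}, $\nodes(C_s) \subseteq \nodes(T_s)$, and $e$ being connected forces it into $T_s$), hence $e \subseteq \VTs \subseteq B_s \cup \bigcup_i Y_i$ after intersecting appropriately. Condition~(\ref{cond:basis3}) is witnessed by the subtree $T_t$ (where $Y_i$ is associated with child $t$): restricting all bags in $T_t$ to $X \cup Y_i = B_s \cup Y_i$ and observing that the root bag becomes exactly $B_s$ (since $B_t \subseteq \VTt \subseteq Y_i \cup B_s$, wait — actually one uses that $B_s$ already equals $X$ at a freshly-added root); I would add a root with bag $B_s$ above $T_t$ if needed, which preserves \compnf.

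For the second statement, I would argue both directions. If $D$ is a component associated with a child $t$ of $s$, then $\VTt = D \cup (B_s \cap B_t)$, so $D \cup B_s \subseteq \VTs \subseteq C_s \cup B_r$, giving $B_s \cup D \subseteq B_r \cup C_s$; moreover $D \subseteq C_s$ because $D$ is a $[B_s]$-component contained in $\VTs \setminus B_s \subseteq C_s$ (here one uses $B_r \cap B_s \subseteq B_s$, so the part of $\VTs$ outside $B_s$ lies in $C_s$). The empty case is trivial. Conversely, suppose $(B_s, D) \leq (B_r, C_s)$ with $D$ a nonempty $[B_s]$-component; then $D \subseteq C_s$ and $D \cap B_r = \emptyset$ (since $D \subseteq C_s$ and $C_s$ is disjoint from $B_r$). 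I would show $D$ must be associated with some child of $s$: pick a vertex $v \in D$; since $v \notin B_r$ and the hypergraph has no isolated vertices, some edge/bag forces $v$ into $T_s$, and in fact $\nodes(D) \subseteq \nodes(T_s)$ by Lemma~\ref{lem:52} applied with $r := s$ (note $D$ is a $[B_s]$-component meeting some $\VTt$). Then $D$ is $[B_s]$-connected and disjoint from $B_s$, so by Lemma~\ref{lem:53} $\nodes(D)$ induces a connected subtree of $T$, which lies inside $T_s$; since every child subtree $T_t$ of $s$ is associated with exactly one $[B_s]$-component, and $D$ being connected cannot straddle two of them, $D$ equals the component associated with exactly one child $t$.

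The main obstacle I expect is the careful bookkeeping in the second statement: showing that an \emph{arbitrary} $[B_s]$-component $D$ with $(B_s,D) \leq (B_r,C_s)$ is genuinely one of the components associated with a child of $s$, rather than something spread across several child subtrees or partially inside $B_s$'s own subtree in a degenerate way. This requires combining the \compnf condition at node $s$ (each child corresponds to exactly one component and vice versa) with Lemmas~\ref{lem:52} and~\ref{lem:53} to pin down $\nodes(D)$ precisely. The subtlety is that \compnf gives a bijection between children of $s$ and the $[B_s]$-components meeting $\VTs$, but $D \leq (B_r, C_s)$ only a priori tells us $D \subseteq C_s$; I would need to confirm $D$ actually meets $\VTs$ (equivalently, that $D$ is covered somewhere in $T_s$), which follows because every vertex and edge of $H$ must appear in the decomposition and the connectedness condition localizes them below $s$ once they avoid $B_r$. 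Everything else is a routine unwinding of $\VTs = C_s \cup (B_r \cap B_s)$ and the TD axioms.
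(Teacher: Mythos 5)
Your proposal is correct and follows essentially the same route as the paper's proof: both identify the blocks $(B_s,Y)\leq(B_r,C_s)$ as exactly the components associated with children of $s$ (via $\VTs = C_s\cup(B_r\cap B_s)$ from \compnf), verify the three basis conditions from the TD axioms, and witness condition~(3) by the subtree $T_t$ with $s$ (bag $B_s$) re-attached as root. Your more explicit invocation of Lemmas~\ref{lem:52} and~\ref{lem:53} in the ``only if'' direction is a harmless elaboration of the paper's terser argument.
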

\begin{proof}
  We first observe that $(B_s, D) \leq (B_r, C_s)$ if and only
  if $D$ is either empty or a component associated with a child of $s$.
  Indeed, since we assume \compnf, we have that
  $V(T_s) = C_s \cup (B_r \cap B_s)$ and $C_s$ is the only such
  $[B_r]$-component.  Thus, $V(T_s) \subseteq B_r \cup C_s$.
  
  Moreover, $B_s \cup D \subseteq V(T_s)$ holds for every block 
  $(B_s, D)$ where $D$ is empty or a component associated with a
  child of $s$, which completes the proof of the ``if'' direction. 
  For the ``only if'' direction, recall that $(B_s, D) \leq (B_r, C_s)$
  requires $D \subseteq C_s$. Since there is only one node associated with $C_s$, we conclude 
  $D \subseteq \VTs$. Since
  $D$ is a \comp{$B_s$}, it must have its own associated child of $s$.

  Now that we know exactly which blocks headed by $B_s$ are relevant,
  we can show that they satisfy the conditions of a basis.
  Let $child(s)$ be the set of all children of $s$. For every $u \in child(s)$, let $D_u$ be the $[B_s]$-component
  associated with $u$.
  The
  vertices that occur in the subtree $T_s$ are precisely
  $V(T_s) = B_s \cup \bigcup_{u \in child(s)} D_u$. Since we assume
  \compnf, we also have $C_s \subseteq V(T_s)$ and therefore Condition~\ref{cond:basis1} of a
  basis is satisfied.
  
  For Condition~\ref{cond:basis2} it is enough to observe that if
  $e \cap C_s \neq \emptyset$, then it must be covered in the subtree
  $T_s$, i.e., $e \subseteq V(T_s)$. Otherwise, suppose $e$ were only
  covered in some node $u$ not in the subtree $T_s$. There is a vertex
  $v \in e \cap C_s$ that occurs in $V(T_s)$ but not in $B_r$ (recall
  $B_r$ and $C_s$ are disjoint). Any path from a node of $T_s$ to $u$
  must pass through $B_r$, which would break connectedness for 
  $v$.

  Finally, for Condition~\ref{cond:basis3} and each $u \in child(s)$,
  consider the subtree $T_u^*$ induced by $\{s\} \cup T_u$.  The root
  of $T_u^*$ has bag $B_s$ and from $V(T_u) = C_u \cup (B_s \cap B_u)$ also $B_s \cup V(T_u) = B_s \cup C_u$, i.e., $T_u^*$ is indeed a TD of $H[B_s \cup C_u]$.  To see that $T_u^*$ is
  in fact in \compnf, observe that $H[B_s \cup C_u]$ has only a single
  $[B_s]$-component $C_u$. Then \compnf of $T_u^*$ follows from the assumption that the original decomposition $\defTD$ is in
  \compnf.
\end{proof}

For the following arguments, it is convenient to introduce the notion of a
\emph{union of TDs} that have the same root bag.  Let
$\left<T_1, (B_{1,u})_{u\in T_1} \right>,\dots, \left<T_n, (B_{n,u})_{u \in T_n} \right>$ be rooted TDs and w.l.o.g. assume they have pairwise distinct nodes.
For each $i \in [n]$, let us denote the root of $T_i$ by $r_i$.
Furthermore, assume that $B_{r_i} = B_{r_j}$ for all $i,j \in [n]$.
We then define the union
$\defTD = \bigcup_{i =1}^n \left<T_i, (B_{i,u})_{u\in T_i} \right>$ as
the following structure:  
$T$ is a tree with a new root node $r$ and $B_r = B_{r_1}$. For each $i \in [n]$, all the nodes $u$ of $T_i$
except for the root $r_i$ are in $T$ and for each $u \neq r_i$ in $T_i $, we have $B_u = B_{i,u}$. 
Moreover, all edges of $T_i$ except for the ones adjacent to the root $r_i$ are also contained in $T$. 
Further, for every edge $[u,r_i]$ we introduce an edge $[u,r]$ in $T$.
The following lemma establishes a sufficienct condition such that this new structure
is indeed a TD.

\begin{lemma}
  \label{lem:tdunion}
  Let $(B, C_1), \dots, (B, C_\ell)$ be blocks of a hypergraph $H$. Assume for each $i \in [\ell]$ 
  that there
  exists a \compnf TD $\mcT_i$ of $H[B \cup C_i]$ where $B$ is the bag of the root. Then,
  $\mcT = \bigcup_{i=1}^\ell \mcT_i$ is a \compnf TD of $H[B\cup \bigcup_{i=1}^\ell C_i]$.
\end{lemma}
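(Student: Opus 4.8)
The plan is to verify that $\mcT = \bigcup_{i=1}^\ell \mcT_i$ satisfies the three defining conditions of a tree decomposition for the subhypergraph $H' = H[B \cup \bigcup_{i=1}^\ell C_i]$, and then additionally check the component normal form condition. First I would fix notation: for each $i$, write $\mcT_i = \langle T_i, (B_{i,u})_{u \in T_i}\rangle$ with root $r_i$ whose bag is $B_{i,r_i} = B$, and let $r$ be the new common root of $\mcT$ with $B_r = B$. Since the $C_i$ are distinct $[B]$-components, they are pairwise disjoint, and each $C_i$ is disjoint from $B$; these disjointness facts will be used repeatedly.

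The verification of the TD conditions proceeds as follows. For the edge-covering condition (1), take any edge $e$ of $H'$. If $e \subseteq B$, then $e \subseteq B_r$ and we are done. Otherwise $e$ contains a vertex outside $B$; since $V(H') = B \cup \bigcup_i C_i$ and the $C_i$ are $[B]$-components, all vertices of $e \setminus B$ lie in a single component $C_i$ (two vertices of $e$ outside $B$ are adjacent in the primal graph of $H$ restricted to $V(H)\setminus B$, hence $[B]$-connected, so they lie in the same component). Thus $e$ is an edge of $H[B \cup C_i]$ and is covered by some node of $\mcT_i$, which is a node of $\mcT$. For the connectedness condition (2), take a vertex $v$. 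If $v \in B$, then $v$ occurs in $B_r$, in the root of each $\mcT_i$, and by connectedness of each $\mcT_i$ the set $\{u \in T_i : v \in B_{i,u}\}$ is a connected subtree containing $r_i$; gluing these subtrees at $r$ yields a connected subtree of $\mcT$. If $v \notin B$, then $v \in C_i$ for exactly one $i$, and since $B_r = B$ does not contain $v$, the nodes of $\mcT$ containing $v$ all lie in (the copy of) $T_i$, where they form a connected subtree by connectedness of $\mcT_i$.

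It remains to check component normal form. The only new parent/child relationships in $\mcT$ are those between $r$ and the children of each $r_i$. For such a child $s$ of $r$ (which was a child of $r_i$ in $\mcT_i$), I need to exhibit a unique $[B_r] = [B]$-component $C_s$ of $H'$ with $V(\mcT_s) = C_s \cup (B_r \cap B_s)$. Since $\mcT_i$ is in \compnf, there is a unique $[B]$-component $D_s$ of $H[B \cup C_i]$ with $V(\mcT_{i,s}) = D_s \cup (B \cap B_s)$, and the subtree rooted at $s$ in $\mcT$ has exactly the same node set and bags as in $\mcT_i$. Here the main (mild) obstacle is to confirm that $D_s$ is also a $[B]$-component of the larger hypergraph $H'$ and that it is the unique one matching; this follows because $D_s \subseteq C_i$, and within $V(H') \setminus B$ the $[B]$-connectivity relation restricted to $C_i$ coincides with that of $H[B \cup C_i] \setminus B$ (no edge of $H'$ joining $C_i$ to $C_j$ with $i \neq j$ can exist, as that would merge the two $[B]$-components of $H$), so $D_s$ is exactly a $[B]$-component of $H'$ contained in $C_i$. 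Uniqueness for the pair $(r,s)$ then transfers directly from the uniqueness in $\mcT_i$. For children relationships internal to some $T_i$ (not touching $r_i$), \compnf is inherited verbatim from $\mcT_i$ after the same observation that $[B_p]$-components of $H[B\cup C_i]$ for $p$ a node of $T_i$ agree with those of $H'$. Finally, $V(\mcT) = B \cup \bigcup_{i=1}^\ell V(\mcT_i) = B \cup \bigcup_{i=1}^\ell (B \cup C_i) = B \cup \bigcup_{i=1}^\ell C_i = V(H')$, so $\mcT$ decomposes exactly $H'$. I expect the routine TD-axiom checks to be entirely straightforward; the only point deserving care is the component-matching argument in the \compnf verification, ensuring that components computed inside the pieces $H[B \cup C_i]$ remain components in the glued hypergraph $H'$.
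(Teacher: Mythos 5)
Your proof is correct and follows essentially the same route as the paper: verify edge coverage by showing every edge of $H[B\cup\bigcup_i C_i]$ lies in a single $H[B\cup C_i]$, verify connectedness using the pairwise disjointness of the $C_i$ and the fact that the shared vertices lie in the common root bag $B$, and inherit \compnf{} from the pieces. You are in fact slightly more careful than the paper at one point — explicitly checking that $[B_p]$-components computed inside $H[B\cup C_i]$ remain $[B_p]$-components of the glued hypergraph — which the paper's proof passes over silently; your justification of this is correct.
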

\begin{proof}
  We start by verifying that the connectedness condition is satisfied in $\mcT$.  By
  assumption, connectedness holds for each subtree rooted at a child
  of the root. The condition can then only be violated if vertices
  occur in more than one such subtree but not in the bag of the root. As all
  $C_i$ for $i \in [q]$ are $[B]$-components, or empty, they are also pairwise
  disjoint. So, the subtrees can only share variables in $B$ which is
  precisely the bag of the root.

  To see that every edge is covered, observe that for each edge $e$ in
  $H[B\cup \bigcup_{i \in [q]} C_i]$, there must be at least one
  $i \in [q]$ such that $e$ is also an edge in $H[B \cup C_i]$.
  Otherwise, $e$ would have to be part of more than one
  $[B]$-components, which is impossible as the components would then
  be $[B]$-connected.

  It remains to show that $\mcT$ is in \compnf: the decompositions
  $\mcT_i = \left< T_i, (B_{i,u})_{u \in T_i} \right>$ were already in
  \compnf and are left unchanged.  The only new parent/child
  relationships are those from the root $r$ to its children. Let $u$ with bag $B_u$ 
  be a child of $r$.  By the construction of the union of TDs, $u$ is obtained from some node 
  $u_i$ with bag $B_{i,u_i}$ in TD $\mcT_i$ for some $i$. That is, we have $B_u = B_{i,u_i}$. 
  Hence, the subtree rooted at $u$ covers a single \comp{$B_{r}$} by 
  the fact that $B_r = B_{r_i}$  and   $\mcT_i$ is a \compnf TD of $H[B \cup C_i]$.  
\end{proof}

\begin{lemma}
  \label{lem:suptotd}
  Let $H$ be a hypergraph and $B \subseteq V(H)$. Let $(B, C)$ be a block of $H$. If there exists $X \subseteq V(H)$ that is a basis of  $(B, C)$
  or if $C = \emptyset$, then there exists a
  \compnf TD of $H[B \cup C]$ where $B$ is the bag of the root.
\end{lemma}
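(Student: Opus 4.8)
The plan is to prove Lemma~\ref{lem:suptotd} by induction on $|C|$, constructing the desired \compnf TD from the inductive structure provided by a basis. First I would dispense with the trivial case $C = \emptyset$: here $H[B \cup C] = H[B]$ has no $[B]$-components at all, so a single node whose bag is $B$ is already a \compnf TD (covering of every edge of $H[B]$ is immediate, connectedness is vacuous, and there are no children to check). For the main case, assume $X$ is a basis of $(B,C)$, and let $(X,Y_1),\dots,(X,Y_\ell)$ be all blocks headed by $X$ that are $\leq (B,C)$.

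The key observation is that each $Y_i$ is a $[X]$-component (or empty) that is strictly smaller than $C$, since $Y_i \subseteq C$ and $X \neq B$ while $X \subseteq B \cup C$ forces $X$ to contain at least one vertex of $C$ — so $|Y_i| < |C|$, making the induction applicable. By Condition~\ref{cond:basis3} of the definition of basis, for each $i \in [\ell]$ there is a \compnf TD $\mcT_i$ of $H[X \cup Y_i]$ whose root bag is exactly $X$. I would then invoke Lemma~\ref{lem:tdunion} with separator $X$ and blocks $(X,Y_1),\dots,(X,Y_\ell)$ to form $\mcT' = \bigcup_{i=1}^\ell \mcT_i$, a \compnf TD of $H[X \cup \bigcup_{i=1}^\ell Y_i]$ whose root bag is $X$. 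By Condition~\ref{cond:basis1}, $C \subseteq X \cup \bigcup_{i=1}^\ell Y_i$, and since $X \subseteq B \cup C$, we have $X \cup \bigcup_{i=1}^\ell Y_i \subseteq B \cup C$; combined with $C \subseteq X \cup \bigcup Y_i$ this shows $X \cup \bigcup_{i=1}^\ell Y_i = C \cup (X \cap B)$, so $\mcT'$ decomposes exactly the right vertex set modulo the part of $X$ lying in $B$.

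Finally I would attach a fresh root node $r$ with bag $B_r = B$ on top of $\mcT'$, making the old root (bag $X$) the unique child of $r$. I then check the three TD conditions plus \compnf. Edge coverage: every edge $e$ of $H[B \cup C]$ with $e \cap C \neq \emptyset$ is contained in $X \cup \bigcup Y_i$ by Condition~\ref{cond:basis2}, hence covered inside $\mcT'$; every edge with $e \cap C = \emptyset$ satisfies $e \subseteq B = B_r$. Connectedness: for vertices in $C \setminus B$ connectedness is inherited from $\mcT'$ (they do not appear in $B_r$); for vertices in $B$, they may appear in $B_r$ and in nodes of $\mcT'$, but since the only edge from $\mcT'$ to $r$ passes through the node with bag $X$ and that bag equals the old root bag, connectedness through $X$ holds in $\mcT'$ and extends upward to $r$ — here I would spell out that any $v \in B$ occurring in $\mcT'$ must occur in $X$ (because $X \cup \bigcup Y_i = C \cup (X \cap B)$ and $v \notin C$ as $B \cap C = \emptyset$). \compnf at the new root: the single child has bag $X$, and the component of $B_r = B$ associated with it is exactly $C$; indeed $V(T_r)$ below the child equals $C \cup (X \cap B) = C \cup (B_r \cap B_{child})$, which is precisely the \compnf requirement, using that $C$ is a single $[B]$-component. \compnf elsewhere is inherited from $\mcT'$ via Lemma~\ref{lem:tdunion}. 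I expect the main obstacle to be the bookkeeping in the connectedness argument at the junction between $r$ and the top of $\mcT'$ — specifically confirming that no vertex of $B$ "escapes" into $\mcT'$ without passing through the bag $X$ — but this follows cleanly once the set identity $X \cup \bigcup_{i=1}^\ell Y_i = C \cup (X \cap B)$ is established and the disjointness $B \cap C = \emptyset$ is used.
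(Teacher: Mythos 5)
Your construction is essentially identical to the paper's: obtain the sub-decompositions from Condition (3) of the basis, merge them with Lemma~\ref{lem:tdunion}, attach a fresh root with bag $B$, and verify coverage, connectedness, and \compnf via the identity $X \cup \bigcup_{i=1}^{\ell} Y_i = C \cup (X \cap B)$. One caveat: your induction on $|C|$ is never actually invoked (Condition (3) already hands you the sub-TDs directly), and its justification is false --- $X \neq B$ together with $X \subseteq B \cup C$ does \emph{not} force $X$ to contain a vertex of $C$, since $X$ could be a proper subset of $B$ --- so you should simply drop the induction framing; the rest of the argument stands as written.
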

\begin{proof}
  First, if $C = \emptyset$, the TD with a single node $u$ and $B_u = B$ is trivially a \compnf TD of $H[B]$.
  Otherwise, let $(X, Y_1),\dots,(X, Y_\ell)$ be all the blocks headed by $X$ that
  are less or equal $(B,C)$. For each $i \in [\ell]$, let
  $\mcT_i$ be a \compnf TD of
  $H[X \cup Y_i]$ where $X$ is the bag of the root node. 

  Let $\mcT=\defTD$ be the union $\bigcup_{i=1}^\ell \mcT_i$. By Lemma~\ref{lem:tdunion}, $\mcT$ is
  a \compnf TD of $H[X \cup \bigcup_{i=1}^\ell Y_i]$. Add a new root $r$ with $B_r = B$ to $\mcT$ as the parent
  of the previous root to obtain $\mcT'$. We claim that $\mcT'$ is the desired \compnf TD of $H[B \cup C]$. Note that we have
  $V(\mcT) = X \cup \bigcup_{i=1}^\ell Y_i$.

  Assume an edge $e \in H[B \cup C]$. If $e \cap C \neq \emptyset$,
  then $e \subseteq X \cup \bigcup_{i=1}^\ell Y_i$ because $X$ is a basis of $(B,C)$.  Therefore, $e$  occurs in
  $H[X \cup \bigcup_{i=1}^\ell Y_i]$ and must be covered in $\mcT$.
  Otherwise, if $e \cap C = \emptyset$, then
  $e \subseteq B$ and $e$ is covered by the root node of $\mcT'$.

  $\mcT$ satisfies the connectedness condition and has $X$
  as the bag of its root node. Hence, the only way the connectedness condition can fail in $\mcT'$
  is if there is a vertex in $B$ and $V(\mcT)$ but not in $X$.
  For each $i \in [\ell]$ we have $Y_i \subseteq C$ and
  because $B$ and $C$ are disjoint we have $B \cap Y_i = \emptyset$.
  Therefore,
  $B \cap \left(X \cup \bigcup_{i=1}^\ell Y_i\right) = B \cap X$,
  i.e., any vertex in $V(\mcT)$ and in $B$ is also in the bag $X$ at the 
  root of $\mcT$.

  It remains to show that $\mcT'$ is indeed in \compnf. We know that $\mcT$ is in
  \compnf. Furthermore, $\mcT$ is the single subtree of the root $r$ and
  there is only one $[B]$-component in $H[B \cup C]$, namely $C$. So,
  we need to show that $V(\mcT) = C \cup (B \cap X)$.
  Since $B \cup C$ is the set of all vertices in $H[B \cup C]$,  we have
  $V(\mcT) = (B \cup C) \cap V(\mcT) = (B \cap V(\mcT)) \cup (C \cap V(\mcT))$. By the connectedness
  shown above we have $B \cap V(\mcT) = B \cap X$. Moreover, by Condition~\ref{cond:basis1} of a basis, 
  we have $C \cap V(\mcT) = C$. In total, we thus get the desired equality
  $V(\mcT) = (B \cap X) \cup C$.
\end{proof}

\begin{lemma}
  \label{lem:totalsup}
  Let $H$ be a hypergraph and $B \subseteq V(H)$. If all blocks headed
  by $B$ have a basis, then $H$ has
  a \compnf tree decomposition where $B$ is the bag of the root.
\end{lemma}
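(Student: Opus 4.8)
\textbf{Proof plan for Lemma~\ref{lem:totalsup}.}
The goal is to assemble a \compnf tree decomposition of $H$ with $B$ as its root bag, assuming every block headed by $B$ has a basis. The natural approach is to take all blocks headed by $B$, build a \compnf TD for each of them via the machinery already developed, and then glue these pieces together at a common root. Concretely, the first step is to enumerate the $[B]$-components $C_1,\dots,C_\ell$ of $H$ (together with the empty block). By hypothesis, each block $(B,C_j)$ has a basis, so by Lemma~\ref{lem:suptotd} there exists a \compnf TD $\mcT_j$ of $H[B\cup C_j]$ in which $B$ is the bag of the root.

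The second step is to apply Lemma~\ref{lem:tdunion} to the blocks $(B,C_1),\dots,(B,C_\ell)$ and the decompositions $\mcT_1,\dots,\mcT_\ell$: this yields that $\mcT = \bigcup_{j=1}^\ell \mcT_j$ is a \compnf TD of $H[B \cup \bigcup_{j=1}^\ell C_j]$. Since $V(H) = B \cup \bigcup_{j=1}^\ell C_j$ (every vertex is either in $B$ or in exactly one $[B]$-component), $\mcT$ is in fact a \compnf TD of $H$ itself. Finally, by the definition of the union of TDs, the root of $\mcT$ has bag $B_{r_1} = B$, which is exactly what we need.

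The only point that deserves a bit of care is the degenerate situation $\ell = 0$, i.e.\ when $V(H) = B$ (so $H[B]$ has no nonempty $[B]$-component). In that case the union is empty and we instead take the single-node decomposition with bag $B$, which is trivially a \compnf TD of $H[B] = H$ with root bag $B$; this is already covered by the $C = \emptyset$ case in Lemma~\ref{lem:suptotd}. There is no real obstacle here — the lemma is essentially a bookkeeping corollary of Lemmas~\ref{lem:suptotd} and~\ref{lem:tdunion}; the main thing to verify explicitly is that the $[B]$-components together with $B$ exhaust $V(H)$, so that a decomposition of $H[B\cup\bigcup_j C_j]$ is literally a decomposition of $H$.

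\begin{proof}
  Let $C_1, \dots, C_\ell$ be all the $[B]$-components of $H$.  For
  each $j \in [\ell]$, the block $(B, C_j)$ has a basis by assumption,
  so by Lemma~\ref{lem:suptotd} there exists a \compnf TD $\mcT_j$ of
  $H[B \cup C_j]$ in which $B$ is the bag of the root.  If $\ell = 0$,
  then $V(H) = B$ and the decomposition consisting of a single node
  with bag $B$ is trivially a \compnf TD of $H$ with $B$ as the bag of
  the root, so assume $\ell \geq 1$.

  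By Lemma~\ref{lem:tdunion} applied to the blocks
  $(B, C_1), \dots, (B, C_\ell)$ and the decompositions
  $\mcT_1, \dots, \mcT_\ell$, the union
  $\mcT = \bigcup_{j=1}^\ell \mcT_j$ is a \compnf TD of
  $H[B \cup \bigcup_{j=1}^\ell C_j]$.  Every vertex of $H$ lies either
  in $B$ or in exactly one $[B]$-component, hence
  $V(H) = B \cup \bigcup_{j=1}^\ell C_j$ and $\mcT$ is a \compnf TD of
  $H$ itself.  Finally, by the construction of the union of TDs, the
  bag of the root of $\mcT$ is $B$.
\end{proof}
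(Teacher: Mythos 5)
Your proof is correct and follows essentially the same route as the paper's: apply Lemma~\ref{lem:suptotd} to each block headed by $B$ to obtain \compnf TDs of $H[B\cup C_j]$ rooted at $B$, take their union via Lemma~\ref{lem:tdunion}, and observe that $B\cup\bigcup_j C_j = V(H)$. Your explicit handling of the degenerate case $\ell=0$ is a harmless (and slightly more careful) addition.
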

\begin{proof}
  Let $(B, C_1), \dots, (B, C_\ell)$ be all the blocks headed by $B$. 
  According to Lemma~\ref{lem:suptotd} for each $i \in [\ell]$, since the block $(B, C_i)$ has a basis, there is a \compnf TD $\mcT_i$ of $H[B \cup C_i]$ with $B$ as the bag of
  the root. We can then take the union $\mcT = \bigcup_{i \in [q]} \mcT_i$,
  which by Lemma~\ref{lem:tdunion} is precisely the required decomposition as  $B \cup \bigcup_{j=1}^l C_j = V(H)$.
\end{proof}

\begin{algorithm}[h]
\SetKwData{Left}{left}\SetKwData{This}{this}\SetKwData{Up}{up}
\SetKwFunction{Union}{Union}\SetKwFunction{FindCompress}{FindCompress}
\SetKw{Halt}{Halt}\SetKw{Reject}{Reject}\SetKw{Accept}{Accept}
\SetKw{Continue}{Continue}\SetKw{And}{and}

\SetKwData{N}{N}

\SetKwInput{Output}{output}
\SetKwInput{Input}{input}

\Input{Hypergraph $H$ and a set $\mathbf{S} \subseteq 2^{V(H)}$.}
\Output{``Accept'', if $\blockreal{\mathbf{S}} \neq \emptyset$ \linebreak
        ``Reject'', otherwise.}
\BlankLine
\SetKwProg{Fn}{Function}{}{}
\SetKwFunction{HasBase}{HasMarkedBasis}
  
  \Begin(\tcc*[f]{\bfseries Main}){
    $blocks =$  all blocks headed by any $S \in \mathbf{S}$ \;
    Mark all blocks $(B, C) \in blocks$ where $C = \emptyset$ \;
    
    \Repeat{no new blocks marked}{
      \ForEach{$(B,C) \in blocks$ where $(B,C)$ is not marked}{
        \tcc{Check if there exists a basis $X$ of $(B,C)$}
        \ForEach{$X \in \mathbf{S} \setminus \{B\}$}{

          $blocks_X = $ all blocks $(X,Y)$ with  $(X, Y)\leq (B,C)$ \;
          \If{Not all blocks in $blocks_X$ are marked}{
            \Continue\;
          }
          $V_X = X \cup \bigcup_{(X, Y) \in blocks_X} Y$\;
          \If{$C \subseteq V_X$ \And
            for each edge $e$ where $e \cap C \neq \emptyset$ also $e \subseteq V_x$}{
            Mark $(B, C)$;
          }
        }
      }

      \If{For some $S \in \mathbf{S}$, all blocks headed by $S$ are marked}{
        \KwRet \Accept \;
      }

    }
    \KwRet \Reject \;
  }
\caption{\compnf Candidate Tree Decomposition}\label{alg:bre}
\end{algorithm}

\emph{Proof of Theorem~\ref{thm:frameworkalg}}: 
We only present the decision procedure. It is clear from the soundness argument that constructing an appropriate
TD from an accepting state is trivial.
  We claim that
Algorithm~\ref{alg:bre} decides, in polynomial time, whether
$\blockreal{\mathbf{S}}\neq \emptyset$. First, the algorithm
runs in polynomial time:
First, observe that a straightforward representation of $\mathbf{S}$ as a list of lists of vertices has size $O(|\mathbf{S}| \cdot |V(H)| \cdot \log |V(H)|)$. For asserting polynomial runtime it is therefore not necessary to distinguish between the size of the representation of $\mathbf{S}$ and $|\mathbf{S}|$.
The set $blocks$ has at most $|\mathbf{S}|\cdot |V(H)|$
initial elements and computing componenets is polynomial in the size of the representation of $H$. The checks
in the innermost loop are clearly polynomial and thus, the whole algorithm requires only polynomial time.

For the soundness of the algorithm, we first observe that every marked
block $(B, C)$ in the algorithm has a \compnf TD of $H[B \cup C]$
using only bags from $\mathbf{S}$, i.e., a marked block satisfies
Condition~\ref{cond:basis3} of a basis.  This is easily verified
by structural induction in combination with
Lemma~\ref{lem:suptotd}. The construction of such a TD by the lemma
uses only the union of TDs, which does not introduce any new bags,
i.e., all bags are still elements of $\mathbf{S}$.  Soundness then
follows immediately from Lemma~\ref{lem:totalsup}. The lemma also
explicitly shows how to construct a TD from the accepting state.

Completeness will follow from Lemma~\ref{lem:support}: if there
exists a $\mcT \in \blockreal{\mathbf{S}}$, then all blocks headed by
bags of $\mcT$ are clearly contained in the set $blocks$ in the algorithm.
We proceed by induction on the height $h(u)$ of a node $u$ in $\mcT = \defTD$, 
where $h(u)$ denotes the maximum distance of $u$ from a descendant leaf node.
Let $u \in T$ and let $C$ be a component associated to a child of $u$ as in Definition~\ref{def:compnf}, or $C= \emptyset$ if $u$ is a leaf. We claim that
after $h(u)$ iterations of the repeat-until loop, the block $(B_u, C)$ will be marked by the algorithm.

For $h(u)=0$, i.e., leaf nodes, the situation is clear. The block
$(B_u, \emptyset)$ is marked before the loop.  Suppose the claim holds
for all nodes $u'$ where $h(u')<j$. We have to show that then it also holds for $h(u)=j$:
let $s$ be a child of $u$ with associated component $C_s$. By
$h(s) = j-1$ and the induction hypothesis, all blocks of $B_s$ with an
associated component of a child of $s$ have already been marked before the $j$-th iteration of the loop. Therefore, in
combination with Lemma~\ref{lem:support}, it follows that
$(B_r, C_s)$ will be marked in the $j$-th iteration of the loop.
\qed

\medskip
\noindent
\textbf{A LogCFL upper-bound.}
For the sake of simplicity, the algorithm presented here uses dynamic programming to establish a \ptime upper-bound. However, it is not difficult to see that the \compnf CTD problem lies in the class LogCFL and is therefore highly parallelizable: Consider the LogCFL algorithm for computing hypertree decompositions presented in~\cite{2002gottlob}. To guess the next separator we now, roughly speaking, guess some element of $\mathbf{S}$ instead of guessing a set of up to $k$ edges. Since $\mathbf{S}$ is an input, it is sufficient to guess an index into $\mathbf{S}$. The argument for LogCFL membership of computing hypertree decompositions then also applies to computing \compnf CTDs.

The LogCFL upper-bound also extends to our main tractability result for GHDs. Observe that the set $\mathbf{S}$ in Lemma~\ref{lem:ghdreal} can be computed in logarithmic space (for fixed $a$, $c$, and $k$). Hence, the proof of Theorem~\ref{theo:LogBMIP} also establishes LogCFL membership of computing GHDs of fixed width assuming the LogBMIP. For FHDs it remains open whether this applies. There, our approach requires the computation of the fractional cover number of sets of vertices (cf., Lemma~\ref{lem:fhwreal}). No LogCFL algorithm is known for this task.
 
\end{document}